\let\oldparagraph\paragraph
\renewcommand{\paragraph}{
\@ifstar
\xxxParagraphStar
\xxxParagraphNoStar
}
\newcommand{\xxxParagraphStar}[1]{\oldparagraph*{#1}\mbox{}}
\newcommand{\xxxParagraphNoStar}[1]{\oldparagraph{#1}\mbox{}}
\let\oldsubparagraph\subparagraph
\renewcommand{\subparagraph}{
\@ifstar
\xxxSubParagraphStar
\xxxSubParagraphNoStar
}
\newcommand{\xxxSubParagraphStar}[1]{\oldsubparagraph*{#1}\mbox{}}
\newcommand{\xxxSubParagraphNoStar}[1]{\oldsubparagraph{#1}\mbox{}}
\patchcmd\longtable{\par}{\if@noskipsec\mbox{}\fi\par}{}{}
\def\maxwidth{\ifdim\Gin@nat@width>\linewidth\linewidth\else\Gin@nat@width\fi}
\def\maxheight{\ifdim\Gin@nat@height>\textheight\textheight\else\Gin@nat@height\fi}
\def\fps@figure{htbp}
\renewcommand*\contentsname{Table of contents}
\newcommand\contentsname{Table of contents}
\renewcommand*\listfigurename{List of Figures}
\newcommand\listfigurename{List of Figures}
\renewcommand*\listtablename{List of Tables}
\newcommand\listtablename{List of Tables}
\renewcommand*\figurename{Figure}
\newcommand\figurename{Figure}
\renewcommand*\tablename{Table}
\newcommand\tablename{Table}
\newcommand{\anon}{1}
\newcommand{\inn}{\operatorname{in}}
\newcommand{\Sighat}{\widehat{{\bSigma}}}
\newcommand{\op}{\operatorname{op}}
\newcommand{\sig}{\sigma}
\newcommand{\Sbar}{\mathbf{S}}
\newcommand{\Shat}{\widehat{\mathbf{S}}}
\newcommand{\final}{\operatorname{final}}
\newcommand{\de}{\mathrm{d}}
\newcommand{\Sigmatreated}{\bSigma_{\mathcal{S},1}}
\newcommand{\Sigmacontrol}{\bSigma_{\mathcal{S},0}}
\newcommand{\event}{\mathscr{E}}
\begin{document}

\def\spacingset#1{\renewcommand{\baselinestretch}%
{#1}\small\normalsize} \spacingset{1}

\setlength{\bibsep}{0pt}
\renewcommand{\bibfont}{\small}


\if1\anon
{
\title{\bf Transfer Learning of CATE with Kernel Ridge Regression}
\author{Seok-Jin Kim 
\\
Department of IEOR, Columbia University\\
Hongjie Liu \\
Department of Statistics, Purdue University \\
Molei Liu \\
Department of Biostatistics, Peking University Health Science Center;  \\ Beijing International Center for Mathematical Research, Peking University\\
and \\
Kaizheng Wang \\
Department of IEOR and Data Science Institute, Columbia University}
\maketitle
} \fi

\if0\anon
{
\bigskip
\bigskip
\bigskip
\begin{center}
{\LARGE\bf Transfer Learning of CATE with Kernel Ridge Regression}
\end{center}
\medskip
} \fi

\bigskip
\begin{abstract}
\noindent The proliferation of data has sparked significant interest in leveraging findings from one study to estimate treatment effects in a different target population without direct outcome observations. However, the transfer learning process is frequently hindered by substantial covariate shift and limited overlap between (i) the source and target populations, as well as (ii) the treatment and control groups within the source. We propose a novel method for overlap-adaptive transfer learning of conditional average treatment effect (CATE) using kernel ridge regression (KRR). Our approach involves partitioning the labeled source data into two subsets. The first one is used to train candidate CATE models based on regression adjustment and pseudo-outcomes. An optimal model is then selected using the second subset and unlabeled target data, employing another pseudo-outcome-based strategy. We provide a theoretical justification for our method through sharp non-asymptotic MSE bounds, highlighting its adaptivity to both weak overlaps and the complexity of the CATE function. Extensive numerical studies confirm that our method achieves superior finite-sample efficiency and adaptability. We conclude by demonstrating the effectiveness and superiority of our approach on two real-world datasets.
\end{abstract}

\noindent%
{\it Keywords:} Data integration; Conditional average treatment effect (CATE); Covariate shift; Weak overlap; Model selection; Pseudo-outcomes.
\vfill

\newpage
\spacingset{1.8} 

\section{Introduction}

Predicting the conditional average treatment effect (CATE) given the covariates of an individual enables more precise and personalized decision-making in various application fields \citep{kent2018personalized,dube2023personalized}. Meanwhile, there have emerged increasing needs in transporting one particular causal study to infer the effect of interest on a different population without observations of the targeted treatment and outcome. For example, for a new drug studied on some source cohort, one may be interested in generalizing its treatment effect to another target cohort and comparing it with some existing treatment appearing in the target sample \citep{signorovitch2012matching}. There is also great interest in transferring the results in randomized controlled trials (RCT) conducted on some less representative subjects to more general real-world observational populations \citep{colnet2024causal}. 

In these data integration setups, covariate shift adaptation plays an important role in adjusting for the potential bias caused by the heterogeneity between the source and target \citep{pan2009survey}. For this purpose, the most frequently used strategy is to match the source sample with the target through importance weighting (IW) \citep{huang2007correcting,liu2023augmented}. However, the classic IW method tends to result in low effective sample sizes \citep{signorovitch2010comparative} when the covariate distribution of the target population deviates from the source excessively. Such a weak overlap between the source and target is a common issue in observational studies on which one could barely design or control the data collection mechanism.  For example, consider a source sample with its age normally ranging between 30 and 65 and a target with the mean age 62 \citep{ishak2015simulation}. In this case, matching the two samples on age could introduce extremely high variance to the IW function and cause inefficiency. Moreover, a similar weak overlapping issue often occurs between the non-completely-randomly assigned treated and control groups on a single source, referred as the violation of the positivity assumption \citep{cole2008constructing}. These challenges can impede effective transfer learning of the causal effects and models, especially when the outcome models are highly complex.

In this paper, we aim at addressing the aforementioned two folds of weak overlap in the transfer learning setup where the treatment and outcome are only observed on some source sample but not on the target. In particular, we are interested in the transfer learning of CATE through kernel ridge regression (KRR) within the framework of reproducing kernel Hilbert space (RKHS) response functions.

\paragraph*{Related Literature}
With the increasing interests in personalized decision making in various fields, CATE has caught great attention in recent literature. For example, \cite{kunzel2019metalearners} proposed a meta-learner for CATE that allows the use of machine learning (ML) and adapts to the structural properties of the targeted treatment effect function. \cite{lan2024causal} developed a model selection approach for CATE using Q-aggregation. \cite{kennedy2020towards} extended the double machine learning (DML) framework \citep{chernozhukov2018double} for the estimation of CATE with general ML methods and showed that their estimator achieves the rate-double-robustness. \cite{kennedy2022minimax} further studied the minimax property of the semiparametric estimation of CATE. A central theme in \citet{kennedy2020towards,kennedy2022minimax} is that the nuisance propensity and outcome models often have higher structural complexity than their contrast (the CATE function). In such cases, the goal is to exploit the reduced complexity of the contrast function to achieve rates faster than naive plug-in methods and, when conditions are favorable, to obtain error bounds adaptive to the contrast function's complexity. Despite the recent advances, there still lack approaches for the transfer learning of CATE targeting some population with drastic covariate shift from the source study.

Related to this methodological gap, \cite{hartman2015sample} proposed a calibration weighting (CW) approach to adjust for the covariate shift between RCTs and real-world data sets (RWDs) and generalize the RCT for inference of the average treatment effect (ATE) on more general populations. \cite{dahabreh2019generalizing} established a causal framework with identification conditions in a similar setup and developed an augmented inverse probability weighting estimator for the ATE on the target population. \cite{lee2023improving} proposed an augmented CW (ACW) method for transfer learning of ATE that is less prone to misspecification of the calibration models. More close to our CATE problem, \cite{wu2023transfer} developed a method to generalize the individual treatment rules for RCTs to RWDs. Nevertheless, all these methods are developed under the classic semiparametric inference framework with the positivity and strong overlap assumption \citep{rosenbaum1983central}. To the best of our knowledge, no existing work on this track can adapt to the unknown degree of covariate shift between the source and target and stay effective under weak overlap.

To address such unknown and poor overlap issues in transfer learning, \cite{ma2023optimally} proposed a reweighted KRR method for the adaptive and efficient estimation of RKHS outcome models that weights the source samples with truncated density ratio functions. Recently, \cite{wang2026pseudo} developed a pseudo-labeling transfer learning approach that achieves the optimal model selection for KRR on some target population with strong and unknown covariate shift to the source data. In broader contexts, we also notice related work on optimal kernel-based treatment effect estimation in the absence of the strong overlap or positivity assumption on the treatment regime \citep{mou2023kernel}, as well as pessimistic policy learning for the optimal treatment rule in a similar scenario \citep{jin2022policy}. Nevertheless, none of the existing work can fully adapt to our setup with more complicated data structure involving two folds of distributional shift between (i) the source and target populations, and (ii) the treatment and control groups on the source, as well as (iii) the unknown function complexity of the CATE and nuisance outcome models. 

\paragraph*{Contributions}
We propose a novel approach for the transfer learning of {\bf C}ATE with {\bf O}verlap-adaptive {\bf KE}rnel ridge regression (\texttt{COKE}). It first splits the source data and uses one subset to derive candidate regression adjustment (RA) learners for CATE, through KRR with various regularization parameters. Then it fits another KRR with small regularization on the holdout source data to impute the counterfactual outcomes on the unlabeled target data. Finally, the best candidate RA learner is selected using the target data with pseudo-outcomes. Through theoretical analyses, we establish the rate-optimality of our regression adjustment and model selection strategies. Consequently, \texttt{COKE} simultaneously achieves the adaptivity to (i) unknown and weak overlap between the source and target distributions; (ii) unknown and weak overlap between the treated and control groups on the source; (iii) unknown complexity of the CATE function, as well as (iv) rate robustness to errors in the highly complex nuisance outcome models. 

For the issue of weak overlap, we notice that recent work like \cite{wang2026pseudo} and \cite{mou2023kernel} can only address simpler setups. Our setting is fundamentally different from the covariate shift problem considered by \citet{wang2026pseudo} in the sense that (i)--(iii) can co-occur in our case. Simultaneously addressing them and achieving sharp error rates are technically more involving due to the complication of our missing data structure. To accomplish this goal, we propose a {\em two-stage} KRR procedure accompanied by model selection based on pseudo-outcomes, while \citet{wang2026pseudo} only employs a \emph{single-stage} KRR and, thus, cannot be adaptive to the issues of two-folded weak overlap and unknown complexity, which are predominant in the transfer learning problem of CATE.

For (iii) and (iv), \texttt{COKE} is shown to maintain milder dependency on the complexity of the nuisance models compared to the targeted CATE function. This is in a similar spirit to recent works by \cite{kennedy2020towards}, \cite{kennedy2022minimax} and \cite{kato2023cate}. In contrast to them, our method achieves (iii) and (iv) without the common assumptions of positivity or strong overlap \citep{rosenbaum1983central}. In addition, we introduce an RKHS framework via the Hilbert norm to capture the ``structural complexity'' of the nuisance and CATE functions, which moves beyond the H\"older class studied in \citet{kennedy2022minimax} and reflects both smoothness and magnitude. 
This advance corresponds to recent literature on causal inference that emphasizes the need to study notions of complexity beyond smoothness \citep{cinelli2025challenges}.

In both theoretical and numerical studies, \texttt{COKE} consistently outperforms existing strategies like the separate regression and DML estimation, under various settings on the degree of distributional overlap and the complexity of the models. We also illustrate the superiority of \texttt{COKE} in two real-world examples of transfer learning with significant covariate shift.

\paragraph*{Notation}
We define \( [n] := \{1, 2, \dots, n\} \).  
For an operator \( A \) on a Hilbert space \( \HH \), we define the operator norm as \( \|A\|_{\op} \), and for \( \theta \in \HH \), we write the Hilbert norm as \( \|\theta\|_{\HH} \).  
We denote the identity operator as \( \Ib \), and the outer product operator of \( x, y \in \HH \) as \( x \otimes y \).  
We use the notation \( \cO(\cdot) \) or \( \lesssim \) to hide constants, and \( \tilde{\cO}(\cdot) \) to hide constants and logarithmic terms.  
Whenever additional factors are hidden, we explicitly note them.  
We use the notation \( a \asymp b \) when \( a \lesssim b \) and \( b \lesssim a \).  
For the inner product in the Hilbert space \( \HH \), we use \( \langle a, b \rangle_\HH \); however, when the context is clear, we use \( a^\top b \) or \( b^\top a \) for simplicity and to improve readability.
The constants \(c, C, c_1, c_2, C_1, C_2, \dots \) \textit{may differ from line to line.}

\section{Problem Setup}\label{section: problem setup}

\subsection{Treatment Regime and Covariate Shift}\label{sec-setup-1}\noindent
We introduce the treatment regime and the conditional average treatment effect (CATE) under covariate shift.  
Suppose we observe $n$ i.i.d.\ samples from a \emph{source} distribution, denoted by $\cD = \{(z_i, a_i, y_i)\}_{i=1}^n$. Here, $z_i \in \cZ$ are covariates, $a_i \in \{0,1\}$ is the binary treatment indicator, and $y_i \in \RR$ is the response.
For the \emph{target} distribution of primary interest, assume there are \(n_\cT\) i.i.d.\ samples
\(\{z_{0i}, a_{0i}, y_{0i}\}_{i=1}^{n_\cT}\) with \(z_{0i} \in \cZ\), \(a_{0i} \in \{0,1\}\), and \(y_{0i} \in \RR\).  
We focus on the setting where \(\{(a_{0i}, y_{0i})\}_{i=1}^{n_\cT}\) are \emph{unobserved}, so only the covariates
\(\cD_{\cT} := \{z_{0i}\}_{i=1}^{n_\cT}\) are available.

The covariate distributions may differ, giving rise to \emph{covariate shift}.  
We write \(z_i \sim \cQ_{\mathcal{S}}\) for source covariates and \(z_{0i} \sim \cQ_{\mathcal{T}}\) for target covariates.  
Treatment assignment in the source follows \(a_i \mid z_i \sim \mathrm{Bernoulli}(\pi(z_i))\), where \(\pi: \cZ \to [0,1]\) is the propensity score.  
Rather than the usual positivity condition (which requires \(\pi\) to be bounded away from \(\{0,1\}\)), we adopt a weaker assumption that allows singular cases; see Section~\labelcref{subsection: weak overlap}.  
We impose no model on \(\pi\).

Let \(\cQ_\cS^\star\) and \(\cQ_\cT^\star\) denote the joint distributions of \((z_i, a_i, y_i)\) and \((z_{0i}, a_{0i}, y_{0i})\), respectively.  
For a function space \(\cF\), assume there exist \(f_0^\star, f_1^\star \in \cF\) such that
\begin{align*}
f_{1}^\star(z) &= \EE_{(z,a,y)\sim \cQ_\cS^\star}\bigl[y \mid a=1, z\bigr] 
= \EE_{(z,a,y)\sim \cQ_\cT^\star}\bigl[y \mid a=1, z\bigr],\\
f_{0}^\star(z) &= \EE_{(z,a,y)\sim \cQ_\cS^\star}\bigl[y \mid a=0, z\bigr]
= \EE_{(z,a,y)\sim \cQ_\cT^\star}\bigl[y \mid a=0, z\bigr].
\end{align*}
Thus, the response functions for treated and control units are shared across source and target populations, while the covariate distributions may differ.

Our goal is to estimate the CATE function,
\[
h^\star(z) := f_1^\star(z) - f_0^\star(z),
\]
and to evaluate any estimator \(h \in \cF\) by its mean squared error (MSE) under the target distribution:
\[
\cE_{\cT}(h) = \EE_{z \sim \cQ_\cT}|h(z) - h^\star(z)|^2.
\]
We also define the noise variables \(\varepsilon_i := y_i - f^\star_{a_i}(z_i)\) for \(i \in [n]\); by construction,
\(\EE[\varepsilon_i \mid z_i,a_i] = 0\). Next, we focus on the case where \(\cF\) is an RKHS induced by a symmetric, positive semidefinite kernel \(K(\cdot,\cdot): \cZ \times \cZ \to \RR\) \citep{wainwright2019high}.  
By the Moore--Aronszajn Theorem \citep{aronszajn1950theory}, there exists a Hilbert space \(\HH\) and a feature map \(\phi: \cZ \to \HH\) such that \(\langle \phi(z), \phi(w)\rangle_{\HH} = K(z,w)\). Define
\[
\mathcal{F} = \bigl\{
f_\theta: \cZ \to \RR 
\mid 
f_\theta(z) = \langle \phi(z), \theta \rangle_{\HH}
\text{ for some } \theta \in \HH 
\bigr\}.
\]
This class includes Sobolev and Besov spaces as special cases \citep{zhang2023optimality,fischer2020sobolev}.  
Moreover, \(\cF\) is isomorphic to \(\HH\), and we denote its norm by \(\|\cdot\|_{\cF}\).  
Finally, assume the kernel \(K\) is bounded, i.e., \(\sup_{z \in \cZ} K(z,z) \leq \xi\) for some \(\xi > 0\).  
This assumption is common in KRR analyses.

\subsection{Kernel Ridge Regression}\label{section: preliminaries kernel ridge regression}
\noindent
We briefly review KRR in the setting with $N$ covariate--response pairs.  
Suppose we observe data \(\{(u_i, r_i)\}_{i=1}^N\), with \(u_i \in \cZ\) and \(r_i \in \RR\).
KRR estimates a function \(\hat{f}\) by solving the penalized least-squares problem
\begin{align}\label{equation: KRR program}
\hat{f}=  \arg \min_{f \in \cF} 
\bigg\{
\frac{1}{N}\sum_{i=1}^N (r_i - f(u_i))^2 + \lambda \| f\|_{\cF}^2
\bigg\}
.
\end{align}
We refer to \(\lambda>0\) as the ridge regularizer.  
By the representer theorem, the solution depends only on the kernel evaluations \(\{K(u_i, u_j)\}_{1\le i,j \le N}\), so \(\hat{f}\) can be computed via an equivalent finite-dimensional quadratic program \citep{wainwright2019high}.  
A primary challenge in KRR is selecting \(\lambda\) due to the inherent bias--variance trade-off.  
We provide additional details and the closed-form solution in Appendix~\labelcref{section: groundwork}.

Before presenting our main algorithm, we describe a naive approach for CATE estimation, referred to as \emph{separate regression}.  
We split the data \(\cD\) into the treated group \(\cD(a=1)\) and the control group \(\cD(a=0)\), where
\[
\cD(a=1) := \{(z_i, a_i, y_i) \in \cD \mid a_i =1 \}, 
\quad
\cD(a=0) := \{(z_i, a_i, y_i) \in \cD \mid a_i =0 \}.
\]
We choose regularizers \(\lambda_0\) and \(\lambda_1\), estimate each \(f_k^\star\) by applying KRR to \(\cD(a=k)\) with regularizer \(\lambda_k\), and obtain \(\hat{f}_0\) and \(\hat{f}_1\).  
The resulting CATE estimator is \(\hat{h}_{\operatorname{sep}} = \hat{f}_1 - \hat{f}_0\).  
This plug-in procedure is attractive for its simplicity, since \(\lambda_0\) and \(\lambda_1\) can be chosen to minimize the estimation errors of \(\hat{f}_0\) and \(\hat{f}_1\) separately.  
However, because our main interest lies in the contrast function \(h^\star = f^\star_1 - f^\star_0\), there is no direct rule of thumb for selecting \(\lambda_0\) and \(\lambda_1\) to optimize the estimation of \(h^\star\).

We close this section by noting the challenges posed by covariate shift in KRR.  
Even when the response model is shared across source and target data, an unknown degree of covariate shift makes the choice of regularization both crucial and difficult.  
This has prompted extensive research on covariate shift in KRR \citep{ma2023optimally,wang2026pseudo,PDT24}.

\section{Methodology}\label{section: methodology}
\noindent
In this section, we present our transfer learning methodology for CATE estimation. It has three steps: 
(1) splitting the data, 
(2) training candidate models via a regression adjustment (RA) learner, 
and (3) selecting the best model.

To begin, we choose positive integers $n_1, n_{2}$ such that $n = n_1 + n_{2}$, and randomly split the source data $\cD$ into two subsets $\{\cD_1, \cD_2\}$ of sizes $\{ n_1, n_{2} \}$.  
We denote each $\cD_j$ as $\{ (z_{ji}, a_{ji}, y_{ji}) \}_{i=1}^{n_j}$.  
We will use $\cD_1$ to train candidate models, and then use $\cD_2 \cup \cD_{\cT}$ for model selection.  
For simplicity, we assume below that $n$ is a multiple of $2$ and set $n_1 = n_{2} = n/2$.

\begingroup 
We provide a high-level summary of our algorithm below.
\begin{enumerate}[label=(\arabic*)]
\item Partition the training dataset into two disjoint subsets, $\cD_1$ and $\cD_2$.

\item For a grid of regularization parameters $\bm\lambda \in \bm{\Lambda}$, apply the RA learner to estimate the CATE on $\cD_1$, obtaining a set of candidate estimators $\cH =\{\hat{h}_{\bm\lambda}\}_{\bm \lambda \in \bm{\Lambda}}$.

\item Learn imputation models for the treated and untreated outcomes on $\mathcal{D}_2$ via separate, undersmoothed KRR. Then, construct (pseudo-)test outcomes for the target dataset as the difference between the imputed treated and control outcomes.

\item Perform model selection over the candidate models derived in Step 2, using the test outcomes from Step 3.
\end{enumerate}
\endgroup

\subsection{RA Learner}
\noindent
We now describe the \emph{Regression Adjustment (RA)} learner. 
Our goal is to estimate the contrast $h^{\star}$, but the treatment structure leaves one potential outcome unobserved, so direct regression on $h^\star$ is not possible.
Instead, we construct pseudo-outcomes and regress on them.
The RA learner proceeds in three steps: (1) estimate imputation models on $\cD_1$; (2) generate pseudo-outcomes on $\cD_1$; and (3) run KRR on these pseudo-outcomes in \(\cD_1\) to learn \(h^\star\).

Several works \citep{kunzel2019metalearners,kennedy2020towards,curth2021nonparametric} have studied a similar methodology; however, our approach differs in three ways: (i) we use KRR, (ii) we do not split the data between the first-stage and second-stage regressions, instead using the same dataset for both, and (iii) in our main algorithm, we provide guidance on selecting the regularizer for both stages.

Our RA learner (Algorithm~\labelcref{algorithm: RA learner}) takes as input the dataset \(\cD_1\) and a tuple of three regularizers \(\bm{\lambda }  = (\lambda_{0,0}, \lambda_{0,1}, \lambda_{1}) \in \RR^3\) with \(\lambda_{0,0}, \lambda_{0,1}, \lambda_{1} >0\).  
First, we estimate nuisance functions by performing KRR separately on \(\cD_1\) with regularizers \(\lambda_{0,0}\) and \(\lambda_{0,1}\), which yields the following estimators:
\begin{align}\label{equation: RA learner algorithm}
\hat{f}_0 &:= \arg\min_{f \in \cF} \bigg\{ \frac{1}{n_1} \sum_{i=1}^{n_1} (y_{1i} -f(z_{1i}))^2 \one(a_{1i}=0) + \lambda_{0,0} \| f\|_{\cF}^2 \bigg\}, \\
\hat{f}_1 &:= \arg \min_{f \in \cF} \bigg\{\frac{1}{n_1} \sum_{i=1}^{n_1} (y_{1i} -f(z_{1i}))^2\one(a_{1i}=1)  + \lambda_{0,1} \| f\|_{\cF}^2\bigg\}. 
\label{equation: RA learner algorithm 2}
\end{align}
We then form pseudo-outcomes \(\{m_{1i}\}_{i=1}^{n_{1}}\) on $\cD_{1}$ by 
\begin{align*}
m_{1i} := 
\begin{cases}
y_{1i} - \hat{f}_0(z_{1i}), & \text{if } a_{1i} = 1,\\[6pt]
\hat{f}_1(z_{1i}) - y_{1i}, & \text{if } a_{1i} = 0.
\end{cases}
\end{align*}
Next, we run KRR with regularizer \(\lambda_{1}\) and return the CATE estimator:
\begin{align*}
\hat{h}_{\bm{\lambda}} := \arg \min_{h \in \cF} \bigg\{ \frac{1}{n_{1}} \sum_{i=1}^{n_{1}} (m_{1i}-h(z_{1i}))^2  + \lambda_{1} \| h\|_\cF^2 \bigg\}.
\end{align*}
The pseudocode of this procedure is given in Algorithm~\ref{algorithm: RA learner}.

\begin{algorithm}[]
\caption{RA Learner}
\label{algorithm: RA learner}
\begin{algorithmic}
\Require  Dataset \(\cD_1\), regularizers  \(\bm{\lambda }  = (\lambda_{0,0}, \lambda_{0,1}, \lambda_{1}) \in \RR^3\) where \(\lambda_{0,0}, \lambda_{0,1}, \lambda_{1} >0 \).
\State Using \(\cD_1\), run KRR to get nuisance estimators \(\hat{f}_0, \hat{f}_1\) by solving \eqref{equation: RA learner algorithm} and \eqref{equation: RA learner algorithm 2}.
\State Using \(\cD_{1}\), set the pseudo-outcome \(m_{1i} := (y_{1i} - \hat{f}_0(z_{1i}) )\bm{1}(a_{1i}=1) + (\hat{f}_1(z_{1i}) - y_{1i} )\bm{1}(a_{1i}=0)\) for all \(i \in [n_{1}]\).
\State On \(\cD_{1}\), apply KRR using the independent variables \(\{z_{1i}\}_{i=1}^{n_{1}}\) and responses \(\{m_{1i}\}_{i=1}^{n_{1}}\) with regularizer \(\lambda_{1} > 0\), and get estimator \(\hat{h}_{\bm{\lambda}}\).
\Ensure \(\hat{h}_{\bm{\lambda}}\).

\end{algorithmic}
\end{algorithm}

Recall that \emph{separate regression} estimates two nuisance functions independently and outputs $\hat{f}_1 - \hat{f}_0$ as the CATE estimator.  
By contrast, Algorithm~\labelcref{algorithm: RA learner} fits an additional regression on the pseudo-outcomes, which, as we will see, brings notable benefits.

\subsection{Model Selection}
\noindent
We now present our model selection procedure, which constructs test outcomes and selects the best model using them.  
Since \(\cD_\cT\) contains only target covariates, Algorithm~\labelcref{algorithm: model selection} first constructs test outcomes and then selects the best model with respect to these outcomes.
The inputs are the dataset \(\cD_2\), the target covariates \(\cD_\cT\), and a set of candidate CATE estimators denoted by \(\Hcal_0 := \{\hat{h}_1, \dots, \hat{h}_L\}\) for some \(L > 0\).

First, we set two regularizers \(\tilde{\lambda}_0, \tilde{\lambda}_1 >0\).  
Next, we run KRR on \(\cD_2\) to obtain
\begin{align}\label{equation: test outcome algorithm}
\tilde{f}_0 &:= \arg\min_{f \in \cF} \bigg\{\frac{1}{n_{2}} \sum_{i=1}^{n_{2}} (y_{2i} -f(z_{2i}))^2 \one(a_{2i}=0) + \tilde{\lambda}_{0} \| f\|_{\cF}^2 \bigg\}, \\
\tilde{f}_1 &:= \arg \min_{f \in \cF} \bigg\{ \frac{1}{n_{2}} \sum_{i=1}^{n_{2}} (y_{2i} -f(z_{2i}))^2\one(a_{2i}=1)  + \tilde{\lambda}_{1} \| f\|_{\cF}^2 \bigg\}. 
\label{equation: test outcome algorithm 2}
\end{align}
We use $\tilde{h} = \tilde{f}_1 - \tilde{f}_0$ to generate test outcomes on $\cD_{\cT}$, and then select the final model from $\Hcal_0$ by empirical squared-loss minimization.
Algorithm~\labelcref{algorithm: model selection} provides a detailed description.  
Here we choose $\tilde{\lambda}_0, \tilde{\lambda}_1$ to be small (undersmoothing), at a scale such as $\asymp \log n/n$, to serve as low-bias proxies; precise choices are deferred to the theory.

\begin{algorithm}[]
\caption{Model Selection}
\label{algorithm: model selection}
\begin{algorithmic}
\Require  Dataset \(\cD_2, \cD_\cT\); set of CATE estimators $\Hcal_0 = \{ \hat{h}_1, \dots, \hat{h}_L\}$.
\State Run KRR to obtain $\tilde{f}_0, \tilde{f}_1$ by solving \eqref{equation: test outcome algorithm} and \eqref{equation: test outcome algorithm 2}.
\State Define $\tilde{h} = \tilde{f}_1 -\tilde{f}_0$, and form the test outcomes for \(\cD_\cT\) as \(\{\tilde{h}(z_{0i})\}_{i=1}^{n_{\cT}}\).
\State For each $h \in \Hcal_0$, compute
\[
L(h) = \frac{1}{n_\cT} \sum_{i=1}^{n_\cT} \bigl(\tilde{h}(z_{0i}) -h(z_{0i})\bigr)^2.
\]
Choose the final model $\hat{h}_{\final} = \arg \min_{h \in \Hcal_0} L(h)$.
\Ensure \(\hat{h}_{\final}\).
\end{algorithmic}
\end{algorithm}

Next, we combine Algorithm~\labelcref{algorithm: RA learner} (the RA learner) and Algorithm~\labelcref{algorithm: model selection} to present our main algorithm, which is described in the following section.

\subsection{The Final Procedure}
\noindent
We are now ready to present our method, Transfer learning of {\bf C}ATE with {\bf O}verlap-adaptive {\bf KE}rnel ridge regression (COKE), in Algorithm~\labelcref{algorithm: main}. It first uses an RA learner to generate multiple candidate estimators, and then employs Algorithm~\labelcref{algorithm: model selection} to select the final one.

The algorithm takes as input the labeled source data \(\cD\) and unlabeled target data \(\cD_\cT\).  
We use fixed small nuisance regularizers \(\lambda_{0,0}, \lambda_{0,1}\) (specified in Section~\labelcref{section: main results}) on the order of $\log n/n$.  
This choice minimizes bias in the first-stage imputation, prioritizing low approximation error over variance reduction.
We then construct the grid \(\Lambda_1  \subset \RR\),
which serves as the grid for the regularizer \(\lambda_{1}\).  
We define \(\bm{\Lambda} := \{ \lambda_{0,0}\} \times \{\lambda_{0,1}\} \times \Lambda_{1}\), and apply the RA learner (Algorithm~\labelcref{algorithm: RA learner}) for every \(\bm{\lambda} = (\lambda_{0,0}, \lambda_{0,1}, \lambda_{1}) \in \bm{\Lambda}\), using the split \(\cD_1\).  
This produces a set of candidate CATE estimators, \(\cH = \{\hat{h}_{\bm{\lambda}} \mid \bm{\lambda} \in \bm{\Lambda}\}\).  
We then feed \(\cH\) and the remaining data \(\cD_2, \cD_\cT\) into Algorithm~\labelcref{algorithm: model selection} to obtain our final model.  
The pseudocode is summarized below.

\begin{algorithm}[h]
\caption{\texttt{COKE}: Transfer learning of {\bf C}ATE with {\bf O}verlap-adaptive {\bf KE}rnel ridge regression}
\label{algorithm: main}
\begin{algorithmic}
\Require Dataset $\cD, \cD_{\cT}$, $\lambda_{0,0}, \lambda_{0,1}, \tilde{\lambda}_0, \tilde{\lambda}_1 >0$, Regularizer grid $\Lambda_1 \subset \RR$.
\State Split source data $\cD$ into $\cD_{1}, \cD_{2}$.
\State Define \(\bm{\Lambda} := \{ \lambda_{0,0}\} \times \{\lambda_{0,1}\} \times \Lambda_{1}\).
For each \( \bm{\lambda} \in \bm{\Lambda}\), run Algorithm~\labelcref{algorithm: RA learner} on \(\cD_{1}\) to obtain $\hat{h}_{\bm{\lambda}}$.
\State Set the collection of candidates as $\cH = \{\hat{h}_{\bm{\lambda}} \mid \bm{\lambda} \in \bm{\Lambda}\}$.
\State Run Algorithm~\labelcref{algorithm: model selection} with input \(\cD_2, \cD_\cT, \cH\), yielding the final model \(\hat{h}_{\final}\).
\Ensure $\hat{h}_{\final}$.
\end{algorithmic}
\end{algorithm}


\section{Theoretical Results}\label{section: main results}
\noindent
In this section, we present our theoretical results, based on a weak overlap framework. 

\subsection{Standard Assumptions}
\noindent
To set the stage, we present three assumptions that are widely used in causal inference and nonparametric regression.

\begin{assumption}[Consistency and unconfoundedness]\label{assumption; consistency and unconfoundedness}
An individual $i$ has two potential outcomes \(y_i(1), y_i(0)\) under treatment and control, respectively.
\begin{itemize}
\item Consistency: If an individual $i$ is assigned treatment $a_i \in \{0, 1\}$, we observe \(y_i = y_i(a_i)\). 
\item Unconfoundedness: There are no unobserved confounders, i.e., $( y_i(0), y_i(1) )$ are independent of $a_i$ given $z_i$.    
\end{itemize}
\end{assumption}

\begin{assumption}[Sub-Gaussian noise]\label{assumption; subGaussian noise}
Conditional on \(z_i\) and \(a_i\), the noise variables \(\varepsilon_i\) are sub-Gaussian with proxy \(\sigma > 0\).
For simplicity, we assume \(\sigma\) is bounded by some universal constant.
\end{assumption}

Define the expected second moments of source and target covariates as 
\[
\bSigma_{\cS}  = \EE_{z \sim \cQ_\cS}[\phi(z) \otimes \phi(z)], 
\quad  
\bSigma_\cT = \EE_{z \sim \cQ_\cT} [\phi(z)\otimes \phi(z)].
\]
We now present assumptions on $\bSigma_\cT$ and the kernel $K$.

\begin{assumption}[Polynomial eigenvalue decay]\label{Assumption; eigenvalue decay}\label{assumption; boundedness}
We assume that \(\bSigma_\cT\)'s ordered eigenvalues \(\mu_1 \geq \mu_2 \geq \dots\) satisfy \(\mu_j \lesssim j^{-2\ell}\) for some \(\ell > \frac{1}{2}\).
We set \(\alpha = \frac{2\ell}{1+2\ell}\).
Also, the kernel is bounded as \(\sup_{z \in \Zcal}K(z,z) \leq \xi\) for some universal constant \(\xi>0\).  
\end{assumption}

The first two assumptions are widely used in the CATE literature \citep{kunzel2019metalearners,curth2021nonparametric,kennedy2020towards,kennedy2022minimax}. Assumption~\labelcref{Assumption; eigenvalue decay} is also widely used in the KRR literature.
It is well known that the Sobolev space and Besov spaces satisfy this assumption \citep{zhang2023optimality,fischer2020sobolev}.  
In general, for \(H^k(\cZ)\) with \(\cZ \subset \RR^d\) and \(k > \frac{d}{2}\), we have \(\ell = \frac{k}{d}\).
Therefore, both Sobolev kernels and Matérn kernels satisfy this condition.
Moreover, the limiting neural tangent kernel (NTK) of overparameterized neural networks is known to exhibit a polynomially decaying spectrum with \(\ell =\frac{d+1}{2d}\) for the covariate dimension \(d\) \citep{geifman2020similarity,bietti2020deep}.
In the lazy-training regime, this kernel-level view carries over to overparameterized neural-network predictors: Lemma~12 of \citet{li2024eigenvalue} establishes uniform \(L^\infty\) closeness between the trained network and the corresponding limiting-NTK KRR/kernel-regression predictor under suitable width and scaling conditions.
Boundedness of the kernel is also widely assumed \citep{wainwright2019high,fischer2020sobolev,wang2026pseudo}. 
\normalcolor

\subsection{Weak Overlap}\label{subsection: weak overlap}
\noindent
Our setting involves two types of distributional shifts: one between treated covariates and control covariates, and the other between source and target covariates. In this subsection, we introduce the notion of \emph{weak overlap} for both shifts.

Existing works on causal inference usually assume \emph{positivity} or \emph{strong overlap}, which requires the propensity score to be bounded away from \(0\) and \(1\).  
We relax this assumption by allowing the propensity score to approach \(0\) or \(1\), and even admit singular cases.  
Below, we present our \emph{weak treatment overlap} assumption.
To facilitate analysis, we define two second-moment operators associated with the treated and control groups:
\[
\Sigmatreated= \EE_{(z,a,y) \sim \cQ^\star_S}[\phi(z) \otimes \phi(z) \one(a =1)], 
\quad  
\Sigmacontrol := \EE_{(z,a,y) \sim \cQ^\star_S}[\phi(z) \otimes \phi(z) \one(a=0)].
\]

\begin{assumption}[Weak overlap]\label{assumption; weak treatment overlap}\label{assumption; overlap source target}  
There exist parameters \(R,B \geq 1\), possibly depending on the problem instance or on \(n\), such that the following conditions hold:
\begin{itemize}
\item Treatment overlap: $\Sigmacontrol  \preceq R(\Sigmatreated + \frac{\xi}{n} \Ib )$ and
$\Sigmatreated \preceq R (\Sigmacontrol + \frac{\xi}{n} \Ib )$

\item Source-target overlap: $\bSigma_{\cT} \preceq B (\bSigma_{\cS} + \frac{\xi}{n} \Ib )$.
\end{itemize}
\end{assumption}

We will show that the treatment overlap assumption can be implied by the commonly used positivity condition, while still allowing for singular propensity scores. 
The source-target overlap condition is used in the study of KRR under covariate shift \citep{ma2023optimally,wang2026pseudo}. The parameter \(B\) controls the amount of covariate shift.

\begin{remark}
For nonparametric estimation of the CATE function, existing theoretical results rely on the strong positivity assumption \citep{kennedy2020towards,kennedy2022minimax,curth2021nonparametric,gao2020minimax}. 
It is also worth mentioning that the assumption can be relaxed if the goal is to estimate functionals of CATE, such as the ATE \citep{mou2023kernel,ma2022testing}.
\end{remark}

Next, we provide examples illustrating Assumptions~\labelcref{assumption; weak treatment overlap,assumption; overlap source target}.
The following two examples illustrate that our assumption is well satisfied under the classical strong overlap conditions.

\begin{example}[$R$: Bounded propensity score]\label{example; positivity}
If \(\pi(z) \in [\kappa,1-\kappa]\) for some \(\kappa>0\), then the first part of Assumption~\labelcref{assumption; weak treatment overlap} holds with \(R \leq \frac{1}{\kappa}\). 
\end{example}

\begin{example}[$B$: Bounded source-target density ratio]
If the density ratio satisfies $\frac{\mathrm{d} Q_{\mathcal{T}}}{\mathrm{d} Q_{\mathcal{S}}} \leq B'$ for some constant $B' > 0$, then the second part of Assumption~\ref{assumption; overlap source target} holds with $B = B'$.
\end{example}

Next, we show that our weak overlap assumption also holds under a singular propensity score, highlighting its powerful relaxation. 
In the following example, the density ratio between the treated and control covariates is unbounded. 

\begin{example}[$R$: Singular propensity score] \label{example; singular propensity 1}
Consider the setting where $\cZ=[0,1]$ and the propensity score is $\pi(z) = z$. 
Suppose that $\cF$ is the Sobolev space $H^1([0,1])$ and the source covariate density is bounded away from zero and infinity. 
Then, the first part of Assumption~\labelcref{assumption; weak treatment overlap} is satisfied with $R \asymp n^{1/3}$.
\end{example}

Finally, we present an example of \(B\) under a singular source-target density ratio. 
This example is studied in \cite{tuo2024asymptotic,wang2026pseudo}.

\begin{example}[$B$: Dirac target distribution]\label{example; dirac target}
Consider a scenario where \(\cQ_\cT\) is a Dirac point measure at some point \(z_0\). 
For \(\cF = H^1([0,1])\) and source covariates whose density is bounded above and below, the second part of Assumption~\labelcref{assumption; overlap source target} holds with \(B \asymp n^{\frac{1}{2}}\).
\end{example}

In singular cases (e.g., point mass target distributions or singular propensity scores), \(B\) and \(R\) typically exhibit dependency on \(n\), which is indeed intrinsically linked to the function class. This connection arises because the geometric properties of the function space govern the estimator's ability to extrapolate beyond the observed support (or regions of strong overlap). 
A key distinction of our work is that we do not limit \(B\) or \(R\) to bounded constants, but instead treat them as key parameters in our non-asymptotic analysis.
Proofs for the examples above are presented in Appendix~\labelcref{section: proof weak overlap exampls}.

\subsection{Results Overview}\label{subsection: summary of results}
\noindent
In this section, we give an overview of our main results. 
We study regimes where \(\|h^\star\|_\cF\) is smaller than \(\|f_0^\star\|_\cF\) and \(\|f_1^\star\|_\cF\), i.e.~the contrast function has lower structural complexity than the nuisance. We do not assume any upper bound on \(\|f_0^\star\|_\cF\) or \(\|f_1^\star\|_\cF\). 
These RKHS norms are non-asymptotic, problem-dependent complexity measures determined by the kernel and response functions. 
Recent non-asymptotic studies similarly treat Hilbert/RKHS norms and \(\ell_2\) norms as key complexity parameters rather than universal constants \citep{iwazaki2025improved,lee2024unified}.
For example, writing \(z=(z^1,\ldots,z^d)\), in \(H^\beta([0,1]^d)\), a nuisance of the form \(z\mapsto \sin(\sum_{j=1}^d z^j)\) has norm of order \(d^{\beta/2}\). By contrast, if the CATE depends only on an \(s\)-dimensional low-rank or intrinsic representation, e.g. \(h^\star(z)=\sin(\sum_{j=1}^s z^j)\) with \(s<d\), then its norm is of order \(s^{\beta/2}\). Thus, large nuisance norms may reflect ambient dimension, domain geometry, smoothness, or amplitude, while \(\|h^\star\|_\cF\) can remain small because of its simpler structure. 

To investigate the optimal performance, consider an imaginary scenario where we observe both potential outcomes \(y_i(0)\) and \(y_i(1)\) for all \(i \in [n]\), and there is no covariate shift.
In this case, the problem reduces to classical KRR, whose MSE is of order
\begin{equation}\label{equation: imaginary MSE}
\tilde{\cO}\bigl(n^{-\alpha} \|h^\star\|_{\cF}^{2(1-\alpha)} \bigr).
\end{equation}
See \citet{wainwright2019high,fischer2020sobolev}.
For Sobolev kernels, this rate is also known to be minimax optimal \citep{green2021minimax}.
Our setting is more challenging because missing outcomes prevent direct regression on \(h^\star\), \emph{making it difficult to adapt to \(\|h^\star\|_\cF\)}, while weak treatment and source--target overlaps (Assumption~\labelcref{assumption; weak treatment overlap,assumption; overlap source target}) limit the usable source sample size.
Hence, the \emph{effective} sample size is determined by weak overlap, which influences the learning rate.
We define our effective sample size by accounting for these two overlaps. 
\begin{definition}[Effective sample size]
We define the effective sample size under the two overlaps (Assumption~\labelcref{assumption; weak treatment overlap,assumption; overlap source target}) as 
\[
n_{\operatorname{eff}}:= \frac{n}{BR}.
\]
\end{definition}
The quantity \(n_{\operatorname{eff}}\) plays a central role in the MSE bound.
Later in Section~\labelcref{subsection: lower bound}, we prove that the effective sample size is necessary and that it aligns with the lower bound.

We now present our main result regarding the MSE.

\begin{theorem}[MSE bound of final model]\label{theorem; main theorem}
Suppose we run \texttt{COKE} under Assumptions~\labelcref{assumption; boundedness,assumption; consistency and unconfoundedness,assumption; subGaussian noise,assumption; overlap source target,assumption; weak treatment overlap,Assumption; eigenvalue decay} with $\lambda_{0,0} = \lambda_{0,1} = \tilde{\lambda}_0 = \tilde{\lambda}_1 = \frac{\xi \log n}{n}$ and $\Lambda_1 = \{ 2^\ell \frac{\xi \log n}{n} \mid \ell =0,1, \dots, L\}$, where $L = 2\lceil \log n \rceil$.
We further assume that \(n > BR\) and that \(\| h^\star\|_{\cF}\) is bounded by some universal constant. 
Let \(M = \max\bigl(\|f_0^\star\|_\cF,\|f_1^\star\|_\cF\bigr) \). 
Then, with probability at least \(1-n^{-10}\), the MSE of our final model satisfies:
\[
\cE_\cT(\hat{h}_{\final}) 
\lesssim n_{\operatorname{eff}}^{-\alpha} \norm{h^\star}^{2(1-\alpha)}_\cF + M^2 \Bigl(\frac{1}{n_{\operatorname{eff}}}+ \frac{R}{n_\cT}\Bigr).
\]
Here, \(\lesssim\) hides absolute constants, \(\sigma,\xi\), and logarithmic factors.
\end{theorem}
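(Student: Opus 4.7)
My plan is a three-stage analysis matching the three stages of \texttt{COKE}. Writing $\|g\|_\cT^2 = \EE_{z \sim \cQ_\cT}\,g(z)^2$, and using $\|h - h^\star\|_\cT^2 \leq 2\|h - \tilde h\|_\cT^2 + 2\|\tilde h - h^\star\|_\cT^2$ together with the defining minimality of $\hat h_{\final}$ over $\cH$, I would establish an oracle inequality
\begin{align*}
\cE_\cT(\hat h_{\final}) \;\lesssim\; \min_{h \in \cH} \cE_\cT(h) + \cE_\cT(\tilde h) + \Xi,
\end{align*}
where $\Xi = \sup_{h \in \cH}\bigl|L(h) - \|\tilde h - h\|_\cT^2\bigr|$ is the uniform residual from replacing empirical by population averages. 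Since $|\cH| = O(\log n)$ and both $\tilde h$ and $\cH$ are independent of $\cD_\cT$ by the sample split, conditioning on $(\tilde h,\cH)$ and applying a Bernstein union bound together with the sup-norm control $\|g\|_\infty \leq \sqrt{\xi}\,\|g\|_\cF$ for KRR outputs should yield $\Xi \lesssim M^2 R / n_\cT$ up to logs, accounting for the third term of the theorem.

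\textbf{Bounding the test model $\tilde h$.}
Next I would bound $\cE_\cT(\tilde h)=\cE_\cT(\tilde f_1 - \tilde f_0)$. Since $\tilde f_k$ is KRR on the $\cD_2$ subsample with treatment $a=k$ and the very small regularizer $\tilde\lambda = \xi\log n/n$, a standard operator-theoretic bias--variance decomposition gives $\cE_\cS(\tilde f_k - f_k^\star) \lesssim M^2 R / n$ after using the treatment overlap comparison in Assumption~\labelcref{assumption; weak treatment overlap} to dominate $\bSigma_\cS$ by $R(\bSigma_{\cS,k} + \xi\Ib/n)$. Converting the source norm to the target norm via Assumption~\labelcref{assumption; overlap source target} costs a factor $B$, giving $\cE_\cT(\tilde f_k - f_k^\star) \lesssim M^2 BR/n = M^2/n_{\operatorname{eff}}$; summing over $k \in \{0,1\}$ yields the second term of the theorem.

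\textbf{Rate-optimal RA candidate.}
For $\min_{h \in \cH}\cE_\cT(h)$, I would select the grid element $\bm\lambda^\star \in \bm\Lambda$ whose third coordinate lies within a factor of two of the classical KRR optimum $\lambda_1^\star \asymp n_{\operatorname{eff}}^{-1/(1+2\ell)}\|h^\star\|_\cF^{-2/(1+2\ell)}$. Rewriting the pseudo-outcome as $m_{1i} = h^\star(z_{1i}) + \eta_i + \Delta_i$, where $\eta_i$ is a mean-zero source-dependent noise mixing $\varepsilon_i$ with the treatment indicator and $\Delta_i$ absorbs the nuisance errors $\hat f_k(z_{1i})-f_k^\star(z_{1i})$, I would exploit the linearity of the KRR solution in the response vector to split $\hat h_{\bm\lambda^\star} = \hat h^\eta + \hat h^\Delta$. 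Classical KRR analysis with effective sample size $n_{\operatorname{eff}}$ then delivers $\cE_\cT(\hat h^\eta) \lesssim n_{\operatorname{eff}}^{-\alpha}\|h^\star\|_\cF^{2(1-\alpha)}$, i.e.\ the first term of the theorem, while $\cE_\cT(\hat h^\Delta)$ is bounded by a multiple of $\cE_\cS(\hat f_0 - f_0^\star) + \cE_\cS(\hat f_1 - f_1^\star)$ times overlap constants, which by the first-stage analysis is $O(M^2/n_{\operatorname{eff}})$ and is absorbed into the second term.

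\textbf{Main obstacle.}
The hardest step, in my view, is the sharp propagation of nuisance errors through the second-stage KRR \emph{without} invoking positivity. The conventional rate-double-robustness argument from the DML literature leans on a bounded inverse propensity to convert $L^2_\cS$ errors on the nuisances into errors against the pseudo-outcome distribution; here I must instead argue operator-theoretically, dominating quadratic forms via the two-sided weak overlap inequalities $\Sigmacontrol \preceq R(\Sigmatreated + \xi\Ib/n)$ and its symmetric counterpart so that each occurrence of $R$ enters with its correct power. A secondary technical point is grid adaptivity: the $O(\log n)$ grid $\Lambda_1$ must contain a regularizer within a factor of two of $\lambda_1^\star$ throughout the regime $n > BR$, which requires $\lambda_1^\star$ to lie between $\xi\log n/n$ and $2^q\xi\log n/n$ uniformly in the problem parameters.
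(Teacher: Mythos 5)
Your high-level architecture (candidate analysis, test-outcome construction, selection oracle inequality) matches the paper's, and your pseudo-outcome decomposition \(m_{1i}=h^\star(z_{1i})+\eta_i+\Delta_i\) is exactly the one the paper uses. However, two of your quantitative steps would fail, and both failures trace to the same source: KRR fit with the near-interpolating regularizer \(\xi\log n/n\) has \(L^2\) error whose variance part is of order \(\sigma^2\operatorname{Tr}(\Sbar_{\xi\log n/n})/n \asymp \sigma^2 (n/\log n)^{1/(2\ell)}/n\), which is polynomially larger than \(1/n\). First, your oracle inequality places \(\cE_\cT(\tilde h)\) additively at full strength, and since \(\tilde f_0,\tilde f_1\) are fit with \(\tilde\lambda=\xi\log n/n\), the claim \(\cE_\cT(\tilde f_k-f_k^\star)\lesssim M^2 BR/n\) is false: the bias is of that order, but the variance is of order \(\sigma^2 BR\,n^{1/(2\ell)-1}\), which dominates the target rate \(n_{\operatorname{eff}}^{-\alpha}\) (e.g.\ \(n^{-1/2}\) versus \(n^{-2/3}\) when \(\ell=1\)). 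The paper avoids this by invoking a model-selection lemma (Theorem 5.2 of \citet{wang2023pseudo}) in which only the \emph{bias} \(\cE_\cT^{\inn}(\EE[\tilde\eta])\) of the test outcomes enters at full strength, while the variance enters only as \(\|\Xb_\cT(\tilde\eta-\EE[\tilde\eta])\|_{\psi_2}^2/n_\cT\); the mechanism is that the squared noise term \((\tilde h-\EE\tilde h)^2\) is common to every candidate and cancels in the \(\arg\min\), leaving only a cross term that concentrates at rate \(1/n_\cT\). Your plain triangle-inequality decomposition \(\|h-h^\star\|_\cT^2\le 2\|h-\tilde h\|_\cT^2+2\|\tilde h-h^\star\|_\cT^2\) destroys exactly this cancellation.

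Second, the same issue breaks your bound on the propagated nuisance error: \(\cE_\cS(\hat f_a-f_a^\star)\) with \(\lambda_{0,a}=\xi\log n/n\) is \emph{not} \(O(M^2 R/n)\), again because of its variance part, so chaining \(\cE_\cT(\hat h^{\Delta})\lesssim(\text{overlap})\cdot\sum_a\cE_\cS(\hat f_a-f_a^\star)\) cannot yield the stated theorem. The paper instead bounds the composed operator \(\bigl\|(\Sighat_{1}+\lambda_{1}\Ib)^{-1}\Sighat_{1,1}(\hat\theta_0-\theta_0^\star)\bigr\|_{\bSigma_\cT}\) directly: the second-stage smoothing with the \emph{larger} regularizer \(\lambda_1\) damps the first-stage noise, so the propagated variance comes out as \(\sigma^2 R\operatorname{Tr}(\Sbar_{\lambda_{1}})/n\) (governed by \(\lambda_1\), hence absorbed into the leading term after optimizing the grid), while only the propagated \emph{bias} \(R\|\Sbar_{\lambda_1}\|_{\op}\lambda_{0,a}\|f_a^\star\|_\cF^2\asymp \xi BRM^2\log n/n\) lands in the second term. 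A related but smaller gap is your in-sample-to-population step: the candidates' Hilbert norms are as large as \(\sigma\sqrt{Rn/\xi}\) because of the noise, so the crude bound \(\|g\|_\infty\le\sqrt\xi\|g\|_\cF\) gives a deviation term far worse than \(M^2R/n_\cT\); the paper needs the high-probability pointwise bound \(|x_{0}^\top(\hat\eta_{\bm\lambda}-\eta^\star)|\lesssim\sqrt{R}(M\sqrt\xi+\sigma/\sqrt{\log n})\) together with a truncation-type lemma to get the right rate.
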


The first (leading) term in the bound matches the sharp rate \eqref{equation: imaginary MSE} when replacing \(n\) with the effective sample size \(n_{\operatorname{eff}}\). 
Moreover, this term alone aligns with the lower bound presented in Section~\labelcref{subsection: lower bound}.
The second error term decays at a fast rate ($n_{\operatorname{eff}}^{-1} + R n_{\cT}^{-1}$), diminishing much more rapidly than the nonparametric first term. 
More explicitly, the target-sample component of this remainder is dominated by the leading term whenever \(n_\cT \gtrsim M^2R n_{\operatorname{eff}}^\alpha\).
Given that unlabeled target samples are typically abundant, in the regime of sufficiently large \(n\) and \(n_\cT\), the bound is dominated by the leading term.
Thus, our result can be interpreted as an oracle inequality with \emph{adaptivity}, where the second term represents a negligible overhead. 
In the regime \( \|h^\star\|_\cF \ll \|f_0^\star \|_\cF, \|f_1^\star\|_\cF \), the influence of nuisance-function complexity (\(M\)) remains mild, ensuring that the rate depends primarily on the structural complexity of the contrast function. 


When \(\|f_0^\star\|_\cF,\|f_1^\star\|_\cF,\|h^\star\|_\cF \lesssim 1\), this gives the optimal dependence on \(n,B,R\) through \(n_{\operatorname{eff}}=n/(BR)\). Our result also covers regimes in which \(M\) is not constant, showing some robustness to large-\(M\) settings.
As discussed above, \(M\) is a problem-instance complexity parameter that may encode smoothness, domain geometry, ambient dimension, localization, and signal amplitude.
This perspective is also related to nuisance-model misspecification: when the true nuisance functions are approximated by elements of a working RKHS, the Hilbert norms of these approximants reflect the finite-sample approximation complexity.

Comparison with existing literature further highlights this contribution. While \citet{nie2021quasi} and \citet{foster2023orthogonal} also investigated CATE estimation using KRR, their theoretical guarantees rely on explicit propensity score modeling and require sufficient convergence rates for nuisance estimators; moreover, their performance degrades significantly under weak overlap. 
In addition, even employing separate regression and the method of \cite{wang2026pseudo} yields a bound of $n_{\operatorname{eff}}^{-\alpha}M^{2(1-\alpha)}$. This implies that our method achieves a faster convergence rate when $\max(\|f_0^\star\|_\mathcal{F}^2, \|f_1^\star\|_\mathcal{F}^2) \lesssim n_{\operatorname{eff}}^{1/(2\ell)}$.

For Sobolev kernels under nearly-uniform design, covariate shift can also be handled through an $L^\infty$-based analysis~\citep{fischer2020sobolev,tuo2024asymptotic}. For $H^\beta(\cZ)$ with $\cZ \subset \RR^d$, an $L^\infty$-based argument yields the $B$-free rate $(R/n)^{\frac{2\beta-d}{2\beta}}$, while our target-risk bound scales as $(BR/n)^{\frac{2\beta}{d+2\beta}}$. Therefore, our bound is sharper when $B \lesssim (n/R)^{d^2/(4\beta^2)}$, which covers a substantial moderate-smoothness regime: if $d/2<\beta\le d$, then the exponent lies between $1/4$ and $1$. The \(L^\infty\) route can be preferable when \(\beta/d\) is very large, but this requires extreme smoothness, nearly-uniform design, and strong Sobolev-type assumptions.

The proof is deferred to Appendix~\labelcref{section: proof main theorem} and consists of two parts: 
(i) analysis of the RA learner (Section~\ref{sec-MSE-bound}), and 
(ii) analysis of model selection (Section~\labelcref{subsection: model selection}).

\begin{remark}[Results for ATE]\label{remark; ATE}
We may also be interested in the ATE on the target distribution, defined as \(\Delta_\cT^\star := \EE_{z \sim \cQ_\cT}[h^\star(z)]\). 
Our framework readily accommodates this estimand. Specifically, by invoking Algorithm~\ref{algorithm: main} (using the inputs from Theorem~\ref{theorem; main theorem}) and substituting the objective \(L(h)\) in the subroutine (Algorithm~\ref{algorithm: model selection}) with the expression below, we obtain an estimator for the ATE:
\begin{align}\label{equation: model selection ATE}
L(h) =  \bigg|\frac{1}{n_\cT}\sum_{i=1}^{n_\cT} \tilde h(z_{0i}) - \frac{1}{n_\cT} \sum_{i=1}^{n_\cT} h(z_{0i}) \bigg|^2.
\end{align}
With the resulting \(\hat h_{\operatorname{final}}\), the ATE estimator is $\hat\Delta_\cT := \frac{1}{n_\cT} \sum_{i=1}^{n_\cT} \hat h_{\operatorname{final}}(z_{0i})$.
When \(n_\cT \gtrsim n/B\), the following optimal error bound holds with high probability:
\begin{align*}
|\hat \Delta_\cT -\Delta^\star_\cT| = \tilde\cO(\frac{M}{\sqrt{n_{\operatorname{eff}}}} )
\end{align*}
Importantly, the eigenvalue decay in Assumption~\ref{Assumption; eigenvalue decay} is not required; see Appendix~\ref{section: apdx ATE} for details. 
\end{remark}

\subsection{MSE Bound of RA Learner}\label{sec-MSE-bound}
\noindent
Next, we examine the MSE bound of the estimator from the RA learner (Algorithm~\labelcref{algorithm: RA learner}). 
It requires a triple of regularizers \(\bm{\lambda} = (\lambda_{0,0}, \lambda_{0,1}, \lambda_{1})\) with \(\lambda_{0,0}, \lambda_{0,1}, \lambda_{1} >0\). 
Below, we present a theorem on the MSE bound of the RA learner with regularizers \(\bm{\lambda}\). 
Before that, we define an operator:
\[
\Sbar_\lambda := (\bSigma_{\cS} + \lambda \Ib)^{-\frac{1}{2}} \bSigma_\cT (\bSigma_{\cS} + \lambda \Ib)^{-\frac{1}{2}}, 
\quad \text{for any } \lambda > 0.
\]
Also, for \(\bm{\lambda} = (\lambda_{0,0}, \lambda_{0,1}, \lambda_{1})\), we define
\[
\Rcr(\bm{\lambda}) 
:= 
R\|\Sbar_{\lambda_{1}}\|_{\op}\bigl(\lambda_{0,1}\|f^\star_1\|_{\cF}^2 + \lambda_{0,0}\|f^\star_0\|_{\cF}^2\bigr)
+
\lambda_{1}\|\Sbar_{\lambda_{1}}\|_{\op}\|h^\star\|_{\cF}^2
+
\sig^2 \frac{R\operatorname{Tr}(\Sbar_{\lambda_{1}})}{n}\log n.
\]
The next theorem shows the MSE bound of \(\hat{h}_{\bm{\lambda}}\) from Algorithm~\labelcref{algorithm: RA learner}.

\begin{theorem}[MSE bound of RA learner estimators]\label{theorem; MSE bound RA learner}
Assume we run Algorithm~\labelcref{algorithm: RA learner} with the regularizers \(\bm{\lambda} = (\lambda_{0,0}, \lambda_{0,1}, \lambda_{1})\) satisfying \(\lambda_{0,0},\lambda_{0,1},\lambda_1\geq \xi\log n/n\) and the dataset \(\cD_1\).
Under the same assumptions as in Theorem~\ref{theorem; main theorem}, the MSE of \(\hat{h}_{\bm{\lambda}}\) is bounded by
\[
\cE_\cT(\hat{h}_{\bm{\lambda}}) \lesssim \Rcr(\bm{\lambda})
\]
with probability at least \(1-2n^{-11}\).
Here, \(\lesssim\) hides absolute constants.
\end{theorem}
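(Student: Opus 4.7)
The plan is to rewrite each pseudo-outcome $m_{1i}$ as a perturbed observation of $h^\star(z_{1i})$ and then apply a standard operator-level KRR error decomposition, routing the first-stage nuisance errors through the treatment- and source-target-overlap inequalities. By consistency and unconfoundedness (Assumption~\ref{assumption; consistency and unconfoundedness}), one checks directly that $m_{1i} = h^\star(z_{1i}) + \xi_i + e_i$ with $\xi_i := (2a_{1i}-1)\varepsilon_i$ and $e_i := (f_0^\star - \hat{f}_0)(z_{1i})\one(a_{1i}=1) + (\hat{f}_1 - f_1^\star)(z_{1i})\one(a_{1i}=0)$. Here $\xi_i$ is mean-zero and $\sigma$-sub-Gaussian by Assumption~\ref{assumption; subGaussian noise}; crucially, at each sample $i$ the noise feeding $\xi_i$ is the $a_{1i}$-group noise, while $e_i$ is built from $\hat{f}_{1-a_{1i}}$ (fit on the opposite group), so $\xi$ and $e$ decouple conditional on the covariates and treatments. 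Writing the second-stage KRR in operator form with $\hat{\bSigma}_1 := n_1^{-1}\sum_i \phi(z_{1i})\otimes\phi(z_{1i})$ and using the identity $n_1^{-1}\sum_{i:\,a_{1i}=1} g(z_{1i})\phi(z_{1i}) = \widehat{\bSigma}_{\cS,1}\,g$ (and its control analogue), one gets
\begin{align*}
\hat{h}_{\bm{\lambda}} - h^\star
&= \underbrace{-\lambda_1 (\hat{\bSigma}_1+\lambda_1 \Ib)^{-1}h^\star}_{\text{(I) bias}}
+ \underbrace{(\hat{\bSigma}_1+\lambda_1 \Ib)^{-1}\Big(\tfrac{1}{n_1}\sum_{i} \xi_i \phi(z_{1i})\Big)}_{\text{(II) noise}} \\
&\quad + \underbrace{(\hat{\bSigma}_1+\lambda_1 \Ib)^{-1}\big[\widehat{\bSigma}_{\cS,1}(f_0^\star - \hat{f}_0) + \widehat{\bSigma}_{\cS,0}(\hat{f}_1 - f_1^\star)\big]}_{\text{(III) nuisance}},
\end{align*}
and taking $\cE_\cT = \|\bSigma_\cT^{1/2}\cdot\|_\HH^2$ with a triangle inequality splits the target MSE into the corresponding three contributions.

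The workhorse throughout is the standard operator-concentration lemma for empirical kernel second-moment operators: for $\lambda \gtrsim \xi\log n/n$, each of $\hat{\bSigma}_1+\lambda\Ib$, $\widehat{\bSigma}_{\cS,1}+\lambda\Ib$, $\widehat{\bSigma}_{\cS,0}+\lambda\Ib$ lies within constant factors of its population counterpart with probability $\geq 1-O(n^{-11})$ (Bernstein-in-Hilbert, in the style of Rudi--Rosasco or Fischer--Steinwart). Applying this, (I) reduces to $\lambda_1^2\|\bSigma_\cT^{1/2}(\bSigma_\cS+\lambda_1\Ib)^{-1}h^\star\|_\HH^2$; inserting $(\bSigma_\cS+\lambda_1\Ib)^{-1/2}(\bSigma_\cS+\lambda_1\Ib)^{1/2}$ and invoking $\|\bSigma_\cT^{1/2}(\bSigma_\cS+\lambda_1\Ib)^{-1/2}\|_\op^2 = \|\Sbar_{\lambda_1}\|_\op$ together with $\lambda_1 \|(\bSigma_\cS+\lambda_1\Ib)^{-1/2}h^\star\|_\HH^2 \le \|h^\star\|_\cF^2$ yields the second summand $\lambda_1\|\Sbar_{\lambda_1}\|_\op \|h^\star\|_\cF^2$ of $\Rcr(\bm{\lambda})$. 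For (II), conditional on the covariates and treatments, the argument is a sub-Gaussian linear form in the independent $\xi_i$'s whose $\bSigma_\cT^{1/2}(\bSigma_\cS+\lambda_1\Ib)^{-1}$-quadratic variance computes to $\sigma^2 n_1^{-1}\operatorname{Tr}(\Sbar_{\lambda_1})$, so a Hanson--Wright/Bernstein tail at confidence $n^{-11}$ produces the third summand (the trailing $R\geq 1$ in the theorem enters only as a conservative absorbing constant).

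The main technical obstacle is (III). After replacing $\hat{\bSigma}_1+\lambda_1\Ib$ by $\bSigma_\cS+\lambda_1\Ib$ via the concentration lemma and extracting the factor $\|\Sbar_{\lambda_1}\|_\op$ exactly as in the bias bound, the task reduces to controlling $\|(\bSigma_\cS+\lambda_1\Ib)^{-1/2}\widehat{\bSigma}_{\cS,1}(f_0^\star-\hat{f}_0)\|_\HH^2$ and its $\hat{f}_1$ analogue. Replacing $\widehat{\bSigma}_{\cS,1}$ by $\Sigmatreated$ (again by concentration) and using $\Sigmatreated \preceq \bSigma_\cS$ collapses this to $\|\Sigmatreated^{1/2}(f_0^\star-\hat{f}_0)\|_\HH^2$, i.e.\ the first-stage error of $\hat{f}_0$ measured in the \emph{treated} metric. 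Treatment overlap (Assumption~\ref{assumption; weak treatment overlap}), which under $\lambda_{0,0} = \xi\log n/n \ge \xi/n$ reads $\Sigmatreated \preceq R(\Sigmacontrol+\lambda_{0,0}\Ib)$, upper-bounds this by $R\,\|(\Sigmacontrol+\lambda_{0,0}\Ib)^{1/2}(f_0^\star-\hat{f}_0)\|_\HH^2$, which is the standard regularized in-sample error of the first-stage control-group KRR, bounded by $\lambda_{0,0}\|f_0^\star\|_\cF^2$ plus a variance term of the same order as the third summand of $\Rcr$. Combined with the symmetric $\hat{f}_1$ analysis, this yields the first summand $R\|\Sbar_{\lambda_1}\|_\op(\lambda_{0,1}\|f_1^\star\|_\cF^2 + \lambda_{0,0}\|f_0^\star\|_\cF^2)$. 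The difficulty lies in the fact that the first and second stages share $\cD_1$, so every ``empirical $\to$ population'' substitution must come from uniform operator concentration rather than sample splitting, and the chain of inequalities must be orchestrated so that $R$ enters only once (via treatment overlap) and $\|\Sbar_{\lambda_1}\|_\op$ enters only once (via the second-stage source-target shift); this careful bookkeeping is what produces the sharp form of $\Rcr(\bm{\lambda})$.
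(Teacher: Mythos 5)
Your proposal is correct and follows essentially the same route as the paper's proof: the identical closed-form error decomposition into bias, noise, and the two nuisance-propagation terms \((\Sighat_{1}+\lambda_{1}\Ib)^{-1}\Sighat_{1,1}(f_0^\star-\hat f_0)\) and \((\Sighat_{1}+\lambda_{1}\Ib)^{-1}\Sighat_{1,0}(\hat f_1-f_1^\star)\), the same good-event operator concentration plus Hanson--Wright bounds, and the same single insertion of \(R\) via treatment overlap and of \(\|\Sbar_{\lambda_{1}}\|_{\op}\) via the source--target shift. One caution on the nuisance term: the ``collapse'' to \(\|\Sigmatreated^{1/2}(f_0^\star-\hat f_0)\|_{\HH}^2\) must retain the regularizer at level \(\lambda_{0,0}\) (i.e., work with \(\Sigmatreated+\lambda_{0,0}\Ib\preceq R(\Sigmacontrol+\lambda_{0,0}\Ib)\lesssim R(\Sighat_{1,0}+\lambda_{0,0}\Ib)\), which is what the paper does through its empirical second-moment ratio bounds), since the concentration lemma only equates regularized operators and carrying \(\lambda_{1}\) instead of \(\lambda_{0,0}\) would inflate the propagated bias to order \(\lambda_{1}M^2\) rather than the claimed \(\lambda_{0,0}M^2\).
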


Its proof is deferred to Appendix~\labelcref{section: proof RA learner}. 
This theorem provides an upper bound on the MSE of $\hat{h}_{\bm{\lambda}}$.
Recall that we generate multiple CATE estimators \(\cH := \{\hat{h}_{\bm{\lambda}} \mid \bm{\lambda}\in \bm{\Lambda}\}\) in Algorithm~\labelcref{algorithm: RA learner}.
The following corollary bounds the best MSE over these candidates.

\begin{corollary}[Optimal MSE bound among candidates]\label{corollary; optimal MSE bound}
Assume that we run \texttt{COKE} under the same setup as in Theorem~\ref{theorem; main theorem}. Then, with probability at least \(1-2n^{-11}\), the following holds:
\[
\inf_{\bm{\lambda} \in \bm{\Lambda}} \cE_\cT(\hat{h}_{\bm{\lambda}}) 
\lesssim  
\Bigl(\frac{BR}{n}\Bigr)^{\alpha}\|h^\star\|_{\cF}^{2(1-\alpha)}(\log n)^{\alpha}
+\frac{BR}{n} M^2\log n.
\]
Here, the symbol \(\lesssim\) hides absolute constants and the parameters \(\sigma\) and \(\xi\).

\end{corollary}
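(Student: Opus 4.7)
\textbf{Proof proposal for Corollary~\labelcref{corollary; optimal MSE bound}.} The plan is to invoke Theorem~\labelcref{theorem; MSE bound RA learner} at every grid point $\bm{\lambda}\in\bm{\Lambda}$ and then exhibit one regularizer for which the deterministic envelope $\Rcr(\bm{\lambda})$ matches the claim. Since $|\bm{\Lambda}|=q+1=O(\log n)$, a union bound gives $\cE_\cT(\hat h_{\bm{\lambda}})\lesssim\Rcr(\bm{\lambda})$ simultaneously for all $\bm{\lambda}\in\bm{\Lambda}$ with probability at least $1-2n^{-11}$, after absorbing the $\log n$ factor into the constants of the RA-learner bound.

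I next simplify $\Rcr(\bm{\lambda})$ using weak overlap and polynomial decay. Because $\lambda_1\ge\xi\log n/n \ge \xi/n$ throughout the grid, the source--target overlap gives $\bSigma_\cT\preceq B(\bSigma_\cS+\lambda_1\Ib)$, and hence $\|\Sbar_{\lambda_1}\|_{\op}\le B$; this turns the first term of $\Rcr(\bm{\lambda})$ into at most $2BRM^2\xi\log n/n$, which will supply the second term of the stated bound, and the bias term into $B\lambda_1\|h^\star\|_\cF^2$. For the variance I need the sharp estimate
\[
\operatorname{Tr}(\Sbar_{\lambda_1})\ \lesssim\ B^{1-1/(2\ell)}\,\lambda_1^{-1/(2\ell)}.
\]
I obtain this by averaging $\bSigma_\cS+\lambda_1\Ib\succeq\bSigma_\cT/B$ (from overlap plus $\lambda_1\ge\xi/n$) with the trivial $\bSigma_\cS+\lambda_1\Ib\succeq\lambda_1\Ib$ to get $\bSigma_\cS+\lambda_1\Ib\succeq\tfrac12(\bSigma_\cT/B+\lambda_1\Ib)$; operator monotonicity of $A\mapsto A^{-1}$ then yields $(\bSigma_\cS+\lambda_1\Ib)^{-1}\preceq 2(\bSigma_\cT/B+\lambda_1\Ib)^{-1}$, so that
\[
\operatorname{Tr}(\Sbar_{\lambda_1})\ \le\ 2\operatorname{Tr}\!\bigl((\bSigma_\cT/B+\lambda_1\Ib)^{-1}\bSigma_\cT\bigr)\ =\ 2B\sum_{j}\frac{\mu_j}{\mu_j+B\lambda_1}\ \lesssim\ B^{1-1/(2\ell)}\lambda_1^{-1/(2\ell)},
\]
where the last step applies $\mu_j\lesssim j^{-2\ell}$ to the effective dimension of $\bSigma_\cT$ at level $B\lambda_1$.

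Finally, I take the bias--variance balancer $\lambda_1^{\star}=B^{\alpha-1}\bigl(R\log n/(n\|h^\star\|_\cF^2)\bigr)^{\alpha}$. Using the identities $\alpha=2\ell/(2\ell+1)$ and $\alpha/(2\ell)=1-\alpha$, the bias $B\lambda_1\|h^\star\|_\cF^2$ and the variance $\sig^2 RB^{1-1/(2\ell)}\lambda_1^{-1/(2\ell)}\log n/n$ coincide at $\lambda_1^\star$ with common value $(BR\log n/n)^{\alpha}\|h^\star\|_\cF^{2(1-\alpha)}$. Under $n>BR$ and boundedness of $\|h^\star\|_\cF$, $\lambda_1^\star$ lies in the grid's range $[\xi\log n/n,\,2^{q}\xi\log n/n]$ with $q=\lceil 2\log n\rceil$; since $\Lambda_1$ is geometric with ratio $2$, some $\lambda_1\in\Lambda_1$ satisfies $\lambda_1\in[\lambda_1^\star,2\lambda_1^\star]$, and for this choice $\Rcr(\bm{\lambda})$ is within a constant factor of the balanced value. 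Combining with the first-term contribution yields the stated rate.

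The main obstacle is the sharp exponent $B^{1-1/(2\ell)}$ in the variance, rather than a naive $B$. Bounding $\operatorname{Tr}(\Sbar_{\lambda_1})\le B\operatorname{Tr}((\bSigma_\cS+\lambda_1\Ib)^{-1}(\bSigma_\cS+\xi/n\,\Ib))$ and then invoking effective dimension would require polynomial decay of $\bSigma_\cS$ (which Assumption~\labelcref{Assumption; eigenvalue decay} does not provide) and would only deliver $B(R/n)^{\alpha}$ in the leading term, losing a factor of $B^{1-\alpha}$. The Loewner-inverse step above is exactly what transfers the target eigendecay through the source resolvent and produces the advertised $(BR/n)^{\alpha}$ rate.
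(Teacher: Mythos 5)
Your proposal follows essentially the same route as the paper: the paper's proof of Corollary~\labelcref{corollary; optimal MSE bound} also starts from the simultaneous (union-bounded) version of Theorem~\labelcref{theorem; MSE bound RA learner}, uses $\|\Sbar_{\lambda_1}\|_{\op}\le B$ to absorb the nuisance-propagation term into $\frac{BR}{n}M^2\log n$, and then proves exactly your Loewner-inverse trace bound $\operatorname{Tr}(\Sbar_{\lambda_1})\lesssim B\sum_j \mu_j/(\mu_j+B\lambda_1)\lesssim B^{1-1/(2\ell)}\lambda_1^{-1/(2\ell)}$ (this is Lemma~\labelcref{lemma; optimal trade-off lambda}), before balancing at the same $\lambda_1^\star\asymp (R\log n/n)^{\alpha}B^{-(1-\alpha)}\|h^\star\|_{\cF}^{-2\alpha}$ and rounding to the dyadic grid. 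Your identification of the $B^{1-1/(2\ell)}$ exponent as the crux is exactly the point of that lemma, and your algebra at $\lambda_1^\star$ checks out.

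The one place your argument is incomplete is the claim that ``under $n>BR$ and boundedness of $\|h^\star\|_{\cF}$, $\lambda_1^\star$ lies in the grid's range.'' Boundedness of $\|h^\star\|_{\cF}$ is an \emph{upper} bound, so it only prevents $\lambda_1^\star$ from falling below the grid's lower end $\xi\log n/n$; since $\lambda_1^\star\propto\|h^\star\|_{\cF}^{-2\alpha}$, a very small (or zero) CATE norm pushes $\lambda_1^\star$ above the grid's upper end $2^{q}\xi\log n/n$. The paper handles this explicitly as a second case (when $\frac{R\log n}{n^2}\gtrsim\|h^\star\|_{\cF}^2$): pick any $\lambda_1'\in[1,2]\cap\Lambda_1$, note $\operatorname{Tr}(\Sbar_{1})\lesssim B$ and $B\lambda_1'\|h^\star\|_{\cF}^2\lesssim \frac{BR\log n}{n^2}$, so the entire bound collapses into the $\frac{BR}{n}M^2\log n$ term. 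You should add this degenerate case (or restrict the balancing argument to the regime where $\lambda_1^\star$ is below the grid's top); otherwise the proof is the paper's.
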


Its proof is deferred to Appendix~\labelcref{section: proof RA learner}. 
We observe that the optimal MSE over the grid of regularizers matches the lower bound, as will be discussed in Theorem~\ref{theorem; lower bound}. A natural question is how to choose the regularizer $\bm{\lambda} \in \bLambda$ that minimizes $\cE_\cT(\hat{h}_{\bm{\lambda}})$. The next section provides an oracle inequality for our model selection procedure.

\subsection{Model Selection and Oracle Inequalities}\label{subsection: model selection}
\noindent
We now present the oracle inequalities for our model selection procedure (Algorithm~\labelcref{algorithm: model selection}), which establish the MSE bound for the final selected model. 
Define a key quantity
\[
\Ocr := \bigl(\xi M^2 + \sigma^2\bigr)\Bigl(\frac{BR}{n} + \frac{R}{n_{\cT}}\Bigr)\log(n n_\cT).
\]
This plays an important role in the analysis of model selection.
We now obtain the oracle inequality for MSE:


\begin{proposition}[Oracle inequality for MSE of the selected model]\label{proposition; oracle inequality MSE}
Assume that we run \texttt{COKE} under the same setup as in Theorem~\ref{theorem; main theorem}.
Then, with probability at least \(1 - 3n^{-11}\), the following holds:
\[
\cE_{\cT}(\hat{h}_{\operatorname{final}}) 
\lesssim 
\inf_{\bm{\lambda} \in \bm{\Lambda}} \cE_\cT(\hat{h}_{\bm{\lambda}})  
+ 
\Ocr,
\]
where \(\lesssim\) hides absolute constants.
\end{proposition}

Combining this result with Corollary~\labelcref{corollary; optimal MSE bound}, we obtain Theorem~\labelcref{theorem; main theorem}. Compared to Corollary~\labelcref{corollary; optimal MSE bound}, the above oracle inequality contains the additional term \(\Ocr\), reflecting the cost of model selection. 
However, since \(\Ocr = \tilde{\cO}\bigl(\frac{1}{n} + \frac{1}{n_\cT}\bigr)\), it remains a negligible bound relative to the leading term in Theorem~\labelcref{theorem; main theorem}.
The proof is presented in Appendix~\labelcref{section: proofs oracle inequalities}.

\subsection{Lower Bound Results}\label{subsection: lower bound}
\noindent
Finally, we present results related to the lower bound to demonstrate the optimality of our estimator. 
First, we define an \((R,B)\)-bounded instance.

\begin{definition}[\((R,B)\)-bounded instance]
For the distribution of source covariates \(\cQ_\cS\), target covariates \(\cQ_\cT\), and source propensity score \(\pi(\cdot)\), we say they form a \((R,B)\)-bounded instance if \(\frac{1}{R} \leq \pi(\cdot)\leq 1-\frac{1}{R}\) and \(\frac{\de \cQ_\cT}{\de \cQ_\cS} \leq B\).
\end{definition}

We can see that every \((R,B)\)-bounded instance satisfies the weak overlap (Assumption~\labelcref{assumption; overlap source target}). 
Now, we present our result for the lower bound.
{\color{black}

\begin{theorem}[Lower bound]\label{theorem; lower bound}
Fix an integer \(k\geq 1\), and set \(\alpha = \frac{2k}{1+2k}\). 
For any \(W>0\), \(R\geq 2\), and \(B\geq 1\) with \(n\geq BR\), there exists a triple \((\cQ_\cS, \cQ_\cT, \pi)\) forming an \((R,B)\)-bounded instance on \(\cZ =[-1,1]\) and a reproducing kernel \(K(\cdot,\cdot)\) whose target covariance operator satisfies Assumption~\labelcref{Assumption; eigenvalue decay} with \(\ell=k\) such that
\[
\inf_{\hat{h}} \sup_{\substack{(f_0^\star, f_1^\star, h^\star): \|h^\star\|_\cF \leq W}} 
\EE\bigl[\cE_\cT(\hat{h})\bigr] 
\gtrsim
\Bigl(\frac{BR}{n}\Bigr)^{\alpha}W^{2(1-\alpha)}.
\]
\end{theorem}
}
The proof is deferred to Appendix~\labelcref{section: lower bound proof}.
This result confirms that the leading term in our upper bound (Theorem~\labelcref{theorem; main theorem}) matches the minimax lower bound. 
{\color{black}
Specifically, for these \(\ell=k\) instances, the optimal rate scales as $n_{\operatorname{eff}}^{-\alpha} \|h^\star\|_\cF^{2(1-\alpha)}$ with \(\alpha = \frac{2k}{1+2k}\). 
}
We analyze this lower bound through its two key components, linking back to the intuition provided in Section~\labelcref{subsection: summary of results}:

\begingroup 
\begin{enumerate}
\item \textbf{Structural complexity ($\|h^\star\|_\cF^{2(1-\alpha)}$):} 
This term corresponds to the \textit{oracle rate} for the regression problem where both potential outcomes are observed, as discussed in Equation~\eqref{equation: imaginary MSE}. 
Importantly, our result shows that in the RKHS framework, one can \textit{always} adapt to the intrinsic complexity of the CATE function $\|h^\star\|_\cF$, regardless of the complexity of nuisance functions.

This contrasts with other settings, such as the H\"{o}lder class setup in \citet{kennedy2022minimax}. There, achieving the oracle rate (when potential outcomes are known) typically requires smoothness conditions on nuisance terms; for example, when $f_a^\star \in C^\alpha$, $\pi \in C^\beta$, and $h^\star \in C^\gamma$, one needs $\alpha + \beta > \gamma \frac{d}{d+2\gamma}$. If these conditions are not met, the minimax lower bound becomes slower. 

\item \textbf{Effective sample size ($n_{\operatorname{eff}}$):} 
The term involving $n_{\operatorname{eff}} = n/(BR)$ highlights the necessity of accounting for overlap. 
While standard CATE literature often treats the degree of overlap ($B, R$) as constant, our lower bound explicitly tracks these dependencies, demonstrating that the minimax rate is fundamentally constrained by the effective sample size determined by the overlap conditions.
Thus, Theorem~\labelcref{theorem; lower bound} confirms that the $BR$ dependence in $n_{\operatorname{eff}}=n/(BR)$ is sharp.
\end{enumerate}

\endgroup

\section{Simulation Studies}\label{sec:simu}

\subsection{Setups and Benchmarks}
\noindent
We conduct extensive simulation studies to evaluate the finite-sample performance of \texttt{COKE} and compare it with existing approaches. 
For data generation, we set the source sample size $n$ and the target (unlabeled) sample size $n_{\cT}$ such that $n_{\cT}=n/4$. To generate the covariates $z$, we let $\mathcal{U}^+$ and $\mathcal{U}^-$ respectively denote uniform distributions over $(0,\pi)$ and $(-\pi,0)$. On the source $\cS$, we independently generate $z_{j}\sim \frac{S_B^{1/q}}{S_B^{1/q} + 1}\mathcal{U}^-+ \frac{1}{S_B^{1/q} + 1}\mathcal{U}^+$ for $j=1,\ldots,q$ and $z_{q+1}, \ldots, z_{p} \stackrel{\text{i.i.d.}}{\sim} \text{Uniform}(-\pi,\pi)$. On the target, we independently generate $z_{j}\sim \frac{1}{S_B^{1/q} + 1}\mathcal{U}^-+\frac{S_B^{1/q}}{S_B^{1/q} + 1}\mathcal{U}^+$ for $j=1,\ldots,q$ and $z_{q+1}, \ldots, z_{p} \stackrel{\text{i.i.d.}}{\sim} \text{Uniform}(-\pi,\pi)$. Here, we fix $p = 4$ and $q<p$ is the number of covariates subject to distributional shift between the source and the target. Also, the hyper-parameter $S_B$ controls the degree of covariate shift between the source and target, with a larger $S_B$ resulting in a weaker overlap.

Then we set the propensity score as $\pi(z) = \mathrm{expit}(S_R\sum_{j=1}^4z_j/8)$ and generate $a_i \mid z_i \sim \mathrm{Bernoulli}(\pi(z_i))$, where $S_R$ controls the overlap between the treatment and control groups on the source. We set the outcome models as
\[
f_a^\star(z) = c \cdot q^{-1}\sum_{i=1}^q\left[2 \left( \left| z_i \right| - \frac{\pi}{4} \right) \bold{1}\left( \left| z_i \right| \geq \frac{\pi}{2} \right) + \left| z_i \right| \bold{1}\left( \left| z_i \right| < \frac{\pi}{2} \right) \right]
+ \left(a - \frac{1}{2}\right) \cdot q^{-1}\sum_{i=1}^q\sin z_i,
\]
for $a = 0, 1$, which yields the true CATE function $h^\star(z) = q^{-1}\sum_{i=1}^q\sin z_i$. Note that $f_0^\star(z)$ and $f_1^\star(z)$ are less smooth and more complex than $h^\star(z)$ due to their absolute value terms of $z_i$. Then we generate $y_i \mid a_i, z_i \sim N(f_{a_i}^\star(z_i), 0.25)$. Here, $c$ controls the complexity of the nuisance model $f_a^\star$ relative to the CATE function $h^\star$. We consider simulation settings with varying hyper-parameters including the sample size $n_{\cT}=n/4$, the dimension of covariate shift $q$, the degree of covariate shift $S_B$, the degree of non-overlap between treatment and control $S_R$, and the complexity of the nuisance model $c$. Specifically, we first set $q=1$, $S_B = 10$, $S_R = 2$, $c = 1$, and $n_{\cT}=n/4=\lceil 350\sqrt{S_B}+60S_R+25 \rceil$. Then we vary each single parameter among $S_B$, $S_R$ and $c$ separately with the remaining parameters fixed and $n_{\cT}=n/4$ changing accordingly. We also vary the sample size $n_{\cT}=n/4$ with the others fixed, and vary $S_B$ under the setting $q=2$. In each setting, we evaluate the mean squared error of an estimator $\hat{h}(z)$, i.e.~  $\EE_{z \sim \cQ_\cT}|\hat{h}(z) - h^\star(z)|^2$, averaged over 100 repetitions.

We consider five methods in our studies including \texttt{COKE} and four benchmark methods--separate regression (\texttt{SR}), \textcolor{black}{R-Learner (\texttt{RLearner}) \citep{nie2021quasi},} DR-Learner for CATE (\texttt{DR-CATE}) \citep{kennedy2020towards}, and the ACW estimator tailored for CATE estimation (\texttt{ACW-CATE}), which is motivated by \cite{lee2023improving}. For fair comparison, KRR is used for all regression tasks in these methods. The \texttt{ACW-CATE} method extends the DML approach of \cite{kennedy2020towards} by incorporating a density ratio model to address covariate shift and form the efficient influence function for the CATE on the target sample in our setup. The implementation details of the benchmark methods are provided in Appendix \labelcref{sec:app:bench}.

For KRR, we use the Matérn kernel:
$
K(\boldsymbol z,\boldsymbol w) = \frac{4}{\sqrt{\pi} \rho} e^{-{2\sqrt{2} \|\boldsymbol{z} - \boldsymbol{w}\|_2}/{\rho}} $,
where $\rho$ is the scale parameter set as $5$. For \texttt{COKE}, we set the tuning parameters in Algorithm \labelcref{algorithm: main} as $\lambda_{0,0}=\lambda_{0,1} = \frac{1}{5n}$ and ${\bm \Lambda} = \{\frac{2^k}{5n} : k = 0, 1, \ldots, \lceil\log_2(5n)\rceil\}$. {\color{black}To utilize the data more effectively, we adopt a cross-fitting procedure on the splits; see Appendix \ref{sec:app:cross-fitting} for implementation details.}
Our theoretical results in Section \labelcref{section: main results} on Algorithm \labelcref{algorithm: main} can easily be extended to the cross-fitted version; see Appendix~\labelcref{subsection: proofs for cross fitting}.
To ensure fair comparison, cross-fitting is also used to implement the benchmark methods \texttt{DR-CATE}, \texttt{ACW-CATE}, {and \texttt{R-Learner},} as recommended by existing works \citep{chernozhukov2018double,kennedy2020towards,nie2021quasi}.

\subsection{Results}
\noindent
{As shown in Figure \labelcref{fig:matern}, \texttt{COKE} delivers the best overall performance among the competing methods and shows strong advantages in challenging settings.} When $S_B$ increases, there is more severe covariate shift and weaker overlap between the source and target. In this scenario with $n_{\cT}=n/4=\lceil 350\sqrt{S_B}+60S_R+25 \rceil$, the performance of \texttt{COKE} remains stable and better than other methods \textcolor{black}{except for the nearly no-shift case $S_B=1$, where \texttt{R-Learner} is only marginally better.} For example, when $S_B = 25$, the relative efficiency of \texttt{COKE} compared to \texttt{DR-CATE} is $1.61$, 
\textcolor{black}{and its relative efficiency compared to \texttt{R-Learner} is $1.29$.} Unlike \texttt{COKE}, the mean squared errors of \texttt{DR-CATE}, \texttt{ACW-CATE}, and \texttt{R-Learner} grow significantly with $S_B$ even under an increasing sample size $n$ proportional to $\sqrt{S_B}$. This demonstrates the superior robustness of \texttt{COKE} to the weak overlap issue.

\begin{figure}[h]
\centering
\includegraphics[width=16cm]{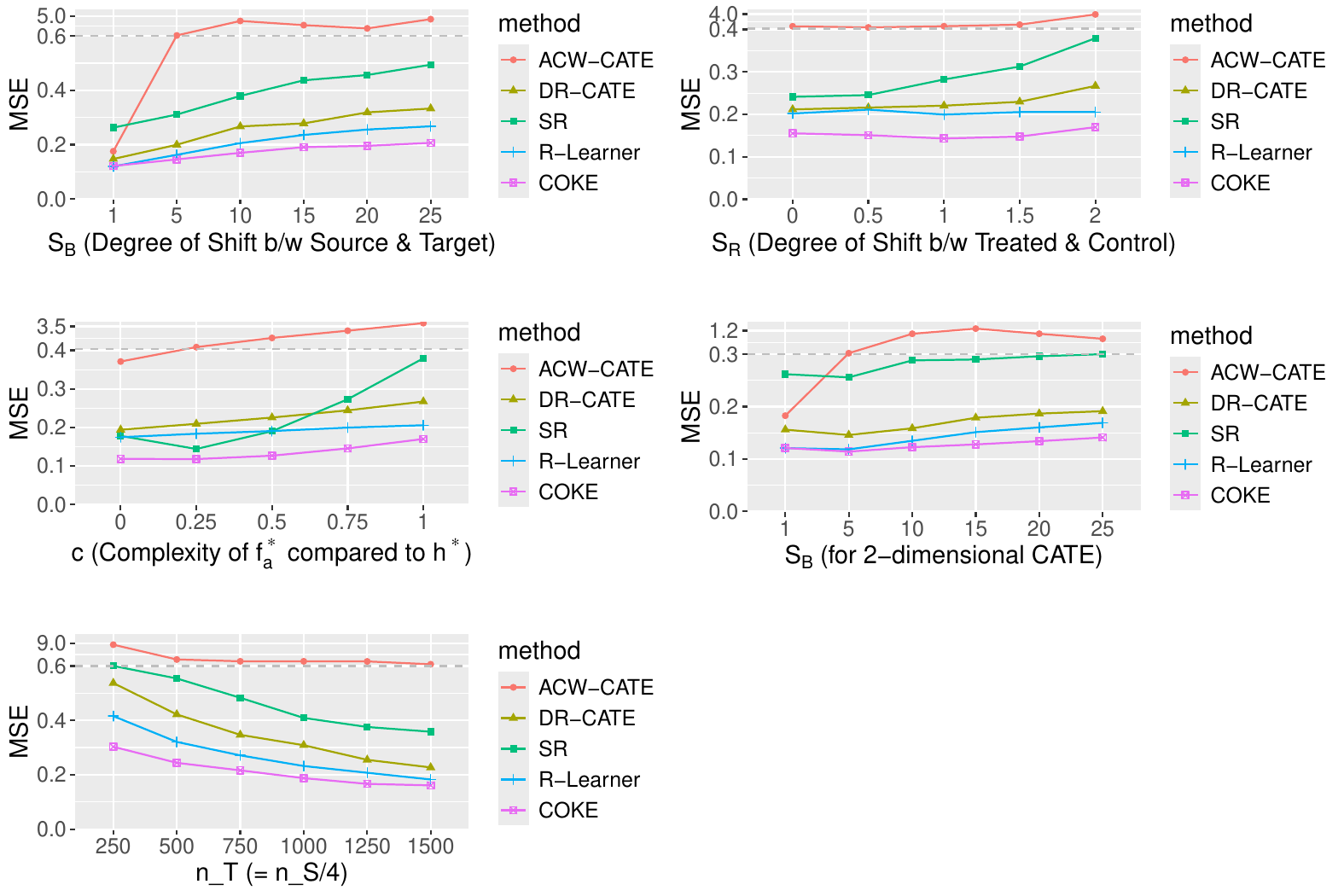}
\caption{Performance of \texttt{COKE}, \texttt{ACW-CATE}, \texttt{DR-CATE}, \texttt{SR}, {and \texttt{R-Learner}} across varying simulation settings. Panels show the average MSE as a function of: (i) $S_B$ (degree of covariate shift between source and target) for $q = 1$, (ii) $S_R$ (degree of shift between treatment and control groups), (iii) $c$ (complexity of outcome models relative to the CATE), (iv) $S_B$ for $q = 2$ (weak overlap on two-dimensional covariates), and (v) $n_{\cT}=n/4$.}
\label{fig:matern}
\end{figure}

As $S_R$ increases, reflecting more severe non-overlap between treated and control groups, \texttt{COKE} also maintains better performances over the benchmarks. \textcolor{black}{For example, when $S_R = 2$, the relative efficiency of \texttt{COKE} compared to \texttt{DR-CATE} is $1.57$, and its relative efficiency compared to \texttt{R-Learner} is $1.21$.} This shows \texttt{COKE}’s effectiveness in handling the weak overlap between the treated and control groups. The inferior performance of DR-CATE and ACW-CATE under weak overlap can be attributed to their reliance on propensity score estimation. In settings with severe covariate shift (large $S_B$) or weak treatment-control overlap (large $S_R$), the estimated propensity scores often approach extreme values near $0$ or $1$. This can lead to large inverse probability weights in the DR estimator and inflate its variance. In contrast, COKE avoids explicit propensity score modeling and instead leverages a two-stage KRR, making it more robust to such instability. This highlights a key advantage of our KRR-based approach in the presence of weak overlap.

As $c$ increases, there arises higher complexity of the outcome models compared to the CATE, \texttt{COKE} shows consistently smaller estimation error compared to the benchmarks across different $c$. \textcolor{black}{For example, when $c = 0.5$, the relative efficiencies of \texttt{COKE} compared to \texttt{DR-CATE}, \texttt{SR}, and \texttt{R-Learner} are all above $1.5$.} Moreover, the improvement of \texttt{COKE} over \texttt{SR} generally becomes more significant as $c$ gets larger, which demonstrates \texttt{COKE}'s better adaptivity to complex outcome regression functions and is consistent with our theoretical results in Section~\labelcref{subsection: summary of results}. We also consider a different setup with $q = 2$, including two covariates subject to weak overlap between the source and the target and making the outcome and CATE models more complex. As shown in the fourth panel of Figure \labelcref{fig:matern}, \texttt{COKE} again displays consistently lower mean squared errors compared to \texttt{ACW-CATE}, \texttt{DR-CATE}, {\texttt{R-Learner}} and \texttt{SR} across various values of $S_B$ when $q=2$. Additionally, we vary the sample sizes of the target and source data while keeping their ratio constant at $n / n_{\cT} = 4$ and present the results in the last panel of Figure \labelcref{fig:matern}. As the sample sizes increase, \texttt{COKE} exhibits a similar rate of risk reduction as other methods, and maintains the smallest estimation error among all methods. Finally, we compare the cross-fitting version of \texttt{COKE} with the original Algorithm \labelcref{algorithm: main} as detailed in Appendix \labelcref{sec:app:simu}. In our setup with $q=1$ and varying $S_B$, the cross-fitting version displays around $13.5$--$15\%$ lower risk than the original data-splitting version.

\section{Real-World Example}
\subsection{NHANES Study}\label{sec:real:nhanes}

We apply our method to the National Health and Nutrition Examination Survey (NHANES) data to investigate the heterogeneous effects of dietary habits on health outcomes. The NHANES is a major program of the National Center for Health Statistics (NCHS) designed to assess the health and nutritional status of adults and children in the United States, see \url{https://wwwn.cdc.gov/nchs/nhanes}. Beginning in 1999, NHANES became a continuous program that collects data on a nationally representative sample of several thousands of individuals each year. Since the survey samples distinct subjects in each cycle, the population demographics and environmental factors naturally shift over time, necessitating robust transfer learning methods to transport causal findings from historical cohorts to current populations, especially when the gap of years between them is large.

According to the World Health Organization (WHO), energy intake should be in balance with energy expenditure, and total fat intake should not exceed 30\% of total energy intake to avoid unhealthy weight gain and protect against noncommunicable diseases \citep{who2003diet,hooper2015effects}. We aim to estimate the CATE of excessively high fat intake on blood pressure. {\color{black} We set the outcome $y_i$ as the systolic blood pressure level (averaged over three repeated measures). Furthermore, we define the treatment $a_i$ as a binary indicator for fat intake exceeding 40\% of total energy. This binarization was an intentional methodological choice as the threshold is set substantially above the WHO's recommended upper limit of 30\% to both isolate the impact of clearly excessive fat consumption. This choice also induces a scenario of weak treatment overlap, thereby testing the adaptivity of our method.}

We observe a set of covariates $z_i$ that includes {\em sex}, {\em age}, {\em smoking status}, {\em education level}, and {\em alcohol use}. We consider a transfer learning setup where the source data consists of participants from the 2001–2002 survey cycle ($n=3360$), while the target data consists of participants from the 2015–2016 survey cycle ($n_{\mathcal{T}}=3913$). The 14-year gap between the two surveys, combined with the fact that they contain disjoint sets of subjects, introduces potential distributional shifts in the covariates, making the direct application of source models to the target challenging. {\color{black} To visually assess the magnitude of this shift, we computed the estimated density ratios (fitted using random forest) between the source and target. As illustrated in Figure \ref{fig:densityrationhanes} in the Supplement, the distinct distributions of the log-density ratios confirm the presence of clear covariate shift between the 2001 and 2015 cohorts.}

In this real-world analysis, we treat the treatments and outcomes in the target data as unobserved during training and use them solely for validation. To evaluate the predictive performance of the CATE estimators, we construct an ``empirical gold-standard'' CATE predictor on the full target data (with treatments and outcomes), denoted as $\hat{s}_{0i}$, and calculate its Pearson and Spearman correlation coefficients with the learners under evaluation. Utilizing the validation treatment and outcome in the target data, we implement the \texttt{DR-CATE} learner of \cite{kennedy2020towards} as introduced in Appendix \ref{sec:app:bench} to obtain the empirical gold-standard $\hat{s}_{0i}$, shown to be the optimal choice if one could observe the treatment and outcome in the target data \citep{kennedy2022minimax}. To obtain the nuisance and CATE models when constructing $\hat{s}_{0i}$, we fit generalized additive models. Consistent with Section \labelcref{sec:simu}, we include \texttt{SR}, \texttt{DR-CATE}, \texttt{ACW-CATE} and \texttt{R-Learner} as the benchmark methods for comparison. All methods are implemented with cross-fitting. 

\begin{table}[htb!]
\centering
\begin{tabular}{|c|ccccc|}
\hline Metrics & \texttt{COKE} & \texttt{SR} & \texttt{DR-CATE} & \texttt{ACW-CATE} & \texttt{R-Learner} \\
\hline Spearman Cor with $\hat{s}_{0i}$ & ${\bf 0.56}_{0.012}$ & $0.40_{0.014}$ & $0.44_{0.013}$ & $0.35_{0.014}$  & $0.46_{0.014}$ \\
\hline Pearson Cor with $\hat{s}_{0i}$ & ${\bf 0.55}_{0.010}$  & $0.43_{0.014}$ & $0.45_{0.012}$ & $0.36_{0.014}$  & $0.44_{0.013}$  \\ 
\hline
\end{tabular}
\caption{\label{tab:res:nhanes:crossfit} Spearman and  Pearson correlation coefficients (subscribed with their empirical standard errors) between the empirical gold-standard $\hat{s}_{0i}$ and the CATE predictors obtained by cross-fitting in the NHANES study. 
}
\end{table}

The results are summarized in Table \ref{tab:res:nhanes:crossfit}. \texttt{COKE} achieves the highest degree of concordance with the empirical gold-standard $\hat{s}_{0i}$ among all methods. Specifically, \texttt{COKE} attains a Spearman correlation of 0.56 and a Pearson correlation of 0.55. This significantly outperforms the benchmark methods. For instance, the separate regression (\texttt{SR}) approach yields correlations of only 0.40 and 0.43, respectively and {\color{black} \texttt{R-Learner} yields $0.46$ and $0.44$, all of which are at least $0.1$ smaller than \texttt{COKE}}. Furthermore, \texttt{COKE} shows a clear advantage over the DML-based approaches (\texttt{DR-CATE} and \texttt{ACW-CATE}), demonstrating its superior adaptability to the distributional shifts present between the 2001 and 2015 NHANES populations.

\subsection{401(k) Eligibility Study}
The impact of the 401(k) program has been extensively studied \citep{abadie2003semiparametric}. Unlike other plans like Individual Retirement Accounts (IRAs), 401(k) eligibility is solely determined by employers. As a result, unobserved individual savings preferences are unlikely to significantly affect eligibility for 401(k) plans. Nonetheless, factors like job choice, income, and age may still confound causal analyses of the 401(k) program. To address this, \citet{abadie2003semiparametric} and \citet{chernozhukov2018double} suggested adjusting for specific covariates related to job selection to treat 401(k) eligibility as exogenous. Examining the ATE of 401(k) eligibility on the overall net financial assets (NFA) is a key question addressed in previous studies such as \cite{abadie2003semiparametric} and \cite{chernozhukov2018double}. However, quantifying the CATE of 401(k) eligibility for individualized policy evaluation is an important yet overlooked problem. 

Motivated by this, we aim to learn the CATE of 401(k) eligibility $a_i$ on the NFA outcome $y_i$ using the data set from the 1991 Survey of Income and Program Participation. We include $7$ adjustment and effect modifying covariates in $z_i$ including {\em age, income, family size, education years, benefit pension status, participation in an IRA plan, and home ownership}. We consider a transfer learning setup with $n=5997$ source samples including all subjects in the original data set with their marital status being {\em married} and $n_{\cT}=3918$ {\em not married} subjects as the target data, with only their $z_i$ used for training and their treatment and outcome information used for validation and evaluation. We examine the severity of covariate shift between the source and target. In specific, we fit the logistic regression to obtain $\widehat{\omega}(z)$ as an estimate of the density ratio of the covariates $z$ between the source and target. Figure \labelcref{fig:densityratio} in the Supplementary Material displays the histograms of $\log_{10}\{\widehat{\omega}(z)\}$ from the source and target data sets, revealing severe covariate shift as evidenced by the minimal intersection between the source and target distributions. This distributional discrepancy significantly reduces the effective sample size of the source data to $399.01$.

\begin{table}[htb!]
\centering
\begin{tabular}{|c|ccccc|}
\hline Metrics & \texttt{COKE} & \texttt{SR} & \texttt{DR-CATE} & \texttt{ACW-CATE} & \texttt{R-Learner} \\
\hline Spearman Cor with $\hat{s}_{0i}$ & ${\bf 0.63}_{0.010}$ & $0.39_{0.014}$ & $0.26_{0.015}$ & $0.38_{0.014}$ & $0.45_{0.014}$ \\
\hline Pearson Cor with $\hat{s}_{0i}$ & ${\bf 0.55}_{0.026}$  & $0.38_{0.032}$ & $0.09_{0.10}$ & $0.32_{0.024}$ & $0.35_{0.034}$ \\ 
\hline
\end{tabular}
\caption{\label{tab:res:401:crossfit} Spearman and  Pearson correlation coefficients (subscribed with their empirical standard errors) between the empirical gold-standard $\hat{s}_{0i}$ and the CATE predictors obtained by cross-fitting in the 401(k) study.} 
\end{table}

We include \texttt{SR}, \texttt{DR-CATE}, \texttt{ACW-CATE} and \texttt{R-Learner} as the benchmark methods for comparison. In Table \labelcref{tab:res:401:crossfit}, we present the Spearman and Pearson correlation coefficients between the CATE predictors and the empirical gold-standard $\hat{s}_{0i}$ introduced in Section \ref{sec:real:nhanes}. Again, \texttt{COKE} achieves the highest degree of concordance with $\hat{s}_{0i}$. In specific, \texttt{COKE} attains more than 40\% higher Spearman correlation with $\hat{s}_{0i}$ as well as approximately 45\% higher Pearson correlation compared to all other methods. Finally, we also present the performance metric tables for both studies in Appendix~\labelcref{sec:app:real:results}, where all CATE estimators are constructed with just data-splitting but not cross-fitting. They show similar results as in Tables \labelcref{tab:res:nhanes:crossfit} and \labelcref{tab:res:401:crossfit} in the sense that \texttt{COKE} still achieves the best performance among all methods under comparison. Meanwhile, the cross-fitted version of \texttt{COKE} attains better performance than that without cross-fitting. {\color{black} To provide a more detailed view of the estimation performance, we include joint scatter plots of the COKE estimates versus the empirical gold standard in Figures \ref{fig:scat401k} and \ref{fig:scatnhanes} of the Supplementary Material. They illustrate the strong concordance between our estimator and the gold standard across the full range of treatment effects.}

\section{Discussion}
\noindent
We propose a novel methodology and theoretical framework for CATE estimation under covariate shift.
Our goal is twofold: (i) to adapt to the complexity of the CATE function, as measured by its Hilbert norm, while remaining robust to large nuisance complexity; and (ii) to adapt to the two weak-overlap structures.
To this end, we develop a two-stage KRR methodology with a model selection procedure that achieves these forms of adaptivity.
Here, weak overlap relaxes the traditional overlap assumption and allows the propensity score or source-target density ratio to exhibit singular behavior.
As a future direction, it would be valuable to investigate scenarios where the source and target domains have different sets of covariates or where the response functions are misspecified.
Additionally, extending the proposed procedure to other complex causal models, such as dynamic treatment regimes or policy learning, represents a promising avenue for further research.

\section*{Data Availability Statement}

The real-world data sets that support the findings of this paper are openly available at \url{https://wwwn.cdc.gov/nchs/nhanes/} (NHANES data) and the \texttt{pension} dataset in the R package \texttt{hdm} (401(k) data). All data sets and codes of our study can be found at the anonymous repository: \url{https://github.com/grindelwald-first/COKE-CATE-TRANSFER}.

\if1\anon
{
\section*{Acknowledgement}
Seok-Jin Kim and Kaizheng Wang’s research is supported by NSF grants DMS-2210907 and DMS-2515679, and a startup grant and a Data Science Institute seed grant SF-181 at Columbia University. The authors report there are no competing interests to declare.
} \fi

\begingroup
\setstretch{0.95}
\makeatletter
\let\oldthebibliography\thebibliography
\renewcommand{\thebibliography}[1]{%
\oldthebibliography{#1}%
\vspace{0.5\baselineskip}%
}
\makeatother
\bibliography{ref}
\endgroup

\clearpage
\newpage

\appendix
\setcounter{table}{0}
\setcounter{figure}{0}
\setcounter{equation}{0}

\renewcommand{\thefigure}{S\arabic{figure}}
\renewcommand{\thetable}{S\arabic{table}}
\renewcommand{\theequation}{S\arabic{equation}}

\clearpage
\newpage
\spacingset{1}
\begin{center}
{\Large \bf Supplementary Material for}\\[0.4em]
{\Large \bf
Transfer Learning of CATE with Kernel Ridge Regression}
\end{center}
\vspace{0.5cm}
\etocdepthtag.toc{mtappendix}
\etocsettagdepth{mtchapter}{none}
\etocsettagdepth{mtappendix}{subsection}
\tableofcontents

\section{Preparations: Linear Model in RKHS and Notations}\label{section: groundwork}
\noindent
This section reformulates the RKHS responses as a linear model in Hilbert space and records notation used throughout the proofs.  
We also express the KRR estimator in a form analogous to linear regression, using the language of RKHS.  

\subsection{Linear Model via RKHS Mapping}
\noindent
Under the treatment regime in Section~\labelcref{section: problem setup}, we reformulate the problem in terms of Hilbertian elements.  
The RKHS structure allows us to construct a linear model through the feature map.  
Recall from Section~\labelcref{section: problem setup} that \( \HH \) denotes the Hilbert space induced by the kernel \( K(\cdot,\cdot) \).  
The two spaces \( \cF \) and \( \HH \) are isomorphic; hence, there exists a bijection that preserves the metric.  
For any element \( \theta \in \HH \), there exists \( f_\theta \in \cF \) such that \( \langle \theta, \phi(x) \rangle_{\HH} = f_\theta(x) \).  
Conversely, any function \( f \in \cF \) can be represented by a Hilbertian element \( \theta(f) \in \HH \) where \( \langle \theta(f), \phi(x) \rangle_{\HH} = f(x) \).  

We denote the Hilbert norm on \( \HH \) by \( \| \cdot \|_{\HH} \).  
Since the two spaces \( \cF \) and \( \HH \) are isomorphic, we have \( \| \theta \|_\HH = \| f_\theta \|_\cF \) and \( \| f \|_\cF = \| \theta(f) \|_\HH \).  
With slight abuse of notation, we write \( \langle x, y \rangle_{\HH} := x^\top y = y^\top x \) for any \( x, y \in \HH \) when the context is clear.  
Similarly, we write \( xy^\top := x \otimes y \) when the context is clear.  

We define the RKHS covariates as
\begin{align*}
x_i := \phi(z_i) \quad \forall i \in [n], \quad x_{0i} := \phi(z_{0i}) \quad \forall i \in [n_\cT].
\end{align*}
By setting $\theta(f_0^\star) = \theta_0^\star$ and $\theta(f_1^\star) = \theta_1^\star$ for some $\theta_0^\star, \theta_1^\star \in \HH$, we can reformulate our RKHS responses as the following linear model:
\begin{align*}
\EE[y_i \mid x_i,a_i=0] = x_i^\top \theta^\star_0, \quad \EE[y_i \mid x_i,a_i=1] = x_i^\top \theta^\star_1.
\end{align*}
Then, \(f_0^\star, f_1^\star \in \cF\) in Section~\labelcref{section: problem setup} correspond to \(\theta_0^\star, \theta_1^\star \in \HH\), respectively, and they have the same Hilbert norm, where $\|f_0^\star \|_\cF = \| \theta_0^\star \|_\HH$ and $\|f_1^\star\|_\cF = \| \theta_1^\star \|_\HH$.

Accordingly, the Hilbertian element of the CATE function is $\theta(h^\star) = \theta_1^\star - \theta_0^\star$, and we define 
\begin{align*}
\eta^\star := \theta_1^\star - \theta_0^\star.
\end{align*}
Our main goal is to estimate $\eta^\star$.
Recall that we defined $(z_i,a_i,y_i) \sim \cQ^\star_\cS$. 
By setting $x_i = \phi(z_i)$, we define the distribution of the source with the RKHS covariates as 
\begin{align*}
(x_i,a_i,y_i) \sim \cP^\star_\cS.
\end{align*}
In addition, we define the distributions of the RKHS covariates for the source and target as $x_i \sim \cP_\cS$ and \(x_{0i} \sim \cP_{\cT}\), respectively.
Then, our main object of interest, CATE, can be formulated as 
\begin{align*}
&\eta^\star = \theta^\star_1 - \theta_0^\star \\
&x^\top \eta^\star :=\EE_{(x,a,y)\sim \cP_\cS^\star} [y \mid x, a=1] - \EE_{(x,a,y)\sim \cP_\cS^\star} [y \mid x, a=0].
\end{align*}
Our goal is to estimate the CATE element, $\eta^\star$, by minimizing the target MSE.
We rewrite the expected second-order moments in terms of RKHS covariates. 
Then, we have $\Sigmatreated= \EE_{(x,a,y) \sim \cP^\star_S}[x \otimes x \one(a=1)]$, $\Sigmacontrol := \EE_{(x,a,y) \sim \cP^\star_S}[x \otimes x \one(a=0)]$, and $\bSigma_{\cS} = \EE_{(x,a,y) \sim \cP^\star_S}[x \otimes x]$.
For target RKHS covariates, we define \( \bSigma_\cT = \EE_{x \sim \cP_\cT} [x \otimes x] \).
Recall that we only observe covariates from the target distribution.
For each dataset \(\cD_j = \{(z_{ji}, a_{ji}, y_{ji})\}_{i=1}^{n_j}\), we define the RKHS covariates as \(x_{ji} := \phi(z_{ji})\) for all \(i \in [n_j]\).

\subsection{Closed Form of KRR Estimator}
\noindent
Recall the KRR setup described in Section~\labelcref{section: preliminaries kernel ridge regression}. 
We define RKHS covariates as $v_i := \phi(u_i)$ for all $i \in [N]$ and define the \emph{design operator} of \(\{v_1, \dots, v_N \}\) as $\Vb: \HH \to \RR^N$, which satisfies for all \(\theta \in \HH\):
\begin{align*}
\Vb \theta = (v_1^\top \theta, v_2^\top \theta, \dots, v_N^\top \theta)^\top.
\end{align*}
Similarly, we define the adjoint of \(\Vb\), denoted as \(\Vb^\top: \RR^N \to \HH \), as the operator such that for all \(\ab =(a_1, \dots, a_N) \in \RR^N\),
\begin{align*}
\Vb^\top \ab= \sum_{i=1}^N a_i v_i \in \HH.
\end{align*}

We define \(\rb =(r_1, \dots, r_N)^\top \).
It is known that the solution of the KRR program \eqref{equation: KRR program} in Section~\labelcref{section: preliminaries kernel ridge regression}, denoted by $\hat{f}$, satisfies
\(\hat{f}(u) = \phi(u)^\top \hat{\theta}\), where 
\begin{align*}
\hat{\theta} =(\Vb^\top \Vb + N\lambda \Ib)^{-1} \Vb^\top \rb.
\end{align*}
This has the same algebraic form as the ridge estimator in Euclidean linear regression, but expressed in terms of Hilbert space operators.

\subsection{Notations for Proofs}\label{subsection: notations}
\noindent
We collect the notation used in the proofs and briefly recap previously defined symbols.

\paragraph*{Notations for second moments and design operators}
{\renewcommand{\arraystretch}{1.15}
\begin{longtable}{@{}p{0.24\textwidth}p{0.72\textwidth}@{}}
\toprule
\textbf{Notation} & \textbf{Meaning} \\
\midrule
\endfirsthead
\toprule
\textbf{Notation} & \textbf{Meaning} \\
\midrule
\endhead
\(\Sigmatreated, \Sigmacontrol, \bSigma_{\cS}\) &
Expected second-order moments:
\(\Sigmatreated= \EE_{(x,a,y) \sim \cP^\star_S}[x \otimes x \one(a=1)]\),
\(\Sigmacontrol := \EE_{(x,a,y) \sim \cP^\star_S}[x \otimes x \one(a=0)]\), and
\(\bSigma_{\cS} = \EE_{(x,a,y) \sim \cP^\star_S}[x \otimes x]\). \\

\(\bSigma_\cT\) &
Target second-order moment:
\(\bSigma_\cT = \EE_{x \sim \cP_\cT} [x \otimes x]\). \\

\(\cD_{j,a}\) &
For \(j = 1,2\) and \(a=0,1\), define
\(\cD_{j,a} :=  \{(z_{ji}, a_{ji}, y_{ji}) \in \cD_j \mid a_{ji} = a\}\). \\

\shortstack[l]{\(\Sighat_{1}, \Sighat_{1,0}, \Sighat_{1,1}\)\\\(\Sighat_{2,0}, \Sighat_{2,1}\)} &
\parbox[t]{\linewidth}{Empirical second-order moments:\par
\(
\begin{aligned}[t]
\Sighat_{1} &:= \frac{1}{n_{1}}\sum_{i=1}^{n_1} x_{1i}x_{1i}^\top, \\
\Sighat_{1,0} &:= \frac{1}{n_{1}}\sum_{i=1}^{n_1} x_{1i}x_{1i}^\top \bm{1}(a_{1i}=0), \\
\Sighat_{1,1} &:= \frac{1}{n_{1}}\sum_{i=1}^{n_1} x_{1i}x_{1i}^\top \bm{1}(a_{1i}=1), \\
\Sighat_{2,0} &:= \frac{1}{n_{2}}\sum_{i=1}^{n_2} x_{2i}x_{2i}^\top \bm{1}(a_{2i}=0), \\
\Sighat_{2,1} &:= \frac{1}{n_{2}}\sum_{i=1}^{n_2} x_{2i}x_{2i}^\top \bm{1}(a_{2i}=1).
\end{aligned}
\)
\par
} \\

\(\Xb_{j,a}, \Xb_{j}\) &
Design operators for RKHS covariates \(\{x_{ji}\bm{1}(a_{ji}=a)\}_{i=1}^{n_j}\) (\(a \in \{0,1\}\), \(j \in \{1,2\}\)) and \(\{x_{ji}\}_{i=1}^{n_j}\).
The definition matches Appendix~\labelcref{section: groundwork}. \\

\(\Xb_\cT\) &
Design operator of \(\{x_{0i}\}_{i=1}^{n_\cT}\) (target covariates). \\

\(M\) &
\(M := \max( \|\theta_0^\star \|_\HH, \|\theta_1^\star \|_\HH) =\max( \|f_0^\star \|_\cF, \|f_1^\star \|_\cF)\). \\

\(\Sbar_\lambda, \Shat_\lambda\) &
For \(\lambda>0\),
\(\Sbar_\lambda := (\bSigma_{\cS} + \lambda \Ib)^{-\frac{1}{2}} \bSigma_{\cT} (\bSigma_{\cS} + \lambda \Ib)^{-\frac{1}{2}}\) and
\(\Shat_\lambda := (\bSigma_{\cS} + \lambda \Ib)^{-\frac{1}{2}} \Sighat_{\cT} (\bSigma_{\cS} + \lambda \Ib)^{-\frac{1}{2}}\). \\
\bottomrule
\end{longtable}
}

\paragraph*{Other key notations}
{\renewcommand{\arraystretch}{1.15}
\begin{longtable}{@{}p{0.24\textwidth}p{0.72\textwidth}@{}}
\toprule
\textbf{Notation} & \textbf{Meaning} \\
\midrule
\endfirsthead
\toprule
\textbf{Notation} & \textbf{Meaning} \\
\midrule
\endhead
\(\langle x,y\rangle_{\HH}\), \(xy^\top\) &
With slight abuse of notation, \(\langle x,y\rangle_{\HH} := x^\top y = y^\top x\) and \(xy^\top := x \otimes y\) for any \(x,y \in \HH\) when the context is clear. \\

\(n_{j,a}\), \(n_j\) &
Sample sizes: \(n_{j,a}\) for each \(\cD_{j,a}\) (\(j \in \{1,2\}\), \(a \in \{0,1\}\)) and \(n_j\) for \(\cD_j\).
Note that \(n_1 = n_{1,0} + n_{1,1}\) and \(n_{2} = n_{2,0} + n_{2,1}\). \\

\(\hat{\theta}_1, \hat{\theta}_0, \hat{\eta}_{\bm{\lambda}}\) &
Hilbertian elements corresponding to \(\hat{f}_1, \hat{f}_0\) in Algorithm~\labelcref{algorithm: RA learner}, and
\(\hat{\eta}_{\bm{\lambda}}\) corresponds to \(\hat{h}_{\bm{\lambda}}\).\\

\(\tilde{\theta}_1, \tilde{\theta}_0, \tilde{\eta}\) &
Hilbertian elements corresponding to \(\tilde{f}_1, \tilde{f}_0\) in Algorithm~\labelcref{algorithm: model selection}, and
\(\tilde{\eta} := \tilde{\theta}_1 - \tilde{\theta}_0\), which corresponds to \(\tilde{h}\).\\

\(\cH\) &
Set of candidate estimators defined in our algorithm. \\

\(\yb_{j,a}, \bm{\varepsilon}_{j,a}, \yb_{j}, \bm{\varepsilon}_{j}\) &
Response and noise vectors associated with \(\{y_{ji}\bm{1}(a_{ji}=a)\}_{i=1}^{n_j}\), \(\{\varepsilon_{ji}\bm{1}(a_{ji}=a)\}_{i=1}^{n_j}\), \(\{y_{ji}\}_{i=1}^{n_j}\), and \(\{\varepsilon_{ji}\}_{i=1}^{n_j}\); i.e., vectorized versions of \(y_i\) and \(\varepsilon_i\).\\

\(\alpha\) &
For \(\ell\)-polynomial decay eigenvalues (Assumption~\labelcref{Assumption; eigenvalue decay}), define \(\alpha = \frac{2\ell}{1+2\ell}\). \\

\(\{\varepsilon_{ji}\}_{i=1}^{n_j}\) &
Noise terms in dataset \(\cD_j\) for \(j=1,2\). \\
\bottomrule
\end{longtable}
}

\subsection{Proof Workflow}
\noindent
In Appendix~\labelcref{section: good events and second moments}, we first define the good event $\event$, which ensures sufficient concentration of the empirical second moments.  
Roughly, under $\event$, each empirical second moment $\Sighat_{j,a}$ for $j = 1, 2$ and $a = 0, 1$ is well concentrated around $\bSigma_{\cS,a}$.  
A rigorous definition appears in Appendix~\labelcref{section: good events and second moments}.  
In the next section, we show that $\PP[\event] \geq 1 - n^{-11}$, and we conduct the remaining analysis under $\event$.  

Our goal is to prove the main result, Theorem~\labelcref{theorem; main theorem}. The proof proceeds as follows:  
\begin{enumerate}  
\item Under the event $\event$, we first bound the MSE of the estimator $\hat{h}_{\bm{\lambda}}$ for a fixed $\bm{\lambda}$.  
This result establishes Theorem~\labelcref{theorem; MSE bound RA learner} (presented in Appendix~\labelcref{section: proof RA learner}).  

\item Using Theorem~\labelcref{theorem; MSE bound RA learner}, we derive Corollary~\labelcref{corollary; optimal MSE bound} (presented in Appendix~\labelcref{section: proof RA learner}).  

\item Under the event $\event$, we prove the oracle inequality for the in-sample MSE in Lemma~\labelcref{lemma; in-sample MSE oracle} (presented in Appendix~\labelcref{section: proofs oracle inequalities}).  

\item Using Lemma~\labelcref{lemma; in-sample MSE oracle} together with Lemma~\ref{lemma; comparability in-sample and population MSE}, we establish the oracle inequality for the MSE in Proposition~\labelcref{proposition; oracle inequality MSE} (presented in Appendix~\labelcref{section: proofs oracle inequalities}).  

\item Finally, by combining Proposition~\labelcref{proposition; oracle inequality MSE}, Theorem~\labelcref{theorem; MSE bound RA learner}, and Corollary~\labelcref{corollary; optimal MSE bound}, we prove the main result, Theorem~\labelcref{theorem; main theorem} (presented in Appendix~\labelcref{section: proof main theorem}).  
\end{enumerate}

\section{Second-moment Concentrations and Good Event $\event$}\label{section: good events and second moments}
\noindent
In this section, we analyze the concentration of second moments and define the good event $\event$. We also present two results used throughout the proofs: Lemma~\labelcref{lemma; moment ratio in E1} and Corollary~\labelcref{corollary; application of Lemma second moment ratio}.

\subsection{Good Event: Sufficient Second Moment Concentrations}
\noindent
We present concentration bounds for the empirical second-order moment operators and define the good event \(\event\). For the definition of empirical second moments, see Appendix~\labelcref{subsection: notations}. We invoke Lemma~\labelcref{lemma; trace class concentration bounded} repeatedly to obtain an absolute constant \(c_0\) such that the following concentration inequalities hold.

\noindent
\underline{Concentration 1:} \quad 
\noindent
With probability at least $1 - \frac{n^{-11}}{6}$, for any $\mu \geq \frac{c_0 \xi \log n}{n}$, the following inequality holds:
\begin{align*}
\frac{1}{2}(\Sighat_{1,1} + \mu \mathbf{I}) \preceq \Sigmatreated + \mu \mathbf{I} \preceq 2(\Sighat_{1,1} + \mu \mathbf{I}).
\end{align*}
Similarly, with the same probability and for the same $\mu$, we have:
\begin{align*}
\frac{1}{2}(\Sighat_{1,0} + \mu \mathbf{I}) \preceq \Sigmacontrol + \mu \mathbf{I} \preceq 2(\Sighat_{1,0} + \mu \mathbf{I}),
\end{align*}
and
\begin{align*}
\frac{1}{2}(\Sighat_{1} + \mu \mathbf{I}) \preceq \bSigma_{\cS} + \mu \mathbf{I} \preceq 2(\Sighat_{1} + \mu \mathbf{I}).
\end{align*}

\noindent\underline{Concentration 2:} \quad 
\noindent
For any $\mu \geq \frac{c_0 \xi \log n}{n}$, with probability at least $1 - \frac{2}{6}n^{-11}$, the following inequalities hold:
\begin{align*}
\frac{1}{2}(\Sighat_{2,1} + \mu \mathbf{I}) \preceq \Sigmatreated + \mu \mathbf{I} \preceq 2(\Sighat_{2,1} + \mu \mathbf{I}),
\end{align*}
\begin{align*}
\frac{1}{2}(\Sighat_{2,0} + \mu \mathbf{I}) \preceq \Sigmacontrol + \mu \mathbf{I} \preceq 2(\Sighat_{2,0} + \mu \mathbf{I}).
\end{align*}

Next, we present similar second-moment concentration bounds for the target data.

\noindent\underline{Concentration 3:} \quad 
\noindent
For any $\mu' \geq c_0\frac{\xi (\log n_\cT + \log n)}{n_\cT}$, with probability at least $1 - \frac{n^{-11}}{6}$, the following inequality holds:
\begin{align*}
\frac{1}{2}(\Sighat_{\cT} + \mu' \mathbf{I}) \preceq \bSigma_{\cT} + \mu' \mathbf{I} \preceq 2(\Sighat_{\cT} + \mu' \mathbf{I}).
\end{align*}

\begin{definition}[Good event]
We define the good event $\event$ as the event in which Concentrations 1, 2, and 3 hold for all $\mu \geq c_0\frac{\xi\log n}{n}$ and $\mu' \geq c_0\frac{\xi (\log n_\cT + \log n)}{n_\cT}$. Then,
\begin{align*}
\PP[\event] \geq 1 - n^{-11}
\end{align*}
by the previous observations.
\end{definition}

\subsection{Second Moment Ratio Bounds under the Event \(\event\)}
\noindent
Next, we record useful properties that hold under the event $\event$. Our regularizers for nuisance estimation are $\lambda_{0,0}, \lambda_{0,1} = \frac{\xi \log n}{n}$, and our analysis is restricted to this range.

\begin{lemma}[Second moment concentrations in \(\event\)]\label{lemma; moment ratio in E1}
Under the event \(\event\), for any \(\lambda \geq \frac{\xi \log n}{n}\), \(j \in \{1,2\}\), and \(a \in \{0,1\}\), the following inequalities hold for some absolute constant \(c_1>0\):
\begin{align*}
\frac{1}{c_1}(\Sighat_{j,a} + \lambda \Ib) \preceq \bSigma_{\cS,a} + \lambda \Ib \preceq c_1 (\Sighat_{j,a} + \lambda \Ib),
\end{align*}
and
\begin{align*}
\frac{1}{c_1}(\Sighat_{1} + \lambda \Ib) \preceq \bSigma_{\cS} + \lambda \Ib \preceq c_1 (\Sighat_{1} + \lambda \Ib).
\end{align*}
In addition, the following properties hold:
\begin{align*}
\frac{1}{c_2 R} (\Sighat_{1,1} + \lambda \Ib) \preceq \Sighat_{1,0} + \lambda \Ib \preceq c_2 R (\Sighat_{1,1} + \lambda \Ib),
\end{align*}
and
\begin{align*}
\frac{1}{c_2 R} (\Sighat_{1,0} + \lambda \Ib) \preceq \Sighat_{1,1} + \lambda \Ib \preceq c_2 R (\Sighat_{1,0} + \lambda \Ib)
\end{align*}
for some absolute constant $c_2>0$.
\end{lemma}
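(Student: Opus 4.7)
\smallskip

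\textbf{Proof proposal.} The plan is to work entirely inside the event $\event$ and chain together the concentration bounds that define $\event$ with the weak overlap Assumption~\ref{assumption; weak treatment overlap}, all at scale $\mu_0 := c_0 \xi \log n / n$. Two separate regimes have to be handled: (a) $\lambda \geq \mu_0$, where the concentration at scale $\lambda$ is \emph{already} part of $\event$; (b) $\xi \log n / n \leq \lambda < \mu_0$, where the concentration at scale $\lambda$ is not directly available.

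For the two comparisons with $\bSigma_{\cS,a} + \lambda\Ib$ and $\bSigma_{\cS} + \lambda\Ib$, case (a) is immediate from Concentrations 1, 2 applied at $\mu = \lambda$, giving the constant $c_1 = 2$. For case (b), I will use the elementary bound
\[
\bSigma_{\cS,a} + \mu_0\Ib \preceq \bigl(1 + \mu_0/\lambda\bigr)\bigl(\bSigma_{\cS,a} + \lambda\Ib\bigr) \preceq 2c_0 \bigl(\bSigma_{\cS,a} + \lambda\Ib\bigr),
\]
together with the symmetric inequality $\Sighat_{j,a} + \lambda\Ib \preceq \Sighat_{j,a} + \mu_0\Ib$. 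Combining these with the $\event$-concentration at scale $\mu_0$ gives both directions with constant $c_1 = O(c_0)$. The same two-regime argument applies verbatim to $(\Sighat_1, \bSigma_{\cS})$. These are routine order-of-magnitude manipulations, not the hard part.

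For the cross bounds between $\Sighat_{1,0}$ and $\Sighat_{1,1}$, I will bridge through the population operators using Assumption~\ref{assumption; weak treatment overlap}. Specifically, the first inequality is obtained by the chain
\[
\Sighat_{1,0} + \lambda\Ib \preceq c_1\bigl(\Sigmacontrol + \lambda\Ib\bigr) \preceq c_1 R\bigl(\Sigmatreated + \tfrac{\xi}{n}\Ib\bigr) + c_1\lambda\Ib \preceq 2 c_1 R\bigl(\Sigmatreated + \lambda\Ib\bigr) \preceq 2 c_1^2 R\bigl(\Sighat_{1,1} + \lambda\Ib\bigr),
\]
where the first step uses the already-proved item 1, the second uses the treatment overlap assumption, the third uses $\xi/n \leq \lambda$ together with $R \geq 1$, and the fourth re-applies item 1 on the other side. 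The reverse inequality is identical by symmetry in $(0,1)$. Setting $c_2 = 2 c_1^2$ closes the argument.

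The mildly delicate point, and the only place requiring care, is the shift from additive regularization $\xi/n$ in the overlap assumption to additive regularization $\lambda$ in the conclusion; this is exactly where the hypothesis $\lambda \geq \xi \log n / n \geq \xi/n$ is used. Everything else is a bookkeeping of constants, and the final constants $c_1, c_2$ are absolute, depending only on $c_0$ from the definition of $\event$.
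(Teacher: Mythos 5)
Your proposal is correct and follows essentially the same route as the paper: the same two-regime scale-bridging argument (handling $\lambda \geq \mu_0$ directly from the definition of $\event$ and $\lambda < \mu_0$ by comparing regularizers at the cost of a factor $O(c_0)$), and an identical chain through the population operators via Assumption~\ref{assumption; weak treatment overlap} for the cross bounds, using $\xi/n \leq \lambda$ and $R \geq 1$ to absorb the $R\xi/n$ term. The only difference is cosmetic (you compare $\lambda\Ib$ with $\mu_0\Ib$ additively where the paper rescales by $\lambda/\mu$), and the resulting constants are the same up to absolute factors.
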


\begin{proof}
Let \(\mu = \frac{c_0 \xi \log n}{n}\). 
First, we prove the first two inequalities. These inequalities hold directly when \(\lambda \geq \mu\). 
For \(\lambda < \mu\), note that \(\frac{\mu}{\lambda} \leq c_0\). For \(j \in \{1,2\}\), under the event \(\event\),
\begin{align*}
\bSigma_{\cS} + \lambda \Ib &\succeq (\bSigma_{\cS} + \mu \Ib) \frac{\lambda}{\mu} \\
&\succeq (\Sighat_{1} + \mu \Ib) \frac{\lambda}{2\mu} \\
&\succeq \frac{1}{2c_0} (\Sighat_{1} + \mu \Ib) \\
&\succeq \frac{1}{2c_0} (\Sighat_{1} + \lambda \Ib).
\end{align*}
Similarly,
\begin{align*}
\Sighat_{1} + \lambda \Ib &\succeq \frac{\lambda}{\mu} (\Sighat_{1} + \mu \Ib) \\
&\succeq \frac{1}{c_0} (\Sighat_{1} + \mu \Ib) \\
&\succeq \frac{1}{2c_0} (\bSigma_{\cS} + \mu \Ib) \\
&\succeq \frac{1}{2c_0} (\bSigma_{\cS} + \lambda \Ib).
\end{align*}
Since \(c_0\) is an absolute constant, we have
\begin{align*}
\frac{1}{c_1}(\Sighat_{1} + \lambda \Ib) \preceq \bSigma_{\cS} + \lambda \Ib \preceq c_1 (\Sighat_{1} + \lambda \Ib)
\end{align*}
for some absolute constant \(c_1 > 0\).

Similarly, for all \(j \in \{1,2\}\) and \(a \in \{0,1\}\),
\begin{align*}
\frac{1}{c_1}(\Sighat_{j,a} + \lambda \Ib) \preceq \bSigma_{\cS, a} + \lambda \Ib \preceq c_1 (\Sighat_{j,a} + \lambda \Ib).
\end{align*}

Next, we prove the third and fourth inequalities. Using the above observations and Assumption~\labelcref{assumption; weak treatment overlap}, we have
\begin{align*}
\Sighat_{1,0} + \lambda \Ib &\preceq c_1 (\Sigmacontrol + \lambda \Ib) \quad \text{(by the first inequality of this lemma)}\\
&\preceq c_1 (R\Sigmatreated + R\frac{\xi}{n} \Ib + \lambda \Ib) \quad \text{(by Assumption~\ref{assumption; weak treatment overlap})}\\
&\preceq 2c_1 R (\Sigmatreated + \lambda \Ib)  \\
&\preceq 2c_1^2 R (\Sighat_{1,1} + \lambda \Ib) \quad (\text{by the first inequality of this lemma}),
\end{align*}
and
\begin{align*}
\Sighat_{1,1} + \lambda \Ib &\preceq c_1 (\Sigmatreated + \lambda \Ib) \quad (\text{by the first inequality of this lemma}) \\
&\preceq c_1 (R\Sigmacontrol + R\frac{\xi}{n} \Ib + \lambda \Ib) \quad \text{(by Assumption~\ref{assumption; weak treatment overlap})} \\
&\preceq 2c_1 R (\Sigmacontrol + \lambda \Ib) \\
&\preceq 2c_1^2 R (\Sighat_{1,0} + \lambda \Ib) \quad (\text{by the first inequality of this lemma}).
\end{align*}
The same reasoning yields the fourth inequality, completing the proof.
\end{proof}

We next state a key lemma for matrix calculations under the good event \(\event\).

\begin{corollary}[Second moment ratio upper bounds]\label{corollary; application of Lemma second moment ratio}
Under the event $\event$, the following inequalities hold for any $\lambda \geq \frac{\xi \log n}{n}$ and some absolute constant \(c>0\):
\begin{align*}
(\Sighat_{1,1} + \lambda \Ib)^{-\frac{1}{2}} \Sighat_{1,0} (\Sighat_{1,1} + \lambda \Ib)^{-\frac{1}{2}} &\preceq c R \Ib, \\
(\Sighat_{1,0} + \lambda \Ib)^{-\frac{1}{2}} \Sighat_{1,1} (\Sighat_{1,0} + \lambda \Ib)^{-\frac{1}{2}} &\preceq c R \Ib, \\
\Sighat_{1,1}^{\frac{1}{2}} (\Sighat_{1,0} + \lambda \Ib)^{-1} \Sighat_{1,1}^{\frac{1}{2}} &\preceq c R \Ib, \\
\Sighat_{1,0}^{\frac{1}{2}} (\Sighat_{1,1} + \lambda \Ib)^{-1} \Sighat_{1,0}^{\frac{1}{2}} &\preceq c R \Ib.
\end{align*}
Additionally, we have:
\begin{align*}
(\widehat{\bSigma}_{1} + \lambda \Ib)^{-\frac{1}{2}} \bSigma_\cT (\widehat{\bSigma}_{1} + \lambda \Ib)^{-\frac{1}{2}} &\preceq c B \Ib, \\
\bSigma_\cT^{\frac{1}{2}} (\widehat{\bSigma}_{1} + \lambda \Ib)^{-1} \bSigma_\cT^{\frac{1}{2}} &\preceq c B \Ib.
\end{align*}
\end{corollary}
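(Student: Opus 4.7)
The plan is to derive each inequality as a direct algebraic consequence of Lemma~\ref{lemma; moment ratio in E1}, combined with Assumption~\ref{assumption; overlap source target} for the source--target bounds. All of the stated sandwich bounds rest on the same template: an operator inequality $A \preceq \gamma B$ implies $B^{-1/2} A B^{-1/2} \preceq \gamma \Ib$, together with the spectral identity that for positive $A$ and positive definite $B$, the nonzero eigenvalues of $A^{1/2} B^{-1} A^{1/2}$ coincide with those of $B^{-1/2} A B^{-1/2}$, so the two operators share the same operator-norm bound.

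For the treatment bounds (i) and (ii), I would invoke Lemma~\ref{lemma; moment ratio in E1} to obtain $\Sighat_{1,0} + \lambda\Ib \preceq c_2 R (\Sighat_{1,1} + \lambda\Ib)$. Since $\Sighat_{1,0} \preceq \Sighat_{1,0} + \lambda\Ib$, conjugating both sides by $(\Sighat_{1,1} + \lambda\Ib)^{-1/2}$ yields $(\Sighat_{1,1} + \lambda\Ib)^{-1/2} \Sighat_{1,0} (\Sighat_{1,1} + \lambda\Ib)^{-1/2} \preceq c_2 R \Ib$, which is (i). Inequality (ii) follows by the symmetric argument with the roles of $0$ and $1$ swapped, using the other direction of the lemma. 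For the mixed-square bounds (iii) and (iv), I would apply the spectral identity with $A = \Sighat_{1,1}$ and $B = \Sighat_{1,0} + \lambda\Ib$: the operator $\Sighat_{1,1}^{1/2} (\Sighat_{1,0} + \lambda\Ib)^{-1} \Sighat_{1,1}^{1/2}$ shares its nonzero spectrum with $(\Sighat_{1,0} + \lambda\Ib)^{-1/2} \Sighat_{1,1} (\Sighat_{1,0} + \lambda\Ib)^{-1/2}$, which is bounded by $cR\Ib$ via the same argument as (ii). Inequality (iv) is symmetric.

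For the source--target bounds (v) and (vi), the key step is to combine Lemma~\ref{lemma; moment ratio in E1}'s estimate $\bSigma_{\cS} + \lambda\Ib \preceq c_1 (\Sighat_{1} + \lambda\Ib)$ with Assumption~\ref{assumption; overlap source target}. Since $\lambda \geq \frac{\xi \log n}{n} \geq \frac{\xi}{n}$, the assumption gives
\[
\bSigma_{\cT} \preceq B \bigl( \bSigma_{\cS} + \tfrac{\xi}{n}\Ib \bigr) \preceq B (\bSigma_{\cS} + \lambda\Ib) \preceq c_1 B (\Sighat_{1} + \lambda\Ib).
\]
Conjugating by $(\Sighat_{1} + \lambda\Ib)^{-1/2}$ delivers (v) with $c = c_1$. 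Inequality (vi) then follows from (v) by the same spectral identity as above, applied with $A = \bSigma_{\cT}$ and $B = \Sighat_{1} + \lambda\Ib$.

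The main obstacle is essentially nonexistent: this is a routine algebraic consequence of the previous lemma and the overlap assumption. The only care needed is to track the absolute constants $c_1, c_2$ from Lemma~\ref{lemma; moment ratio in E1} and absorb them into a single absolute constant $c$, and to verify that the regime $\lambda \geq \frac{\xi \log n}{n}$ dominates $\frac{\xi}{n}$ so that Assumption~\ref{assumption; overlap source target} applies in the clean form used above. If inequalities (iii) and (iv) are read literally with a $-\tfrac{1}{2}$ power in the middle, they still follow by a sub-multiplicative norm bound combined with (i)--(ii), but the sharper and more natural form (with $-1$ in the middle) is the one established via the spectral identity.
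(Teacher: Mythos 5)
Your proposal is correct and follows essentially the same route as the paper: inequalities (i), (ii), (v), (vi) are obtained by conjugating the operator bounds from Lemma~\ref{lemma; moment ratio in E1} (combined with Assumption~\ref{assumption; overlap source target} for the source--target case), and the mixed forms (iii), (iv) reduce to the same bounds, where the paper uses Lemma~\ref{lemma; matrix inverse inequality} together with $\Sighat_{1,1}^{1/2}(\Sighat_{1,1}+\lambda\Ib)^{-1}\Sighat_{1,1}^{1/2}\preceq\Ib$ while you invoke the equivalent spectral identity. Your reading of the $-\tfrac{1}{2}$ exponent in the middle of (iii)--(iv) as a typo for $-1$ matches how the corollary is actually applied later in the paper, so the discrepancy you flag is not a gap in your argument.
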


\begin{proof}
Consider the first inequality:
\begin{align*}
(\Sighat_{1,1} + \lambda \Ib)^{-\frac{1}{2}} \Sighat_{1,0} (\Sighat_{1,1} + \lambda \Ib)^{-\frac{1}{2}} 
&\preceq (\Sighat_{1,1} + \lambda \Ib)^{-\frac{1}{2}} (\Sighat_{1,0} + \lambda \Ib) (\Sighat_{1,1} + \lambda \Ib)^{-\frac{1}{2}} \\
&\preceq c_2 R (\Sighat_{1,1} + \lambda \Ib)^{-\frac{1}{2}} (\Sighat_{1,1} + \lambda \Ib) (\Sighat_{1,1} + \lambda \Ib)^{-\frac{1}{2}} \\
&\preceq c_2 R \Ib,
\end{align*}
where we used Lemma~\labelcref{lemma; moment ratio in E1} for the second line.
The second inequality follows similarly.

For the third inequality,
\begin{align*}
\left\|\Sighat_{1,1}^{\frac{1}{2}}(\Sighat_{1,0}+\lambda \Ib)^{-1}\Sighat_{1,1}^{\frac{1}{2}}\right\|_{\op}
&=
\left\|(\Sighat_{1,0}+\lambda \Ib)^{-\frac{1}{2}}\Sighat_{1,1}(\Sighat_{1,0}+\lambda \Ib)^{-\frac{1}{2}}\right\|_{\op} \\
&\leq c_2 R,
\end{align*}
where we used the second inequality above. Since the operator on the left is positive semidefinite, this implies the desired Loewner bound.
The fourth inequality follows similarly.

For the fifth inequality,
\begin{align*}
(\widehat{\bSigma}_{1} + \lambda \Ib)^{-\frac{1}{2}} \bSigma_\cT (\widehat{\bSigma}_{1} + \lambda \Ib)^{-\frac{1}{2}}  
&\stackrel{\text{(i)}}{\preceq} (\widehat{\bSigma}_{1} + \lambda \Ib)^{-\frac{1}{2}} B (\bSigma_\cS + \lambda \Ib) (\widehat{\bSigma}_{1} + \lambda \Ib)^{-\frac{1}{2}} \\
&\stackrel{\text{(ii)}}{\preceq} c_1 (\widehat{\bSigma}_{1} + \lambda \Ib)^{-\frac{1}{2}} B (\Sighat_{1} + \lambda \Ib) (\widehat{\bSigma}_{1} + \lambda \Ib)^{-\frac{1}{2}} \\
&= c_1 B \Ib,
\end{align*}
where in step (i) we used Assumption~\labelcref{assumption; overlap source target}, and in step (ii) we applied Lemma~\labelcref{lemma; moment ratio in E1}.
The sixth inequality follows similarly.

\end{proof}

We conclude this section with a useful lemma relating \(\Sigmacontrol, \Sigmatreated,\) and \(\bSigma\) under Assumption~\labelcref{assumption; weak treatment overlap}.
\begin{lemma}\label{lemma; second moment relation under weak overlap}
Under Assumption~\labelcref{assumption; weak treatment overlap}, for any \(\lambda \geq \frac{\xi \log n}{n}\),
\begin{align*}
\Sigmatreated +\lambda \Ib \preceq c R (\Sigmacontrol +\lambda \Ib) \\
\Sigmacontrol +\lambda \Ib \preceq c R (\Sigmatreated +\lambda \Ib).
\end{align*}
and
\begin{align*}
\bSigma_\cS +\lambda \Ib \preceq c R (\Sigmacontrol +\lambda \Ib) \\
\bSigma_\cS +\lambda \Ib \preceq c R (\Sigmatreated +\lambda \Ib).
\end{align*}
These inequalities hold for some absolute constant \(c>0\).
\end{lemma}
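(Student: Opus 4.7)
The plan is to prove all four inequalities by direct unpacking of Assumption~\labelcref{assumption; weak treatment overlap}, exploiting the fact that the lower bound $\lambda \geq \frac{\xi \log n}{n}$ dominates the $\frac{\xi}{n}$ term appearing in the assumption. No concentration or probabilistic argument is needed here: this is a purely deterministic operator-inequality manipulation.

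First, I would handle the pair relating $\Sigmatreated$ and $\Sigmacontrol$. Starting from the treatment-overlap bound $\Sigmatreated \preceq R\bigl(\Sigmacontrol + \tfrac{\xi}{n}\Ib\bigr)$ in Assumption~\labelcref{assumption; weak treatment overlap}, add $\lambda \Ib$ to both sides to obtain
\[
\Sigmatreated + \lambda \Ib \preceq R\Sigmacontrol + R\tfrac{\xi}{n}\Ib + \lambda \Ib.
\]
Since $\lambda \geq \frac{\xi \log n}{n} \geq \frac{\xi}{n}$ (for $n \geq 3$) and $R \geq 1$, we have $R\tfrac{\xi}{n} \leq R\lambda$, so the right side is $\preceq R\Sigmacontrol + (R+1)\lambda \Ib \preceq 2R(\Sigmacontrol + \lambda \Ib)$. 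This yields the first inequality with $c = 2$. The second inequality follows by the same argument applied to the symmetric bound $\Sigmacontrol \preceq R\bigl(\Sigmatreated + \tfrac{\xi}{n}\Ib\bigr)$.

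Next, for the bounds involving $\bSigma = \bSigma_{\cS} = \Sigmatreated + \Sigmacontrol$ (using the trivial decomposition $\EE[x\otimes x] = \EE[x \otimes x \one(a=1)] + \EE[x \otimes x \one(a=0)]$), I would write
\[
\bSigma + \lambda \Ib = \Sigmatreated + \Sigmacontrol + \lambda \Ib,
\]
and apply the first inequality to the $\Sigmatreated$ term to get $\Sigmatreated \preceq \Sigmatreated + \lambda \Ib \preceq 2R(\Sigmacontrol + \lambda \Ib)$. Combining,
\[
\bSigma + \lambda \Ib \preceq 2R(\Sigmacontrol + \lambda \Ib) + \Sigmacontrol + \lambda \Ib \preceq (2R+1)(\Sigmacontrol + \lambda \Ib) \preceq 3R(\Sigmacontrol + \lambda \Ib),
\]
using $R \geq 1$. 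The fourth inequality follows by swapping the roles of treated and control. Choosing $c = 3$ suffices for all four claims.

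The ``hard part'' is essentially nonexistent: the only minor subtlety is the bookkeeping step that absorbs $R\tfrac{\xi}{n}\Ib$ into $R\lambda \Ib$ using the lower bound on $\lambda$, together with the $R \geq 1$ normalization to collect constants. No appeal to the good event $\event$ or to Lemma~\labelcref{lemma; moment ratio in E1} is needed, since the statement is in terms of population (not empirical) second moments; this distinguishes it from the empirical companion results in Corollary~\labelcref{corollary; application of Lemma second moment ratio}, which instead require concentration.
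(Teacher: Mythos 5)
Your proposal is correct and follows essentially the same argument as the paper: absorb the $\tfrac{\xi}{n}\Ib$ term into $\lambda\Ib$ using $\lambda \geq \tfrac{\xi\log n}{n}$ and $R\geq 1$ to get the constant $2R$ for the first pair, then use $\bSigma_{\cS} = \Sigmatreated + \Sigmacontrol$ together with the first pair to get $3R$ for the second pair. The observation that no concentration or good-event argument is needed (since these are population moments) is also accurate.
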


\begin{proof}
For the first inequality, observe that
\begin{align*}
\Sigmatreated +\lambda \Ib &\preceq R(\Sigmacontrol+ \frac{\xi}{n} \Ib) + \lambda \Ib \quad \text{(By Assumption~\labelcref{assumption; weak treatment overlap})} \\
&\preceq R(\Sigmacontrol+ \lambda  \Ib) + \lambda \Ib \\
&\preceq 2R(\Sigmacontrol+ \lambda  \Ib).
\end{align*} 
The second inequality follows similarly.

Next, we prove the third inequality. Using the first inequality, we obtain
\begin{align*}
\bSigma_\cS +\lambda \Ib &= \Sigmacontrol + \Sigmatreated + \lambda \Ib \\
&\preceq \Sigmacontrol + 2R(\Sigmacontrol +\lambda \Ib)  \\
&\preceq 3R(\Sigmacontrol +\lambda \Ib).
\end{align*}

The fourth inequality follows similarly.
\end{proof}

\section{Proofs for Theorem~\labelcref{theorem; MSE bound RA learner} and Corollary~\labelcref{corollary; optimal MSE bound}} \label{section: proof RA learner}
\noindent
This section proves Theorem~\labelcref{theorem; MSE bound RA learner} and Corollary~\labelcref{corollary; optimal MSE bound}.
Throughout, we work within the good event \(\event\), which is defined in Appendix~\labelcref{section: good events and second moments}.
As proved in Appendix~\labelcref{section: good events and second moments}, \(\PP[\event] \geq 1 - n^{-11}\).

\subsection{Hilbertian Formulation of RA Learner}\label{subsection: Hilbertian formulation RA learner}
\noindent
We begin by reformulating the RA learner in the language of Hilbertian elements, as established in Appendix~\labelcref{section: groundwork}. 
For the estimator \(\hat{h}_{\bm{\lambda}}\) obtained from the RA learner with regularizers \(\bm{\lambda} = (\lambda_{0,0}, \lambda_{0,1}, \lambda_{1})\), let \(\hat{\eta}_{\bm{\lambda}}\) be its corresponding Hilbert space element.

The RA learner performs nuisance estimation in the first stage.
Let \(\hat{\theta}_0\) and \(\hat{\theta}_1\) be the Hilbertian elements corresponding to \(\hat{f}_0\) and \(\hat{f}_1\), defined as 
\begin{align*}
&\hat{\theta}_1 := (\Xb_{1,1}^\top \Xb_{1,1} + n_{1} \lambda_{0,1}\Ib)^{-1} \Xb_{1,1}^\top \yb_{1,1},\\
&\hat{\theta}_0 := (\Xb_{1,0}^\top \Xb_{1,0} + n_{1} \lambda_{0,0}\Ib)^{-1} \Xb_{1,0}^\top \yb_{1,0}.
\end{align*}
Recall that the pseudo-outcome in Algorithm~\labelcref{algorithm: RA learner} is defined as 
\[
m_{1i} := (y_{1i} - x_{1i}^\top \hat{\theta}_0)\indicator(a_{1i}=1) + (x_{1i}^\top \hat{\theta}_1 - y_{1i}) \indicator(a_{1i}=0).
\]
With a slight abuse of notation, set the target MSE of any CATE estimator \(\hat{\eta}\) as 
\[
\Ecal_{\cT}(\hat{\eta}) := \EE_{x \sim \cP_\cT} \bigl|x^\top(\hat{\eta}-\eta^\star)\bigr|^2 
= \|\hat{\eta} - \eta^\star \|_{\bSigma_\cT}^2.
\]

In RKHS form, Theorem~\labelcref{theorem; MSE bound RA learner} becomes:
\[
\cE_\cT(\hat{\eta}_{\bm{\lambda}}) 
\lesssim R \|\Sbar_{\lambda_{1}} \|_{\op}
\bigl(\lambda_{0,1}\norm{\theta^\star_1}_{\HH}^2 + \lambda_{0,0}\norm{\theta^\star_0}_{\HH}^2\bigr)
+ \lambda_{1} \|\Sbar_{\lambda_{1}} \|_{\op} \|\eta^{\star}\|_{\HH}^2
+\sig^2  \frac{ R\operatorname{Tr}\bigl(\Sbar_{\lambda_{1}}\bigr)}{n}\log n,
\]
which holds for all \(\lambda_{0,0}, \lambda_{0,1}, \lambda_{1} \geq \xi\log n/n\).
The equivalent statement of Corollary~\labelcref{corollary; optimal MSE bound} is:
\[
\min_{\bm{\lambda} \in \bm{\Lambda}} \cE_\cT(\hat{\eta}_{\bm{\lambda}}) 
\lesssim 
\Bigl(\frac{BR}{n}\Bigr)^{\alpha}
\norm{\eta^\star}_{\HH}^{2(1-\alpha)}
(\log n)^\alpha
+ \frac{\xi BR}{n}\max\bigl(\|\theta_{0}^\star\|_{\HH}, \|\theta_{1}^\star\|_{\HH}\bigr)^2 \log n.
\]
We will prove both results in this section and analyze the MSE of \(\hat{\eta}_{\bm{\lambda}}\).

\subsection{Decomposition of MSE}\label{subsection: MSE decomposition}
\noindent
By the definition of the RA learner, we can write the estimator in closed form as
\begin{align*}
\hat{\eta}_{\bm{\lambda}} 
&= (\Sighat_{1} + \lambda_{1} \mathbf{I})^{-1} \frac{1}{n_{1}} \Bigl(\sum_{\cD_{1,1}}x_{1i} \bigl(y_{1i} - x_{1i}^\top \hat{\theta}_0\bigr) +\sum_{\cD_{1,0}} x_{1i}\bigl( x_{1i}^\top \hat{\theta}_1 - y_{1i}\bigr)\Bigr)\\
&=(\Sighat_{1} + \lambda_{1} \mathbf{I})^{-1}  \frac{1}{n_{1}}\Bigl(\sum_{\cD_{1,1}} x_{1i}\bigl(y_{1i} - x_{1i}^\top {\theta}^\star_0 \bigr) + x_{1i}x_{1i}^\top\bigl({\theta}^\star_0-\hat{\theta}_0\bigr)\\
&\quad +\sum_{\cD_{1,0}} x_{1i}\bigl( x_{1i}^\top {\theta}^\star_1 -y_{1i}\bigr) - x_{1i}x_{1i}^\top\bigl({\theta}^\star_1-\hat{\theta}_1\bigr) \Bigr) \\
&= (\Sighat_{1} + \lambda_{1} \mathbf{I})^{-1}  \frac{1}{n_{1}}\Bigl(\sum_{\cD_{1,1}} x_{1i}\bigl(x_{1i}^\top \theta_1^\star + \varepsilon_{1i} - x_{1i}^\top {\theta}^\star_0\bigr) +x_{1i}x_{1i}^\top\bigl({\theta}^\star_0-\hat{\theta}_0\bigr)\\
&\quad+\sum_{\cD_{1,0}} x_{1i}\bigl( x_{1i}^\top {\theta}^\star_1 -x_{1i}^\top \theta_0^\star - \varepsilon_{1i}\bigr)
+x_{1i}x_{1i}^\top\bigl(\hat{\theta}_1-{\theta}^\star_1\bigr)\Bigr) \\
&= (\Sighat_{1} + \lambda_{1} \mathbf{I})^{-1}  \frac{1}{n_{1}}\Bigl(\sum_{i=1}^{n_{1}}x_{1i}x_{1i}^\top\bigl(\theta_1^\star-\theta_0^\star\bigr) 
+
\sum_{\cD_{1,1}} \bigl(x_{1i}\varepsilon_{1i} +x_{1i}x_{1i}^\top\bigl({\theta}^\star_0-\hat{\theta}_0\bigr)\bigr) \\
&\quad +\sum_{\cD_{1,0}}  \bigl(- x_{1i}\varepsilon_{1i}+x_{1i}x_{1i}^\top\bigl(\hat{\theta}_1-{\theta}^\star_1\bigr)\bigr)\Bigr)\\
&= (\Sighat_{1} + \lambda_{1} \mathbf{I})^{-1} \frac{1}{n_{1}}\Bigl(n_{1} \Sighat_{1} \eta^\star + 
\sum_{\cD_{1,1}} \bigl(x_{1i}\varepsilon_{1i} +x_{1i}x_{1i}^\top\bigl({\theta}^\star_0-\hat{\theta}_0\bigr)\bigr) \\
&\quad +\sum_{\cD_{1,0}} \bigl(- x_{1i}\varepsilon_{1i}+x_{1i}x_{1i}^\top\bigl(\hat{\theta}_1-{\theta}^\star_1\bigr)\bigr)\Bigr).
\end{align*}
Then, the difference \(\hat{\eta}_{\bm{\lambda}}-\eta^\star\) is
\begin{align*}
\hat{\eta}_{\bm{\lambda}} -\eta^\star 
&= (\Sighat_{1} + \lambda_{1} \mathbf{I})^{-1} \frac{1}{n_{1}}\Bigl(\sum_{\cD_{1,1}} \bigl(x_{1i}\varepsilon_{1i} +x_{1i}x_{1i}^\top\bigl({\theta}^\star_0-\hat{\theta}_0\bigr)\bigr)\\
&\quad +\sum_{\cD_{1,0}} \bigl(-x_{1i}\varepsilon_{1i}+x_{1i}x_{1i}^\top\bigl(\hat{\theta}_1-{\theta}^\star_1\bigr)\bigr) - n_{1}\lambda_{1} \eta^\star \Bigr) \\
&= (\Sighat_{1} + \lambda_{1} \mathbf{I})^{-1} \Bigl( \frac{1}{n_{1}} \bigl(\sum_{\cD_{1,1}}x_{1i} \varepsilon_{1i}+\sum_{\cD_{1,0}} -x_{1i}\varepsilon_{1i}\bigr) -\lambda_{1} \eta^\star\Bigr)\\
&\quad + (\Sighat_{1} + \lambda_{1} \mathbf{I})^{-1} \Bigl( \frac{1}{n_{1}}\bigl( n_{1}\Sighat_{1,1}\bigl({\theta}^\star_0-\hat{\theta}_0\bigr)
+ n_{1}\Sighat_{1,0} \bigl(\hat{\theta}_1-{\theta}^\star_1\bigr)\bigr)\Bigr) \\
&= (\Sighat_{1} + \lambda_{1} \mathbf{I})^{-1} \Bigl( \frac{1}{n_{1}} \bigl(\sum_{\cD_{1,1}}x_{1i} \varepsilon_{1i}+\sum_{\cD_{1,0}} -x_{1i}\varepsilon_{1i}\bigr) -\lambda_{1} \eta^\star\Bigr)\\
&\quad + (\Sighat_{1} + \lambda_{1} \mathbf{I})^{-1} \Bigl( \Sighat_{1,1}\bigl({\theta}^\star_0-\hat{\theta}_0\bigr)
+ \Sighat_{1,0} \bigl(\hat{\theta}_1-{\theta}^\star_1\bigr)\Bigr).
\end{align*}
To control the MSE, define each term as
\begin{align*}
\Vcr &= \norm{(\Sighat_{1} + \lambda_{1} \mathbf{I})^{-1}\Bigl( \frac{1}{n_{1}} \bigl(\sum_{\cD_{1,1}}x_{1i} \varepsilon_{1i}+\sum_{\cD_{1,0}} -x_{1i}\varepsilon_{1i}\bigr)\Bigr)}_{\bSigma_{\cT}},\\
\Bcr &= \norm{(\Sighat_{1} + \lambda_{1} \mathbf{I})^{-1}\lambda_{1} \eta^\star}_{\bSigma_{\cT}},\\
\Pcr_0 &= \norm{(\Sighat_{1} + \lambda_{1} \mathbf{I})^{-1} \Sighat_{1,1}\bigl(\hat{\theta}_0-{\theta}^\star_0\bigr)}_{\bSigma_{\cT}},\\
\Pcr_1 &= \norm{(\Sighat_{1} + \lambda_{1} \mathbf{I})^{-1} \Sighat_{1,0} \bigl(\hat{\theta}_1-{\theta}^\star_1\bigr)}_{\bSigma_{\cT}}.
\end{align*}
Finally, the MSE can be decomposed as
\[
\norm{\hat{\eta}_{\bm{\lambda}} -\eta^\star}_{\bSigma_{\cT}} \leq \mathscr{V}+\mathscr{B}+ \mathscr{P}_0 +  \mathscr{P}_1.
\]
The following subsections bound each term.

\subsection{Bounding \(\mathscr{V}\) and \(\mathscr{B}\)}
\noindent
We bound \(\Vcr^2 + \Bcr^2\) using techniques from \citet{wang2026pseudo,ma2023optimally} together with \(\lambda_{1} \asymp n_{1}^{-\alpha}\).
We first present an upper bound for \(\Vcr^2 + \Bcr^2\).
We introduce \(\delta_1 >0\) as an auxiliary probability parameter under \(\event\), which can be chosen arbitrarily. 
In the end, we set \(\delta_1 = \frac{n^{-11}}{|\cH|}\).

\begin{lemma}\label{lemma; RA learner bound 1}
Fix any \(\delta_1>0\).
Under the good event \(\event\), with probability at least \(1- \delta_1/3\), we have
\[
\Vcr^2 + \Bcr^2 \lesssim  \lambda_{1}\|\Sbar_{\lambda_{1}}  \|_{\op}\bigl\|\eta^{\star}\bigr\|_{\HH}^2+\sig^2 \frac{ \operatorname{Tr}\bigl(\Sbar_{\lambda_{1}}\bigr) \log (1 / \delta_1)}{n_{1}}.
\]
Here, \(\lesssim\) hides absolute constants.
\end{lemma}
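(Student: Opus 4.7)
The plan is to treat the two pieces $\Bcr$ (a deterministic bias-type term) and $\Vcr$ (a stochastic variance-type term) separately, then combine.  In both cases the first step is to invoke the good event $\event$ (via Lemma~\labelcref{lemma; moment ratio in E1}) to replace the empirical $\Sighat_1+\lambda_{1}\Ib$ by the population operator $\bSigma_\cS+\lambda_{1}\Ib$, up to an absolute constant, and then rewrite everything in terms of $\Sbar_{\lambda_{1}}=(\bSigma_\cS+\lambda_{1}\Ib)^{-1/2}\bSigma_\cT(\bSigma_\cS+\lambda_{1}\Ib)^{-1/2}$.

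For $\Bcr$, I would expand
\[
\Bcr^2=\lambda_{1}^2\,\eta^{\star\top}(\Sighat_1+\lambda_{1}\Ib)^{-1}\bSigma_\cT(\Sighat_1+\lambda_{1}\Ib)^{-1}\eta^\star,
\]
apply the good-event replacement to pass to $\bSigma_\cS+\lambda_{1}\Ib$, factor $(\bSigma_\cS+\lambda_{1}\Ib)^{-1}\bSigma_\cT(\bSigma_\cS+\lambda_{1}\Ib)^{-1}=(\bSigma_\cS+\lambda_{1}\Ib)^{-1/2}\Sbar_{\lambda_{1}}(\bSigma_\cS+\lambda_{1}\Ib)^{-1/2}$, bound the middle factor by $\|\Sbar_{\lambda_{1}}\|_{\op}$, and finally use $(\bSigma_\cS+\lambda_{1}\Ib)^{-1}\preceq \lambda_{1}^{-1}\Ib$. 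This yields $\Bcr^2\lesssim \lambda_{1}\|\Sbar_{\lambda_{1}}\|_{\op}\|\eta^\star\|_\HH^2$ with no probability loss beyond $\event$.

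For $\Vcr$, introduce the signs $w_i=\indicator(a_{1i}=1)-\indicator(a_{1i}=0)\in\{-1,+1\}$ so that the stochastic sum equals $\Xb_1^\top\Wb\bm{\varepsilon}_1$ with $\Wb=\operatorname{diag}(w_i)$. After the good-event substitution we obtain
\[
\Vcr^2\lesssim \frac{1}{n_1^2}\,\bm{\varepsilon}_1^\top M \bm{\varepsilon}_1,\qquad M=\Wb\Xb_1(\bSigma_\cS+\lambda_{1}\Ib)^{-1}\bSigma_\cT(\bSigma_\cS+\lambda_{1}\Ib)^{-1}\Xb_1^\top\Wb.
\]
Conditional on $\{(x_{1i},a_{1i})\}$, the $\varepsilon_{1i}$ are independent mean-zero sub-Gaussian with proxy $\sigma$ (Assumption~\labelcref{assumption; subGaussian noise}), so by the Hanson--Wright inequality
\[
\bm{\varepsilon}_1^\top M\bm{\varepsilon}_1\lesssim \sigma^2\bigl(\operatorname{Tr}(M)+\|M\|_{\op}\log(1/\delta_1)\bigr)
\]
with probability $1-\delta_1/3$ conditionally, hence unconditionally. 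The conjugation by $\Wb$ is an isometry, and writing $\Xb_1^\top\Xb_1=n_1\Sighat_1$ followed by another appeal to the good event gives both $\operatorname{Tr}(M)\lesssim n_1\operatorname{Tr}(\Sbar_{\lambda_{1}})$ and $\|M\|_{\op}\lesssim n_1\|\Sbar_{\lambda_{1}}\|_{\op}$, the key algebraic identity being $(\bSigma_\cS+\lambda_{1}\Ib)^{1/2}(\bSigma_\cS+\lambda_{1}\Ib)^{-1}\bSigma_\cT(\bSigma_\cS+\lambda_{1}\Ib)^{-1}(\bSigma_\cS+\lambda_{1}\Ib)^{1/2}=\Sbar_{\lambda_{1}}$. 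Since $\|\Sbar_{\lambda_{1}}\|_{\op}\leq \operatorname{Tr}(\Sbar_{\lambda_{1}})$ and $\log(1/\delta_1)\geq 1$, the two contributions collapse to $\sigma^2\operatorname{Tr}(\Sbar_{\lambda_{1}})\log(1/\delta_1)/n_1$, giving the desired bound on $\Vcr^2$.

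The main technical obstacle is the $\Vcr$ term: I need a sub-Gaussian quadratic-form concentration that holds in the (possibly infinite-dimensional) Hilbert setting and that gives the sharp $\operatorname{Tr}(\Sbar_{\lambda_{1}})$ scaling rather than the cruder dimension-dependent bound. This is resolved by the Hanson--Wright-type inequality applied conditionally on $\{(x_{1i},a_{1i})\}$, after which the trace and operator-norm of $M$ are controlled purely by the good-event comparison between $\Sighat_1$ and $\bSigma_\cS+\lambda_{1}\Ib$. A minor bookkeeping point is that the signs $w_i$ are data-dependent, but they enter only through the isometry $\Wb$ and therefore do not affect either $\operatorname{Tr}(M)$ or $\|M\|_{\op}$.
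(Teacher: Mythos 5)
Your overall architecture coincides with the paper's: the bias term is handled deterministically by factoring out $\|\Sbar_{\lambda_{1}}\|_{\op}$ and using $(\Sighat_{1}+\lambda_{1}\Ib)^{-1}\preceq\lambda_{1}^{-1}\Ib$, and the variance term is handled by a Hanson--Wright inequality applied conditionally on the covariates and treatment indicators, with the good event converting the resulting trace into $\Tr(\Sbar_{\lambda_{1}})$. The bias bound and the final answer are correct, and your observation that the data-dependent signs enter only through a diagonal sign matrix (an isometry) is fine.

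There is, however, one step in the variance argument that would fail as written. You pass from the empirical quadratic form $\tfrac{1}{n_1^2}\,\bm{\varepsilon}_1^\top \Xb_1(\Sighat_{1}+\lambda_{1}\Ib)^{-1}\bSigma_\cT(\Sighat_{1}+\lambda_{1}\Ib)^{-1}\Xb_1^\top\bm{\varepsilon}_1$ (with the signs absorbed) to the same expression with $(\bSigma_\cS+\lambda_{1}\Ib)^{-1}$ in place of $(\Sighat_{1}+\lambda_{1}\Ib)^{-1}$ \emph{before} invoking Hanson--Wright. This requires the operator inequality $A\bSigma_\cT A\preceq c\,B\bSigma_\cT B$ for $A=(\Sighat_{1}+\lambda_{1}\Ib)^{-1}$ and $B=(\bSigma_\cS+\lambda_{1}\Ib)^{-1}$, which does \emph{not} follow from the two-sided comparison $c^{-1}B\preceq A\preceq cB$ furnished by $\event$: the map $X\mapsto X\bSigma_\cT X$ is not operator monotone (if $\bSigma_\cT$ is rank one, $A\bSigma_\cT A$ and $B\bSigma_\cT B$ are rank-one operators with different ranges, so neither dominates the other up to any constant). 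The same caveat applies in principle to your bias computation, although there the product-of-operator-norms factorization you describe rescues the scalar bound, since conjugating $(\Sighat_{1}+\lambda_{1}\Ib)^{-1}$ by the \emph{fixed} operator $\bSigma_\cT^{1/2}$ does preserve the Loewner order. The fix for the variance term is to reverse your order of operations, which is exactly what the paper does: apply Hanson--Wright directly to the empirical matrix $M_{\mathrm{emp}}$, and only afterwards bound $\Tr(M_{\mathrm{emp}})=n_1\Tr\bigl(\Sighat_{1}(\Sighat_{1}+\lambda_{1}\Ib)^{-1}\bSigma_\cT(\Sighat_{1}+\lambda_{1}\Ib)^{-1}\bigr)\leq n_1\Tr\bigl(\bSigma_\cT(\Sighat_{1}+\lambda_{1}\Ib)^{-1}\bigr)\lesssim n_1\Tr(\Sbar_{\lambda_{1}})$, using $(\Sighat_{1}+\lambda_{1}\Ib)^{-1/2}\Sighat_{1}(\Sighat_{1}+\lambda_{1}\Ib)^{-1/2}\preceq\Ib$ together with the trace monotonicity $\Tr(PQ)\leq\Tr(PQ')$ for $Q\preceq Q'$ and $P\succeq 0$ (Lemma~\labelcref{lemma; trace simple inequality}), and similarly for $\|M_{\mathrm{emp}}\|_{\op}$. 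With that reordering your proof goes through and matches the paper's.
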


\begin{proof}
Under Assumption~\ref{assumption; subGaussian noise}, the noise variables \(\varepsilon_{1i}(-1)^{a_{1i}+1}\), \(i =1,2,\dots,n\), are mean-zero and sub-Gaussian given \(\{(x_i,a_i) \}_{i=1}^n\). 
Thus, \(\Vcr^2 + \Bcr^2\) can be viewed as the MSE of a general KRR problem, which allows us to apply prior results.
We sketch the argument below. 

For the variance term, applying the Hanson-Wright inequality (Lemma~\labelcref{lemma; quadratic martingale}) yields
\begin{align*}
\Vcr^2 
&\lesssim \sigma^2 \frac{1}{n_{1}^2}\Tr\Bigl( \Xb_1 \bigl(\Sighat_{1} + \lambda_{1} \Ib\bigr)^{-1} \bSigma_\cT \bigl(\Sighat_{1} + \lambda_{1} \Ib\bigr)^{-1} \Xb_1^\top \Bigr) \log\Bigl(\frac{3}{\delta_1}\Bigr)\\
&\lesssim \sigma^2 \frac{1}{n_{1}}\Tr\Bigl( \bigl(\Sighat_{1} + \lambda_{1} \Ib\bigr)^{-1} \bSigma_\cT \bigl(\Sighat_{1} + \lambda_{1} \Ib\bigr)^{-1}\Sighat_{1}\Bigr)\log\Bigl(\frac{1}{\delta_1}\Bigr)\\
&\lesssim \sigma^2 \frac{1}{n_{1}}\Tr\Bigl( \bigl(\Sighat_{1} + \lambda_{1} \Ib\bigr)^{-\frac{1}{2}} \bSigma_\cT \bigl(\Sighat_{1} + \lambda_{1} \Ib\bigr)^{-\frac{1}{2}} 
\bigl(\Sighat_{1} + \lambda_{1} \Ib\bigr)^{-\frac{1}{2}} \Sighat_{1} \bigl(\Sighat_{1} + \lambda_{1} \Ib\bigr)^{-\frac{1}{2}}\Bigr) \log\Bigl(\frac{1}{\delta_1}\Bigr)\\
&\stackrel{\text{(i)}}{\lesssim} \sigma^2 \frac{1}{n_{1}}\Tr\Bigl( \bigl(\Sighat_{1} + \lambda_{1} \Ib\bigr)^{-\frac{1}{2}} \bSigma_\cT \bigl(\Sighat_{1} + \lambda_{1} \Ib\bigr)^{-\frac{1}{2}}\Bigr)\log\Bigl(\frac{1}{\delta_1}\Bigr)\\
&\lesssim \sigma^2 \frac{1}{n_{1}}\Tr\Bigl(\bSigma_\cT \bigl(\Sighat_{1} + \lambda_{1} \Ib\bigr)^{-1}\Bigr)\log\Bigl(\frac{1}{\delta_1}\Bigr)\\
& \stackrel{\text{(ii)}}{\lesssim} \sigma^2  \frac{1}{n_{1}}\Tr\Bigl(\bSigma_\cT \bigl(\bSigma_\cS + \lambda_{1} \Ib\bigr)^{-1}\Bigr)\log\Bigl(\frac{1}{\delta_1}\Bigr)
=\sigma^2\frac{1}{n_{1}}\Tr\bigl(\Sbar_{\lambda_{1}}\bigr)\log\Bigl(\frac{1}{\delta_1}\Bigr)
\end{align*}
with probability at least \(1-\frac{\delta_1}{3}\), where we used 
Lemma~\labelcref{lemma; trace simple inequality} in (i), 
and inequality (ii) follows from 
Lemma~\labelcref{lemma; moment ratio in E1,lemma; matrix inverse inequality}.

For the bias term, we have
\begin{align*}
\Bcr^2 
&\leq
\left\|\bSigma_\cT^{\frac{1}{2}}\bigl(\Sighat_{1}+\lambda_1\Ib\bigr)^{-\frac{1}{2}}\right\|_{\op}^2
\cdot
\left\|\lambda_1\bigl(\Sighat_{1}+\lambda_1\Ib\bigr)^{-\frac{1}{2}}\eta^\star\right\|_{\HH}^2 \\
&\lesssim
\lambda_1
\left\|\bSigma_\cT^{\frac{1}{2}}\bigl(\Sighat_{1}+\lambda_1\Ib\bigr)^{-1}\bSigma_\cT^{\frac{1}{2}}\right\|_{\op}
\|\eta^\star\|_{\HH}^2 \\
&\lesssim
\lambda_1\|\Sbar_{\lambda_1}\|_{\op}\|\eta^\star\|_{\HH}^2,
\end{align*}
where we used Corollary~\labelcref{corollary; application of Lemma second moment ratio} in the third line.
\end{proof}

\subsection{Bounding \(\mathscr{P}_0\) and \(\mathscr{P}_1\)}
\noindent
The terms \(\Pcr_0\) and \(\Pcr_1\) represent propagated errors from the nuisance estimators \(\hat{\theta}_0\) and \(\hat{\theta}_1\).

Note that
\begin{align*}
&\bSigma_{\cT}^{\frac{1}{2}} (\Sighat_{1} + \lambda_{1} \mathbf{I})^{-1} \Sighat_{1,1}\bigl({\theta}^\star_0-\hat{\theta}_0\bigr)
\\
&= \bSigma_{\cT}^{\frac{1}{2}} (\Sighat_{1} + \lambda_{1} \mathbf{I})^{-1} \Sighat_{1,1}\bigl(\Xb_{1,0}^\top \Xb_{1,0}+n_{1}\lambda_{0,0} \Ib\bigr)
^{-1} \bigl(-\Xb_{1,0}^\top \bm{\varepsilon}_{1,0} + n_{1}\lambda_{0,0}\theta_0^\star\bigr) \\
&=  \bSigma_{\cT}^{\frac{1}{2}} (\Sighat_{1} + \lambda_{1} \mathbf{I})^{-1}  \Sighat_{1,1} \bigl(\Sighat_{1,0}+\lambda_{0,0} \mathbf{I}\bigr)
^{-1} \Bigl(-\frac{1}{n_1}\Xb_{1,0}^\top \bm{\varepsilon}_{1,0} + \lambda_{0,0}\theta_0^\star\Bigr),
\end{align*}
we can write
\[
\Pcr_0 \leq \Vcr_0 + \Bcr_0,
\]
where 
\begin{align*}
\Vcr_0 &:= \Bigl\|   (\Sighat_{1} + \lambda_{1} \mathbf{I})^{-1}  \Sighat_{1,1} \bigl(\Sighat_{1,0}+\lambda_{0,0} \mathbf{I}\bigr)
^{-1} \frac{1}{n_1}\Xb_{1,0}^\top \bm{\varepsilon}_{1,0} \Bigr\|_{\bSigma_{\cT}},\\
\Bcr_0 &:= \Bigl\| (\Sighat_{1} + \lambda_{1} \mathbf{I})^{-1}  \Sighat_{1,1} \bigl(\Sighat_{1,0}+\lambda_{0,0} \mathbf{I}\bigr)
^{-1} \lambda_{0,0}\theta_0^\star\Bigr\|_{\bSigma_{\cT}}.
\end{align*}

\begin{lemma}[Propagated bias bound]\label{lemma; bounding B_0}
Under the good event \(\event\),
\[
\mathscr{B}_0^2 \lesssim R  \|\Sbar_{\lambda_{1}} \|_{\op}\lambda_{0,0}\norm{\theta^\star_0}_{\HH}^2.
\]
Here, \(\lesssim\) hides absolute constants.
\end{lemma}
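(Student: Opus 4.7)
The plan is to bound $\Bcr_0$ by repeated sub-multiplicativity of the operator norm, after first factoring the chain $\bSigma_\cT^{1/2}(\Sighat_1+\lambda_1\Ib)^{-1}\Sighat_{1,1}(\Sighat_{1,0}+\lambda_{0,0}\Ib)^{-1}\lambda_{0,0}\theta_0^\star$ into four blocks. Writing each of $(\Sighat_1+\lambda_1\Ib)^{-1}$, $\Sighat_{1,1}$ and $(\Sighat_{1,0}+\lambda_{0,0}\Ib)^{-1}$ as the product of two square roots of itself, I split the chain into the blocks $\bSigma_\cT^{1/2}(\Sighat_1+\lambda_1\Ib)^{-1/2}$, $(\Sighat_1+\lambda_1\Ib)^{-1/2}\Sighat_{1,1}^{1/2}$, $\Sighat_{1,1}^{1/2}(\Sighat_{1,0}+\lambda_{0,0}\Ib)^{-1/2}$, and $(\Sighat_{1,0}+\lambda_{0,0}\Ib)^{-1/2}\lambda_{0,0}\theta_0^\star$. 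This decomposition is tailored so that each of the two overlap constants surfaces in exactly one block.

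Each block is handled as follows. The square of the first block's operator norm equals $\|\bSigma_\cT^{1/2}(\Sighat_1+\lambda_1\Ib)^{-1}\bSigma_\cT^{1/2}\|_{\op}$; under $\event$, Lemma~\ref{lemma; moment ratio in E1} lets me replace $\Sighat_1+\lambda_1\Ib$ by $\bSigma_\cS+\lambda_1\Ib$ up to a constant, and the resulting quantity equals $\|\Sbar_{\lambda_1}\|_{\op}$ by definition and cyclicity of the operator norm. The second block has squared norm $\|(\Sighat_1+\lambda_1\Ib)^{-1/2}\Sighat_{1,1}(\Sighat_1+\lambda_1\Ib)^{-1/2}\|_{\op} \leq 1$, since $\Sighat_{1,1} \preceq \Sighat_1 \preceq \Sighat_1+\lambda_1\Ib$. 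The third block, after a cyclic rotation, has squared norm $\|(\Sighat_{1,0}+\lambda_{0,0}\Ib)^{-1/2}\Sighat_{1,1}(\Sighat_{1,0}+\lambda_{0,0}\Ib)^{-1/2}\|_{\op} \lesssim R$ by the second inequality of Corollary~\ref{corollary; application of Lemma second moment ratio}; this is where the weak treatment overlap parameter from Assumption~\ref{assumption; weak treatment overlap} enters. The fourth block satisfies $\|(\Sighat_{1,0}+\lambda_{0,0}\Ib)^{-1/2}\lambda_{0,0}\theta_0^\star\|_{\HH} \leq \lambda_{0,0}^{1/2}\|\theta_0^\star\|_{\HH}$ because the operator $\lambda_{0,0}^{1/2}(\Sighat_{1,0}+\lambda_{0,0}\Ib)^{-1/2}$ has spectral norm at most one.

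Multiplying the four bounds and squaring yields exactly $\Bcr_0^2 \lesssim R\|\Sbar_{\lambda_1}\|_{\op}\lambda_{0,0}\|\theta_0^\star\|_{\HH}^2$, as required. I do not anticipate a substantive obstacle; the only point needing care is arranging the factors so that the source--target shift quantity appears only through the pairing of $\bSigma_\cT$ with $(\Sighat_1+\lambda_1\Ib)^{-1}$, while the treatment overlap constant $R$ appears only through the pairing of $\Sighat_{1,1}$ with $(\Sighat_{1,0}+\lambda_{0,0}\Ib)^{-1}$. Any other splitting would either conflate these two effects or fail to expose the inner $\lambda_{0,0}^{1/2}$ that makes the bound vanish with the regularizer.
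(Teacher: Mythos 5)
Your proposal is correct and follows essentially the same route as the paper: the same four-block factorization via square roots, with $\|\bSigma_\cT^{1/2}(\Sighat_{1}+\lambda_{1}\Ib)^{-1/2}\|_{\op}^2$ controlled by $\|\Sbar_{\lambda_{1}}\|_{\op}$ through Lemma~\ref{lemma; moment ratio in E1}, the middle block bounded by $1$ via $\Sighat_{1,1}\preceq\Sighat_{1}$, the treatment-overlap block bounded by $R$ via Corollary~\ref{corollary; application of Lemma second moment ratio}, and the final block yielding $\lambda_{0,0}^{1/2}\|\theta_0^\star\|_{\HH}$. No gaps.
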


\begin{proof}
By direct calculation,
\begin{align*}
\mathscr{B}_0 
&\leq \Bigl\|\bSigma_\cT^{\frac{1}{2}}\bigl(\Sighat_{1}+ \lambda_{1} \Ib\bigr)^{-\frac{1}{2}} \Bigr\|_{\op} 
\Bigl\|\bigl(\Sighat_{1}+ \lambda_{1} \Ib\bigr)^{-\frac{1}{2}} \Sighat_{1,1}^{\frac{1}{2}}\Bigr\|_{\op} 
\Bigl\|\Sighat_{1,1}^{\frac{1}{2}} \bigl(\Sighat_{1,0}+\lambda_{0,0} \mathbf{I}\bigr)
^{-\frac{1}{2}} \Bigr\|_{\op} 
\sqrt{\lambda_{0,0}}\|\theta_0^\star\|_{\HH}.
\end{align*}
First, by Lemma~\labelcref{lemma; moment ratio in E1} and Lemma~\labelcref{lemma; matrix inverse inequality}, 
\[
\bigl\| \bSigma_{\cT}^{\frac{1}{2}} (\Sighat_{1} + \lambda_{1} \mathbf{I})^{-1}\bSigma_{\cT}^{\frac{1}{2}}   \bigr\|_{\op} 
\lesssim  \bigl\| \bSigma_{\cT}^{\frac{1}{2}} \bigl(\bSigma_\cS + \lambda_{1} \mathbf{I}\bigr)^{-1}\bSigma_{\cT}^{\frac{1}{2}} \bigr\|_{\op} 
=\|\Sbar_{\lambda_{1}} \|_{\op}.
\]
Next, by Lemma~\labelcref{lemma; matrix inverse inequality} and the definition of \(\Sighat_{1,1}\),
\[
\Sighat_{1,1}^{\frac{1}{2}} \bigl(\Sighat_{1} + \lambda_{1} \mathbf{I}\bigr)^{-1}  \Sighat_{1,1}^{\frac{1}{2}} 
\preceq  \Sighat_{1,1}^{\frac{1}{2}}  \bigl(\Sighat_{1,1} + \lambda_{1} \mathbf{I}\bigr)^{-1}  \Sighat_{1,1}^{\frac{1}{2}}  
\preceq \Ib.
\]
Finally, by Corollary~\labelcref{corollary; application of Lemma second moment ratio},
\[
\bigl(\Sighat_{1,0}+\lambda_{0,0} \mathbf{I}\bigr)
^{-\frac{1}{2}}\Sighat_{1,1} \bigl(\Sighat_{1,0}+\lambda_{0,0} \mathbf{I}\bigr)
^{-\frac{1}{2}}  \preceq  cR \Ib
\]
for some absolute constant \(c>0\).
Putting these together yields
\begin{align*}
\mathscr{B}_0 
&\lesssim \bigl\|\Sbar_{\lambda_{1}} \bigr\|_{\op}^{\frac{1}{2}} \times 1 \times \sqrt{R}\sqrt{\lambda_{0,0}}\|\theta_0^\star\|_{\HH} 
\lesssim \lambda_{0,0}^{\frac{1}{2}}\bigl\|\Sbar_{\lambda_{1}} \bigr\|_{\op}^{\frac{1}{2}} \sqrt{R}\|\theta_0^\star\|_{\HH}.
\end{align*}
\end{proof}

\begin{lemma}[Propagated variance bound]\label{lemma; bounding V_0}
Under the good event \(\event\), with probability at least \(1-\frac{\delta_1}{3}\), we have 
\[
\mathscr{V}_0^2 \lesssim \sig^2 \frac{R}{n_1}\Tr\bigl(\Sbar_{\lambda_{1}}\bigr)\log\Bigl(\frac{1}{\delta_1}\Bigr).
\]
Here, \(\lesssim\) hides absolute constants.
\end{lemma}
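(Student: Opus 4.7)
The plan is to recognize $\mathscr{V}_0^2$ as a quadratic form in the sub-Gaussian noise vector $\bm{\varepsilon}_{1,0}$ and apply the Hanson--Wright inequality (Lemma~\labelcref{lemma; quadratic martingale}), exactly as in the proof of Lemma~\labelcref{lemma; RA learner bound 1} for the pure variance term. Setting
\[
A := (\Sighat_{1,0}+\lambda_{0,0}\Ib)^{-1}\Sighat_{1,1}(\Sighat_{1}+\lambda_{1}\Ib)^{-1}\bSigma_\cT(\Sighat_{1}+\lambda_{1}\Ib)^{-1}\Sighat_{1,1}(\Sighat_{1,0}+\lambda_{0,0}\Ib)^{-1},
\]
I can write $\mathscr{V}_0^2 = \tfrac{1}{n_1^2}\bm{\varepsilon}_{1,0}^\top \Xb_{1,0}A\Xb_{1,0}^\top\bm{\varepsilon}_{1,0}$. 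Conditional on the covariates and treatments, Assumption~\labelcref{assumption; subGaussian noise} makes $\bm{\varepsilon}_{1,0}$ sub-Gaussian with parameter $\sigma$, so Hanson--Wright yields, with probability at least $1-\delta_1/3$,
\[
\mathscr{V}_0^2 \lesssim \sigma^2\operatorname{Tr}\!\bigl(\tfrac{1}{n_1^2}\Xb_{1,0}A\Xb_{1,0}^\top\bigr)\log(1/\delta_1) = \tfrac{\sigma^2}{n_1}\operatorname{Tr}(\Sighat_{1,0}A)\log(1/\delta_1),
\]
using $\Xb_{1,0}^\top\Xb_{1,0} = n_1\Sighat_{1,0}$ and trace cyclicity.

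The remainder is the deterministic estimate $\operatorname{Tr}(\Sighat_{1,0}A)\lesssim R\operatorname{Tr}(\Sbar_{\lambda_1})$, carried out under the good event $\event$ by repeated sandwiching. First, $\Sighat_{1,0}(\Sighat_{1,0}+\lambda_{0,0}\Ib)^{-1}\preceq\Ib$ eliminates one outer resolvent. Second, cycling the trace brings the core factor $\Sighat_{1,1}(\Sighat_{1,0}+\lambda_{0,0}\Ib)^{-1}\Sighat_{1,1}$ into view, and here Corollary~\labelcref{corollary; application of Lemma second moment ratio} supplies the key inequality $\Sighat_{1,1}^{1/2}(\Sighat_{1,0}+\lambda_{0,0}\Ib)^{-1}\Sighat_{1,1}^{1/2}\preceq cR\Ib$, which after sandwiching by $\Sighat_{1,1}^{1/2}$ gives $\Sighat_{1,1}(\Sighat_{1,0}+\lambda_{0,0}\Ib)^{-1}\Sighat_{1,1}\preceq cR\Sighat_{1,1}$. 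This is where the factor $R$ enters. Third, $\Sighat_{1,1}\preceq\Sighat_{1}$ together with $(\Sighat_{1}+\lambda_{1}\Ib)^{-1/2}\Sighat_{1}(\Sighat_{1}+\lambda_{1}\Ib)^{-1/2}\preceq\Ib$, combined with Lemma~\labelcref{lemma; trace simple inequality} (i.e., $\operatorname{Tr}(BC)\leq\|B\|_{\op}\operatorname{Tr}(C)$ for $B,C\succeq 0$), reduces the remaining trace to $\operatorname{Tr}\!\bigl((\Sighat_{1}+\lambda_{1}\Ib)^{-1/2}\bSigma_\cT(\Sighat_{1}+\lambda_{1}\Ib)^{-1/2}\bigr)$. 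Finally, Lemma~\labelcref{lemma; moment ratio in E1} and Lemma~\labelcref{lemma; matrix inverse inequality} swap $\Sighat_{1}+\lambda_{1}\Ib$ for $\bSigma_\cS+\lambda_{1}\Ib$ up to an absolute constant, converting this into $\operatorname{Tr}(\Sbar_{\lambda_1})$.

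Chaining the Hanson--Wright step with the deterministic trace estimate then gives the advertised bound $\mathscr{V}_0^2 \lesssim \sigma^2\frac{R}{n_1}\operatorname{Tr}(\Sbar_{\lambda_1})\log(1/\delta_1)$. The main obstacle I anticipate is bookkeeping: $A$ carries five operator factors sandwiched inside $\Xb_{1,0}\cdots\Xb_{1,0}^\top$, and one must choose the order of cyclic rearrangements so that each operator-norm inequality is applied to a positive semidefinite sandwich that legitimately feeds into $\operatorname{Tr}(BC)\leq\|B\|_{\op}\operatorname{Tr}(C)$. All the substantive content beyond routine trace manipulations is the weak-treatment-overlap comparison supplied by Corollary~\labelcref{corollary; application of Lemma second moment ratio}; this single step is responsible for converting the treated/control mismatch into the multiplicative factor $R$ in the final bound.
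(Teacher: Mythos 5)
Your proposal is correct and follows essentially the same route as the paper: apply the Hanson--Wright inequality (Lemma~\ref{lemma; quadratic martingale}) to the quadratic form in $\bm{\varepsilon}_{1,0}$, then reduce the resulting trace $\tfrac{1}{n_1}\operatorname{Tr}(\Sighat_{1,0}A)$ to $R\operatorname{Tr}(\Sbar_{\lambda_1})$ via the same chain of positive-semidefinite sandwiches --- dropping $(\Sighat_{1,0}+\lambda_{0,0}\Ib)^{-1}\Sighat_{1,0}(\Sighat_{1,0}+\lambda_{0,0}\Ib)^{-1}\preceq(\Sighat_{1,0}+\lambda_{0,0}\Ib)^{-1}$, invoking Corollary~\ref{corollary; application of Lemma second moment ratio} to extract the factor $R$, using $\Sighat_{1,1}\preceq\Sighat_1$, and finally swapping $\Sighat_1+\lambda_1\Ib$ for $\bSigma_\cS+\lambda_1\Ib$ via Lemmas~\ref{lemma; moment ratio in E1} and~\ref{lemma; matrix inverse inequality}. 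The only cosmetic difference is that you paraphrase Lemma~\ref{lemma; trace simple inequality} as $\operatorname{Tr}(BC)\leq\|B\|_{\op}\operatorname{Tr}(C)$, which is the special case $B\preceq\|B\|_{\op}\Ib$ of the monotonicity the paper actually states; the substance is identical.
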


\begin{proof}
By the Hanson-Wright inequality (Lemma~\labelcref{lemma; quadratic martingale}), with probability \(1-\frac{\delta_1}{3}\):
{\small \begin{align*}
\mathscr{V}_0^2 
&\lesssim 
{\small \frac{\sig^2 \log \bigl(\frac{3}{\delta_1}\bigr)}{n_1^2}\Tr\Bigl(\Xb_{1,0}\bigl(\Sighat_{1,0}+\lambda_{0,0} \mathbf{I}\bigr)^{-1}\Sighat_{1,1} (\Sighat_{1} + \lambda_{1} \mathbf{I}\bigr)^{-1}\bSigma_{\cT}\bigl(\Sighat_{1} + \lambda_{1} \mathbf{I}\bigr)^{-1}\Sighat_{1,1} \bigl(\Sighat_{1,0}+\lambda_{0,0} \mathbf{I}\bigr)^{-1}\Xb_{1,0}^\top \Bigr) }\\
&\lesssim \frac{\sig^2 \log \bigl(\frac{1}{\delta_1}\bigr)}{n_1}\Tr\Bigl(\bigl(\Sighat_{1,0}+\lambda_{0,0} \mathbf{I}\bigr)^{-1}\Sighat_{1,1} (\Sighat_{1} + \lambda_{1} \mathbf{I}\bigr)^{-1}\bSigma_{\cT}\bigl(\Sighat_{1} + \lambda_{1} \mathbf{I}\bigr)^{-1}\Sighat_{1,1} \bigl(\Sighat_{1,0}+\lambda_{0,0} \mathbf{I}\bigr)^{-1} \Sighat_{1,0}\Bigr) \\
&= \frac{\sig^2}{n_1} \Tr\Bigl( (\Sighat_{1} + \lambda_{1} \mathbf{I}\bigr)^{-1}\bSigma_{\cT}\bigl(\Sighat_{1} + \lambda_{1} \mathbf{I}\bigr)^{-1}\Sighat_{1,1} \bigl(\Sighat_{1,0}+\lambda_{0,0} \mathbf{I}\bigr)^{-1} \Sighat_{1,0} \bigl(\Sighat_{1,0}+\lambda_{0,0} \mathbf{I}\bigr)^{-1}\Sighat_{1,1}\Bigr) \log \bigl(\frac{1}{\delta_1}\bigr)\\
&\stackrel{\text{(i)}}{\lesssim} \frac{\sig^2}{n_1}\Tr\Bigl( (\Sighat_{1} + \lambda_{1} \mathbf{I}\bigr)^{-1}\bSigma_{\cT}\bigl(\Sighat_{1} + \lambda_{1} \mathbf{I}\bigr)^{-1}\Sighat_{1,1}\bigl(\Sighat_{1,0}+\lambda_{0,0} \mathbf{I}\bigr)^{-1}\Sighat_{1,1}\Bigr)\log \bigl(\frac{1}{\delta_1}\bigr) \quad (\text{Lemma~\labelcref{lemma; trace simple inequality}})\\
&\stackrel{\text{(ii)}}{\lesssim} \frac{\sig^2}{n_1}\Tr\Bigl( (\Sighat_{1} + \lambda_{1} \mathbf{I}\bigr)^{-1}\bSigma_{\cT}\bigl(\Sighat_{1} + \lambda_{1} \mathbf{I}\bigr)^{-1} R\Sighat_{1,1} \Bigr)\log \bigl(\frac{1}{\delta_1}\bigr) \quad (\text{Corollary~\labelcref{corollary; application of Lemma second moment ratio}})\\
&\lesssim \frac{\sig^2R}{n_1}\Tr\Bigl((\Sighat_{1} + \lambda_{1} \mathbf{I}\bigr)^{-\frac{1}{2}}\bSigma_{\cT}(\Sighat_{1} + \lambda_{1} \mathbf{I}\bigr)^{-\frac{1}{2}}(\Sighat_{1} + \lambda_{1} \mathbf{I}\bigr)^{-\frac{1}{2}}\Sighat_{1,1}(\Sighat_{1} + \lambda_{1} \mathbf{I}\bigr)^{-\frac{1}{2}}\Bigr) \log \bigl(\frac{1}{\delta_1}\bigr)\\
&\stackrel{\text{(iii)}}{\lesssim} \frac{\sig^2R}{n_1}\Tr\Bigl((\Sighat_{1} + \lambda_{1} \mathbf{I}\bigr)^{-\frac{1}{2}}\bSigma_{\cT}(\Sighat_{1} + \lambda_{1} \mathbf{I}\bigr)^{-\frac{1}{2}}\Bigr)\log \bigl(\frac{1}{\delta_1}\bigr) \quad (\because \Sighat_{1} \succeq \Sighat_{1,1}, \text{ and Lemma~\labelcref{lemma; trace simple inequality}}) \\
&\lesssim  \frac{\sig^2R}{n_1}\Tr\Bigl(\bSigma_{\cT}(\Sighat_{1} + \lambda_{1} \mathbf{I}\bigr)^{-1}\Bigr) \log \bigl(\frac{1}{\delta_1}\bigr)\\
&\stackrel{\text{(iv)}}{\lesssim} \frac{\sig^2R}{n_1}\Tr\Bigl(\bSigma_{\cT}\bigl(\bSigma_\cS + \lambda_{1} \mathbf{I}\bigr)^{-1}\Bigr) \log \bigl(\frac{1}{\delta_1}\bigr) \quad (\text{Lemma~\labelcref{lemma; moment ratio in E1,lemma; trace simple inequality}})  \\
&\lesssim  \frac{\sig^2R}{n_1}\Tr\bigl(\Sbar_{\lambda_{1}}\bigr)\log \bigl(\frac{1}{\delta_1}\bigr).
\end{align*} }
In (i) and (iii), we used Lemma~\labelcref{lemma; trace simple inequality}, 
in (ii), we applied Corollary~\labelcref{corollary; application of Lemma second moment ratio} and Lemma~\labelcref{lemma; trace simple inequality}, 
and in (iv), we used Lemma~\labelcref{lemma; moment ratio in E1,lemma; matrix inverse inequality}.
\end{proof}

Combining Lemmas~\labelcref{lemma; bounding B_0} and \labelcref{lemma; bounding V_0} yields:

\begin{corollary}\label{corollary; RA learner bound 2}
Under \(\event\), with probability at least \(1-\frac{\delta_1}{3}\),
\[
\Pcr_0^2 \lesssim R\|\Sbar_{\lambda_{1}} \|_{\op}\lambda_{0,0}\norm{\theta^\star_0}_{\HH}^2 + \sig^2R \frac{1}{n_1}\Tr\bigl(\Sbar_{\lambda_{1}}\bigr)\log\Bigl(\frac{1}{\delta_1}\Bigr).
\]
Here, \(\lesssim\) hides absolute constants.
\end{corollary}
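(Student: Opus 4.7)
The plan is to combine the two preceding lemmas through the triangle inequality. Recall that in Section~\labelcref{subsection; MSE decomposition} and the preamble to Lemma~\labelcref{lemma; bounding B_0}, the term $\Pcr_0$ was split via
\[
\Pcr_0 \;\leq\; \Vcr_0 + \Bcr_0,
\]
where $\Vcr_0$ collects the propagated noise contribution from $\bm{\varepsilon}_{1,0}$ and $\Bcr_0$ collects the propagated regularization bias carrying $\theta_0^\star$. Squaring and applying the elementary inequality $(a+b)^2 \leq 2a^2 + 2b^2$ gives
\[
\Pcr_0^2 \;\leq\; 2\Vcr_0^2 + 2\Bcr_0^2.
\]

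Next, invoke Lemma~\labelcref{lemma; bounding B_0}, which holds deterministically on the good event $\event$, to bound
\[
\Bcr_0^2 \;\lesssim\; R\,\|\Sbar_{\lambda_{1}}\|_{\op}\,\lambda_{0,0}\,\|\theta_0^\star\|_{\HH}^2 .
\]
For the noise term, invoke Lemma~\labelcref{lemma; bounding V_0}, which, on $\event$ and with probability at least $1-\delta_1/3$ with respect to the randomness of the noise $\bm{\varepsilon}_{1,0}$ (conditional on the covariates and treatments), yields
\[
\Vcr_0^2 \;\lesssim\; \sigma^2 \frac{R}{n_1}\,\Tr(\Sbar_{\lambda_{1}})\,\log(1/\delta_1).
\]
Adding the two bounds gives the stated inequality, with the overall failure probability dominated by the $\delta_1/3$ coming from Lemma~\labelcref{lemma; bounding V_0}.

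\emph{Where the real work sits.} There is no genuine obstacle here: the statement is a bookkeeping corollary, and all the analytic difficulty (operator-norm manipulations via Lemma~\labelcref{lemma; moment ratio in E1} and Corollary~\labelcref{corollary; application of Lemma second moment ratio} for the bias, and a Hanson--Wright-type bound via Lemma~\labelcref{lemma; quadratic martingale} for the variance) has already been absorbed into Lemmas~\labelcref{lemma; bounding B_0} and~\labelcref{lemma; bounding V_0}. The only thing to double-check is that the failure events are nested correctly: the decomposition $\Pcr_0 \leq \Vcr_0+\Bcr_0$ is algebraic, Lemma~\labelcref{lemma; bounding B_0} is deterministic on $\event$, and Lemma~\labelcref{lemma; bounding V_0} contributes the single $\delta_1/3$ probability term, so the final probability $1-\delta_1/3$ stated in the corollary is correct without any union-bound inflation.
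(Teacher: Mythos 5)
Your proposal is correct and follows exactly the paper's route: the paper derives this corollary by combining Lemma~\labelcref{lemma; bounding B_0} (deterministic on $\event$) with Lemma~\labelcref{lemma; bounding V_0} (probability $1-\delta_1/3$) via the decomposition $\Pcr_0 \leq \Vcr_0 + \Bcr_0$. Your additional remarks on squaring and on the failure-probability bookkeeping are accurate but the paper treats them as immediate.
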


Analogously, we obtain:

\begin{corollary}\label{corollary; RA learner bound 3}
Under \(\event\), with probability at least \(1-\frac{\delta_1}{3}\),
\[
\Pcr_1^2  \lesssim R\|\Sbar_{\lambda_{1}} \|_{\op}\lambda_{0,1}\norm{\theta^\star_1}_{\HH}^2 
+\sig^2 R \frac{1}{n_1}\Tr\bigl(\Sbar_{\lambda_{1}}\bigr)\log \Bigl(\frac{1}{\delta_1}\Bigr).
\]
Here, \(\lesssim\) hides absolute constants.
\end{corollary}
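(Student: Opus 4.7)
The plan is to mirror, step by step, the proof of Corollary~\ref{corollary; RA learner bound 2}, with the roles of the treated and control groups interchanged. Starting from the closed form
$\hat{\theta}_1 - \theta_1^\star = (\Sighat_{1,1} + \lambda_{0,1}\Ib)^{-1}\bigl(\tfrac{1}{n_1}\Xb_{1,1}^\top \bm{\varepsilon}_{1,1} - \lambda_{0,1}\theta_1^\star\bigr)$,
I would split $\Pcr_1 \leq \Vcr_1 + \Bcr_1$, where
$\Vcr_1 := \bigl\|(\Sighat_1+\lambda_1\Ib)^{-1}\Sighat_{1,0}(\Sighat_{1,1}+\lambda_{0,1}\Ib)^{-1}\tfrac{1}{n_1}\Xb_{1,1}^\top\bm{\varepsilon}_{1,1}\bigr\|_{\bSigma_\cT}$
carries the noise term and
$\Bcr_1 := \bigl\|(\Sighat_1+\lambda_1\Ib)^{-1}\Sighat_{1,0}(\Sighat_{1,1}+\lambda_{0,1}\Ib)^{-1}\lambda_{0,1}\theta_1^\star\bigr\|_{\bSigma_\cT}$
carries the regularization term. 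This is the direct analog of the $\Vcr_0,\Bcr_0$ decomposition used for $\Pcr_0$.

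For the bias piece, I would factor
$\Bcr_1 \leq \bigl\|\bSigma_\cT^{1/2}(\Sighat_1+\lambda_1\Ib)^{-1/2}\bigr\|_\op \bigl\|(\Sighat_1+\lambda_1\Ib)^{-1/2}\Sighat_{1,0}^{1/2}\bigr\|_\op \bigl\|\Sighat_{1,0}^{1/2}(\Sighat_{1,1}+\lambda_{0,1}\Ib)^{-1/2}\bigr\|_\op \sqrt{\lambda_{0,1}}\|\theta_1^\star\|_\HH$.
The first factor is at most $\|\Sbar_{\lambda_1}\|_\op^{1/2}$ by Lemma~\ref{lemma; moment ratio in E1} combined with Lemma~\ref{lemma; matrix inverse inequality}, the second is at most $1$ because $\Sighat_1 \succeq \Sighat_{1,0}$, and the third is at most $\sqrt{cR}$ by the \emph{fourth} inequality of Corollary~\ref{corollary; application of Lemma second moment ratio}, which is exactly the symmetric counterpart of the third inequality used in Lemma~\ref{lemma; bounding B_0}. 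Squaring yields $\Bcr_1^2 \lesssim R\|\Sbar_{\lambda_1}\|_\op \lambda_{0,1}\|\theta_1^\star\|_\HH^2$.

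For the variance piece, I would apply the Hanson-Wright inequality (Lemma~\ref{lemma; quadratic martingale}) to obtain, with probability at least $1-\delta_1/3$,
$\Vcr_1^2 \lesssim \frac{\sigma^2}{n_1}\Tr\!\bigl((\Sighat_1+\lambda_1\Ib)^{-1}\bSigma_\cT(\Sighat_1+\lambda_1\Ib)^{-1}\Sighat_{1,0}(\Sighat_{1,1}+\lambda_{0,1}\Ib)^{-1}\Sighat_{1,0}\bigr)\log(3/\delta_1)$.
Then I would process the trace exactly as in Lemma~\ref{lemma; bounding V_0}: invoke the \emph{second} inequality of Corollary~\ref{corollary; application of Lemma second moment ratio} to contract $(\Sighat_{1,1}+\lambda_{0,1}\Ib)^{-1/2}\Sighat_{1,0}(\Sighat_{1,1}+\lambda_{0,1}\Ib)^{-1/2} \preceq cR\Ib$ (the symmetric analog of what was used for $\Vcr_0$), then use $\Sighat_{1,0} \preceq \Sighat_1$ together with Lemma~\ref{lemma; trace simple inequality}, and finally use Lemma~\ref{lemma; moment ratio in E1} and Lemma~\ref{lemma; matrix inverse inequality} to exchange $\Sighat_1$ for $\bSigma_\cS$ in the outer operator. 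The result is $\Vcr_1^2 \lesssim \sigma^2 R n_1^{-1}\Tr(\Sbar_{\lambda_1})\log(1/\delta_1)$.

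No real obstacle is expected: the argument is entirely parallel to the one carried out for $\Pcr_0$, and the weak-overlap assumption was phrased symmetrically in Assumption~\ref{assumption; weak treatment overlap} precisely so that Corollary~\ref{corollary; application of Lemma second moment ratio} supplies both ``$\Sighat_{1,0}$ versus $\Sighat_{1,1}$'' directions with the same constant $cR$. Combining the two bounds via $\Pcr_1^2 \lesssim \Vcr_1^2 + \Bcr_1^2$ yields the stated estimate.
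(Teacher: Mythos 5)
Your proposal is correct and is exactly the argument the paper intends: the paper proves Corollary~\ref{corollary; RA learner bound 3} simply by saying it follows ``analogously'' to Corollary~\ref{corollary; RA learner bound 2}, and your write-up is precisely that symmetric swap of the treated and control roles, using the counterpart inequalities of Corollary~\ref{corollary; application of Lemma second moment ratio}. The only nit is a labeling slip (the relation $(\Sighat_{1,1}+\lambda_{0,1}\Ib)^{-1/2}\Sighat_{1,0}(\Sighat_{1,1}+\lambda_{0,1}\Ib)^{-1/2}\preceq cR\Ib$ is the first, not the second, inequality of that corollary), which does not affect the argument.
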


\subsection{Proof of Theorem~\labelcref{theorem; MSE bound RA learner}}
\noindent
Since \(n_1 \asymp n_{2} \asymp n\), combining Lemma~\labelcref{lemma; RA learner bound 1} with Corollaries~\labelcref{corollary; RA learner bound 2} and \ref{corollary; RA learner bound 3} shows that under \(\event\), with probability at least \(1-\delta_1\),
\begin{align*}
\cE_{\cT}(\hat{\eta}_{\bm{\lambda}}) 
&=\bigl\|(\hat{\eta}_{\bm{\lambda}} -\eta^\star)\bigr\|_{\bSigma_{\cT}}^2 
\leq 4\bigl(\Pcr_0^2 + \Pcr_1^2 + \Vcr^2 + \Bcr^2 \bigr)\\
&\lesssim R\|\Sbar_{\lambda_{1}} \|_{\op}\Bigl(\lambda_{0,0}\norm{\theta_0^\star}_\HH^2+\lambda_{0,1}\norm{\theta_1^\star}_\HH^2\Bigr)+\sig^2\frac{R}{n}\Tr\bigl(\Sbar_{\lambda_{1}}\bigr)\log \Bigl(\frac{1}{\delta_1}\Bigr) 
+ \lambda_{1}\|\Sbar_{\lambda_{1}} \|_{\op}\norm{\eta^\star}^2_\HH.
\end{align*}
Since \(\abs{\cH} \leq \log n\), we set \(\delta_1 =\frac{n^{-11}}{|\cH|}\). Then, with probability at least \(1-2n^{-11}\), the following holds for \emph{all} \(\bm{\lambda} \in \bm{\Lambda}\):
\begin{align*}
\cE_{\cT}(\hat{\eta}_{\bm{\lambda}}) 
&\lesssim  \sig^2\frac{R}{n}\Tr\bigl(\Sbar_{\lambda_{1}}\bigr)\log n 
+\lambda_{1} \|\Sbar_{\lambda_{1}} \|_{\op}\norm{\eta^\star}^2_\HH 
+R\|\Sbar_{\lambda_{1}} \|_{\op} M^2(\lambda_{0,0} + \lambda_{0,1}).
\end{align*}
Thus, with probability at least \(1-2n^{-11}\), we have for all estimators \(\hat{\eta}_{\bm{\lambda}} \in \cH\):
\[
\cE_{\cT}(\hat{\eta}_{\bm{\lambda}}) 
\lesssim \sig^2\frac{R}{n}\Tr\bigl(\Sbar_{\lambda_{1}}\bigr)\log n 
+\lambda_{1} \|\Sbar_{\lambda_{1}} \|_{\op}\norm{\eta^\star}^2_\HH 
+R\|\Sbar_{\lambda_{1}} \|_{\op}M^2(\lambda_{0,0} + \lambda_{0,1}),
\]
which completes the proof.

\subsection{Proof of Corollary~\labelcref{corollary; optimal MSE bound}}\label{subsection: proof main corollary}
\noindent
We prove that under the relaxed assumption---that is, when 
\(\|h^\star\|_{\mathcal{F}}\) is bounded by a sufficiently large value---the following inequality holds:
\[
\|h^\star\|^2_{\mathcal{F}} \lesssim R \left(\frac{n}{B \log n}\right)^{\frac{1}{2\ell}} = \widetilde{\Omega}\Bigl(R \cdot n_{\operatorname{eff}}^{\frac{1}{2\ell}}\Bigr).
\]
Under Assumption~\labelcref{assumption; overlap source target}, we have \(\|\Sbar_{\lambda_{1}} \|_{\op} \leq B\).
Additionally, recall that \(\lambda_{0,0}, \lambda_{0,1} = \frac{\xi \log n}{n}\).

Hence, the established MSE bound of Theorem~\ref{theorem; MSE bound RA learner} simplifies to
\begin{align*}
\cE_{\cT}(\hat{\eta}_{\bm{\lambda}})
&\lesssim \sig^2\frac{R}{n}\Tr\bigl(\Sbar_{\lambda_{1}}\bigr)\log n 
+\lambda_{1}B \norm{\eta^\star}^2_\HH 
+\frac{BR\xi\log n}{n}M^2.
\end{align*}
By hiding dependencies on \(\sigma\) and \(\xi\) (which we regard as universal constants), we obtain
\[
\cE_{\cT}(\hat{\eta}_{\bm{\lambda}})
\lesssim \frac{R}{n}\Tr\bigl(\Sbar_{\lambda_{1}}\bigr)\log n 
+  \lambda_{1} B \norm{\eta^\star}^2_\HH 
+\frac{BR\log n}{n}M^2.
\]

We consider two cases:

\paragraph*{Case 1:} \(\frac{R\log n}{n^2}\lesssim\|\eta^\star\|^2_\HH.\)
By applying Lemma~\labelcref{lemma; optimal trade-off lambda} and Corollary~\labelcref{corollary; optimal MSE in grid} with \(h = \frac{R}{n}\log n\), we deduce that the value of \(\lambda^\star\) in Corollary~\ref{corollary; optimal MSE in grid} satisfies
\[
\lambda^\star \asymp \Bigl(\frac{R\log n}{n}\Bigr)^{\alpha}B^{-(1-\alpha)} \|\eta^\star \|_\HH^{-2\alpha}.
\]
We can check that $\lambda^\star$ lies in our grid $\Lambda_1$, i.e., 
\[
\frac{\xi \log n}{n} \leq \lambda^\star \leq \frac{\xi n \log n}{2}.
\]
Hence, by Lemma~\labelcref{lemma; optimal trade-off lambda} and Corollary~\labelcref{corollary; optimal MSE in grid},
\[
\inf_{\bm{\lambda} \in \bm{\Lambda}}  \cE_{\cT}(\hat{\eta}_{\bm{\lambda}}) 
\lesssim \Bigl(\frac{BR}{n}\Bigr)^\alpha\norm{\eta^\star}^{2(1-\alpha)}_\HH (\log n)^\alpha 
+\frac{BR\log n}{n}M^2.
\]

\paragraph*{Case 2:} \(\frac{R\log n}{n^2} \gtrsim\|\eta^\star\|^2_\HH.\)
In this scenario, there exists \(\lambda_{1}' \in [1,2] \cap \Lambda_1 \), and we obtain
\begin{align*}
\inf_{\bm{\lambda} \in \bm{\Lambda}} \cE_{\cT}(\hat{\eta}_{\bm{\lambda}})  
&\lesssim \frac{R}{n}\Tr\bigl(\Sbar_{\lambda_{1}'}\bigr)\log n 
+  \lambda_{1}' B \norm{\eta^\star}^2_\HH 
+\frac{BR\log n}{n}M^2 \\ 
&\lesssim \frac{R}{n}\Tr\bigl(\Sbar_{1}\bigr)\log n 
+  2 B \norm{\eta^\star}^2_\HH 
+\frac{BR\log n}{n}M^2 \\  
&\lesssim \frac{BR\log n}{n}M^2,
\end{align*}
where, in the last inequality, we used \(M\ge 1\), \(n\ge 1\), the Case 2 condition, and
\begin{align*}
\Tr(\Sbar_{1}) \stackrel{\text{(i)}}{\lesssim} \sum_{j=1}^\infty \frac{B\mu_j}{\mu_j + 1} \lesssim B \sum_{j \ge 1} \mu_j \lesssim B,
\end{align*}
where (i) follows from the argument in the proof of Lemma~\ref{lemma; optimal trade-off lambda}.

\section{Proofs for Oracle Inequality (Proposition~\labelcref{proposition; oracle inequality MSE})}\label{section: proofs oracle inequalities}

\subsection{Guideline for Proofs}
\noindent
We first define the in-sample MSE and state the oracle inequality used in this section.
For any estimator \(\hat{h}\), define the in-sample MSE
\[
\cE_\cT^{\inn}(\hat{h}) := \frac{1}{n_\cT} \sum_{i=1}^{n_\cT} |\hat{h}(z_{0i}) - h^\star(z_{0i})|^2,
\]
which is the mean squared error over the target covariates in \(\cD_\cT\). 
We first present the oracle inequality for in-sample MSE:

\begin{lemma}[Oracle inequality for in-sample MSE]\label{lemma; in-sample MSE oracle}
Under Assumptions~\labelcref{assumption; boundedness,assumption; consistency and unconfoundedness,assumption; subGaussian noise,assumption; overlap source target,assumption; weak treatment overlap,Assumption; eigenvalue decay}, with probability at least \(1 - 2n^{-11}\), the following holds:
\[
\cE_{\cT}^{\inn}(\hat{h}_{\operatorname{final}}) 
\lesssim 
\min_{\bm\lambda \in \bm{\Lambda} } \cE_{\cT}^{\inn}(\hat{h}_{\bm{\lambda}}) 
+ 
\Ocr,
\]
where \(\lesssim\) hides absolute constants.
\end{lemma}

Next, we present a lemma showing that the in-sample MSE and the population MSE are comparable.
\begin{lemma}\label{lemma; comparability in-sample and population MSE}
With probability at least \(1-2n^{-11}\), the following inequalities hold simultaneously for all candidate estimators \(\hat h_{\bm\lambda}\in\cH\):
\begin{align*}
\cE^{\operatorname{in}}_{\cT}(\hat h_{\bm\lambda})&\lesssim \cE^{\operatorname{}}_{\cT}(\hat h_{\bm\lambda})+ \Ocr \\
\cE^{\operatorname{}}_{\cT}(\hat h_{\bm\lambda})&\lesssim \cE^{\operatorname{in}}_{\cT}(\hat h_{\bm\lambda})+ \Ocr .\\
\end{align*}
\end{lemma}

Combining these two lemmas yields Proposition~\ref{proposition; oracle inequality MSE}. 
Indeed, the proofs below establish both auxiliary results on the same good event \(\event\), with at most \(n^{-11}\) additional failure probability for each. 
Since \(\PP(\event)\geq 1-n^{-11}\), their intersection has probability at least \(1-3n^{-11}\). 
On this intersection,
\[
\cE_\cT(\hat h_{\operatorname{final}})
\lesssim
\cE_\cT^{\inn}(\hat h_{\operatorname{final}})+\Ocr
\lesssim
\min_{\bm\lambda\in\bm{\Lambda}}\cE_\cT^{\inn}(\hat h_{\bm\lambda})+\Ocr
\lesssim
\min_{\bm\lambda\in\bm{\Lambda}}\cE_\cT(\hat h_{\bm\lambda})+\Ocr,
\]
which proves Proposition~\ref{proposition; oracle inequality MSE}.
Thus, it remains to prove these two lemmas.

To prove Lemma~\ref{lemma; in-sample MSE oracle}, we aim to apply Lemma~\labelcref{lemma; loss model selection}. 
To this end, we first derive norm bounds related to the test outcomes $\{\tilde{h}(z_{0i})\}_{i=1}^{n_{\cT}}$. 

To prove Lemma~\ref{lemma; comparability in-sample and population MSE}, we invoke Lemma~\labelcref{lemma; key lemma in-sample population}. 
For this, we bound several norms of the estimators $\hat{\eta}_{\bm{\lambda}}$, and then apply Lemma~\labelcref{lemma; key lemma in-sample population} to prove the comparability lemma. 

\subsection{Norm Bounds for Test Outcomes}
\noindent
To establish an in-sample MSE oracle inequality, we apply Lemma~\ref{lemma; loss model selection}, which requires several norm bounds. 
Recall that the test outcomes are \(\{\tilde{h}(z_{0i})\}\).
Recall that \(\tilde{\eta}\) denotes the Hilbertian element corresponding to \(\tilde{h}\), and \(\tilde{\theta}_a\) denotes the Hilbertian element corresponding to \(\tilde{f}_a\).

For the test-outcome parameter \(\tilde{\eta} = \tilde{\theta}_1 - \tilde{\theta}_0\), we bound related \(\psi_2\) and Hilbert norms.
We define \(\EE[\tilde{\eta}]\) as the expectation with respect to the noise variables $\varepsilon_1, \dots, \varepsilon_n$.
Recall that we have defined $\Shat_\lambda := (\bSigma_{\cS} + \lambda \Ib)^{-\frac{1}{2}} \Sighat_{\cT} (\bSigma_{\cS} + \lambda \Ib)^{-\frac{1}{2}}$ for any \(\lambda >0\).

Using the split data $\cD_2$, we generate test outcomes and use them for model selection.
The following describes how these test outcomes are constructed in RKHS covariates in Algorithm~\labelcref{algorithm: model selection}.

\begin{enumerate}\label{equation: model selection nuisance estimator}
\item We calculate the nuisance estimator as 
\begin{align*}
&\tilde{\theta}_1 = (\Xb_{2,1}^\top \Xb_{2,1}+n_{2}\tilde{\lambda}_1\Ib)^{-1}\Xb_{2,1}^\top \yb_{2,1}\\
&\tilde{\theta}_0 = (\Xb_{2,0}^\top \Xb_{2,0}+n_{2}\tilde{\lambda}_0\Ib)^{-1}\Xb_{2,0}^\top \yb_{2,0}.
\end{align*}
where $\tilde{\theta}_0, \tilde{\theta}_1$ are the corresponding Hilbertian elements of $\tilde{f}_0$ and $\tilde{f}_1$.
\item We define \(\tilde{\eta}:= \tilde{\theta}_1 - \tilde{\theta}_0\), which corresponds to $\tilde{h}$.
\item Generate the test outcomes \(( x_{0i}^\top \tilde{\eta})_{i=1}^{n_{\cT}} = \Xb_\cT \tilde{\eta}\) and perform model selection.
\end{enumerate}

\begin{lemma}\label{lemma; test outcome norm preparation}
Under the good event $\event$, the test outcomes satisfy 
\begin{align*}
\frac{1}{n_\cT} ||\Xb_\cT (\EE[\tilde{\eta}] -\eta^\star) ||^2_2 &\lesssim 
R\tilde{\lambda}_1 \norm{\theta_1^\star}_{\HH}^2 \norm{\Shat_{\tilde{\lambda}_1}}_{\op} + R  \tilde{\lambda}_0 \norm{\theta_0^\star}_{\HH}^2 \norm{\Shat_{\tilde{\lambda}_0}}_{\op}\\
\|\Xb_\cT (\tilde{\eta} -\EE[\tilde{\eta}]) \|^2_{\psi_2} &\lesssim \sig^2  R\frac{n_{\cT}}{n_{2}}(\norm{\Shat_{\tilde{\lambda}_1}}_{\op} + \norm{\Shat_{\tilde{\lambda}_0}}_{\op})\\
\norm{\Shat_{\tilde{\lambda}_1}}_{\op} &\lesssim \norm{\Sbar_{\tilde{\lambda}_1}}_{\op} + \frac{n_{2} \log (n_\cT n)}{n_{\cT} \log n}\\
\norm{\Shat_{\tilde{\lambda}_0}}_{\op} &\lesssim \norm{\Sbar_{\tilde{\lambda}_0}}_{\op} + \frac{n_{2} \log (n_\cT n)}{n_{\cT} \log n}.
\end{align*}
Here, \(\lesssim\) hides absolute constants.
\end{lemma}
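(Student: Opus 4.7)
My plan is to handle the four bounds by decomposing $\tilde{\eta}-\EE[\tilde{\eta}]$ and $\EE[\tilde{\eta}]-\eta^\star$ coordinate-wise as $\tilde{\theta}_1-\tilde{\theta}_0$, applying the standard KRR bias-variance identities to each $\tilde{\theta}_a$, then pushing all source-group quantities through the weak-overlap chain (Lemma~\labelcref{lemma; second moment relation under weak overlap}) and the good-event concentration (Lemma~\labelcref{lemma; moment ratio in E1}, Corollary~\labelcref{corollary; application of Lemma second moment ratio}) so that every remaining operator-norm is expressed in terms of $\|\Shat_{\tilde{\lambda}_a}\|_{\op}$. Since $(\tilde{\theta}_0,\tilde{\theta}_1)$ come from disjoint subgroups of $\cD_2$, a triangle-inequality step at the end gives the stated bounds on $\tilde{\eta}$.

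\textbf{Bias term (1).} The KRR identity gives $\EE[\tilde{\theta}_a]-\theta_a^\star=-\tilde{\lambda}_a(\Sighat_{2,a}+\tilde{\lambda}_a\Ib)^{-1}\theta_a^\star$, so
\[
\tfrac{1}{n_\cT}\|\Xb_\cT(\EE[\tilde{\theta}_a]-\theta_a^\star)\|_2^2 = \tilde{\lambda}_a^2\,(\theta_a^\star)^\top(\Sighat_{2,a}+\tilde{\lambda}_a\Ib)^{-1}\Sighat_\cT(\Sighat_{2,a}+\tilde{\lambda}_a\Ib)^{-1}\theta_a^\star.
\]
By the definition of $\Shat_{\tilde{\lambda}_a}$, I replace $\Sighat_\cT \preceq \|\Shat_{\tilde{\lambda}_a}\|_{\op}(\bSigma_\cS+\tilde{\lambda}_a\Ib)$; then I chain $\bSigma_\cS+\tilde{\lambda}_a\Ib\lesssim R(\bSigma_{\cS,a}+\tilde{\lambda}_a\Ib)\lesssim R(\Sighat_{2,a}+\tilde{\lambda}_a\Ib)$ using Lemma~\labelcref{lemma; second moment relation under weak overlap} followed by Lemma~\labelcref{lemma; moment ratio in E1}. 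The two inverses collapse against the inserted $(\Sighat_{2,a}+\tilde{\lambda}_a\Ib)$ and one factor of $\tilde{\lambda}_a$ is absorbed via $\tilde{\lambda}_a(\Sighat_{2,a}+\tilde{\lambda}_a\Ib)^{-1}\preceq\Ib$, leaving $R\tilde{\lambda}_a\|\Shat_{\tilde{\lambda}_a}\|_{\op}\|\theta_a^\star\|_\HH^2$; summing over $a\in\{0,1\}$ yields (1).

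\textbf{Variance term (2).} Writing $\tilde{\theta}_a-\EE[\tilde{\theta}_a]=\tfrac{1}{n_2}(\Sighat_{2,a}+\tilde{\lambda}_a\Ib)^{-1}\Xb_{2,a}^\top\bm{\varepsilon}_{2,a}$, I note that for each fixed unit $v\in\RR^{n_\cT}$, the scalar $v^\top\Xb_\cT(\tilde{\theta}_a-\EE[\tilde{\theta}_a])$ is (conditional on covariates) a linear functional of $\bm{\varepsilon}_{2,a}$ with $\psi_2$-norm $\lesssim \sigma\|\Xb_{2,a}(\Sighat_{2,a}+\tilde{\lambda}_a\Ib)^{-1}\Xb_\cT^\top v\|_2/n_2$. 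Taking the sup over $v$ and using the identity
\[
\|\Xb_{2,a}(\Sighat_{2,a}+\tilde{\lambda}_a\Ib)^{-1}\Xb_\cT^\top\|_{\op}^2 = n_2 n_\cT\|\Sighat_\cT^{1/2}(\Sighat_{2,a}+\tilde{\lambda}_a\Ib)^{-1}\Sighat_{2,a}(\Sighat_{2,a}+\tilde{\lambda}_a\Ib)^{-1}\Sighat_\cT^{1/2}\|_{\op},
\]
I bound $\Sighat_{2,a}(\Sighat_{2,a}+\tilde{\lambda}_a\Ib)^{-1}\preceq\Ib$ and then apply the same $R$-weakening chain to replace $(\Sighat_{2,a}+\tilde{\lambda}_a\Ib)^{-1}$ by $R(\bSigma_\cS+\tilde{\lambda}_a\Ib)^{-1}$, producing the factor $R\|\Shat_{\tilde{\lambda}_a}\|_{\op}$. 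Dividing by $n_2^2$ and summing over $a$ gives (2).

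\textbf{Operator-norm concentration (3)-(4).} I invoke Concentration~3 in $\event$ at $\mu'\asymp\xi(\log n_\cT+\log n)/n_\cT$, which gives $\Sighat_\cT \preceq 2\bSigma_\cT+\mu'\Ib$. Sandwiching by $(\bSigma_\cS+\tilde{\lambda}_a\Ib)^{-1/2}$ yields $\Shat_{\tilde{\lambda}_a}\preceq 2\Sbar_{\tilde{\lambda}_a}+\mu'(\bSigma_\cS+\tilde{\lambda}_a\Ib)^{-1}$, hence $\|\Shat_{\tilde{\lambda}_a}\|_{\op}\leq 2\|\Sbar_{\tilde{\lambda}_a}\|_{\op}+\mu'/\tilde{\lambda}_a$. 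Plugging $\tilde{\lambda}_a=\xi\log n/n$ and noting $n\asymp n_2$ produces the $n_2\log(n_\cT n)/(n_\cT\log n)$ term.

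\textbf{Anticipated difficulty.} The delicate part is the variance computation: one must sandwich the inner $\Sighat_{2,a}$ between the two resolvents before converting to the $\bSigma_\cS$-resolvent, or else the $R$-chain inflates the bound by an extra factor of $\|\Sbar_{\tilde{\lambda}_a}\|_{\op}$ and breaks the $\|\cdot\|_{\op}$ dependence demanded in the statement. A related subtlety in the bias step is executing the $R$ chain in precisely the order above so that the sandwich collapses cleanly and we gain $\tilde{\lambda}_a$ (not $\tilde{\lambda}_a^2$) in the final bound.
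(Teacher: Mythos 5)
Your proposal is correct and follows essentially the same route as the paper's proof: the same KRR bias/variance identities for $\tilde{\theta}_a$, the same weak-overlap chain $\bSigma_\cS+\tilde{\lambda}_a\Ib \lesssim R(\bSigma_{\cS,a}+\tilde{\lambda}_a\Ib) \lesssim R(\Sighat_{2,a}+\tilde{\lambda}_a\Ib)$ via Lemma~\labelcref{lemma; second moment relation under weak overlap} and Lemma~\labelcref{lemma; moment ratio in E1}, the same collapse of $\Sighat_{2,a}(\Sighat_{2,a}+\tilde{\lambda}_a\Ib)^{-1}\preceq\Ib$, and the same use of Concentration~3 with $\mu'/\tilde{\lambda}_a$ for the last two bounds. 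The only cosmetic difference is that in the bias step you insert $\Sighat_\cT\preceq\|\Shat_{\tilde{\lambda}_a}\|_{\op}(\bSigma_\cS+\tilde{\lambda}_a\Ib)$ before collapsing the resolvents, whereas the paper first factors off $\|(\Sighat_{2,a}+\tilde{\lambda}_a\Ib)^{-1/2}\theta_a^\star\|_{\HH}\leq\tilde{\lambda}_a^{-1/2}\|\theta_a^\star\|_{\HH}$; both yield the identical bound.
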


\begin{remark}
The first term is related to $\cE_{\cT}^{\inn}(\EE[\tilde{\eta}])$ in Lemma~\labelcref{lemma; loss model selection}, and the second term is related to the variance term of Lemma~\labelcref{lemma; loss model selection}.
By applying Lemma~\labelcref{lemma; loss model selection}, we first establish the oracle inequality for in-sample MSE.
\end{remark}

\begin{proof}
For any norm $ \| \cdot \|$, we have the decomposition
\begin{align*}
\|\Xb_\cT (\EE[\tilde{\eta}] -\eta^\star) \| &=  \|\Xb_\cT (\EE[\tilde{\theta}_1] -\theta_1^\star) - \Xb_\cT (\EE[\tilde{\theta}_0] -\theta_0^\star)  \|\\
&\leq \|\Xb_\cT (\EE[\tilde{\theta}_1] -\theta_1^\star) \|  +\| \Xb_\cT (\EE[\tilde{\theta}_0] -\theta_0^\star)  \|.
\end{align*}

We first prove the first inequality.
Observe that
\begin{align*}
\|\Xb_\cT (\EE[\tilde{\theta}_1] -\theta_1^\star) \|_{2} &=  \|\Xb_\cT ( \Sighat_{2,1} +\tilde{\lambda}_{1} \Ib)^{-1} \tilde{\lambda}_{1} \theta_1^\star  \|_{2}\\
&\leq \tilde{\lambda}_{1} \|\Xb_\cT (\Sighat_{2,1} + \tilde{\lambda}_{1} \Ib)^{-\frac{1}{2}} \|_{\op} \| ( \Sighat_{2,1} + \tilde{\lambda}_{1} \Ib)^{-\frac{1}{2}} \theta_1^\star  \|_{\HH} \\
&\leq \tilde{\lambda}_{1}^{\frac{1}{2}} \|\theta_1^\star  \|_{\HH} \|\Xb_\cT (\Sighat_{2,1} + \tilde{\lambda}_{1} \Ib)^{-\frac{1}{2}} \|_{\op}.
\end{align*}
Note that
\begin{align*}
\|\Xb_\cT (\Sighat_{2,1}+\tilde{\lambda}_{1} \Ib)^{-\frac{1}{2}} \|^2_{\op}  &= n_{\cT}  \|( \Sighat_{2,1}+\tilde{\lambda}_{1} \Ib)^{-\frac{1}{2}} \Sighat_{\cT} (\Sighat_{2,1}+\tilde{\lambda}_{1} \Ib)^{-\frac{1}{2}} \|_{\op}\\
&\lesssim n_{\cT}  \| \Sighat_{\cT}^{\frac{1}{2}} ( \Sighat_{2,1}+\tilde{\lambda}_{1} \Ib)^{-1}  \Sighat_{\cT}^{\frac{1}{2}} \|_{\op} \\
&\stackrel{\text{(i)}}{\lesssim}n_{\cT}  \| \Sighat_{\cT}^{\frac{1}{2}} (\Sigmatreated+\tilde{\lambda}_{1}\Ib)^{-1}  \Sighat_{\cT}^{\frac{1}{2}} \|_{\op} \\
&\stackrel{\text{(ii)}}{\lesssim} R n_{\cT}  \| \Sighat_{\cT}^{\frac{1}{2}} (\bSigma_\cS+\tilde{\lambda}_{1}\Ib)^{-1}  \Sighat_{\cT}^{\frac{1}{2}} \|_{\op} \\
&= Rn_{\cT}  \|{\Shat}_{\tilde{\lambda}_1} \|_{\op} 
\end{align*}
where (i) follows from Lemma~\labelcref{lemma; moment ratio in E1}, and (ii) follows from Lemma~\labelcref{lemma; second moment relation under weak overlap}.
Thus,
\begin{align*}
\frac{1}{\sqrt{n_{\cT}}}\|\Xb_\cT (\EE[\tilde{\theta}_1] -\theta_1^\star) \|_{2} 
&\lesssim (R \tilde{\lambda}_{1} \|{\Shat}_{\tilde{\lambda}_1} \|_{\op} )^{\frac{1}{2}}\|\theta_1^\star  \|_{\HH} .
\end{align*}
We can similarly obtain a bound for 
\(
\frac{1}{\sqrt{n_{\cT}}}\|\Xb_\cT (\EE[\tilde{\theta}_0] - \theta_0^\star)\|_{2},
\)
and these observations prove the first inequality.

To prove the second inequality, note that
{\small \begin{align*}
&\quad \|\Xb_\cT(\tilde{\eta} -\EE[\tilde{\eta}]) \|^2_{\psi_2} \\
&\lesssim \sig^2 \|\Xb_\cT (\Sighat_{2,1} + {\tilde{\lambda}}_{1} \Ib )^{-1} \frac{1}{n_{2} }\Xb_{2,1}  \|^2_{\op} + \sig^2 \|\Xb_\cT (\Sighat_{2,0} + {\tilde{\lambda}}_{0} \Ib )^{-1} \frac{1}{n_{2} }\Xb_{2,0}  \|^2_{\op}\\
&=  \sig^2 \frac{n_{\cT}}{n_{2}}\| \frac{1}{n_{\cT}} \Xb_\cT(\Sighat_{2,1} + {\tilde{\lambda}}_{1} \Ib  )^{-1} \Sighat_{2,1}(\Sighat_{2,1} + {\tilde{\lambda}}_{1} \Ib  )^{-1} \Xb_\cT^\top \|_{\op} \\
&\quad+ \sig^2\frac{n_{\cT}}{n_{2}}\| \frac{1}{n_{\cT}} \Xb_\cT(\Sighat_{2,0} + {\tilde{\lambda}}_{0} \Ib  )^{-1} \Sighat_{2,0}(\Sighat_{2,0} + {\tilde{\lambda}}_{0} \Ib  )^{-1} \Xb_\cT^\top \|_{\op}\\
&\leq \sig^2 \frac{n_{\cT}}{n_{2}}\| \frac{1}{n_{\cT}} \Xb_\cT(\Sighat_{2,1} + {\tilde{\lambda}}_{1} \Ib  )^{-1}  \Xb_\cT^\top  \|_{\op} +\sig^2\frac{n_{\cT}}{n_{2}}\| \frac{1}{n_{\cT}} \Xb_\cT(\Sighat_{2,0} + {\tilde{\lambda}}_{0} \Ib  )^{-1}  \Xb_\cT^\top  \|_{\op}\\
&\stackrel{\text{(i)}}{\lesssim} \sig^2 \frac{n_{\cT}}{n_{2}} R\|\frac{1}{n_{\cT}}  \Xb_\cT(\bSigma_\cS + {\tilde{\lambda}}_{1} \Ib  )^{-1}  \Xb_\cT^\top  \|_{\op}+\sig^2  \frac{n_{\cT}}{n_{2}} R\|\frac{1}{n_{\cT}}  \Xb_\cT(\bSigma_\cS + {\tilde{\lambda}}_{0} \Ib  )^{-1}  \Xb_\cT^\top  \|_{\op} \quad (\text{By Lemma~\labelcref{lemma; moment ratio in E1,lemma; second moment relation under weak overlap}}) \\
&\lesssim \sig^2\frac{n_{\cT}}{n_{2}} R \|\frac{1}{n_{\cT}} (\bSigma_\cS + {\tilde{\lambda}}_{1} \Ib  )^{-\frac{1}{2}}  \Xb_\cT^\top \Xb_\cT (\bSigma_\cS + {\tilde{\lambda}}_{1} \Ib  )^{-\frac{1}{2}} \|_{\op} \\
&\quad +\sig^2 \frac{n_{\cT}}{n_{2}} R\|\frac{1}{n_{\cT}} (\bSigma_\cS + {\tilde{\lambda}}_{0} \Ib  )^{-\frac{1}{2}}  \Xb_\cT^\top \Xb_\cT(\bSigma_\cS + {\tilde{\lambda}}_{0} \Ib  )^{-\frac{1}{2}} \|_{\op}\\
&= \sig^2\frac{n_{\cT}}{n_{2}} R  \|{\Shat}_{\tilde{\lambda}_1}  \|_{\op}+\sig^2\frac{n_{\cT}}{n_{2}} R  \|{\Shat}_{\tilde{\lambda}_0}  \|_{\op} 
\end{align*}}
where inequality (i) follows from Lemma~\labelcref{lemma; moment ratio in E1,lemma; second moment relation under weak overlap}.

Next, we prove the third inequality. 
Note that under the good event \(\event\),
\begin{align*}
\norm{\Shat_{\tilde{\lambda}_1}}_{\op} &= \norm{ (\bSigma_\cS+ \tilde{\lambda}_1 \Ib)^{-\frac{1}{2}} {\Sighat}_\cT (\bSigma_\cS+ \tilde{\lambda}_1 \Ib)^{-\frac{1}{2}}  }_{\op} \\
&\stackrel{\text{(i)}}{\lesssim }\norm{ (\bSigma_\cS+ \tilde{\lambda}_1 \Ib)^{-\frac{1}{2}} (\bSigma_\cT +\frac{\xi}{n_{\cT}}\log(n n_\cT)) (\bSigma_\cS+ \tilde{\lambda}_1 \Ib)^{-\frac{1}{2}}  }_{\op} \\
&\lesssim \norm{\Sbar_{\tilde{\lambda}_1}}_{\op} + \|(\bSigma_\cS+ \tilde{\lambda}_1 \Ib)^{-\frac{1}{2}} \times \left(\frac{\xi}{n_{\cT}}\log(n n_\cT)\right)\Ib  \times (\bSigma_\cS+ \tilde{\lambda}_1 \Ib)^{-\frac{1}{2}}\|_{\op} \\
&\lesssim  \norm{\Sbar_{\tilde{\lambda}_1}}_{\op} + \frac{n_{2}}{n_{\cT}} \frac{\log (n_\cT n)}{\log n}.
\end{align*}
where (i) holds by the definition of $\event$.
Therefore, under \(\event\),
\begin{align*}
&\norm{\Shat_{\tilde{\lambda}_1}}_{\op} \lesssim \norm{\Sbar_{\tilde{\lambda}_1}}_{\op} + \frac{n_{2}}{n_{\cT}} \frac{\log (n_\cT n)}{\log n} 
\end{align*}
and the fourth inequality can be proved in the same way.

\end{proof}

\begin{corollary}[Norm bounds for test outcomes]\label{corollary; test label norm bounds}
Under the event $\event$, the following holds:
\begin{align*}
\|\Xb_\cT (\tilde{\eta}-\EE[\tilde{\eta}]) \|^2_{\psi_2} &\lesssim \sig^2 \frac{n_{\cT}}{n_{2}} BR +\sig^2R \frac{\log (nn_\cT)}{\log n} \\
\|\frac{1}{\sqrt{n_{\cT}}}\Xb_\cT (\EE[\tilde{\eta}] -\eta^\star) \|^2_{2}  &\lesssim \xi RM^2 \log n \Bigl(\frac{B}{n} +\frac{1}{n_{\cT}}\frac{\log (nn_\cT)}{\log n}\Bigr).
\end{align*}
Here, \(\lesssim\) hides absolute constants.
\end{corollary}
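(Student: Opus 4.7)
The plan is to derive both inequalities as immediate consequences of the preceding Lemma~\ref{lemma; test outcome norm preparation}, by combining its four stated bounds and simplifying using the source--target overlap assumption together with the explicit choice $\tilde{\lambda}_0 = \tilde{\lambda}_1 = \xi \log n / n$. Nothing beyond bookkeeping should be needed.

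The first step is to bound $\|\Sbar_{\tilde{\lambda}_i}\|_{\op}$ for $i \in \{0,1\}$. Since $\tilde{\lambda}_i = \xi \log n / n \geq \xi/n$, Assumption~\labelcref{assumption; overlap source target} yields $\bSigma_\cT \preceq B(\bSigma_\cS + \tilde{\lambda}_i \Ib)$, hence $\|\Sbar_{\tilde{\lambda}_i}\|_{\op} \leq B$. Substituting this into the third and fourth inequalities of Lemma~\ref{lemma; test outcome norm preparation} gives
\[
\|\Shat_{\tilde{\lambda}_i}\|_{\op} \;\lesssim\; B \;+\; \frac{n_{2}}{n_{\cT}}\,\frac{\log(n n_\cT)}{\log n}, \quad i \in \{0,1\}.
\]

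Next, I plug this estimate into the second inequality of Lemma~\ref{lemma; test outcome norm preparation}, obtaining
\[
\|\Xb_\cT(\tilde{\eta}-\EE[\tilde{\eta}])\|_{\psi_2}^2 \;\lesssim\; \sigma^2 R \frac{n_\cT}{n_{2}} \Bigl(B + \frac{n_{2}}{n_\cT}\frac{\log(n n_\cT)}{\log n}\Bigr) \;\lesssim\; \sigma^2 \frac{n_\cT}{n_{2}} BR + \sigma^2 R \frac{\log(nn_\cT)}{\log n},
\]
which is the first claim. For the second claim, I use the first inequality of the lemma together with $\tilde{\lambda}_0,\tilde{\lambda}_1 = \xi\log n/n$ and $\max(\|\theta_0^\star\|_\HH,\|\theta_1^\star\|_\HH) = M$ to get
\[
\frac{1}{n_\cT}\|\Xb_\cT(\EE[\tilde{\eta}]-\eta^\star)\|_2^2 \;\lesssim\; R\,\frac{\xi \log n}{n}\,M^2 \Bigl(B + \frac{n_{2}}{n_\cT}\frac{\log(nn_\cT)}{\log n}\Bigr) \;\lesssim\; \xi R M^2 \log n \Bigl(\frac{B}{n} + \frac{1}{n_\cT}\frac{\log(nn_\cT)}{\log n}\Bigr),
\]
where in the last step I used $n_{2} \leq n$ to absorb $n_{2}/n$ into a constant.

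There is no substantive obstacle here: the work has essentially already been done in Lemma~\ref{lemma; test outcome norm preparation}, and the only care required is to verify that the source--target overlap parameter $B$ controls $\|\Sbar_{\tilde{\lambda}_i}\|_{\op}$ at the chosen regularizer scale. The corollary simply packages those bounds into a cleaner form that will be convenient for subsequent applications of Lemma~\ref{lemma; loss model selection}.
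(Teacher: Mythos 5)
Your proposal is correct and follows essentially the same route as the paper: bound $\|\Sbar_{\tilde{\lambda}_i}\|_{\op}\leq B$ via Assumption~\labelcref{assumption; overlap source target}, substitute the $\|\Shat_{\tilde{\lambda}_i}\|_{\op}$ bounds from Lemma~\labelcref{lemma; test outcome norm preparation} into its first and second inequalities, and simplify using $\tilde{\lambda}_0=\tilde{\lambda}_1=\xi\log n/n$ and $n_2\asymp n$. No gaps.
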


\begin{proof}
By Lemma~\labelcref{lemma; test outcome norm preparation}, we have
\begin{align*}
\|\Xb_\cT (\tilde{\eta}-\EE[\tilde{\eta}]) \|^2_{\psi_2} & \lesssim \sig^2  R\frac{n_{\cT}}{n_{2}}(\norm{\Shat_{\tilde{\lambda}_1}}_{\op} + \norm{\Shat_{\tilde{\lambda}_0}}_{\op}) \\
&\lesssim    \sig^2  \frac{n_{\cT}}{n_{2}} R \|\Sbar_{\tilde{\lambda}_1}  \|_{\op}+\sig^2  \frac{n_{\cT}}{n_{2}} R \|\Sbar_{\tilde{\lambda}_0}  \|_{\op} + \sig^2 R  \frac{\log (nn_\cT)}{\log n} \\
&\lesssim \sig^2 \frac{n_{\cT}}{n_{2}} BR +\sig^2R \frac{\log (nn_\cT)}{\log n} \\
&= \sig^2 R  \Bigl(\frac{n_\cT}{n}B + \frac{\log (nn_\cT)}{\log n}\Bigr).
\end{align*}

For the second inequality, by Lemma~\labelcref{lemma; test outcome norm preparation}, we have
\begin{align*}
&\|\frac{1}{\sqrt{n_{\cT}}}\Xb_\cT (\EE[\tilde{\eta}] -\eta^\star) \|^2_{2} \\
&\lesssim  R  \tilde{\lambda}_1 \norm{\theta_1^\star}_{\HH}^2 \norm{\Shat_{\tilde{\lambda}_1}}_{\op} +  R  \tilde{\lambda}_0 \norm{\theta_0^\star}_{\HH}^2 \norm{\Shat_{\tilde{\lambda}_0}}_{\op}\\
&\lesssim  R \tilde{\lambda}_1 \norm{\theta_1^\star}_{\HH}^2 \norm{\Sbar_{\tilde{\lambda}_1}}_{\op} 
+  R  \tilde{\lambda}_0 \norm{\theta_0^\star}_{\HH}^2 \norm{\Sbar_{\tilde{\lambda}_0}}_{\op} 
+   R \tilde{\lambda}_1 \norm{\theta_1^\star}_{\HH}^2\frac{n_{2}}{n_{\cT}} \frac{\log (nn_\cT)}{\log n}+  R \tilde{\lambda}_0 \norm{\theta_0^\star}_{\HH}^2 \frac{n_{2}}{n_{\cT}}\frac{\log (nn_\cT)}{\log n}\\
&\leq BR  \tilde{\lambda}_1 \norm{\theta_1^\star}_{\HH}^2  + BR \tilde{\lambda}_0 \norm{\theta_0^\star}_{\HH}^2  +  R \tilde{\lambda}_1n_{2} \norm{\theta_1^\star}_{\HH}^2\frac{1}{n_{\cT}}\frac{\log (nn_\cT)}{\log n}+  R \tilde{\lambda}_0 n_{2} \norm{\theta_0^\star}_{\HH}^2 \frac{1}{n_{\cT}}\frac{\log (nn_\cT)}{\log n} \\
&\lesssim BR \frac{\xi\log n }{n} \max\bigl(\norm{\theta_0^\star}_{\HH},\norm{\theta_1^\star}_{\HH}\bigr)^2 +R \max\bigl(\norm{\theta_0^\star}_{\HH},\norm{\theta_1^\star}_{\HH}\bigr)^2\frac{\xi \log n}{n_{\cT}}\frac{\log (nn_\cT)}{\log n} \\
&\lesssim \xi RM^2  \Bigl(\frac{B}{n}\log n +\frac{\log (nn_\cT)}{n_{\cT}}\Bigr).
\end{align*}    
\end{proof}

\subsection{Proof of Lemma~\labelcref{lemma; in-sample MSE oracle}}

\begin{proof}
We aim to apply Lemma~\labelcref{lemma; loss model selection} to $\tilde{h}$, under the event $\event$ and given $\cD_1$.
Under the good event \(\event\), we first bound $ \cE_\cT^{\inn}(\EE[\tilde{\eta}])$.
By using Corollary~\labelcref{corollary; test label norm bounds}, we have
\begin{align*}
\cE_\cT^{\inn}(\EE[\tilde{\eta}])= \|\frac{1}{\sqrt{n_{\cT}}}\Xb_\cT (\EE[\tilde{\eta}] -\eta^\star)\|_{2}^2 \lesssim\xi RM^2  \Bigl(\frac{B}{n}\log n +\frac{\log (nn_\cT)}{n_{\cT}}\Bigr).
\end{align*}

Next, we bound \(V^2\) of Lemma~\labelcref{lemma; loss model selection} under the event \(\event\).
Using Corollary~\labelcref{corollary; test label norm bounds}, we obtain
\begin{align*}
V^2 := \|\Xb_\cT (\tilde{\eta}- \EE[\tilde{\eta}]) \|_{\psi_2}^2 &\lesssim 
\sig^2 R \Bigl(\frac{n_\cT}{n}B + \frac{\log (nn_\cT)}{\log n}\Bigr).
\end{align*}

Since \(\tilde{\lambda}_1, \tilde{\lambda}_0 \asymp \frac{\xi \log n}{n}\), applying Lemma~\labelcref{lemma; loss model selection} gives the desired in-sample MSE oracle inequality:
\begin{align*}
&\cE_{\cT}^{\inn}(\hat{\eta}_{\operatorname{final}}) \\
&\leq \min_{\hat{\eta}_{\bm{\lambda}} \in \cH} \cE_{\cT}^{\inn}(\hat{\eta}_{\bm{\lambda}})
+\xi RM^2  \Bigl(\frac{B}{n}\log n +\frac{\log (nn_\cT)}{n_{\cT}}\Bigr)
+ \sig^2 (\log n)  R  \Bigl(\frac{1}{n}B + \frac{1}{n_\cT}\frac{\log (nn_\cT)}{\log n} \Bigr)
\\
&\leq \min_{\hat{\eta}_{\bm{\lambda}} \in \cH} \cE_{\cT}^{\inn}(\hat{\eta}_{\bm{\lambda}})
+\xi RM^2  \Bigl(\frac{B}{n}\log n +\frac{\log (nn_\cT)}{n_{\cT}}\Bigr)
+ \sig^2  R  \Bigl(\frac{B}{n}\log n + \frac{\log (nn_\cT)}{n_\cT}  \Bigr)
\\
&\leq \min_{\hat{\eta}_{\bm{\lambda}} \in \cH} \cE_{\cT}^{\inn}(\hat{\eta}_{\bm{\lambda}})
+R(\xi M^2 + \sig^2 ) \Bigl(\frac{B}{n}\log n +\frac{\log (nn_\cT)}{n_{\cT}}\Bigr).
\end{align*} 
This holds with probability $1-n^{-11}$ under the event $\event$.
Since $R(\xi M^2 + \sig^2 ) \Bigl(\frac{B}{n}\log n +\frac{\log (nn_\cT)}{n_{\cT}}\Bigr) \lesssim \Ocr$, the desired result follows.
\end{proof}

\subsection{Norm Bounds for RA Learner Estimator}\label{subsection: norm bounds RA learner estimator}
\noindent
Next, we prepare to prove Proposition~\labelcref{proposition; oracle inequality MSE}.
Pick any fixed $\hat{\eta}_{\bm{\lambda}} \in \cH$. 
To apply Lemma~\labelcref{lemma; key lemma in-sample population}, we establish several norm bounds for that estimator.
Recall that, in Appendix~\labelcref{section: proof RA learner}, we proved that the estimation error can be decomposed as
\begin{align*}
{\hat{\eta}_{\bm{\lambda}} -\eta^\star } &= (\Sighat_{1} + \lambda_{1} \mathbf{I})^{-1} \Bigl( \frac{1}{n_{1}} \bigl(\sum_{a_{1i}=1}x_{1i} \varepsilon_{1i}+\sum_{a_{1i}=0} -x_{1i}\varepsilon_{1i}\bigr) -\lambda_{1} \eta^\star\Bigr)\\
&\quad + (\Sighat_{1} + \lambda_{1} \mathbf{I})^{-1} \Bigl( \Sighat_{1,1}({\theta}^\star_0-\hat{\theta}_0)
+ \Sighat_{1,0} (\hat{\theta}_1-{\theta}^\star_1)  \Bigr).
\end{align*}

We define \(\bar{\eta}_{\bm{\lambda}} = \EE[\hat{\eta}_{\bm{\lambda}}],\bar{\theta}_1 = \EE[\hat{\theta}_1],\bar{\theta}_0 = \EE[\hat{\theta}_0] \) where the expectations are taken in the noise variables \(\{\varepsilon_i\}_{i=1}^n\).

\begin{lemma}\label{lemma; bias H bound}
The following holds under the event $\event$:
\begin{align*}
\norm{\bar{\eta}_{\bm{\lambda}} - \eta^\star}_{\HH} \lesssim \sqrt{R} \| \theta_0^\star\|_{\HH}+  \sqrt{R} \| \theta_1^\star\|_{\HH} + \|\eta^\star \|_{\HH}.
\end{align*}
Here, \(\lesssim\) hides absolute constants.
\end{lemma}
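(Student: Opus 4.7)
The plan is to start from the noise-free reduction of the decomposition derived in Section~\labelcref{subsection; MSE decomposition}: taking expectations in the noise variables inside
\[
\hat{\eta}_{\bm{\lambda}}-\eta^\star=(\Sighat_1+\lambda_1\Ib)^{-1}\Bigl(\tfrac{1}{n_1}\sum_{\cD_{1,1}}x_{1i}\varepsilon_{1i}-\tfrac{1}{n_1}\sum_{\cD_{1,0}}x_{1i}\varepsilon_{1i}-\lambda_1\eta^\star+\Sighat_{1,1}(\theta_0^\star-\hat{\theta}_0)+\Sighat_{1,0}(\hat{\theta}_1-\theta_1^\star)\Bigr)
\]
kills the two sample-noise sums and leaves
\[
\bar{\eta}_{\bm{\lambda}}-\eta^\star=-(\Sighat_1+\lambda_1\Ib)^{-1}\lambda_1\eta^\star+(\Sighat_1+\lambda_1\Ib)^{-1}\Sighat_{1,1}(\theta_0^\star-\bar{\theta}_0)+(\Sighat_1+\lambda_1\Ib)^{-1}\Sighat_{1,0}(\bar{\theta}_1-\theta_1^\star).
\]
Taking the noise-expectation of the closed form of $\hat{\theta}_a$ in Section~\labelcref{subsection; Hilbertian formulation RA learner} gives the nuisance-bias identity $\bar{\theta}_a-\theta_a^\star=-(\Sighat_{1,a}+\lambda_{0,a}\Ib)^{-1}\lambda_{0,a}\theta_a^\star$ for $a\in\{0,1\}$. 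The triangle inequality then reduces the proof to bounding three $\HH$-norms.

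The first summand is disposed of by $\|(\Sighat_1+\lambda_1\Ib)^{-1}\lambda_1\eta^\star\|_{\HH}\le\|\eta^\star\|_{\HH}$, immediate from $\lambda_1(\Sighat_1+\lambda_1\Ib)^{-1}\preceq\Ib$. For the second summand, substitution of the closed form reduces the task to bounding
\[
\bigl\|(\Sighat_1+\lambda_1\Ib)^{-1}\Sighat_{1,1}(\Sighat_{1,0}+\lambda_{0,0}\Ib)^{-1}\lambda_{0,0}\theta_0^\star\bigr\|_{\HH},
\]
which I would control by chaining sub-multiplicative operator-norm estimates along the factorization
\[
(\Sighat_1+\lambda_1\Ib)^{-1/2}\cdot\bigl[(\Sighat_1+\lambda_1\Ib)^{-1/2}\Sighat_{1,1}^{1/2}\bigr]\cdot\bigl[\Sighat_{1,1}^{1/2}(\Sighat_{1,0}+\lambda_{0,0}\Ib)^{-1/2}\bigr]\cdot\bigl[(\Sighat_{1,0}+\lambda_{0,0}\Ib)^{-1/2}\lambda_{0,0}^{1/2}\bigr]\cdot\lambda_{0,0}^{1/2}\theta_0^\star.
\]
The first factor has operator norm at most $\lambda_1^{-1/2}$; the second is bounded by $1$ because $\Sighat_{1,1}\preceq\Sighat_1\preceq\Sighat_1+\lambda_1\Ib$; the third contributes the decisive $\sqrt{R}$ factor through Corollary~\labelcref{corollary; application of Lemma second moment ratio} (its second inequality, $(\Sighat_{1,0}+\lambda\Ib)^{-1/2}\Sighat_{1,1}(\Sighat_{1,0}+\lambda\Ib)^{-1/2}\preceq cR\Ib$, gives $\|\Sighat_{1,1}^{1/2}(\Sighat_{1,0}+\lambda_{0,0}\Ib)^{-1/2}\|_{\op}\le\sqrt{cR}$ by adjoint); and the fourth is bounded by $1$ since $\lambda_{0,0}(\Sighat_{1,0}+\lambda_{0,0}\Ib)^{-1}\preceq\Ib$. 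Multiplying and pairing with $\|\lambda_{0,0}^{1/2}\theta_0^\star\|_{\HH}$ leaves a residual $\sqrt{\lambda_{0,0}/\lambda_1}$, which collapses to at most $1$ because in \texttt{COKE} the regularizers $\lambda_{0,0},\lambda_{0,1}$ are set to the smallest grid value $\xi\log n/n\le\lambda_1$. This yields $\sqrt{cR}\,\|\theta_0^\star\|_{\HH}$; the third summand is handled by the fully symmetric argument and produces $\sqrt{cR}\,\|\theta_1^\star\|_{\HH}$. Summing gives the claim.

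The main technical obstacle is the absence of a clean bound on $\|(\Sighat_1+\lambda_1\Ib)^{-1}\Sighat_{1,1}\|_{\op}$ on its own: this operator is not self-adjoint, and one can check by a small non-commuting example that its operator norm need not be $\le 1$, so the crude bound $\|(\Sighat_1+\lambda_1\Ib)^{-1}\Sighat_{1,1}\|_{\op}\|\bar{\theta}_0-\theta_0^\star\|_{\HH}$ is either loose in $\lambda_1$ or fails to surface $\sqrt{R}$. The trick is to split off $\Sighat_{1,1}^{1/2}$ in the middle of the chain so that the treatment-overlap estimate of Corollary~\labelcref{corollary; application of Lemma second moment ratio} can be applied precisely between $\Sighat_{1,1}$ and $(\Sighat_{1,0}+\lambda_{0,0}\Ib)^{-1/2}$, which is the exact form available under $\event$ from Lemma~\labelcref{lemma; moment ratio in E1}; the condition $\lambda_{0,0}\le\lambda_1$ enforced by the algorithm is what makes the $\lambda_1^{-1/2}$ factor absorbable against $\lambda_{0,0}^{1/2}$.
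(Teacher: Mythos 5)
Your proposal is correct and follows essentially the same route as the paper: the same three-term decomposition of $\bar{\eta}_{\bm{\lambda}}-\eta^\star$, the same nuisance-bias identity $\bar{\theta}_a-\theta_a^\star=-(\Sighat_{1,a}+\lambda_{0,a}\Ib)^{-1}\lambda_{0,a}\theta_a^\star$, the same use of Corollary~\labelcref{corollary; application of Lemma second moment ratio} to surface $\sqrt{R}$, of $\Sighat_{1,1}\preceq\Sighat_{1}$, and of the algorithmic constraint $\lambda_{0,0},\lambda_{0,1}\le\lambda_{1}$ to absorb the residual $\sqrt{\lambda_{0,0}/\lambda_{1}}$. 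The only difference is cosmetic: you split the operator into a four-factor sub-multiplicative chain, whereas the paper bounds $\|(\Sighat_{1}+\lambda_{1}\Ib)^{-1}\Sighat_{1,1}(\Sighat_{1,0}+\lambda_{0,0}\Ib)^{-1/2}\|_{\op}^2$ directly by the same estimates.
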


\begin{proof}
By the error decomposition established in Appendix~\labelcref{section: proof RA learner}, we can write
\begin{align*}
\norm{\bar{\eta}_{\bm{\lambda}} - \eta^\star}_{\HH} &\leq \|(\Sighat_{1} + \lambda_{1} \mathbf{I})^{-1} \Sighat_{1,1}(\bar{\theta}_0 -\theta_0^\star) \|_{\HH} + \| (\Sighat_{1} + \lambda_{1} \mathbf{I})^{-1} \Sighat_{1,0}(\bar{\theta}_1 -\theta_1^\star) \|_{\HH} \\
&\quad+ \|(\Sighat_{1} + \lambda_{1} \mathbf{I})^{-1}\lambda_{1} \eta^\star  \|_{\HH}.
\end{align*}
In the proof, we only bound the term $I_1 := \|(\Sighat_{1} + \lambda_{1} \mathbf{I})^{-1} \Sighat_{1,1}(\bar{\theta}_0 -\theta_0^\star) \|_{\HH}$; the other term $ \| (\Sighat_{1} + \lambda_{1} \mathbf{I})^{-1} \Sighat_{1,0}(\bar{\theta}_1 -\theta_1^\star) \|_{\HH} $ can be bounded similarly.

Observe
\begin{align*}
I_1 &\leq \| (\Sighat_{1} + \lambda_{1} \mathbf{I})^{-1}  \Sighat_{1,1} (\Sighat_{1,0}+\lambda_{0,0} \mathbf{I})^{-1} \lambda_{0,0}\theta_0^\star\|_{\HH} \\
&\leq \lambda_{0,0} \| (\Sighat_{1} + \lambda_{1} \mathbf{I})^{-1}  \Sighat_{1,1} (\Sighat_{1,0}+\lambda_{0,0} \mathbf{I})^{-\frac{1}{2}}\|_{\op} \|(\Sighat_{1,0}+\lambda_{0,0} \mathbf{I})^{-\frac{1}{2}}\theta_0^\star \|_{\HH}.
\end{align*}
We first examine
\begin{align*}
&\| (\Sighat_{1} + \lambda_{1} \mathbf{I})^{-1}  \Sighat_{1,1} (\Sighat_{1,0}+\lambda_{0,0} \mathbf{I})^{-\frac{1}{2}}\|^2_{\op}  \\
&= \| (\Sighat_{1} + \lambda_{1} \mathbf{I})^{-1}  \Sighat_{1,1} (\Sighat_{1,0}+\lambda_{0,0} \mathbf{I})^{-1} \Sighat_{1,1}(\Sighat_{1} + \lambda_{1} \mathbf{I})^{-1}\|_{\op} \\
&\stackrel{\text{(i)}}{\lesssim} R \| (\Sighat_{1} + \lambda_{1} \mathbf{I})^{-1}  \Sighat_{1,1}(\Sighat_{1} + \lambda_{1} \mathbf{I})^{-1}\|_{\op} \\
&\stackrel{\text{(ii)}}{\leq} R  \| (\Sighat_{1} + \lambda_{1} \mathbf{I})^{-1}  \Sighat_{1}(\Sighat_{1} + \lambda_{1} \mathbf{I})^{-1}\|_{\op} \\
&\lesssim R  \| (\Sighat_{1} + \lambda_{1} \mathbf{I})^{-1}\|_{\op} \\
&\lesssim R \frac{1}{\lambda_{1}},
\end{align*}
where we applied Corollary~\labelcref{corollary; application of Lemma second moment ratio} for step (i), and step (ii) holds by $\Sighat_{1,1} \preceq \Sighat_{1}$.
Hence,
\begin{align*}
I_1 \leq  \sqrt{\lambda_{0,0}} \sqrt{\frac{R}{\lambda_{1}}} \|\theta_0^\star \|_{\HH} \leq \sqrt{R}\|\theta_0^\star \|_{\HH},
\end{align*}
since our algorithm forces \(\lambda_{1} \geq \lambda_{0,0}, \lambda_{0,1}\).

For the term \(\|(\Sighat_{1} + \lambda_{1} \mathbf{I})^{-1}\lambda_{1} \eta^\star  \|_{\HH}\), we can bound it using 
\begin{align*}
\|(\Sighat_{1} + \lambda_{1} \mathbf{I})^{-1}\lambda_{1}\|_{\op} &\leq 1
\end{align*}
and thus \(\|(\Sighat_{1} + \lambda_{1} \mathbf{I})^{-1}\lambda_{1} \eta^\star  \|_{\HH} \leq \|\eta^\star \|_{\HH}\).
\end{proof}

We next bound the \(\psi_2\) norm $\norm{\hat{\eta}_{\bm{\lambda}}-\bar{\eta}_{\bm{\lambda}}}_{\psi_2}$.

\begin{lemma}\label{lemma; variance psi2 bound}
The following holds under the event $\event$:
\begin{align*}
\norm{\hat{\eta}_{\bm{\lambda}}-\bar{\eta}_{\bm{\lambda}}}_{\psi_2} \lesssim \sigma\frac{ \sqrt{R}}{\sqrt{n_1 \lambda_{1}}} + \sigma\frac{1}{\sqrt{n_{1}\lambda_{1}}} \lesssim \sigma \frac{\sqrt{R}}{ \sqrt{\xi \log n}}.
\end{align*}   
Here, \(\lesssim\) hides absolute constants.
\end{lemma}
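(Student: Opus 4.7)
The plan is to express $\hat{\eta}_{\bm{\lambda}}-\bar{\eta}_{\bm{\lambda}}$ explicitly as a linear combination of the (independent) noise vectors $\bm{\varepsilon}_{1,0}$ and $\bm{\varepsilon}_{1,1}$, and then bound the operator norms of the four resulting linear maps using the deterministic estimates established under $\event$ in Corollary~\labelcref{corollary; application of Lemma second moment ratio}. Starting from the error decomposition in Section~\labelcref{subsection; MSE decomposition} and observing that $\hat{\theta}_a-\bar{\theta}_a=(\Sighat_{1,a}+\lambda_{0,a}\Ib)^{-1}\tfrac{1}{n_1}\Xb_{1,a}^\top \bm{\varepsilon}_{1,a}$ for $a\in\{0,1\}$, centering in the noise cancels all deterministic pieces and leaves
\begin{align*}
\hat{\eta}_{\bm{\lambda}}-\bar{\eta}_{\bm{\lambda}}
&= (\Sighat_{1}+\lambda_{1}\Ib)^{-1}\tfrac{1}{n_1}\bigl(\Xb_{1,1}^\top\bm{\varepsilon}_{1,1}-\Xb_{1,0}^\top\bm{\varepsilon}_{1,0}\bigr) \\
&\quad - (\Sighat_{1}+\lambda_{1}\Ib)^{-1}\Sighat_{1,1}(\Sighat_{1,0}+\lambda_{0,0}\Ib)^{-1}\tfrac{1}{n_1}\Xb_{1,0}^\top\bm{\varepsilon}_{1,0} \\
&\quad + (\Sighat_{1}+\lambda_{1}\Ib)^{-1}\Sighat_{1,0}(\Sighat_{1,1}+\lambda_{0,1}\Ib)^{-1}\tfrac{1}{n_1}\Xb_{1,1}^\top\bm{\varepsilon}_{1,1}.
\end{align*}
Since $\bm{\varepsilon}_{1,0}$ and $\bm{\varepsilon}_{1,1}$ live on disjoint index sets and are (conditionally) sub-Gaussian with proxy $\sigma$, the Hilbert-valued sub-Gaussian norm is controlled by $\sigma$ times the operator norm of the corresponding linear map.

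Next I will bound each operator norm. For the direct terms, the identity $\Xb_{1,a}^\top\Xb_{1,a}=n_1\Sighat_{1,a}\preceq n_1\Sighat_1$ together with $A(A+\lambda\Ib)^{-2}\preceq\tfrac{1}{4\lambda}\Ib$ yields
\[
\bigl\|(\Sighat_{1}+\lambda_{1}\Ib)^{-1}\tfrac{1}{n_1}\Xb_{1,a}^\top\bigr\|_{\op}^2
=\tfrac{1}{n_1}\bigl\|(\Sighat_{1}+\lambda_{1}\Ib)^{-1}\Sighat_{1,a}(\Sighat_{1}+\lambda_{1}\Ib)^{-1}\bigr\|_{\op}\lesssim \tfrac{1}{n_1\lambda_1}.
\]
For a cross term such as $T:=(\Sighat_{1}+\lambda_{1}\Ib)^{-1}\Sighat_{1,0}(\Sighat_{1,1}+\lambda_{0,1}\Ib)^{-1}\tfrac{1}{n_1}\Xb_{1,1}^\top$, I first use $(\Sighat_{1,1}+\lambda_{0,1}\Ib)^{-1}\Sighat_{1,1}(\Sighat_{1,1}+\lambda_{0,1}\Ib)^{-1}\preceq(\Sighat_{1,1}+\lambda_{0,1}\Ib)^{-1}$ to obtain
\[
\|T\|_{\op}^2\le \tfrac{1}{n_1}\bigl\|(\Sighat_{1}+\lambda_{1}\Ib)^{-1}\Sighat_{1,0}(\Sighat_{1,1}+\lambda_{0,1}\Ib)^{-1}\Sighat_{1,0}(\Sighat_{1}+\lambda_{1}\Ib)^{-1}\bigr\|_{\op}.
\]
The Corollary~\labelcref{corollary; application of Lemma second moment ratio} bound $\Sighat_{1,0}^{1/2}(\Sighat_{1,1}+\lambda_{0,1}\Ib)^{-1}\Sighat_{1,0}^{1/2}\preceq cR\,\Ib$ then sandwiches to give $\Sighat_{1,0}(\Sighat_{1,1}+\lambda_{0,1}\Ib)^{-1}\Sighat_{1,0}\preceq cR\,\Sighat_{1,0}\preceq cR\,\Sighat_{1}$, and the same argument as above yields $\|T\|_{\op}^2\lesssim R/(n_1\lambda_1)$. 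The symmetric cross term is handled identically.

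Combining the four contributions with the sub-Gaussian scaling produces $\|\hat{\eta}_{\bm{\lambda}}-\bar{\eta}_{\bm{\lambda}}\|_{\psi_2}\lesssim\sigma/\sqrt{n_1\lambda_1}+\sigma\sqrt{R}/\sqrt{n_1\lambda_1}$, which is the first inequality in the lemma. The second inequality then follows from $n_1\lambda_1\ge(n/2)\cdot(\xi\log n/n)=\xi\log n/2$ together with $R\ge 1$. The main obstacle is the cross (propagation) terms: the factor $(\Sighat_{1,a}+\lambda_{0,a}\Ib)^{-1}$ could a priori contribute a divergent $1/\lambda_{0,a}\asymp n/(\xi\log n)$, so the key is to cancel it by flanking it with copies of $\Sighat_{1,a'}^{1/2}$ and invoking the weak-overlap ratio bound, which trades the dangerous $1/\lambda_{0,a}$ factor for the (much smaller) overlap constant $R$.
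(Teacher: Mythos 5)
Your proposal is correct and follows essentially the same route as the paper: decompose $\hat{\eta}_{\bm{\lambda}}-\bar{\eta}_{\bm{\lambda}}$ into the direct noise term and the two propagated nuisance terms, reduce each $\psi_2$ bound to the operator norm of the corresponding linear map of the noise vector, and control the cross terms by sandwiching $(\Sighat_{1,a}+\lambda_{0,a}\Ib)^{-1}$ with $\Sighat_{1,a'}$ and invoking the weak-overlap ratio bound from Corollary~\labelcref{corollary; application of Lemma second moment ratio} to trade the dangerous $1/\lambda_{0,a}$ for a factor of $R$. The only cosmetic difference is that you split the direct noise contribution over the two treatment groups while the paper treats it as a single signed noise vector $\Xb_1^\top\bm{\varepsilon}_1'$; both yield the same $\sigma/\sqrt{n_1\lambda_1}$ bound.
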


\begin{proof}
By the error decomposition established in Appendix~\labelcref{section: proof RA learner}, we can write
\begin{align*}
\norm{ \hat{\eta}_{\bm{\lambda}} -\bar{\eta}_{\bm{\lambda}} }_{\psi_2} &= \|  (\Sighat_{1} + \lambda_{1} \mathbf{I})^{-1} \Sighat_{1,1}(\hat{\theta}_0 -\bar{\theta}_0) \|_{\psi_2} + \| (\Sighat_{1} + \lambda_{1} \mathbf{I})^{-1} \Sighat_{1,0}(\hat{\theta}_1 -\bar{\theta}_1) \|_{\psi_2} \\
&\quad+ \| (\Sighat_{1} + \lambda_{1} \mathbf{I})^{-1} \frac{1}{n_{1}}\bigl(  \sum_{a_{1i}=1}x_{1i}\varepsilon_{1i}+\sum_{a_{1i}=0} (- x_{1i}\varepsilon_{1i}) \bigr) \|_{\psi_2}.
\end{align*}
In the proof, we only bound $I = \|  (\Sighat_{1} + \lambda_{1} \mathbf{I})^{-1} \Sighat_{1,1}(\hat{\theta}_0 -\bar{\theta}_0) \|_{\psi_2}$; the other term $\| (\Sighat_{1} + \lambda_{1} \mathbf{I})^{-1} \Sighat_{1,0}(\hat{\theta}_1 -\bar{\theta}_1) \|_{\psi_2}$ can be bounded similarly.

Observe that 
\begin{align*}
I&= \Bigl\|   (\Sighat_{1} + \lambda_{1} \mathbf{I})^{-1}  \Sighat_{1,1} (\Sighat_{1,0}+\lambda_{0,0} \mathbf{I})
^{-1} \frac{1}{n_1} \Xb_{1,0}^\top \bm{\varepsilon}_{1,0}\Bigr\|_{\psi_2}.   
\end{align*}
We view this as an operator acting on $\bm{\varepsilon}_{1,0}$ and bound the operator norm of the Hilbertian operator $(\Sighat_{1} + \lambda_{1} \mathbf{I})^{-1}  \Sighat_{1,1} (\Sighat_{1,0}+\lambda_{0,0} \mathbf{I})^{-1} \frac{1}{n_1} \Xb_{1,0}^\top$. 
Define its operator norm as $A$, 
\[
A := \Bigl\|(\Sighat_{1} + \lambda_{1} \mathbf{I})^{-1}  \Sighat_{1,1} (\Sighat_{1,0}+\lambda_{0,0} \mathbf{I})^{-1} \frac{1}{n_1} \Xb_{1,0}^\top \Bigr\|_{\op}.
\]
Observe that 
\begin{align*}
A^2 & \leq \frac{1}{n_1} \norm{(\Sighat_{1} + \lambda_{1} \mathbf{I})^{-1}  \Sighat_{1,1} (\Sighat_{1,0}+\lambda_{0,0} \mathbf{I})^{-1} \Sighat_{1,0}(\Sighat_{1,0}+\lambda_{0,0} \mathbf{I})^{-1} \Sighat_{1,1}  (\Sighat_{1} + \lambda_{1} \mathbf{I})^{-1}  }_{\op}  \\
&\lesssim \frac{1}{n_1} \norm{(\Sighat_{1} + \lambda_{1} \mathbf{I})^{-1}  \Sighat_{1,1} (\Sighat_{1,0}+\lambda_{0,0} \mathbf{I})^{-1}  \Sighat_{1,1} (\Sighat_{1} + \lambda_{1} \mathbf{I})^{-1} }_{\op} \\
&\stackrel{\text{(i)}}{\lesssim } R  \frac{1}{n_1} \norm{(\Sighat_{1} + \lambda_{1} \mathbf{I})^{-1}  \Sighat_{1,1}   (\Sighat_{1} + \lambda_{1} \mathbf{I})^{-1}  }_{\op}\\
&\stackrel{\text{(ii)}}{\lesssim } R  \frac{1}{n_1} \norm{(\Sighat_{1} + \lambda_{1} \mathbf{I})^{-1}  \Sighat_{1}   (\Sighat_{1} + \lambda_{1} \mathbf{I})^{-1}  }_{\op} \\
&\lesssim  R  \frac{1}{n_1} \norm{(\Sighat_{1} + \lambda_{1} \mathbf{I})^{-1}   }_{\op}\\
&\lesssim R  \frac{1}{n_1 \lambda_{1}},
\end{align*}
where in the third line (i), we used Corollary~\labelcref{corollary; application of Lemma second moment ratio}, and step (ii) holds by the relation $\Sighat_{1,1} \preceq \Sighat_{1}$.
Thus,
\begin{align*}
A \lesssim \frac{ \sqrt{R}}{\sqrt{n_1 \lambda_{1}}},
\end{align*}    
and hence
\begin{align*}
I \lesssim \sig \frac{ \sqrt{R}}{\sqrt{n_1 \lambda_{1}}}.
\end{align*}

For the other term \(\Bigl\| (\Sighat_{1} + \lambda_{1} \mathbf{I})^{-1} \frac{1}{n_{1}}\Bigl(  \sum_{a_{1i}=1}x_{1i}\varepsilon_{1i}+\sum_{a_{1i}=0} (- x_{1i}\varepsilon_{1i}) \Bigr) \Bigr\|_{\psi_2}\), we can interpret it as 
\begin{align*}
(\Sighat_{1} + \lambda_{1} \mathbf{I})^{-1} \frac{1}{n_{1}} \Xb_1^\top \bm{\varepsilon}'_1 ,
\end{align*}
where \(\bm{\varepsilon}_1' \) is a vectorized form of \(\varepsilon_{1i}(-1)^{a_{1i}+1}\).
Under Assumption~\labelcref{assumption; consistency and unconfoundedness}, \(\varepsilon_{1i}(-1)^{a_{1i}+1}\) is a mean-zero and sub-Gaussian noise given \(\{(x_i,a_i)\}_{i=1}^n\).
The same argument bounds the operator norm of 
\begin{align*}
\Bigl\| (\Sighat_{1} + \lambda_{1} \mathbf{I})^{-1} \frac{1}{n_{1}} \Xb_1^\top  \Bigr\|^2_{\op} &\leq \frac{1}{n_{1} } \bigl\|(\Sighat_{1} + \lambda_{1} \mathbf{I})^{-1}\Sighat_{1}(\Sighat_{1} + \lambda_{1} \mathbf{I})^{-1} \bigr\|_{\op} \\
&\lesssim \frac{1}{n_{1}\lambda_{1}},
\end{align*}
and this completes the proof.
\end{proof}

\subsection{Proof of Lemma~\ref{lemma; comparability in-sample and population MSE}}
\noindent
For each candidate estimator \(\hat{\eta}_{\bm{\lambda}}\in \cH\), define the random function
\[
f_{\bm{\lambda}}(z)
:=\hat{h}_{\bm{\lambda}}(z)-h^\star(z)
= \phi(z)^\top(\hat{\eta}_{\bm{\lambda}}-\eta^\star).
\]
The candidate estimators are trained only on the source sample, and hence \(f_{\bm{\lambda}}\) is independent of the target covariates \(\{z_{0i}\}_{i=1}^{n_\cT}\).
Moreover,
\[
\|f_{\bm{\lambda}}\|_{n_\cT}^2
=\cE_\cT^{\inn}(\hat{\eta}_{\bm{\lambda}}),
\qquad
\|f_{\bm{\lambda}}\|_{L^2}^2
=\cE_\cT(\hat{\eta}_{\bm{\lambda}}).
\]

We verify the conditions of Lemma~\labelcref{lemma; key lemma in-sample population}.
Let
\[
\varepsilon_0=(n n_\cT |\cH|)^{-12},
\qquad
\delta_0=\frac{n^{-11}}{2|\cH|}.
\]
By Lemmas~\labelcref{lemma; bias H bound,lemma; variance psi2 bound}, for every \(\hat{\eta}_{\bm{\lambda}}\in\cH\),
\[
\|\bar{\eta}_{\bm{\lambda}}-\eta^\star\|_{\HH}\lesssim \sqrt{R}M,
\qquad
\|\hat{\eta}_{\bm{\lambda}}-\bar{\eta}_{\bm{\lambda}}\|_{\psi_2}
\lesssim \frac{\sigma\sqrt{R}}{\sqrt{\xi\log n}},
\]
where we use \(\|\eta^\star\|_\HH\leq 2M\).
Since \(\sup_z\|\phi(z)\|_\HH\leq \sqrt{\xi}\), the preceding display implies
\[
\PP\bigl(|f_{\bm{\lambda}}(z_0)|>r\bigr)\leq \varepsilon_0,
\qquad
r \asymp \sqrt{R}\left(M\sqrt{\xi}
+\sigma\sqrt{\frac{\log(1/\varepsilon_0)}{\log n}}\right),
\]
where \(z_0\sim \cP_\cT\). 
The same bounds also give
\[
\EE |f_{\bm{\lambda}}(z_0)|^4 \leq U^4,
\qquad
U^2\asymp R(\xi M^2+\sigma^2).
\]
Applying Lemma~\labelcref{lemma; key lemma in-sample population} to each \(f_{\bm{\lambda}}\) and taking a union bound over \(\cH\), with probability at least \(1-n^{-11}\), for all \(\hat{\eta}_{\bm{\lambda}}\in\cH\),
\[
\left|
\sqrt{\cE_\cT^{\inn}(\hat{\eta}_{\bm{\lambda}})}
-\sqrt{\cE_\cT(\hat{\eta}_{\bm{\lambda}})}
\right|
\lesssim
r\sqrt{\frac{\log(2/\delta_0)}{n_\cT}}
+U\varepsilon_0^{1/4}.
\]
Using \((a+b)^2\leq 2a^2+2b^2\) in both directions, this implies
\[
\cE_\cT^{\inn}(\hat{\eta}_{\bm{\lambda}})
\lesssim
\cE_\cT(\hat{\eta}_{\bm{\lambda}})+\Ocr',
\qquad
\cE_\cT(\hat{\eta}_{\bm{\lambda}})
\lesssim
\cE_\cT^{\inn}(\hat{\eta}_{\bm{\lambda}})+\Ocr',
\]
where
\[
\Ocr'
:=
\frac{r^2\log(2/\delta_0)}{n_\cT}
+U^2\varepsilon_0^{1/2}.
\]
Since \(|\cH|\leq \log n\), we have
\[
\Ocr'
\lesssim
R(\xi M^2+\sigma^2)
\left(\frac{\log(nn_\cT)}{n_\cT}+\frac{1}{n^5}\right)
\lesssim \Ocr.
\]
Therefore, with probability at least \(1-n^{-11}\), for all \(\hat{\eta}_{\bm{\lambda}} \in \cH\),
\begin{align*}
&\cE_{\cT}^{\inn}(\hat{\eta}_{\bm{\lambda}}) \lesssim \cE_\cT(\hat{\eta}_{\bm{\lambda}}) +\Ocr',\\
&\cE_{\cT}(\hat{\eta}_{\bm{\lambda}}) \lesssim  \cE_{\cT}^{\inn}(\hat{\eta}_{\bm{\lambda}}) +\Ocr'.
\end{align*}
In conclusion, for any estimator $\hat{\eta}_{\bm{\lambda}}$, we have 
\begin{align*}
\cE_\cT^{\inn}(\hat{\eta}_{\bm{\lambda}}) \lesssim \cE_\cT(\hat{\eta}_{\bm{\lambda}}) +\Ocr
\end{align*}
and 
\begin{align*}
\cE_\cT(\hat{\eta}_{\bm{\lambda}}) \lesssim  \cE_\cT^{\inn}(\hat{\eta}_{\bm{\lambda}}) +\Ocr  
\end{align*}
with probability $1-n^{-11}$ under the event $\event$.

\hfill \BlackBox

\section{Proof of Theorem~\labelcref{theorem; main theorem}}\label{section: proof main theorem}
\noindent
By Corollary~\labelcref{corollary; optimal MSE bound}, 
we have
\[
\min_{\bm\lambda \in {\bm{\Lambda}}} \cE_{\cT}(\hat{h}_{\bm{\lambda}})
\lesssim
\Bigl(\frac{BR}{n}\Bigr)^\alpha \|h^\star\|_\HH^{2(1-\alpha)} (\log n)^\alpha
+ \frac{BR}{n}M^2 \log n,
\]
which holds with probability at least \(1 - 2n^{-11}\).

Then, combining this with 
Proposition~\labelcref{proposition; oracle inequality MSE} and Lemma~\ref{lemma; in-sample MSE oracle}, we obtain
\[
\cE_\cT(\hat{h}_{\operatorname{final}}) 
\lesssim  
\Bigl(\frac{BR}{n}\Bigr)^\alpha \|h^\star\|_\HH^{2(1-\alpha)} (\log n)^\alpha
+ \frac{BR}{n}M^2 \log n 
+ \Ocr,
\]
with probability at least \(1 - 7n^{-11}\).
Finally, for \(n > 7\), Theorem~\labelcref{theorem; main theorem} follows.

\hfill \BlackBox

\section{Proof of Theorem~\labelcref{theorem; lower bound} (Lower Bound)}\label{section: lower bound proof}

{\color{black}
\begin{proof}
Let
\[
\cZ_+ = [0,1],
\qquad 
\cZ_- = [-1,0),
\qquad
\cZ = [-1,1].
\]
Let \(K_+(\cdot,\cdot)\) and \(K_-(\cdot,\cdot)\) be the reproducing kernels of \(H^k(\cZ_+)\) and \(H^k(\cZ_-)\), respectively, and define
\[
K(z,w) = \begin{cases}
K_+(z,w)&\quad \text{when \(z,w \in \cZ_+\)}, \\
K_-(z,w)&\quad \text{when \(z,w \in \cZ_-\)}, \\
0&\quad \text{otherwise}.
\end{cases}
\]
This kernel is symmetric and positive semidefinite.
Its RKHS \(\cF\) is the orthogonal direct sum of the two \(H^k\) components:
\begin{align}\label{equation: kernel for lower bound}
\cF
= \{&f = f_+(z)\one(z\in \cZ_+) + f_-(z)\one(z\in \cZ_-) 
\text{ where } f_+ \in H^k(\cZ_+),\ f_- \in H^k(\cZ_-)\},
\nonumber\\
\|f\|_\cF^2
&= \|f_+\|_{H^k(\cZ_+)}^2 + \|f_-\|_{H^k(\cZ_-)}^2 .
\end{align}
Under \(\cQ_\cT=\operatorname{Unif}(\cZ_+)\), the target covariance operator is supported on the one-dimensional block \(\cZ_+\), and it satisfies Assumption~\labelcref{Assumption; eigenvalue decay} with \(\ell=k\) by the standard eigenvalue decay of Sobolev kernels \citep{fischer2020sobolev}.

Now set
\[
\pi(z) \equiv \frac{1}{R},\qquad
\cQ_\cS = \frac{1}{B}\operatorname{Unif}(\cZ_+)
+ \Bigl(1-\frac{1}{B}\Bigr)\operatorname{Unif}(\cZ_-),
\qquad
\cQ_\cT = \operatorname{Unif}(\cZ_+).
\]
Since \(R\geq 2\) and \(B\geq 1\), this forms an \((R,B)\)-bounded instance: \(\pi(z)=1/R\leq 1-1/R\), and \(\de\cQ_\cT/\de\cQ_\cS=B\) on \(\cZ_+\) and \(0\) on \(\cZ_-\).

Define
\[
\cF_+(w)=\{g\in H^k(\cZ_+):\|g\|_{H^k(\cZ_+)}\leq w\},
\qquad
\cF_-(w)=\{g\in H^k(\cZ_-):\|g\|_{H^k(\cZ_-)}\leq w\}.
\]
We consider the subspace
\[
\cF_{\operatorname{sub}}(W)
=\Bigl\{h=h_+(z)\one(z\in \cZ_+) + h_-(z)\one(z\in \cZ_-):
h_+\in\cF_+(W/\sqrt{2}),\ h_-\in\cF_-(W/\sqrt{2})\Bigr\}.
\]
By \eqref{equation: kernel for lower bound}, \(\cF_{\operatorname{sub}}(W)\subseteq\{h:\|h\|_\cF\leq W\}\).
We further restrict to the submodel in which \(f_0^\star\equiv 0\) is known to the learner and \(f_1^\star=h^\star\), where
\[
h^\star\in \cF_{\operatorname{sub}}(W).
\]
This submodel is contained in \(\{h^\star:\|h^\star\|_\cF\leq W\}\), and revealing \(f_0^\star\) can only make the estimation problem easier. 
Therefore, any lower bound for this submodel is also a lower bound for the original problem.

In this submodel, the only observations that contain information about the target-side component \(h_+^\star\) are the treated samples with covariates in \(\cZ_+\). 
Let
\[
N_+ = \bigl|\{i:z_i\in \cZ_+,\ a_i=1\}\bigr|.
\]
Then \(N_+\sim \operatorname{Binomial}(n,1/(BR))\), and conditional on \(N_+\), the informative covariates are independent draws from \(\operatorname{Unif}(\cZ_+)\). 
By Markov's inequality, \(\PP(N_+\leq 2n/(BR))\geq 1/2\). 
On this event, the experiment is no more informative than the corresponding one-dimensional nonparametric regression experiment over the \(H^k([0,1])\) ball with radius \(W/\sqrt{2}\) and \(2n/(BR)\) samples.
Applying the standard minimax lower bound for this integer-order class \citep[Theorem~2.10]{tsybakov2009introduction} yields
\begin{align*}
\inf_{\hat{h}} \sup_{\substack{(f_0^\star, f_1^\star, h^\star):\\ \|h^\star\|_\cF \leq W}}
\EE[\cE_\cT(\hat{h})]
&\gtrsim
\Bigl(\frac{2n}{BR}\Bigr)^{-\frac{2k}{1+2k}}
W^{\frac{2}{1+2k}} \\
&\gtrsim
\Bigl(\frac{BR}{n}\Bigr)^{\alpha}W^{2(1-\alpha)},
\end{align*}
where \(\alpha=\frac{2k}{1+2k}\).

\end{proof}
}

\section{Proofs for Examples in Section~\labelcref{subsection: weak overlap}}
\label{section: proof weak overlap exampls}

\subsection{Proof of Example~\labelcref{example; positivity}}
\noindent
\begin{proof}
Let the probability distribution function of the source covariates \(z_i\) be \(F_{\cQ_\cS}(\cdot)\).
By the definition of \(\Sigmatreated\) and \(\Sigmacontrol\), we have
\begin{align}\label{equation: integral form of moments}
\Sigmatreated 
&= \int_{\cZ} \bigl(\phi(z) \otimes \phi(z)\bigr) \pi(z)\mathrm{d}F_{\cQ_\cS}(z), 
\quad 
\Sigmacontrol  
= \int_{\cZ} \bigl(\phi(z) \otimes \phi(z)\bigr) \bigl(1-\pi(z)\bigr)\mathrm{d}F_{\cQ_\cS}(z).
\end{align}
Hence, when \(\kappa \leq \pi(z) \leq 1-\kappa\), the following holds:
\begin{align*}
\Sigmatreated \preceq \frac{1}{\kappa} \Sigmacontrol 
\quad\text{and}\quad    
\Sigmacontrol  \preceq \frac{1}{\kappa} \Sigmatreated.
\end{align*}
\end{proof}

\subsection{Proof of Example~\labelcref{example; singular propensity 1}}
\noindent
\begin{proof}
Set \(\xi \geq 1\). 
We prove this for the general \(H^k([0,1])\), where \(k \in \NN\).
It remains to find \(R\) such that, for all \(g \in H^k([0,1])\),
\begin{align*}
\theta(g)^\top \Sigmatreated \theta(g) &\leq R \theta(g)^\top \Sigmacontrol \theta(g) 
+ \frac{R}{n}\|g\|_{H^k([0,1])}^2, 
\\
\theta(g)^\top \Sigmacontrol \theta(g) &\leq R \theta(g)^\top \Sigmatreated \theta(g) 
+ \frac{R}{n}\|g\|_{H^k([0,1])}^2,
\end{align*}
where \(\theta(g)\) is the Hilbertian element of \(g\), as defined in Appendix~\labelcref{section: good events and second moments}.
First, we prove the case when the source distribution is uniform on \([0,1]\).
When the density is bounded above and below by constants, the argument extends directly to more general cases.

These conditions are equivalent to 
\begin{align*}
&\int_0^1 (1-z)g(z)^2 \de z 
\leq R \int_0^1 zg(z)^2 \de z 
+ \frac{R}{n}\|g\|_{H^k([0,1])}^2,  
\\
&\int_0^1 zg(z)^2 \de z 
\leq R \int_0^1 (1-z)g(z)^2 \de z  
+ \frac{R}{n}\|g\|_{H^k([0,1])}^2.
\end{align*}
The second inequality is equivalent to
\begin{align*}
&\int_0^1  z g(z)^2 \de z 
\leq R \int_0^1 (1-z)g(z)^2 \de z  
+ \frac{R}{n}\|g\|_{H^k}^2 
\\
&\Leftrightarrow   
\int_0^1  \bigl(1-u\bigr) g(1-u)^2 \de u 
\leq R \int_0^1 ug(1-u)^2 \de u  
+ \frac{R}{n}\|\tilde{g}\|_{H^k}^2  \quad \bigl(\text{by setting } g(1-u) = \tilde{g}(u)\bigr).
\\
&\Leftrightarrow  
\int_0^1  \bigl(1-u\bigr)\tilde{g}(u)^2 \de u 
\leq R \int_0^1 u\tilde{g}(u)^2 \de u  
+ \frac{R}{n}\|\tilde{g}\|_{H^k}^2.
\end{align*}
Hence, it suffices to find \(R\) that satisfies, for all \(g\) with \(\|g\|_{H^k}=1\),
\begin{align*}
\int_0^1 g(z)^2 \de z 
\leq (R+1)\int_0^1 zg(z)^2 \de z  
+ \frac{R}{n}.
\end{align*}

By Claim~\labelcref{claim; ex 3} below, the inequalities hold for all \(R\) satisfying
\begin{align*}
R +1 \geq \frac{c}{r}, 
\quad  
\frac{R}{n} \geq cr^{2k}
\end{align*}
for some \(0< r <\frac{1}{2}\) and some constant \(c>0\).
Choosing \(r \asymp n^{-\frac{1}{2k+1}}\) shows that Assumption~\labelcref{assumption; weak treatment overlap} holds with \(R \asymp n^{\frac{1}{2k+1}}\).    
\end{proof}

\begin{claim}\label{claim; ex 3}
For any \(g\) with \(\|g\|_{H^k([0,1])} =1\) and any \(0<r < \frac{1}{2}\),
\begin{align*}
\frac{c}{r} \int_0^1 zg^2(z) \de z  
+  cr^{2k}
\geq \int_0^1 g^2(z)\de z
\end{align*}
holds for some constant \(c>0\) that depends only on \(k\).
\end{claim}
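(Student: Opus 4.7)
I would split the integral at $z=r$. On $[r,1]$ the pointwise inequality $g^2(z)\le z\, g^2(z)/r$ yields $\int_r^1 g^2\,dz \le \tfrac{1}{r}\int_0^1 z g^2\,dz$ for free, which is already dominated by the first term on the right-hand side. The entire claim therefore reduces to showing
\[
\int_0^r g^2\,dz \;\lesssim\; \frac{1}{r}\int_0^1 z g^2\,dz \;+\; r^{2k}
\]
for $g$ with $\|g\|_{H^k([0,1])}=1$.

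For this inner piece I would Taylor-expand $g$ at the endpoint $z=r$: write $g = P + R$ with $P(z)=\sum_{j=0}^{k-1}\tfrac{g^{(j)}(r)}{j!}(z-r)^j$ and $R=g-P$. By construction $R(r)=R'(r)=\cdots=R^{(k-1)}(r)=0$ and $R^{(k)}=g^{(k)}$, so iterated Poincar\'e (applied on each side of $r$ with $k$-fold vanishing at the endpoint $r$) gives
\[
\|R\|_{L^2([0,r])}^2 + \|R\|_{L^2([r,2r])}^2 \;\le\; C_k\, r^{2k}\, \|g^{(k)}\|_{L^2([0,2r])}^2 \;\le\; C_k\, r^{2k}.
\]
For the polynomial part I would exploit $\deg P\le k-1$: rescaling $z=ru$ turns $\int_0^r P^2$ and $\int_r^{2r}P^2$ into $r$ times the $L^2$ integrals of the same rescaled polynomial $\tilde P(u)=P(ru)$ over $[0,1]$ and $[1,2]$; since both are norms on the finite-dimensional space of polynomials of degree $\le k-1$, they are equivalent with a constant $C_k$ depending only on $k$, giving the key estimate $\int_0^r P^2 \le C_k\int_r^{2r} P^2$. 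Combining with $\int_r^{2r}P^2 \le 2\int_r^{2r}g^2 + 2\int_r^{2r}R^2 \le \tfrac{2}{r}\int_0^1 z g^2 + C r^{2k}$ and adding back the Poincar\'e bound on $R$ on $[0,r]$ yields the required inequality, and adding the $[r,1]$ contribution closes the proof.

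The main obstacle is the polynomial piece $\int_0^r P^2$: a naive bound using $|g^{(j)}(r)|\le C\|g\|_{H^k}$ only produces $\int_0^r P^2 = O(r)$, which would force the weak-overlap constant to scale like $R\asymp n^{1/3}$ for every $k$ and lose the sharper $R\asymp n^{1/(2k+1)}$ obtained in the proof of Example~\labelcref{example; singular propensity 1}. The scaling-invariant polynomial equivalence $\int_0^r P^2\lesssim \int_r^{2r}P^2$ is the crucial device: it transports the control of $P$ from $[0,r]$, where the weight $z$ is useless, to $[r,2r]$, where $z\asymp r$ lets $\int z g^2$ actually dominate $\int g^2$, and this is precisely what upgrades the Poincar\'e scale $r^{2k}$ all the way through the argument.
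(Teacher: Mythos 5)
Your proposal is correct and follows essentially the same route as the paper's proof: both decompose $g$ near the origin into its degree-$(k-1)$ Taylor polynomial at $r$ plus a remainder of size $O(r^{2k})$, and both transport the polynomial's $L^2$ mass from $[0,r]$ to $[r,2r]$ via a scale-invariant norm equivalence so that the weight $z\asymp r$ on $[r,2r]$ can absorb it. The only differences are cosmetic: the paper bounds the Taylor remainder pointwise via Morrey's inequality rather than by an iterated Poincar\'e argument, and it justifies the polynomial norm equivalence through an explicit Hilbert-matrix condition-number bound rather than the abstract equivalence of norms on a finite-dimensional space.
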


\begin{proof}
Let \(c_1\) be the constant from Lemma~\labelcref{lemma; polynomial integration} and set \(c = c_1 +1\). 
Then,
\begin{align*}
\frac{c}{r} \int_0^1 zg^2(z) \de z  +  cr^{2k} 
&\geq \Bigl(c_1  \frac{1}{r}+\frac{1}{r}\Bigr) 
\Bigl(\int_0^r zg^2(z)\de z + \int_r^1   zg^2(z)\de z \Bigr) 
+  c_1r^{2k}
\\
&\geq \int_r^1 g^2(z)\de z  +c_1 \int_r^1 g^2(z)\de z 
+  c_1r^{2k}  
\\
&\geq \int_0^1  g^2(z)\de z 
\quad \text{(by Lemma~\labelcref{lemma; polynomial integration})}.
\end{align*}
\end{proof}

\begin{lemma}\label{lemma; polynomial integration}
For any \(0<r<\frac{1}{2}\) and \(g \in H^k([0,1])\) with \(\|g\|_{H^k([0,1])}=1\), we have 
\begin{align*}
\int_{0}^r g(z)^2 \de z \leq  c_1 \int_r^1 g(z)^2 \de z +  c_1r^{2k}
\end{align*}
for some constant \(c_1\) that depends only on \(k\).
\end{lemma}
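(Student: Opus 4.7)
The plan is to sandwich $g$ on $[0,r]$ between a low-degree polynomial approximation and a small Sobolev-type remainder, which allows information from the ``data'' region $[r,2r]$ to be transferred to the target region $[0,r]$. Specifically, I would use the averaged Taylor polynomial
\[
Qg(z) \;:=\; \frac{1}{r}\int_r^{2r}\sum_{j=0}^{k-1}\frac{g^{(j)}(y)}{j!}(z-y)^j\, \de y,
\]
which is well-defined on $[0,1]$ (since $2r \leq 1$) and is a polynomial of degree at most $k-1$ in $z$. Taylor's formula with integral remainder expresses $g - Qg$ as an averaged integral of $g^{(k)}$; applying Cauchy--Schwarz first in the inner remainder (using $\int_{\min(y,z)}^{\max(y,z)}(z-s)^{2(k-1)}\,\de s \leq C_k |y-z|^{2k-1}$) and then in the outer averaging in $y$ produces a Bramble--Hilbert type bound
\[
\|g - Qg\|_{L^2[0,2r]}^2 \;\leq\; C_k\, r^{2k}\|g^{(k)}\|_{L^2[0,2r]}^2 \;\leq\; C_k\, r^{2k}.
\]

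For the polynomial piece, I would invoke the equivalence of $L^2$ norms on adjacent intervals within the finite-dimensional space $\mathcal{P}_{k-1}$ of polynomials of degree $\leq k-1$: rescaling $z \mapsto z/r$ reduces the ratio $\|P\|_{L^2[0,r]}/\|P\|_{L^2[r,2r]}$ to $\|\tilde P\|_{L^2[0,1]}/\|\tilde P\|_{L^2[1,2]}$, which is bounded by a constant $\widetilde C_k$ depending only on $k$, since both define norms on the finite-dimensional space $\mathcal{P}_{k-1}$. This gives $\int_0^r |Qg|^2 \leq \widetilde C_k \int_r^{2r}|Qg|^2$, and bounding $|Qg|^2 \leq 2 g^2 + 2(g-Qg)^2$ on $[r,2r]$ and invoking the Bramble--Hilbert estimate once more yields $\int_r^{2r}|Qg|^2 \leq 2\int_r^1 g^2 + 2C_k r^{2k}$. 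Assembling via the triangle inequality $\int_0^r g^2 \leq 2\int_0^r|Qg|^2 + 2\int_0^r(g-Qg)^2$ gives the desired estimate with $c_1$ depending only on $k$.

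The main obstacle is producing the sharp $r^{2k}$ scaling in the Bramble--Hilbert step: a naive pointwise bound of the form $|g-Qg|(z) \lesssim r^{k-1/2}\|g^{(k)}\|_{L^2}$ combined with integration over $[0,r]$ only gives the right power once the outer averaging in $y$ is handled in $L^2$ rather than $L^\infty$, as above; otherwise one loses a factor of $1/\sqrt{r}$. The polynomial-norm equivalence is elementary once finite-dimensionality is invoked, but it is the step that does the real work, since it converts a bound on $Qg$ over the ``observed'' interval $[r,2r]$ into a bound over $[0,r]$ without any $r$-dependent loss.
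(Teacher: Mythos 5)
Your argument is correct, and it follows the same overall skeleton as the paper's proof---decompose $g$ on $[0,2r]$ as a degree-$(k-1)$ polynomial plus a remainder of size $r^{k}$ controlled by $\|g^{(k)}\|_{L^2}$, then transfer the polynomial part from the ``observed'' interval $[r,2r]$ back to $[0,r]$ via a norm comparison for polynomials on adjacent intervals---but both key steps are implemented differently. For the remainder, the paper Taylor-expands around the single point $r$ and invokes Morrey's theorem / Sobolev embedding to get the pointwise bound $|\varepsilon(t)|\le c_2|t-r|^{k-1/2}$, whereas you use the averaged Taylor polynomial over $[r,2r]$ and a Bramble--Hilbert $L^2$ estimate; these give the same $r^{2k}$ scaling (your worry about losing $1/\sqrt{r}$ is unfounded even for the pointwise route, since integrating $r^{2k-1}$ over an interval of length $O(r)$ already recovers $r^{2k}$), and your version has the minor advantage of not needing pointwise control of $g^{(j)}$ beyond what $H^k\subset C^{k-1}$ provides. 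For the polynomial transfer, the paper rescales to $[-1,0]$ versus $[0,1]$ and bounds the constant explicitly through the condition number of the Hilbert moment matrix $\mathbb{E}_{x\sim\mathrm{Unif}[0,1]}[(1,x,\dots,x^{k-1})(1,x,\dots,x^{k-1})^\top]$, obtaining an explicit $c_3^{k}$, while you invoke abstract equivalence of norms on the finite-dimensional space $\mathcal{P}_{k-1}$ after the scaling $z\mapsto z/r$; your route is shorter and suffices here since only $k$-dependence of the constant is needed, whereas the paper's explicit computation is what lets it track the constant quantitatively (it reuses the same Hilbert-matrix lemma to state the $c_3^\beta$ growth). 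Both proofs are complete and yield a $c_1$ depending only on $k$.
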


\begin{proof}
By Sobolev embedding and Morrey's theorem \citep{evans2022partial}, there exist a polynomial \(p\) of degree \(k-1\) and a function \(\varepsilon\) such that for all \(0 < t < 2r\),
\begin{align*}
g(t) = p(t) + \varepsilon(t),
\end{align*}
where \(|\varepsilon(t)| \leq c_2 |t-r|^{k-\frac{1}{2}}\) for some constant \(c_2>0\) that depends only on \(k\). 
(Indeed, it is \(\frac{1}{(k-1)!}\) times a constant from Morrey's theorem.)

\paragraph*{Goal.} 
We want to show
\begin{align*}
\|g \|_{L^2([0,r])}^2 
\leq c_1 \|g \|_{L^2([r,2r])}^2 
+c_1r^{2k}.
\end{align*} 
First, observe that
\begin{align}
\| g\|^2_{L^2([0,r])} =\| p + \varepsilon\|^2_{L^2([0,r])}  
&\leq 2\| p\|_{L_2([0,r])}^2 + 2\|  \varepsilon\|^2_{L^2([0,r])} 
\nonumber \\
&\leq 2\| p\|_{L_2([0,r])}^2 +2 rc_2^2 r^{2k-1}. 
\label{equation: singular propensity 1}
\end{align}
Next, observe that
\begin{align*}
\| g\|^2_{L^2([r,2r])} 
=\|p+\varepsilon \|^2_{L^2([r,2r])} 
&= \|p\|^2_{L^2([r,2r])} 
+ \|\varepsilon\|^2_{L^2([r,2r])} 
+2\langle p, \varepsilon \rangle_{L^2([r,2r])}
\\
&\geq \|p\|^2_{L^2([r,2r])} 
+ \|\varepsilon\|_{L^2([r,2r])}^2  
- 2\|p\|_{L^2([r,2r])}\|\varepsilon\|_{L^2([r,2r])}.
\end{align*}
Hence,
\begin{align*}
&\|g \|^2_{L^2([r,2r])} 
+ \|\varepsilon \|^2_{L^2([r,2r])} 
\\
&\geq \frac{1}{2}\|p\|^2_{L^2([r,2r])}  
+ \frac{1}{2}\|p\|^2_{L^2([r,2r])} 
+2\|\varepsilon \|^2_{L^2([r,2r])} 
- 2\|p\|_{L^2([r,2r])}\|\varepsilon\|_{L^2([r,2r])}
\\
&\geq \frac{1}{2}\|p\|^2_{L^2([r,2r])}.
\end{align*}
Set \(\tilde{c} =4\max\bigl(c_2^2, c_3^k\bigr)\), where \(c_3\) is the constant from Lemma~\labelcref{lemma; polynomial integral ratio}.
Combining this with Lemma~\labelcref{lemma; polynomial integral ratio}, we obtain
\begin{align*}
&\tilde{c} 
\bigl(\|g \|^2_{L^2([r,2r])} 
+ \|\varepsilon \|^2_{L^2([r,2r])}\bigr) 
+ \tilde{c}r^{2k} 
\\
&\geq 4c_3^k 
\bigl(\|g \|^2_{L^2([r,2r])} 
+ \|\varepsilon \|^2_{L^2([r,2r])}\bigr) 
+ 4c_2^2r^{2k} 
\\
&\geq 2c_3^k \|p\|^2_{L^2([r,2r])}
+ 2rc_2^2r^{2k-1}
\\
&\stackrel{\text{(i)}}{\geq} 2\|p\|^2_{L^2([0,r])} 
+2rc_2^2r^{2k-1} 
\\
&\stackrel{\text{(ii)}}{\geq}  \| p + \varepsilon\|^2_{L^2([0,r])},
\end{align*}
where (i) follows from Lemma~\labelcref{lemma; polynomial integral ratio} and (ii) follows from \eqref{equation: singular propensity 1}.
Thus,
\begin{align*}
\tilde{c}  \|g\|^2_{L^2([r,2r])}  
+ (\tilde{c} c_2^2 + \tilde{c})r^{2k} 
\geq \| g\|^2_{L^2([0,r])}.
\end{align*}
Setting \(c_1 := \tilde{c}c_2^2 + \tilde{c}\) completes the proof.
\end{proof}

\begin{lemma}\label{lemma; polynomial integral ratio}
For any polynomial \(p(x)\) of degree \(\beta\) and any \(0<r<1\), we have
\begin{align*}
\int_{-r}^0 p(x)^2 \de x 
\leq c_3^\beta \int_0^r p(x)^2 \de x
\end{align*}
for some absolute constant \(c_3>0\).
\end{lemma}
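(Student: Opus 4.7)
The plan is to rescale to the case $r=1$ and then chain three classical polynomial inequalities, each contributing only a single-exponential factor in $\beta$. Substituting $x=rt$ and setting $q(t)=p(rt)$ (still a polynomial of degree $\beta$), both integrals carry the common factor $r$, so it suffices to prove
\[
\int_{-1}^{0} q(t)^2\,dt \;\le\; c_3^{\beta}\int_{0}^{1} q(t)^2\,dt
\]
for every polynomial $q$ of degree $\beta$. I will pass through the sup-norm on both intervals, using Chebyshev to transfer $\|q\|_{L^\infty([-1,0])}$ to $\|q\|_{L^\infty([0,1])}$, and a Nikolskii-type bound to pass back from $L^\infty$ to $L^2$ on $[0,1]$.

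For the Chebyshev step, I will apply the affine change of variable $s=2t-1$, which maps $[0,1]\to[-1,1]$ and $[-1,0]\to[-3,-1]$. The extremal property of the Chebyshev polynomial $T_\beta$ asserts that any polynomial $Q$ of degree $\beta$ with $|Q(s)|\le M$ on $[-1,1]$ satisfies $|Q(s)|\le M\,|T_\beta(s)|$ for all $|s|\ge 1$. Since $T_\beta(\pm 3)=\cosh\bigl(\beta\operatorname{arccosh}(3)\bigr)\le(3+2\sqrt{2})^{\beta}$, this yields
\[
\sup_{t\in[-1,0]}|q(t)| \;\le\; (3+2\sqrt{2})^{\beta}\sup_{t\in[0,1]}|q(t)|.
\]

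For the Nikolskii step, I will use the orthonormal basis $\{e_k(t):=\sqrt{2k+1}\,P_k(2t-1)\}_{k=0}^{\beta}$ of shifted Legendre polynomials on $L^2([0,1])$. Expanding $q=\sum_{k=0}^{\beta}a_k e_k$ and applying Cauchy-Schwarz together with $|e_k(t)|\le\sqrt{2k+1}$ gives $\sup_{[0,1]}|q|\le(\beta+1)\,\|q\|_{L^2([0,1])}$. Combining everything with the trivial bound $\int_{-1}^{0}q^2\le\sup_{[-1,0]}|q|^2$ produces
\[
\int_{-1}^{0}q(t)^2\,dt \;\le\; (3+2\sqrt{2})^{2\beta}(\beta+1)^{2}\int_{0}^{1}q(t)^2\,dt,
\]
and absorbing the polynomial factor $(\beta+1)^2$ into an exponential yields $c_3^{\beta}$ for any $c_3$ slightly larger than $(3+2\sqrt{2})^{2}$.

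The main obstacle, if any, is simply to ensure that the final constant is \emph{single-exponential} in $\beta$; a naive approach through the coefficient vector and the Hilbert matrix would give a much worse bound because the reflection $q(x)\mapsto q(-x)$ is badly conditioned with respect to the Hilbert-matrix norm. The Chebyshev extremal principle is the sharp tool that prevents this blow-up, after which the rescaling and Nikolskii steps are routine.
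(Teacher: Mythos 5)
Your proof is correct, but it follows a genuinely different route from the paper's. After the common rescaling to $r=1$, the paper works entirely with the coefficient vector: it writes both integrals as quadratic forms $\|(a_0,ra_1,\dots,r^\beta a_\beta)\|_2^2$ measured against the moment matrix $\mathbb{E}_{x\sim \mathrm{Unif}(0,1)}[(1,x,\dots,x^\beta)(1,x,\dots,x^\beta)^\top]$ (a Hilbert matrix), notes that reflecting $x\mapsto -x$ only flips signs of coordinates and so preserves the Euclidean norm of the coefficient vector, and then invokes the classical fact (its Lemma on the ``key constant of local polynomial regression'', citing Choi) that the condition number of this matrix is $O\bigl((1+\sqrt{2})^{4\beta}/\sqrt{\beta}\bigr)$. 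You instead chain a Chebyshev growth estimate on $[-3,-1]$ with a Legendre/Nikolskii inverse inequality on $[0,1]$; each step is sound, and amusingly your base $(3+2\sqrt{2})^{2}=(1+\sqrt{2})^{4}$ is exactly the one the Hilbert-matrix condition number delivers. This means your closing remark --- that the ``naive'' coefficient-vector/Hilbert-matrix approach would be badly conditioned and give a much worse bound --- is mistaken: that is precisely the paper's argument, and it yields the same single-exponential rate. Two trivial points of hygiene in your write-up: the claim that any $c_3$ ``slightly larger'' than $(3+2\sqrt{2})^2$ absorbs the factor $(\beta+1)^2$ fails for small $\beta$ (e.g.\ $\beta=1$ forces $c_3\ge 4(3+2\sqrt{2})^2$), but since the lemma only asks for \emph{some} absolute constant this is harmless; and your argument is self-contained where the paper's leans on an external eigenvalue estimate for the Hilbert matrix, which is arguably a point in your favor.
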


\begin{proof}
Let \(p(x)= a_0 + a_1 x + \dots + a_\beta x^\beta\).
Denote the maximum and minimum eigenvalues of 
\(\EE_{x \sim \operatorname{Unif}(0,1)}\bigl[(1,x,\dots,x^\beta)(1,x,\dots,x^\beta)^\top\bigr]\) 
by \(M_\beta\) and \(m_\beta\), respectively. Then,
\begin{align*}
\int_0^r p(x)^2 \de x 
&= r \int_0^1 p(rx)^2 \de x 
\geq  rm_\beta 
\|(a_0, ra_1, \dots, r^\beta a_\beta) \|_2^2.
\end{align*}
Next,
\begin{align*}
\int_{-r}^0 p(x)^2 \de x 
&= r \int_{-1}^0 p(rx)^2 \de x 
= r \int_0^1 p(-rx)^2 \de x 
\\
&\leq  rM_\beta 
\bigl\|(a_0, -ra_1, r^2 a_2, \dots, r^\beta a_\beta(-1)^\beta)\bigr\|_2^2 
\\
&= rM_\beta 
\|(a_0, ra_1, r^2 a_2, \dots, r^\beta a_\beta)\|_2^2.
\end{align*}
By Lemma~\labelcref{lemma; key constant LPR}, \(\frac{M_\beta}{m_\beta} \leq c_3^\beta\) for some absolute constant \(c_3>0\). 
Hence the result follows immediately.
\end{proof}

\section{Detailed Results for ATE (Remark~\ref{remark; ATE})}\label{section: apdx ATE}

\textbf{Algorithm. } Identical to Algorithm~\ref{algorithm: model selection}, except we replace \(L(h)\) with \cref{equation: model selection ATE}.

\textbf{Assumptions.}
We require the same set of assumptions as in Theorem~\ref{theorem; main theorem}, with one relaxation: the bounded-kernel condition in Assumption~\ref{Assumption; eigenvalue decay} is still needed, but the polynomial eigenvalue decay requirement is not.

\textbf{Results. } 
We now state our main theorem for the ATE and then present its proof.

\begin{theorem}[Upper rates for ATE]\label{theorem; ATE}
By invoking Algorithm~\ref{algorithm: main} (using the inputs from Theorem~\ref{theorem; main theorem}) and replacing the objective \(L(h)\) in the subroutine (Algorithm~\ref{algorithm: model selection}) with the expression below, we obtain an estimator for the ATE:
\begin{align}
L(h) =  \bigg|\frac{1}{n_\cT}\sum_{i=1}^{n_\cT} \tilde h(z_{0i}) - \frac{1}{n_\cT} \sum_{i=1}^{n_\cT} h(z_{0i}) \bigg|^2.
\end{align}
With the resulting \(\hat h_{\operatorname{final}}\), the ATE estimator is
\begin{align*}
\hat\Delta_\cT := \frac{1}{n_\cT} \sum_{i=1}^{n_\cT} \hat h_{\operatorname{final}}(z_{0i}).
\end{align*}
When \(n_\cT \gtrsim n/B\), under Assumptions~\labelcref{assumption; boundedness,assumption; consistency and unconfoundedness,assumption; overlap source target,assumption; subGaussian noise,assumption; weak treatment overlap,Assumption; eigenvalue decay} (except that polynomial eigenvalue decay is not required), the final estimator satisfies, with probability at least \(1-n^{-10}\),
\begin{align*}
|\hat \Delta_\cT -\Delta^\star_\cT| \lesssim \frac{M}{\sqrt{n_{\operatorname{eff}}}}
\end{align*}
where \(\lesssim\) hides logarithmic factors and universal constants.
Moreover, this rate is minimax-optimal up to logarithmic factors and absolute constants.
\end{theorem}

Notably, the eigenvalue decay requirement in Assumption~\ref{Assumption; eigenvalue decay} is not needed. In other words, as long as \(\cF\) is an RKHS with a bounded kernel, we achieve \(\sqrt{n}\)-consistency.

\begin{proof}
Define 
\begin{align*}
\bar \phi_\cT := \EE[\phi(z_{0i})], \qquad \hat {\bar \phi}_\cT := \frac{1}{n_\cT} \sum_{i=1}^{n_\cT} \phi(z_{0i}).
\end{align*}
First define the in-sample ATE as 
\begin{align*}
{\Delta}_{\cT}^{\operatorname{in}} := \frac{1}{n_\cT} \sum_{i =1}^{n_\cT} h^\star(z_{0i}) = \hat{\bar{\phi}}_\cT^\top \eta^\star.
\end{align*}
By Hoeffding's inequality, with probability at least \(1-n^{-11}\), \({\Delta}_{\cT}^{\operatorname{in}}\) concentrates around \(\Delta_\cT^\star\) as
\[
\bigl|{\Delta}_{\cT}^{\operatorname{in}} -\Delta_\cT^\star\bigr| \lesssim \frac{M\sqrt{\log n}}{\sqrt{n_\cT}}.
\]
Recall the notation and setup defined in Appendix~\ref{section: groundwork}.
We define the squared error with respect to the in-sample ATE as 
\begin{align*}
\cE_{\operatorname{ATE}}(\hat\eta_\lambda ) := (\hat\eta_\lambda -\eta^\star)^\top \hat{\bar{\phi}}_\cT \hat{\bar{\phi}}_\cT^\top (\hat\eta_\lambda -\eta^\star).
\end{align*}
We proceed in two steps and work throughout on the good event $\Ecr$ defined in Appendix~\ref{section: groundwork}.

\textbf{Step 1. Squared-error bound for \(\hat h_\lambda\). } 
Since the squared error of the in-sample ATE can be written in our MSE bound with respect to the second moment \(\bSigma_\cT'\), where
\begin{align*}
\bSigma_\cT' = \hat{\bar \phi}_\cT \hat{\bar \phi}_\cT^\top,
\end{align*}
let \(\check\lambda\) denote the smallest \(\lambda \in \Lambda\).
Observe that 
\begin{align*}
\bigl| \hat{\bar \phi}_\cT^\top (\hat \eta_{\operatorname{final}} -\tilde \eta)\bigr| \;\le\; \bigl| \hat{\bar \phi}_\cT^\top (\hat \eta_{\check\lambda} -\tilde \eta)\bigr|  
\;\le\; \bigl| \hat{\bar \phi}_\cT^\top (\hat \eta_{\check\lambda} -\eta^\star)\bigr|  + \bigl| \hat{\bar \phi}_\cT^\top (\tilde \eta  -\eta^\star)\bigr|.
\end{align*}
By the Cauchy–Schwarz inequality, for some absolute constants $c_1, c_2>0$, we have
\begin{align*}
\bSigma_\cT' 
= \hat{\bar \phi}_\cT \hat{\bar \phi}_\cT^\top 
\;\preceq\; \hat \bSigma_\cT 
\;\preceq \; c_1(\bSigma_\cT + \frac{\log n}{n_\cT} \Ib)
\;\preceq\; c_2B \log n\Bigl(\bSigma_\cS + \frac{1}{n} \Ib\Bigr).
\end{align*}
Thus, \(\bSigma_\cT'\) satisfies the source--target overlap condition with degree \(B\log n\).
Moreover, \(\bSigma_\cT' = \hat{\bar \phi}_\cT \hat{\bar \phi}_\cT^\top\) is rank~1, i.e., it exhibits finite eigenvalue decay of dimension~1.
Recall that $\check\lambda \asymp \frac{\log n}{n}$.
Applying Theorem~\ref{theorem; MSE bound RA learner} with target second moment \(\bSigma_\cT'\) yields, with probability at least \(1-4n^{-11}\),
\begin{align*}
\bigl| \hat{\bar \phi}_\cT^\top (\hat \eta_{\check\lambda} -\eta^\star)\bigr|^2  &\lesssim  \frac{M^2}{n_{\operatorname{eff}}}.
\end{align*}
Here, the trace term is evaluated with \(\bSigma_\cT\) in \(\Sbar_\lambda\) replaced by \(\bSigma_\cT^\prime\). Since \(\bSigma_\cT^\prime\) is rank~1, \(\Sbar_\lambda\) is also rank~1 and hence its trace equals its only nonzero eigenvalue:
\begin{align*}
\Tr(\Sbar_\lambda)
&= \|\Sbar_\lambda\|_{\op} \\
&\leq cB\log n
\Bigl\|(\bSigma_\cS+\lambda\Ib)^{-\frac12}
\Bigl(\bSigma_\cS+\frac{1}{n}\Ib\Bigr)
(\bSigma_\cS+\lambda\Ib)^{-\frac12}\Bigr\|_{\op}
\lesssim B\log n,
\end{align*}
where the last inequality uses \(\lambda=\check\lambda\asymp \log n/n\).

For the imputation estimators, we additionally combine the source--target display above with Lemma~\labelcref{lemma; second moment relation under weak overlap}. This gives
\begin{align*}
\bSigma_\cT' 
\;\preceq\; c_3BR \log n\Bigl(\Sigmatreated + \frac{\xi\log n}{n} \Ib\Bigr),
\qquad
\bSigma_\cT' 
\;\preceq\; c_3BR \log n\Bigl(\Sigmacontrol + \frac{\xi\log n}{n} \Ib\Bigr),
\end{align*}
so the treated and control nuisance KRR problems have a rank-1 target second moment and overlap degree \(BR\log n\), up to logarithmic factors.
Furthermore, applying the standard covariate-shift KRR bounds used in \citet{wang2026pseudo,ma2023optimally} to these rank-1 target problems with overlap degree \(BR\log n\), we obtain the following with probability at least $1-n^{-11}$:
\begin{align*}
\bigl| \hat{\bar \phi}_\cT^\top (\tilde \eta  -\eta^\star)\bigr|^2 &\lesssim   \bigl| \hat{\bar \phi}_\cT^\top (\tilde \theta_1  -\theta_1^\star)\bigr|^2 +\bigl| \hat{\bar \phi}_\cT^\top (\tilde \theta_0  -\theta_0^\star)\bigr|^2 \lesssim  \frac{M^2}{n_{\operatorname{eff}}}.
\end{align*}
where $\lesssim$ hides constants and logarithmic factors.
Hence, we obtain the upper rate stated in the theorem.

\textbf{Step 2. Lower bound for ATE. }
Consider the same kernel as in the proof of Theorem~\ref{theorem; lower bound}.
Also consider the same setup: 
\begin{itemize}
\item  \(\pi(z) \equiv \frac{1}{R}\).
\item \(\cQ_\cS = \begin{cases}
\operatorname{Unif}([0,1]) &\quad \text{with probability \(\frac{1}{B}\)}, \\
\operatorname{Unif}([-1,0]) &\quad \text{with probability \(1- \frac{1}{B}\)}.
\end{cases}\)
\item \(\cQ_\cT = \operatorname{Unif}([0,1])\).
\item \(f_0^\star \equiv C_0 \;\;(\text{a constant})\).
\end{itemize}

This is equivalent to estimating the mean of \(f^\star_1(z)\) for \(z \sim \operatorname{Unif}([0,1])\) with \(\Theta \bigl(\tfrac{n}{BR}\bigr)\) samples.
Therefore, we obtain 
\begin{align*}
\inf_{\hat{\Delta}_\cT} \sup_{\substack{(f_0^\star, f_1^\star, h^\star): \|f_0^\star\|_\cF, \|f_1^\star\|_\cF \leq 1}} 
\EE\bigl[|\hat \Delta_\cT -\Delta^\star_\cT|^2] 
\;\gtrsim\; \frac{1}{n_{\operatorname{eff}}}.
\end{align*}

\end{proof}

\section{Proofs for Cross Fitting Algorithm}\label{subsection: proofs for cross fitting}
\noindent
With a slight modification of our analysis, the cross-fitting algorithm achieves the same rate.
We briefly provide a proof sketch for its performance.
Let \(\hat{\eta}_{\final}^{(1)}, \hat{\eta}_{\final}^{(2)}, \hat{\eta}_{\final}^{(3)}\) denote the final estimators for three dataset permutations.
Define the averaged estimator as
\[
\hat{\eta}_{\final} := \frac{1}{3} \bigl(\hat{\eta}_{\final}^{(1)} 
+\hat{\eta}_{\final}^{(2)} +\hat{\eta}_{\final}^{(3)}\bigr).
\]

By Theorem~\labelcref{theorem; main theorem}, with probability at least \(1-n^{-10}\), each estimator \(\hat{\eta}_{\operatorname{final}}^{(j)}\) for \(j =1,2,3\) achieves the same MSE bound as in Theorem~\labelcref{theorem; main theorem}.
Since
\begin{align*}
\cE_\cT(\hat{\eta}_{\final})
&= \|\hat{\eta}_{\final} -\eta^\star \|_{\bSigma_\cT}^2 \\
&\leq \left(\frac{1}{3} \sum_{j=1}^3 \|\hat{\eta}_{\final}^{(j)} -\eta^\star \|_{\bSigma_\cT}\right)^2 
\leq \max_{j \in \{1,2,3\}}\cE_\cT(\hat{\eta}_{\final}^{(j)}),
\end{align*}
with probability at least \(1-3n^{-10}\), we have 
\begin{align*}
\max_{j \in \{1,2,3\}}\cE_\cT(\hat{\eta}_{\final}^{(j)} )
&\lesssim  n_{\operatorname{eff}}^{-\alpha} \norm{h^\star}^{2(1-\alpha)}_\cF 
+ M^2 \Bigl(\frac{1}{n_{\operatorname{eff}}}+ \frac{R}{n_\cT}\Bigr),
\end{align*}
which is the same rate as in Theorem~\labelcref{theorem; main theorem}.

\hfill \BlackBox

\section{Technical Lemmas}

\subsection{Lemmas for Bias-variance Trade-off}
\noindent
We next present key lemmas for the optimal choice of \(\lambda\) under the polynomial eigenvalue decay condition. 
Under Assumption~\labelcref{Assumption; eigenvalue decay}, we define the effective dimension \(\db(\lambda) = \inf \{j \mid \mu_j < \lambda \}\), and the following inequality
\begin{align*}
\frac{\sum_{\mu_j < \lambda} \mu_j}{\lambda} \leq c \db(\lambda)
\end{align*}
holds for some constant \(c > 0\). Moreover, \(\db(\lambda) \lesssim \lambda^{-\frac{1}{2\ell}}\).

\begin{lemma}[Performance of optimal regularizer]\label{lemma; optimal trade-off lambda}
Let \(h > 0\) be an arbitrary positive constant and let \(\eta \in \HH\) with \(\lambda_{1} = h^{\alpha}B^{-(1-\alpha)}\|\eta\|_{\HH}^{-2\alpha} \ge \frac{\xi}{n}\). Then the following holds:
\begin{align*}
B \lambda_{1} \norm{\eta}^2_\HH + h \Tr(\Sbar_{\lambda_{1}}) \lesssim \norm{\eta}^{2(1-\alpha)}_\HH\, h^{\alpha} B^{\alpha}.
\end{align*}
\end{lemma}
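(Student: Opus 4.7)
The plan is to bound the two terms on the left-hand side separately and then verify that the exponents in $\lambda_1 = h^{\alpha} B^{-(1-\alpha)} \|\eta\|_\HH^{-2\alpha}$ have been tuned precisely so that both contributions match the target rate.

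For the first term, direct substitution gives
\[
B\lambda_1 \|\eta\|_\HH^2 = B \cdot h^{\alpha} B^{-(1-\alpha)} \|\eta\|_\HH^{-2\alpha}\cdot \|\eta\|_\HH^2 = h^{\alpha} B^{\alpha} \|\eta\|_\HH^{2(1-\alpha)},
\]
so no work is needed. All the effort therefore concentrates on $h\,\mathrm{Tr}(\Sbar_{\lambda_1})$.

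For $\mathrm{Tr}(\Sbar_{\lambda_1})$, I would take the spectral decomposition $\bSigma_\cT = \sum_j \mu_j\, v_j v_j^\top$ and write
\[
\mathrm{Tr}(\Sbar_{\lambda_1}) = \sum_{j} \mu_j\, v_j^\top (\bSigma_\cS + \lambda_1 \Ib)^{-1} v_j.
\]
Each summand admits two bounds. The ``operator'' bound $(\bSigma_\cS + \lambda_1 \Ib)^{-1} \preceq \lambda_1^{-1} \Ib$ yields $\mu_j\, v_j^\top (\bSigma_\cS + \lambda_1 \Ib)^{-1} v_j \leq \mu_j/\lambda_1$. The ``overlap'' bound uses Assumption~\labelcref{assumption; overlap source target}: since $\lambda_1 \geq \xi/n$ we get $\bSigma_\cT \preceq B(\bSigma_\cS + \lambda_1 \Ib)$, hence $\mu_j v_j v_j^\top \preceq B(\bSigma_\cS + \lambda_1 \Ib)$. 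Testing this quadratic-form inequality against $w := (\bSigma_\cS + \lambda_1 \Ib)^{-1} v_j$ and writing $a_j := v_j^\top (\bSigma_\cS + \lambda_1 \Ib)^{-1} v_j \geq 0$ gives $\mu_j a_j^2 \leq B a_j$, so $\mu_j a_j \leq B$. Therefore
\[
\mathrm{Tr}(\Sbar_{\lambda_1}) \leq \sum_j \min\!\Bigl(B,\; \frac{\mu_j}{\lambda_1}\Bigr) \leq B\,\db(B\lambda_1) + \frac{1}{\lambda_1} \sum_{\mu_j < B\lambda_1} \mu_j \lesssim B\,\db(B\lambda_1),
\]
where the last step uses the paper's inequality $\sum_{\mu_j<t}\mu_j \lesssim t\,\db(t)$. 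The polynomial decay hypothesis $\mu_j \lesssim j^{-2\ell}$ gives $\db(B\lambda_1) \lesssim (B\lambda_1)^{-1/(2\ell)}$, so
\[
\mathrm{Tr}(\Sbar_{\lambda_1}) \lesssim B^{1-1/(2\ell)}\,\lambda_1^{-1/(2\ell)}.
\]

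The remaining step is exponent arithmetic. Using $\alpha = 2\ell/(1+2\ell)$ one checks the identities $\alpha/(2\ell) = 1-\alpha$ and $(2\ell - \alpha)/(2\ell) = \alpha$. Substituting $\lambda_1 = h^{\alpha} B^{-(1-\alpha)} \|\eta\|_\HH^{-2\alpha}$ yields
\[
\lambda_1^{-1/(2\ell)} = h^{-\alpha/(2\ell)} B^{(1-\alpha)/(2\ell)} \|\eta\|_\HH^{\alpha/\ell} = h^{-(1-\alpha)} B^{(1-\alpha)/(2\ell)} \|\eta\|_\HH^{2(1-\alpha)},
\]
so
\[
h\,\mathrm{Tr}(\Sbar_{\lambda_1}) \lesssim h^{\alpha}\, B^{(2\ell-1)/(2\ell) + (1-\alpha)/(2\ell)}\, \|\eta\|_\HH^{2(1-\alpha)} = h^{\alpha} B^{\alpha} \|\eta\|_\HH^{2(1-\alpha)},
\]
matching the first term. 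Adding the two bounds completes the proof.

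The main obstacle will be the dual min-bound on each eigen-summand: the operator bound $\mu_j/\lambda_1$ alone is insufficient (the trace need not be summable without additional structure), while the overlap bound $B$ alone loses the eigenvalue decay. Combining the two via the $\min$ and splitting at the threshold $\mu_j \asymp B\lambda_1$ is the crucial step, and it is also the place where the hypothesis $\lambda_1 \geq \xi/n$ enters (to upgrade $\bSigma_\cT \preceq B(\bSigma_\cS + \tfrac{\xi}{n}\Ib)$ into $\bSigma_\cT \preceq B(\bSigma_\cS + \lambda_1 \Ib)$). The rest is bookkeeping.
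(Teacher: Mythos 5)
Your proposal is correct and follows essentially the same route as the paper: the first term is handled by direct substitution, and the trace is bounded eigenvalue-by-eigenvalue by $\min(B,\mu_j/\lambda_1)$, which agrees up to a factor of $2$ with the paper's bound $\sum_j \frac{B\mu_j}{\mu_j+B\lambda_1}$ obtained from the operator sandwich $\bSigma_\cS+\lambda_1\Ib \succeq \tfrac12\bigl(B^{-1}\bSigma_\cT+\lambda_1\Ib\bigr)$. The subsequent split at the effective dimension $\db(B\lambda_1)$, the use of $\db(t)\lesssim t^{-1/(2\ell)}$, and the exponent arithmetic at the tuned $\lambda_1$ are identical to the paper's argument.
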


\begin{proof}
Under Assumption~\ref{assumption; overlap source target}, we have
\begin{align*}
2(B \bSigma_\cS + B \lambda_{1} \Ib) \succeq \bSigma_\cT + B \lambda_{1} \Ib,
\end{align*}
and hence
\begin{align*}
\bSigma_\cS + \lambda_{1} \Ib \succeq \frac{1}{2}\Bigl(\frac{1}{B} \bSigma_\cT + \lambda_{1} \Ib\Bigr).
\end{align*}
Therefore, we have
\begin{align*}
\Tr( \Sbar_{\lambda_{1}} ) &= \Tr((\bSigma_\cS + \lambda_{1} \Ib)^{-1} \bSigma_\cT)\\
&\lesssim \Tr\Bigl(\bSigma_\cT \Bigl(\frac{1}{B} \bSigma_\cT + \lambda_{1} \Ib\Bigr)^{-1}\Bigr)\\
&\leq \sum_{j=1}^{\infty} \frac{B \mu_j}{\mu_j + B \lambda_{1}}.
\end{align*}  
Then the left-hand side of the statement is bounded as follows:
\begin{align*}
B \lambda_{1} \norm{\eta}_\HH^2 + h \sum_{j=1}^{\infty} \frac{B \mu_j}{\mu_j + B \lambda_{1}} 
&\lesssim B \lambda_{1} \norm{\eta}_\HH^2 + h \sum_{j=1}^{\db(\lambda_{1} B)} \frac{B \mu_j}{\mu_j + B \lambda_{1}}  +h \sum_{j=\db(\lambda_{1} B)+1}^{\infty} \frac{B \mu_j}{\mu_j + B \lambda_{1}} 
\\
&\lesssim B\Bigl(\lambda_{1} \norm{\eta}^2_\HH + h\, \db(\lambda_{1} B)\Bigr)\\
&\lesssim B\Bigl(\lambda_{1} \norm{\eta}^2_\HH + h\, (\lambda_{1} B)^{-\frac{1}{2\ell}}\Bigr)\\
&\lesssim B\Bigl(\norm{\eta}^2_\HH\, \lambda_{1} + h\, B^{-\frac{1}{2\ell}} \lambda_{1}^{-\frac{1}{2\ell}}\Bigr)\\
&\lesssim B\Bigl(\norm{\eta}^2_\HH\Bigr)^{\frac{1}{2\ell+1}} h^{\frac{2\ell}{2\ell+1}} B^{-\frac{1}{2\ell+1}} 
\quad \text{($\because$ evaluate at \(\lambda_{1} = h^{\alpha}B^{-(1-\alpha)}\|\eta\|_{\HH}^{-2\alpha}\))}\\
&= \Bigl(\norm{\eta}^2_\HH\Bigr)^{\frac{1}{2\ell+1}} h^{\frac{2\ell}{2\ell+1}} B^{\frac{2\ell}{2\ell+1}}\\
&= \Bigl(\norm{\eta}^2_\HH\Bigr)^{1-\alpha} h^{\alpha} B^{\alpha},
\end{align*}
where we used \(\db(\lambda_{1}) \lesssim \lambda_{1}^{-\frac{1}{2\ell}}\) under Assumption~\labelcref{Assumption; eigenvalue decay}.
\end{proof}

\begin{corollary}[Performance of optimal regularizer in grid]\label{corollary; optimal MSE in grid}
Under the same setup as in Lemma~\labelcref{lemma; optimal trade-off lambda}, set \(\lambda^\star = h^{\alpha}B^{-(1-\alpha)}\|\eta\|_{\HH}^{-2\alpha}\). Then, for any \(\lambda > 0\) with \(\lambda^\star \leq \lambda \leq 2\lambda^\star\),
\begin{align*}
B \lambda \norm{\eta}^2_\HH + h \Tr(\Sbar_{\lambda}) \lesssim \Bigl(\norm{\eta}_\HH^2\Bigr)^{1-\alpha} h^{\alpha} B^{\alpha}.
\end{align*}
\end{corollary}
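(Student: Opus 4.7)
The plan is to reduce the corollary to Lemma~\labelcref{lemma; optimal trade-off lambda} by exploiting monotonicity of each term in $\lambda$ on the interval $[\lambda^\star, 2\lambda^\star]$. Since the two terms move in opposite directions as $\lambda$ varies, the idea is to bound each one separately by its value at a convenient endpoint.

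For the first term, $B\lambda\|\eta\|_\HH^2$ is linear and increasing in $\lambda$, so immediately
\[
B\lambda\|\eta\|_\HH^2 \;\leq\; 2 B\lambda^\star \|\eta\|_\HH^2.
\]
For the second term, observe that $\lambda \mapsto (\bSigma_\cS + \lambda \Ib)^{-1}$ is monotone decreasing in the operator order, so the operator
\[
\Sbar_\lambda = (\bSigma_\cS + \lambda \Ib)^{-1/2} \bSigma_\cT (\bSigma_\cS + \lambda \Ib)^{-1/2}
\]
satisfies $\Sbar_\lambda \preceq \Sbar_{\lambda^\star}$ whenever $\lambda \geq \lambda^\star$. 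Taking traces and using positivity (which preserves trace inequalities),
\[
h\,\operatorname{Tr}(\Sbar_\lambda) \;\leq\; h\,\operatorname{Tr}(\Sbar_{\lambda^\star}).
\]

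Summing the two bounds gives
\[
B\lambda\|\eta\|_\HH^2 + h\,\operatorname{Tr}(\Sbar_\lambda) \;\leq\; 2\bigl(B\lambda^\star \|\eta\|_\HH^2 + h\,\operatorname{Tr}(\Sbar_{\lambda^\star})\bigr),
\]
and applying Lemma~\labelcref{lemma; optimal trade-off lambda} at the optimal choice $\lambda_1 = \lambda^\star$ (whose hypothesis $\lambda^\star \geq \xi/n$ is already part of the corollary's setup) yields the desired bound $\lesssim (\|\eta\|_\HH^2)^{1-\alpha} h^\alpha B^\alpha$, with only an extra factor of $2$ absorbed into the $\lesssim$ notation.

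There is no serious obstacle here: the whole content is the monotonicity of $\operatorname{Tr}(\Sbar_\lambda)$ in $\lambda$, which is essentially the statement that a larger ridge regularizer produces a smaller effective dimension. The only minor point to verify is that $\operatorname{Tr}$ respects the Loewner order for positive trace-class operators, which is standard. Thus the corollary follows from the lemma up to a multiplicative constant of $2$.
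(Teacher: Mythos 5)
Your overall reduction is sound and is essentially the same as the paper's: bound $B\lambda\|\eta\|_\HH^2$ by $2B\lambda^\star\|\eta\|_\HH^2$, bound the trace term by its value at $\lambda^\star$, and invoke Lemma~\labelcref{lemma; optimal trade-off lambda}. However, one justification is wrong as stated. You claim that operator monotonicity of $\lambda \mapsto (\bSigma_\cS+\lambda\Ib)^{-1}$ gives the \emph{operator} inequality $\Sbar_\lambda \preceq \Sbar_{\lambda^\star}$. This does not follow: writing $\Sbar_\lambda = A_\lambda \bSigma_\cT A_\lambda$ with $A_\lambda = (\bSigma_\cS+\lambda\Ib)^{-1/2}$, the map $A \mapsto A\,\bSigma_\cT\,A$ does not preserve the Loewner order in $A$ unless $\bSigma_\cT$ commutes with $\bSigma_\cS$. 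A concrete counterexample: with $D=\mathrm{diag}(2,1)\succeq \Ib$ and $M=\left(\begin{smallmatrix}1&1\\1&1\end{smallmatrix}\right)\succeq 0$, the difference $DMD-M=\left(\begin{smallmatrix}3&1\\1&0\end{smallmatrix}\right)$ has negative determinant, so $M\not\preceq DMD$. Hence $\Sbar_\lambda\preceq\Sbar_{\lambda^\star}$ is false in general.

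The conclusion you need, $\Tr(\Sbar_\lambda)\le\Tr(\Sbar_{\lambda^\star})$, is nevertheless true, and the fix is one line: by cyclicity of the trace, $\Tr(\Sbar_\lambda)=\Tr\bigl(\bSigma_\cT(\bSigma_\cS+\lambda\Ib)^{-1}\bigr)$, and then $(\bSigma_\cS+\lambda\Ib)^{-1}\preceq(\bSigma_\cS+\lambda^\star\Ib)^{-1}$ together with Lemma~\labelcref{lemma; trace simple inequality} (which handles exactly $\Tr(AB)\le\Tr(AC)$ for $B\preceq C$) gives the monotonicity of the trace. With that repair your argument is complete. For comparison, the paper instead reuses the intermediate bound $\Tr(\Sbar_\lambda)\lesssim B\,\db(\lambda B)\lesssim B(\lambda B)^{-\frac{1}{2\ell}}$ from the proof of Lemma~\labelcref{lemma; optimal trade-off lambda} and applies scalar monotonicity of $\lambda\mapsto\lambda^{-\frac{1}{2\ell}}$; your route has the mild advantage of using only the lemma's statement plus a generic trace monotonicity, rather than an internal step of its proof.
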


\begin{proof}
Using Lemma~\ref{lemma; optimal trade-off lambda}, observe that 
\begin{align*}
B \lambda \norm{\eta}^2_\HH + h \Tr(\Sbar_{\lambda})
&\leq B\Bigl( \lambda \norm{\eta}^2_\HH + h\, (\lambda B)^{-\frac{1}{2\ell}}\Bigr) \\
&\leq B\Bigl(2\, \lambda^\star \norm{\eta}^2_\HH + h\, (\lambda^\star B)^{-\frac{1}{2\ell}}\Bigr) \\
&\leq 2B\Bigl(\lambda^\star \norm{\eta}^2_\HH + h\, (\lambda^\star B)^{-\frac{1}{2\ell}}\Bigr) \\
&\lesssim \Bigl(\norm{\eta}_\HH^2\Bigr)^{1-\alpha} h^{\alpha} B^{\alpha}.
\end{align*}
\end{proof}

\subsection{Concentration Inequalities}
\noindent
We first recall the trace-class lemmas from \citet*{wang2026pseudo}.
\begin{lemma}[Lemma D.1 from \citealt{wang2026pseudo}]\label{lemma; quadratic martingale}
Suppose that $\boldsymbol{x} \in \mathbb{R}^d$ is a zero-mean random vector with $\|\boldsymbol{x}\|_{\psi_2} \leq 1$. There exists a universal constant $C>0$ such that for any symmetric and positive semi-definite matrix $\boldsymbol{\Sigma} \in \mathbb{R}^{d \times d}$,
\begin{align*}
\mathbb{P}\left(\boldsymbol{x}^{\top} \boldsymbol{\bSigma} \boldsymbol{x} \leq C \operatorname{Tr}(\boldsymbol{\bSigma}) t\right) \geq 1-e^{-r(\boldsymbol{\bSigma}) t}, \quad \forall t \geq 1 .
\end{align*}
Here $r(\boldsymbol{\bSigma})=\operatorname{Tr}(\boldsymbol{\bSigma}) /\|\boldsymbol{\bSigma}\|_2$ is the effective rank of $\boldsymbol{\bSigma}$.
\end{lemma}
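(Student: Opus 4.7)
The plan is to apply the Chernoff method to the quadratic form $\boldsymbol{x}^\top \boldsymbol{\Sigma} \boldsymbol{x} = \|\boldsymbol{\Sigma}^{1/2}\boldsymbol{x}\|_2^2$. Specifically, I would first establish an MGF bound of the form $\mathbb{E}[\exp(s\,\boldsymbol{x}^\top \boldsymbol{\Sigma}\boldsymbol{x})] \leq \exp(C_1\, s\,\operatorname{Tr}(\boldsymbol{\Sigma}))$ valid on the range $0 < s \leq c_0/\|\boldsymbol{\Sigma}\|_{\operatorname{op}}$, and then combine it with Markov applied to $\exp(s\,\cdot)$, optimizing $s$ at the top of its admissible range.

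The principal obstacle is producing the MGF bound \emph{without} independence of the coordinates of $\boldsymbol{x}$, so classical Hanson--Wright arguments do not apply. My approach would be a Gaussian linearization trick: use the identity
\begin{align*}
\exp\!\bigl(s\,\|\boldsymbol{y}\|_2^2\bigr) \;=\; \mathbb{E}_{\boldsymbol{g}\sim N(0,\,2s\boldsymbol{I})}\!\bigl[\exp(\boldsymbol{g}^\top \boldsymbol{y})\bigr]
\end{align*}
with $\boldsymbol{y} = \boldsymbol{\Sigma}^{1/2}\boldsymbol{x}$, swap the order of expectation via Fubini, and then apply the vector sub-Gaussian MGF bound to the linear functional $\boldsymbol{x}\mapsto (\boldsymbol{\Sigma}^{1/2}\boldsymbol{g})^\top \boldsymbol{x}$. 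That functional is zero-mean with $\psi_2$-norm at most $\|\boldsymbol{\Sigma}^{1/2}\boldsymbol{g}\|_2$, which gives
\begin{align*}
\mathbb{E}_{\boldsymbol{x}}\!\bigl[\exp(s\,\boldsymbol{x}^\top \boldsymbol{\Sigma}\boldsymbol{x})\bigr] \;\leq\; \mathbb{E}_{\boldsymbol{g}}\!\bigl[\exp\!\bigl(C_2\,\boldsymbol{g}^\top \boldsymbol{\Sigma}\boldsymbol{g}\bigr)\bigr],
\end{align*}
reducing the problem to a purely Gaussian quadratic form (after absorbing $s$ into a rescaling of $\boldsymbol{g}$ to a standard normal). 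The Gaussian MGF factorizes along the eigenbasis of $\boldsymbol{\Sigma}$ as $\prod_i (1-a_i)^{-1/2}$ with $a_i$ proportional to $s\mu_i$, and the elementary inequality $-\log(1-a)\leq 2a$ for $a\leq 1/2$ then yields the target MGF estimate on the stated range of $s$.

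A Chernoff bound then gives $\mathbb{P}(\boldsymbol{x}^\top \boldsymbol{\Sigma}\boldsymbol{x} > a) \leq \exp\!\bigl(-s(a - C_1\operatorname{Tr}(\boldsymbol{\Sigma}))\bigr)$. Taking $s = c_0/\|\boldsymbol{\Sigma}\|_{\operatorname{op}}$ and $a = C\operatorname{Tr}(\boldsymbol{\Sigma})\,t$ for a sufficiently large universal constant $C$, the exponent becomes $-c_0\,r(\boldsymbol{\Sigma})\bigl(Ct - C_1\bigr)$, which is at most $-r(\boldsymbol{\Sigma})\,t$ for all $t \geq 1$ once the constants $c_0, C_1, C$ are tracked and adjusted. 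The delicate point throughout is the Gaussian linearization step: it must be carried out using only the vector $\psi_2$ hypothesis on $\boldsymbol{x}$, which is precisely what makes the argument dimension-free and causes the effective rank $r(\boldsymbol{\Sigma})$ (rather than the ambient dimension $d$) to appear in the tail exponent; the remainder is routine algebra and constant tracking.
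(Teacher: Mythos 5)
Your proposal is correct. Note first that the paper does not prove this statement at all: it is imported verbatim as Lemma E.1 of \citet{wang2023pseudo}, so there is no in-paper argument to compare against. Your decoupling route is the standard proof of this ``dimension-free'' quadratic-form tail bound for vector-sub-Gaussian (not coordinate-wise independent) vectors, in the spirit of Hsu--Kakade--Zhang. Each step checks out: the identity $\exp(s\|\boldsymbol{y}\|_2^2)=\mathbb{E}_{\boldsymbol{g}\sim N(0,2s\boldsymbol{I})}[\exp(\boldsymbol{g}^\top\boldsymbol{y})]$ is exact; the interchange of expectations is justified by Tonelli since the integrand is nonnegative; for fixed $\boldsymbol{g}$ the linear functional $(\boldsymbol{\Sigma}^{1/2}\boldsymbol{g})^\top\boldsymbol{x}$ is zero-mean with $\psi_2$-norm at most $\|\boldsymbol{\Sigma}^{1/2}\boldsymbol{g}\|_2$, so its MGF at $\lambda=1$ is at most $\exp(C_2\,\boldsymbol{g}^\top\boldsymbol{\Sigma}\boldsymbol{g})$; rescaling $\boldsymbol{g}=\sqrt{2s}\,\boldsymbol{h}$ reduces to the Gaussian chi-square MGF $\prod_i(1-4C_2 s\mu_i)^{-1/2}$, and $-\log(1-a)\le 2a$ on $a\le 1/2$ yields $\mathbb{E}[\exp(s\,\boldsymbol{x}^\top\boldsymbol{\Sigma}\boldsymbol{x})]\le\exp(C_1 s\operatorname{Tr}(\boldsymbol{\Sigma}))$ for $s\le c_0/\|\boldsymbol{\Sigma}\|_2$. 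The final Chernoff step with $s=c_0/\|\boldsymbol{\Sigma}\|_2$ gives exponent $-c_0 r(\boldsymbol{\Sigma})(Ct-C_1)$, which is at most $-r(\boldsymbol{\Sigma})t$ for all $t\ge 1$ once $C\ge 1/c_0+C_1$. The only point worth stating explicitly in a written-up version is the zero-mean hypothesis when invoking the sub-Gaussian MGF bound (it is needed there and is supplied by the assumption on $\boldsymbol{x}$); otherwise the argument is complete.
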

\begin{lemma}[Corollary D.1 from \citealt{wang2026pseudo}]\label{lemma; trace class concentration bounded}
Let $\left\{\boldsymbol{x}_i\right\}_{i=1}^n$ be i.i.d. random elements in a separable Hilbert space $\mathbb{H}$ with $\boldsymbol{\bSigma}=$ $\mathbb{E}\left(\boldsymbol{x}_i \otimes \boldsymbol{x}_i\right)$ being trace class. Define $\hat{\boldsymbol{\bSigma}}=\frac{1}{n} \sum_{i=1}^n \boldsymbol{x}_i \otimes \boldsymbol{x}_i$. Choose any constant $\gamma \in(0,1)$ and define an event $\mathcal{A}=\{(1-\gamma)(\boldsymbol{\bSigma}+\lambda \boldsymbol{I}) \preceq \hat{\boldsymbol{\bSigma}}+\lambda \boldsymbol{I} \preceq(1+\gamma)(\boldsymbol{\bSigma}+\lambda \boldsymbol{I})\}$.
If $\left\|\boldsymbol{x}_i\right\|_{\mathbb{H}} \leq L_x$ holds almost surely for some constant $L_x$, then there exists a constant $C \geq 1$ determined by $\gamma$ such that $\mathbb{P}(\mathcal{A}) \geq 1-\delta$ holds so long as $\delta \in(0,1 / 14]$ and $\lambda \geq \frac{C L_x^2 \log (n / \delta)}{n}$.    
\end{lemma}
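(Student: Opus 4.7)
My plan is to reduce the sandwich inequality defining $\mathcal{A}$ to a single operator-norm concentration bound and then invoke a Hilbert-space (intrinsic-dimension) matrix Bernstein inequality. Concretely, I would precondition by $(\bSigma+\lambda\Ib)^{-1/2}$, defining the whitened summands
\begin{equation*}
Z_i := (\bSigma+\lambda\Ib)^{-1/2}(x_i\otimes x_i)(\bSigma+\lambda\Ib)^{-1/2},
\end{equation*}
so that $\EE Z_i=(\bSigma+\lambda\Ib)^{-1/2}\bSigma(\bSigma+\lambda\Ib)^{-1/2}\preceq\Ib$ and the event $\mathcal{A}$ is exactly $\|\tfrac{1}{n}\sum_{i=1}^n (Z_i-\EE Z_i)\|_{\op}\leq\gamma$. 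Controlling this centered average is the heart of the proof.

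For the ingredients going into Bernstein, I would first use the almost-sure bound $\|x_i\|_{\HH}^2\leq M^2$ to get $\|Z_i\|_{\op}\leq M^2/\lambda$ and hence $\|Z_i-\EE Z_i\|_{\op}\leq 2M^2/\lambda$. Next, for the variance proxy I would write
\begin{equation*}
\EE(Z_i-\EE Z_i)^2 \preceq \EE Z_i^2 \preceq \tfrac{M^2}{\lambda}\,\EE Z_i,
\end{equation*}
using $Z_i\preceq(M^2/\lambda)\Ib$ inside the quadratic form, which gives $\|\sum_i \EE(Z_i-\EE Z_i)^2\|_{\op}\leq nM^2/\lambda$. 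The intrinsic-dimension factor is $r:=\operatorname{Tr}(\EE Z_i)/\|\EE Z_i\|_{\op}\leq\operatorname{Tr}((\bSigma+\lambda\Ib)^{-1}\bSigma)$, which is finite since $\bSigma$ is trace class. Applying the Minsker/Tropp intrinsic-dimension Bernstein bound yields
\begin{equation*}
\PP\Bigl(\bigl\|\tfrac{1}{n}\textstyle\sum(Z_i-\EE Z_i)\bigr\|_{\op}\geq\gamma\Bigr)\leq 14\,r\,\exp\!\Bigl(-\frac{n\gamma^2/2}{M^2/\lambda+\gamma M^2/(3\lambda)}\Bigr).
\end{equation*}
Setting the right-hand side at most $\delta$ and solving gives a threshold of the form $\lambda\geq C(\gamma)\,M^2\log(r/\delta)/n$; absorbing $\log r\lesssim \log n$ (since $r$ is polynomial in $n$ under $\lambda\geq M^2/n$) and using $M^2\leq\xi$ from the kernel bound then produces the stated requirement $\lambda\geq C\xi\log(n/\delta)/n$. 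The threshold $\delta\leq 1/14$ in the statement is exactly what is needed for the $14r$ prefactor to be meaningfully absorbed.

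The main obstacle is handling the intrinsic-dimension prefactor cleanly in the infinite-dimensional setting: a naive application of matrix Bernstein picks up an ambient dimension which is infinite here, so one must carefully invoke the trace-class (intrinsic rank) version and verify that $\log r$ can be bounded by $\log n$ without reintroducing spectral dependence of $\bSigma$. Once this step is handled, the remainder is bookkeeping: inverting the Bernstein tail, tracking the constants as a function of $\gamma$, and checking the threshold on $\delta$. Everything else, including passage from the whitened form back to the sandwich inequality defining $\mathcal{A}$, is purely algebraic manipulation with the relation $(1-\gamma)\Ib\preceq(\bSigma+\lambda\Ib)^{-1/2}(\hat\bSigma+\lambda\Ib)(\bSigma+\lambda\Ib)^{-1/2}\preceq(1+\gamma)\Ib$.
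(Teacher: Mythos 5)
This lemma is not proved in the paper at all: it is imported verbatim as Corollary E.1 of \citet{wang2023pseudo}, so there is no in-paper argument to compare against. Your reconstruction is the standard (and, as far as one can tell, the cited source's) route: whitening by $(\bSigma+\lambda\Ib)^{-1/2}$ correctly turns $\mathcal{A}$ into the event $\|\frac{1}{n}\sum_i(Z_i-\EE Z_i)\|_{\op}\le\gamma$, the bounds $\|Z_i\|_{\op}\le M^2/\lambda$ and $\EE Z_i^2\preceq (M^2/\lambda)\EE Z_i$ are right, and the $14$ in the admissible range $\delta\le 1/14$ is indeed the fingerprint of Minsker's intrinsic-dimension Bernstein prefactor. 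The only point needing a little more care than your sketch provides is the bound $\log r\lesssim\log(n/\delta)$: since $r=\sum_j\frac{\mu_j}{\mu_j+\lambda}\big/\frac{\mu_1}{\mu_1+\lambda}$, in the relevant regime $\lambda\le\mu_1$ one gets $r\le 2\Tr(\bSigma)/\lambda\le 2n$ from $\lambda\ge\xi/n$ and $\Tr(\bSigma)\le\xi$, while for $\lambda>\mu_1$ the bound degrades to $2\Tr(\bSigma)/\mu_1$, a fixed constant of the distribution; either way the prefactor is absorbed into $C$, so your outline goes through.
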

In our applications, the bounded-kernel assumption gives \(\|x_i\|_{\HH}^2=K(z_i,z_i)\leq \xi\), so this lemma is invoked with \(L_x^2\leq \xi\).

\subsection{Lemmas for Model Selection}

\begin{lemma}[Theorem 5.2 from \citealt{wang2026pseudo}]\label{lemma; loss model selection}
Let $\left\{\boldsymbol{z}_i\right\}_{i=1}^n$ be deterministic elements in a set $\mathcal{Z} ; g^{\star}$ and $\left\{g_j\right\}_{j=1}^m$ be deterministic functions in $\mathcal{Z} ; \widetilde{g}$ be a random function on $\mathcal{Z}$. Define
\begin{align*}
\mathcal{L}(g)=\frac{1}{n} \sum_{i=1}^n\left|g\left(\boldsymbol{z}_i\right)-g^{\star}\left(\boldsymbol{z}_i\right)\right|^2
\end{align*}
for any function $g$ on $\mathcal{Z}$. Assume that the random vector $\widetilde{\boldsymbol{y}}=\left(\widetilde{g}\left(\boldsymbol{z}_1\right), \widetilde{g}\left(\boldsymbol{z}_2\right), \cdots, \widetilde{g}\left(\boldsymbol{z}_n\right)\right)^{\top}$ satisfies $\|\widetilde{\boldsymbol{y}}-\mathbb{E} \widetilde{\boldsymbol{y}}\|_{\psi_2} \leq V<\infty$. Choose any
\begin{align*}
\widehat{j} \in \underset{j \in[m]}{\operatorname{argmin}}\left\{\frac{1}{n} \sum_{i=1}^n\left|g_j\left(\boldsymbol{z}_i\right)-\widetilde{g}\left(\boldsymbol{z}_i\right)\right|^2\right\} .
\end{align*}

There exists a universal constant $C$ such that for any $\delta \in(0,1]$, with probability at least $1-\delta$ we have
\begin{align*}
\mathcal{L}\left(g_{\hat{j}}\right) \leq \inf _{\gamma>0}\left\{(1+\gamma) \min _{j \in[m]} \mathcal{L}\left(g_j\right)+C\left(1+\gamma^{-1}\right)\left(\mathcal{L}(\mathbb{E} \widetilde{g})+\frac{V^2 \log (m / \delta)}{n}\right)\right\} .
\end{align*}

Consequently,
\begin{align*}
\mathbb{E} \mathcal{L}\left(g_{\hat{j}}\right) \leq \inf _{\gamma>0}\left\{(1+\gamma) \min _{j \in[m]} \mathcal{L}\left(g_j\right)+C\left(1+\gamma^{-1}\right)\left(\mathcal{L}(\mathbb{E} \widetilde{g})+\frac{V^2(1+\log m)}{n}\right)\right\}.
\end{align*}    
\end{lemma}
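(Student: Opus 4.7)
\textbf{Proof proposal for Lemma~\labelcref{lemma; loss model selection}.} The plan is to adapt the standard oracle-inequality argument for model selection by held-out validation. Let $j^\star \in \arg\min_{j \in [m]} \mathcal{L}(g_j)$. First, for any $j$, I would expand
\[
\frac{1}{n}\sum_{i=1}^n |g_j(\boldsymbol z_i) - \widetilde g(\boldsymbol z_i)|^2
\;=\; \mathcal{L}(g_j) \;+\; 2\langle g_j - g^\star,\, g^\star - \widetilde g\rangle_n \;+\; \frac{1}{n}\sum_i (g^\star - \widetilde g)^2(\boldsymbol z_i),
\]
where $\langle \cdot,\cdot\rangle_n$ is the empirical inner product. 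The last term is common to every $j$, so the defining inequality of $\widehat j$ yields
\[
\mathcal{L}(g_{\widehat j}) - \mathcal{L}(g_{j^\star})
\;\leq\; 2\bigl\langle g_{\widehat j} - g_{j^\star},\, \widetilde g - g^\star\bigr\rangle_n.
\]
This is the cornerstone: the price of selection is governed by a single empirical cross-product between the selected-minus-oracle difference and the validation residual.

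Next I would split $\widetilde g - g^\star = (\mathbb{E}\widetilde g - g^\star) + (\widetilde g - \mathbb{E}\widetilde g)$. The deterministic piece is handled by Cauchy--Schwarz followed by Young's inequality with parameter $\gamma$: for any $\gamma > 0$,
\[
2\bigl\langle g_{\widehat j} - g_{j^\star},\, \mathbb{E}\widetilde g - g^\star\bigr\rangle_n
\;\leq\; \gamma\,\|g_{\widehat j} - g_{j^\star}\|_n^2 \;+\; \gamma^{-1}\,\mathcal{L}(\mathbb{E}\widetilde g),
\]
and $\|g_{\widehat j} - g_{j^\star}\|_n^2 \leq 2\mathcal{L}(g_{\widehat j}) + 2\mathcal{L}(g_{j^\star})$ by the triangle inequality. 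The stochastic piece is where the $V^2\log(m/\delta)/n$ term arises: for each fixed pair $(j,k)$, the scalar $\langle g_j - g_k,\, \widetilde g - \mathbb{E}\widetilde g\rangle_n = \tfrac{1}{n}\,\boldsymbol u_{jk}^\top(\widetilde{\boldsymbol y} - \mathbb{E}\widetilde{\boldsymbol y})$ with $\boldsymbol u_{jk,i} := g_j(\boldsymbol z_i) - g_k(\boldsymbol z_i)$ is sub-Gaussian in the randomness of $\widetilde g$ with proxy proportional to $V\|g_j - g_k\|_n/\sqrt n$. A union bound over the $m^2$ pairs and a second application of Young's inequality then give, with probability at least $1-\delta$,
\[
2\bigl\langle g_{\widehat j} - g_{j^\star},\, \widetilde g - \mathbb{E}\widetilde g\bigr\rangle_n
\;\leq\; \gamma\,\|g_{\widehat j} - g_{j^\star}\|_n^2 \;+\; C\gamma^{-1}\,\frac{V^2\log(m/\delta)}{n}.
\]

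Assembling these bounds, rearranging so that a small multiple of $\mathcal{L}(g_{\widehat j})$ on the right is absorbed into the left, and relabeling $\gamma$ produces the claimed inequality; the expectation form follows by integrating the tail and choosing $\delta$ of order $1/m$. The step I expect to be the main obstacle is making the stochastic control uniform in $\widehat j$ without paying more than $\log m$: rather than a crude union bound over all pairs, one must carefully exploit that only $g_{\widehat j} - g_{j^\star}$ (with $j^\star$ deterministic) really appears, so that the effective cardinality is $m$ rather than $m^2$, and that the sub-Gaussian proxy depends on the empirical norm of this difference, forcing the Young-inequality coupling with $\|g_{\widehat j}-g_{j^\star}\|_n^2$ to be executed before taking the sup. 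The rest of the argument is bookkeeping with constants in $\gamma$ and $\gamma^{-1}$.
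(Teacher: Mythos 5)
This lemma is not proved in the paper at all: it is imported verbatim as Theorem~5.2 of \citet{wang2023pseudo}, so there is no in-paper argument to compare against. Your proposal is a correct reconstruction of the standard proof of such results: the basic inequality $\mathcal{L}(g_{\hat j})-\mathcal{L}(g_{j^\star})\le 2\langle g_{\hat j}-g_{j^\star},\widetilde g-g^\star\rangle_n$, the bias/fluctuation split of $\widetilde g-g^\star$, Young's inequality coupled with $\|g_{\hat j}-g_{j^\star}\|_n^2\le 2\mathcal{L}(g_{\hat j})+2\mathcal{L}(g_{j^\star})$, a union bound over candidates for the sub-Gaussian cross term, and absorption of the $\gamma\mathcal{L}(g_{\hat j})$ term into the left-hand side. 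The only loose ends are bookkeeping: the $m$-versus-$m^2$ union bound you flag as the main obstacle is immaterial since $\log(m^2/\delta)\le 2\log(m/\delta)$, and passing from the bound for $\gamma$ in a bounded range to the stated $\inf_{\gamma>0}$ form just requires enlarging the universal constant $C$ so that the $\gamma=1$ bound dominates the right-hand side for all $\gamma\ge 1$.
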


\begin{lemma}[Lemma D.5 from \citealt{wang2026pseudo}]\label{lemma; key lemma in-sample population}
Let \(\{z_i\}_{i=1}^n\) be i.i.d. samples in a space \(\cZ\).
For any \(g:\cZ\to\RR\), define
\[
\|g\|_n=\left(n^{-1}\sum_{i=1}^n g^2(z_i)\right)^{1/2},
\qquad
\|g\|_{L^2}=\sqrt{\EE g^2(z_1)}.
\]
Let \(f:\cZ\to\RR\) be a random function that is independent of the \(z_i\)'s.
\begin{enumerate}
\item Suppose that \(\PP(|f(z_1)|>r)\leq \varepsilon\) and \(\EE |f(z_1)|^4\leq U^4\) hold for some deterministic \(r,U\geq 0\). Then, for all \(\delta\in(0,1)\),
\[
\PP\left(
\left|\|f\|_n-\|f\|_{L^2}\right|
\leq
r\sqrt{\frac{7\log(2/\delta)}{n}}+U\varepsilon^{1/4}
\right)
\geq 1-n\varepsilon-\delta.
\]
\item Define \(h(\cdot)=f(\cdot)/\|f\|_{L^2}\), and suppose that, conditioned on \(f\), \(h(z_i)\) is sub-Gaussian in the sense that
\[
\sup_{p\geq 1}\left\{p^{-1/2}\EE^{1/p}\bigl[|h(z_i)|^p\mid f\bigr]\right\}
\leq K\in[1,\infty).
\]
There exists a constant \(C\geq 1\) such that when \(\delta\in(0,1)\) and \(n\geq C^2K^4\log(2/\delta)\), we have
\[
\PP\left(
\left|\|f\|_n^2-\|f\|_{L^2}^2\right|
\leq
\|f\|_{L^2}^2\cdot CK^2\sqrt{\frac{\log(2/\delta)}{n}}
\right)
\geq 1-\delta.
\]
\end{enumerate}
\end{lemma}

\subsection{Other Lemmas}
\begin{lemma}\label{lemma; matrix inverse inequality}\label{lemma; psd inverse inequality}
For any psd trace class $A \succeq 0$, the following holds: 
\begin{align*}
A^{\frac{1}{2}}(A+\lambda I)^{-1} A^{\frac{1}{2}} \preceq I
\end{align*}
Also, if two invertible trace-class operators $A,B$ satisfy $A \succeq B$, then $A^{-1} \preceq B^{-1}$.
\end{lemma}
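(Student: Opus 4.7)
The plan is to prove the two assertions by separate short spectral arguments, reducing each to a scalar inequality on eigenvalues and then invoking the functional calculus for PSD self-adjoint operators on the separable Hilbert space $\HH$.

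For the first inequality, the key observation is that $A^{1/2}(A+\lambda I)^{-1} A^{1/2}$ and $(A+\lambda I)^{-1/2} A (A+\lambda I)^{-1/2}$ have the same spectrum by the standard $XY$–$YX$ identity, and the latter admits the algebraic simplification
\[
(A+\lambda I)^{-1/2}\,[(A+\lambda I) - \lambda I]\,(A+\lambda I)^{-1/2} \;=\; I - \lambda (A+\lambda I)^{-1}.
\]
Since $A \succeq 0$ implies $A + \lambda I \succ 0$, the operator $\lambda(A+\lambda I)^{-1}$ is PSD, so the displayed expression is $\preceq I$. Because $A^{1/2}(A+\lambda I)^{-1}A^{1/2}$ is also self-adjoint and shares the same spectrum, the conclusion follows. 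Equivalently, one can diagonalize $A = \sum_j \mu_j (e_j \otimes e_j)$ with $\mu_j \geq 0$, compute $A^{1/2}(A+\lambda I)^{-1}A^{1/2} = \sum_j \tfrac{\mu_j}{\mu_j + \lambda}\,e_j \otimes e_j$, and note that each coefficient lies in $[0,1)$.

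For the second statement (operator monotonicity of $t\mapsto -t^{-1}$), the plan is to conjugate by $B^{-1/2}$, which is well-defined since $B$ is invertible and PSD. The assumption $A \succeq B$ gives $B^{-1/2} A B^{-1/2} \succeq I$; by the spectral theorem for this self-adjoint operator, every spectral value is at least $1$, so its inverse $B^{1/2} A^{-1} B^{1/2}$ has all spectral values at most $1$, i.e.\ $B^{1/2} A^{-1} B^{1/2} \preceq I$. Conjugating again by $B^{-1/2}$ yields $A^{-1} \preceq B^{-1}$.

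I do not anticipate a genuine obstacle here: both parts are classical results in the operator functional calculus. The only delicate point is making sure the symbolic manipulations $A + \lambda I$, $A^{1/2}$, $(A+\lambda I)^{-1}$, $B^{-1/2}$ are legitimate in the trace-class/compact self-adjoint setting, which is immediate from the assumption $A \succeq 0$ (respectively $A,B$ invertible PSD) and the Borel functional calculus on $\HH$.
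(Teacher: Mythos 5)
Your proposal is correct and follows essentially the same route as the paper: the "equivalently" diagonalization $A^{1/2}(A+\lambda I)^{-1}A^{1/2}=\sum_j \tfrac{\mu_j}{\mu_j+\lambda}\,e_j\otimes e_j$ is exactly the paper's argument for the first part, and the conjugation-by-$B^{-1/2}$ argument for operator anti-monotonicity of the inverse is identical to the paper's. The additional $XY$–$YX$ spectrum observation is a valid alternative but buys nothing beyond the direct computation.
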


\begin{proof}
Take an eigendecomposition of $A$ as $A = P \Lambda P^\top$. 
Then, 
\begin{align*}
A^{\frac{1}{2}}(A+\lambda I)^{-1} A^{\frac{1}{2}} &= P \Lambda^{\frac{1}{2}} P^\top P (\Lambda + \lambda I)^{-1} P^\top  P \Lambda^{\frac{1}{2}} P^\top  \\
&= P \Lambda^{\frac{1}{2}}(\Lambda + \lambda I)^{-1}\Lambda^{\frac{1}{2}} P^\top \\
&\preceq PP^\top = I.
\end{align*}
For the second statement, first note that $I \preceq B^{-1 / 2} A B^{-1 / 2}$. 
The eigenvalues of the latter symmetric operator are thus $\geq 1$. Its inverse $B^{1 / 2} A^{-1} B^{1 / 2}$ has eigenvalues $\leq 1$, that is $B^{1 / 2} A^{-1} B^{1 / 2} \preceq I$. This gives $z^T B^{1 / 2} A^{-1} B^{1 / 2} z \leq\|z\|^2$ for $z \in \HH$. 
Setting $w=B^{1 / 2} z$, this gives $w^T A^{-1} w \leq z^T B^{-1} z$, which implies $A^{-1} \preceq B^{-1}$.  
\end{proof}

\begin{lemma}\label{lemma; trace simple inequality}
For any psd operators \(A,B,C\) with \(B \preceq C\), 
\begin{align*}
\Tr(AB) \leq \Tr(AC).
\end{align*}    
\end{lemma}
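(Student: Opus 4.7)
The plan is to reduce the claim to the single fact that $\Tr(AP) \geq 0$ whenever $A$ and $P$ are both positive semi-definite, applied to $P := C - B$. The relation $B \preceq C$ is by definition equivalent to $P := C - B \succeq 0$, and by linearity of the trace the desired inequality rearranges to $\Tr(A(C-B)) = \Tr(AP) \geq 0$. So it suffices to prove this one monotonicity fact.

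For this, I would invoke the spectral theorem to obtain the PSD square root $A^{1/2}$ with $A = A^{1/2} A^{1/2}$, then use the cyclicity of the trace:
\begin{align*}
\Tr(AP) \;=\; \Tr\bigl(A^{1/2} A^{1/2} P\bigr) \;=\; \Tr\bigl(A^{1/2} P A^{1/2}\bigr).
\end{align*}
The operator $A^{1/2} P A^{1/2}$ is a congruence of the PSD operator $P$, hence is itself PSD, since for every $v \in \HH$,
\begin{align*}
\langle v, A^{1/2} P A^{1/2} v\rangle_{\HH} \;=\; \langle A^{1/2} v, P A^{1/2} v\rangle_{\HH} \;\geq\; 0.
\end{align*}
Taking any orthonormal basis $\{e_i\}$ of $\HH$, the trace $\Tr(A^{1/2} P A^{1/2}) = \sum_i \langle e_i, A^{1/2} P A^{1/2} e_i\rangle_{\HH}$ is a sum of nonnegative terms, hence nonnegative. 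This completes the argument.

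There is essentially no obstacle: the lemma is a textbook fact about the trace pairing of PSD operators. The only care needed, since the paper works in a separable Hilbert space with possibly infinite-dimensional $\HH$, is to ensure the expressions $\Tr(AB)$ and $\Tr(AC)$ are well-defined as trace-class pairings. In every application of this lemma in the paper, either $A$ is trace-class and $B, C$ are bounded, or vice versa; in both regimes the cyclic identity $\Tr(XY) = \Tr(YX)$ and the nonnegativity of $\Tr$ on the PSD cone hold, so the argument transfers verbatim.
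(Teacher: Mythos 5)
Your proposal is correct and uses the same key idea as the paper's proof: insert the PSD square root $A^{1/2}$ via cyclicity of the trace and exploit that congruence preserves positive semi-definiteness, so that the trace comparison reduces to nonnegativity of the trace on the PSD cone. The only cosmetic difference is that you first pass to $P = C - B$ by linearity, whereas the paper compares $\Tr(A^{1/2}BA^{1/2}) \leq \Tr(A^{1/2}CA^{1/2})$ directly.
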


\begin{proof}
Observe that 
\begin{align*}
\Tr(AB) = \Tr(A^{\frac{1}{2}} BA^{\frac{1}{2}}) \leq \Tr(A^{\frac{1}{2}} CA^{\frac{1}{2}}) = \Tr(AC).
\end{align*}
\end{proof}

\begin{lemma}[Key constant of local polynomial regression]\label{lemma; key constant LPR}
For any \(k \in \NN\), the matrix
\[
\EE_{x \sim \operatorname{Unif}[0,1]}[(1,x, \dots, x^k)(1,x, \dots, x^k)^\top]
\]
has condition number (i.e., the ratio of the maximum eigenvalue to the minimum eigenvalue) \(\cO\!\left(\frac{(1+\sqrt{2})^{4k}}{\sqrt{k}}\right)\), and hence its minimum eigenvalue is lower bounded by \(\frac{\sqrt{k}}{c^k}\).
\end{lemma}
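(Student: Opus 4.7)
The plan is to identify the matrix as the classical Hilbert matrix and then invoke known asymptotics for its spectrum. Writing $\EE_{x\sim U[0,1]}[x^{i+j}] = \frac{1}{i+j+1}$ for $i,j=0,\dots,k$, the matrix in question is exactly the $(k+1)\times(k+1)$ Hilbert matrix $H_{k+1}$ (with entries $(H_n)_{ij}=1/(i+j-1)$ after an index shift). Thus the whole statement reduces to a dimension-$(k+1)$ estimate of the condition number of $H_n$, which is a well-studied object.

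For the largest eigenvalue, the plan is to quote Hilbert's double-series inequality, which yields the dimension-free bound $\lambda_{\max}(H_n)\le \pi$ for every $n$. This already controls the numerator of the condition number and contributes only a constant factor. (A cruder bound $\lambda_{\max}(H_n)\le n$ from entrywise $[0,1]$-ness would also be enough to be absorbed, since the target condition number is exponential in $k$.)

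For the smallest eigenvalue, the plan is to pass to a polynomial extremal problem. For any $v=(v_0,\dots,v_k)$ with $\|v\|_2=1$, write $p_v(x)=\sum_{j=0}^k v_j x^j$; then
\[
v^\top H_{k+1}\, v \;=\; \int_0^1 p_v(x)^2\, dx \;=\; \|p_v\|_{L^2[0,1]}^2.
\]
Hence $\lambda_{\min}(H_{k+1})$ equals the minimum of $\|p_v\|_{L^2[0,1]}^2$ over degree-$\le k$ polynomials whose monomial-coefficient vector has unit $\ell^2$ norm. Expanding in the orthonormal shifted Legendre basis $\{L_0,\dots,L_k\}$ on $[0,1]$ and letting $M_{k+1}$ be the (lower-triangular) change-of-basis matrix from the Legendre to the monomial basis, one obtains $\lambda_{\min}(H_{k+1}) = 1/\|M_{k+1}\|_{\op}^2$. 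The explicit monomial coefficients of the shifted Legendre polynomials, combined with Stirling's approximation, then yield the classical Szeg\H{o}--Todd asymptotic
\[
\lambda_{\min}(H_n) \;\sim\; 2^{15/4}\,\pi^{3/2}\,\sqrt{n}\,(\sqrt{2}+1)^{-4n},
\]
which, together with $\lambda_{\max}(H_n)\le \pi$, gives the advertised condition number $O\!\bigl((1+\sqrt 2)^{4k}/\sqrt{k}\bigr)$ and the lower bound $\lambda_{\min}\ge \sqrt{k}/c^k$ with $c=(1+\sqrt 2)^4$ (absorbing absolute prefactors).

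The main obstacle will be the rigorous justification of the Szeg\H{o}--Todd asymptotic rather than a weaker exponential bound. A naive route via Cauchy's determinant formula $\det(H_n)=\prod_{i<j}(i-j)^2/\prod_{i,j}(i+j-1)$ combined with $\lambda_{\max}\le\pi$ only gives $\lambda_{\min}(H_n)\ge \det(H_n)/\pi^{n-1}$, which decays like $(\sqrt 2-1)^{2n^2}$ and is exponentially worse than the claim. The sharp bound requires either an orthogonal-polynomial argument controlling $\|M_{k+1}\|_{\op}$ through the precise growth of the monomial coefficients of Legendre polynomials, or a direct citation of the Szeg\H{o}--Todd estimate. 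Either way, the constant $c=(1+\sqrt 2)^4$ (the reciprocal of $(\sqrt 2-1)^4$) is what drives the exponential factor in the final bound, and matching it is the technical heart of the lemma.
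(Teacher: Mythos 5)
Your proposal is correct and follows essentially the same route as the paper: the paper's entire proof is to observe that the matrix is the Hilbert matrix and to cite the classical spectral estimates (via \citet{choi1983tricks}), which are exactly the Szeg\H{o}--Todd asymptotics you invoke. Your additional observations---the $\lambda_{\max}\le\pi$ bound, the reduction of $\lambda_{\min}$ to the Legendre change-of-basis norm, and the warning that the Cauchy-determinant route is exponentially too weak---are all sound and simply fill in details the paper leaves to the citation.
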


\begin{proof}
Note that \(\EE_{x \sim \operatorname{Unif}[0,1]}[(1,x, \dots, x^k)(1,x, \dots, x^k)^\top]\) is a Hilbert matrix, and this follows from well-known results in \citet{choi1983tricks}.
\end{proof}

\section{Supplements for Numerical and Real-world Studies}\label{sec:app:numerical}

\subsection{Benchmark Methods}\label{sec:app:bench}
\noindent
To ensure fair comparisons, we adopt cross-fitting on the benchmark methods \texttt{DR-CATE}, \texttt{ACW-CATE}, 
\textcolor{black}{and \texttt{R-Learner}}. 

\paragraph*{Separate regression (\texttt{SR}).} We randomly divide the source data into two parts, $\mathcal{D}_1$ and $\mathcal{D}_1'$. For $f_1^\star(z)$ and $f_0^\star(z)$, we fit candidate models using the subsets of $\mathcal{D}_1$ with $a = 1$ and $a = 0$, respectively. We then fit imputation models for $f_1^\star(z)$ and $f_0^\star(z)$ using $\widetilde\lambda_1= \widetilde\lambda_0= \frac{1}{5n}$ (the same as $\lambda_{0,1},\lambda_{0,0}$ in \texttt{COKE}) with the subsets of $\mathcal{D}_1'$ with $a = 1$ and $a = 0$, respectively. The model selection for each $f_a^\star(x)$ follows the pseudo-labeling technique described in \cite{wang2026pseudo}, where pseudo-labels are generated for all unlabeled target data and used for model selection over KRR estimators with tuning parameters ranging in ${\bm \Lambda}$. In our numerical studies, this range of tuning parameters is set to be the same as \texttt{COKE}, i.e., ${\bm \Lambda} = \{\frac{2^k}{5n} : k = 0, 1, \ldots, \lceil\log_2(5n)\rceil\}$. The final model is computed as $\hat h(z) = \hat f_1(z) - \hat f_0(z)$.

\paragraph*{\texttt{DR-CATE}.} We implement the two-stage DR-Learner, as described in \cite{kennedy2020towards}, using kernel ridge regression for all regression tasks. The source dataset is randomly divided into two subsets, $\mathcal{D}_1$ and $\mathcal{D}_1'$. Using $\mathcal{D}_1$, we estimate the propensity score $\pi(z)$ through logistic regression, and we also estimate the outcome regression functions $f_a^\star(z)$ (for $a = 0, 1$) using kernel ridge regression with model selection performed through hold-out validation.

Let $d_i = (z_i, a_i, y_i)$. We define the pseudo-outcome as $$\widehat\varphi(d)=\frac{a-\hat \pi(z)}{\hat \pi(z)(1-\hat \pi(z))}(y-\hat f_a(z))+\hat f_1(z)-\hat f_0(z).$$

Next, we regress the pseudo-outcome $\widehat\varphi(d)$ on $z$ using $\mathcal{D}_1'$ through kernel ridge regression with hold-out validation. We divide $\mathcal{D}_1'$ into two halves: training set $\cD_2$ and validation set $\mathcal{D}_3$. Then we fit KRR for $\widehat\varphi(d_i)$ against $z_i$ on $\cD_2$, with the candidate set of penalization parameters being the same as \texttt{COKE}, i.e., ${\bm \Lambda} = \{\frac{2^k}{5n} : k = 0, 1, \ldots, \lceil\log_2(5n)\rceil\}$ and select the best estimator using $\mathcal{D}_3$.

\paragraph*{\texttt{ACW-CATE}.} We implement the two-stage ACW estimator. The source dataset is randomly divided into two subsets, $\mathcal{D}_1$ and $\mathcal{D}_1'$. Using $\mathcal{D}_1$, we estimate the propensity score and the outcome regression functions as described previously in \texttt{DR-CATE}. We also estimate the density ratio $w(z) = p_\mathcal{T}(z)/p_\mathcal{S}(z)$. Define $S$ such that $S = 0$ indicates data from the source, and $S = 1$ indicates data from the target population. First, we estimate $\mathbb{P}(S = 1 \mid z)$ using both $\mathcal{D}_1$ and target data through logistic regression. The density ratio estimate is then given by:

\[
\widehat\omega(z) = \frac{n_\mathcal{S} \widehat{\mathbb{P}}(S = 1 \mid z)}{n_\mathcal{T} \widehat{\mathbb{P}}(S = 0 \mid z)}.
\]

We define the pseudo-outcome $\widehat\varphi$ as
$$\begin{aligned}\widehat\varphi(d)=&\ (1-S)\cdot\frac{n_{1}+n_\mathcal{T}}{n_{1}}\widehat\omega(z)\left\{\frac{a-\hat \pi(z)}{\hat \pi(z)(1-\hat \pi(z))}(y-\hat f_a(z))\right\}\\&\ +S\cdot\frac{n_{1}+n_\mathcal{T}}{n_\mathcal{T}}\left\{\hat f_1(z)-\hat f_0(z)\right\}.\end{aligned}$$

Next, we use kernel ridge regression to regress the pseudo-outcome $\widehat\varphi$ on $z$ using the other half of the source data $\mathcal{D}_1'$ combined with all target data $\mathcal{D}_\mathcal{T}$. We divide these data into two halves for hold-out validation: the training set $\cD_2$, containing half of $\mathcal{D}_1'$ and half of $\mathcal{D}_\mathcal{T}$, and the validation set $\mathcal{D}_3$. Then we fit KRR for $\widehat\varphi(d_i)$ against $z_i$ on $\cD_2$, with the candidate set of penalization parameters being the same as \texttt{COKE}, i.e., ${\bm \Lambda} = \{\frac{2^k}{5n} : k = 0, 1, \ldots, \lceil\log_2(5n)\rceil\}$ and select the best estimator using $\mathcal{D}_3$.

\paragraph*{\textcolor{black}{\texttt{R-Learner}.}} 
\textcolor{black}{
We implement the R-Learner following the residualized loss formulation in \cite{nie2021quasi}. The source dataset is randomly divided into two subsets, $\mathcal{D}_1$ and $\mathcal{D}_1'$. Using $\mathcal{D}_1$, we estimate the nuisance functions $m^\star(z)=\mathbb{E}[y\mid z]$ and $\pi^\star(z)=\mathbb{P}(a=1\mid z)$. The outcome regression $m^\star(z)$ is estimated by kernel ridge regression, with the tuning parameter selected by hold-out validation within $\mathcal{D}_1$. The propensity score $\pi^\star(z)$ is estimated by logistic regression.
}

\textcolor{black}{
Next, using $\mathcal{D}_1'$, we form the residualized outcome and residualized treatment for each observation $d_i=(z_i,a_i,y_i)$:
\[
\widetilde y_i = y_i-\widehat m(z_i), \qquad \widetilde a_i = a_i-\widehat \pi(z_i).
\]
We further split $\mathcal{D}_1'$ into a training set $\cD_2$ and a validation set $\mathcal{D}_3$. On $\cD_2$, we estimate the CATE function $\tau(z)$ by kernel ridge regression under the empirical R-loss,
\[
\frac{1}{|\cD_2|}\sum_{i\in \cD_2}(\widetilde y_i-\widetilde a_i \tau(z_i))^2 + \lambda \|\tau\|_{\cF}^2.
\]
The tuning parameter $\lambda$ is selected using the validation R-loss on $\mathcal{D}_3$, with the same candidate set of penalization parameters as \texttt{COKE}, i.e., ${\bm \Lambda} = \{\frac{2^k}{5n} : k = 0, 1, \ldots, \lceil\log_2(5n)\rceil\}$. Finally, we repeat the procedure after swapping $\mathcal{D}_1$ and $\mathcal{D}_1'$, and average the two resulting estimators.
}

\subsection{Cross-fitting}\label{sec:app:cross-fitting}

Our cross-fitting procedure consists of the following steps: (1) splitting the source data $\cD$ into $\cD_{1}$ and $\cD_2$; (2) implementing the training procedures in Algorithm \labelcref{algorithm: main} twice separately on the permutations $\{\cD_{1},\cD_2\}$ and $\{\cD_2,\cD_{1}\}$; (3) averaging the two estimators resulted from (2) as the final output.

\subsection{Supplementary Results}

\subsubsection{Simulation}\label{sec:app:simu}
\noindent
Under the default setting with $q = 1$, we compare the performance of \texttt{COKE} with and without cross-fitting across varying $S_B$, as shown in Figure \labelcref{fig:CF}. Cross-fitting improves estimation accuracy by leveraging an additional estimate obtained by swapping the roles of $\mathcal{D}_1$ and $\mathcal{D} \setminus \mathcal{D}_1$, and averaging the two estimates. The mean squared error for the cross-fitted version is consistently lower across all $S_B$ values, showing its effectiveness in reducing estimation error. For example, at $S_B = 20$, cross-fitting reduces the mean squared error by approximately 15.0\%.

\begin{figure}[ht]
\centering
\includegraphics[width=8cm]{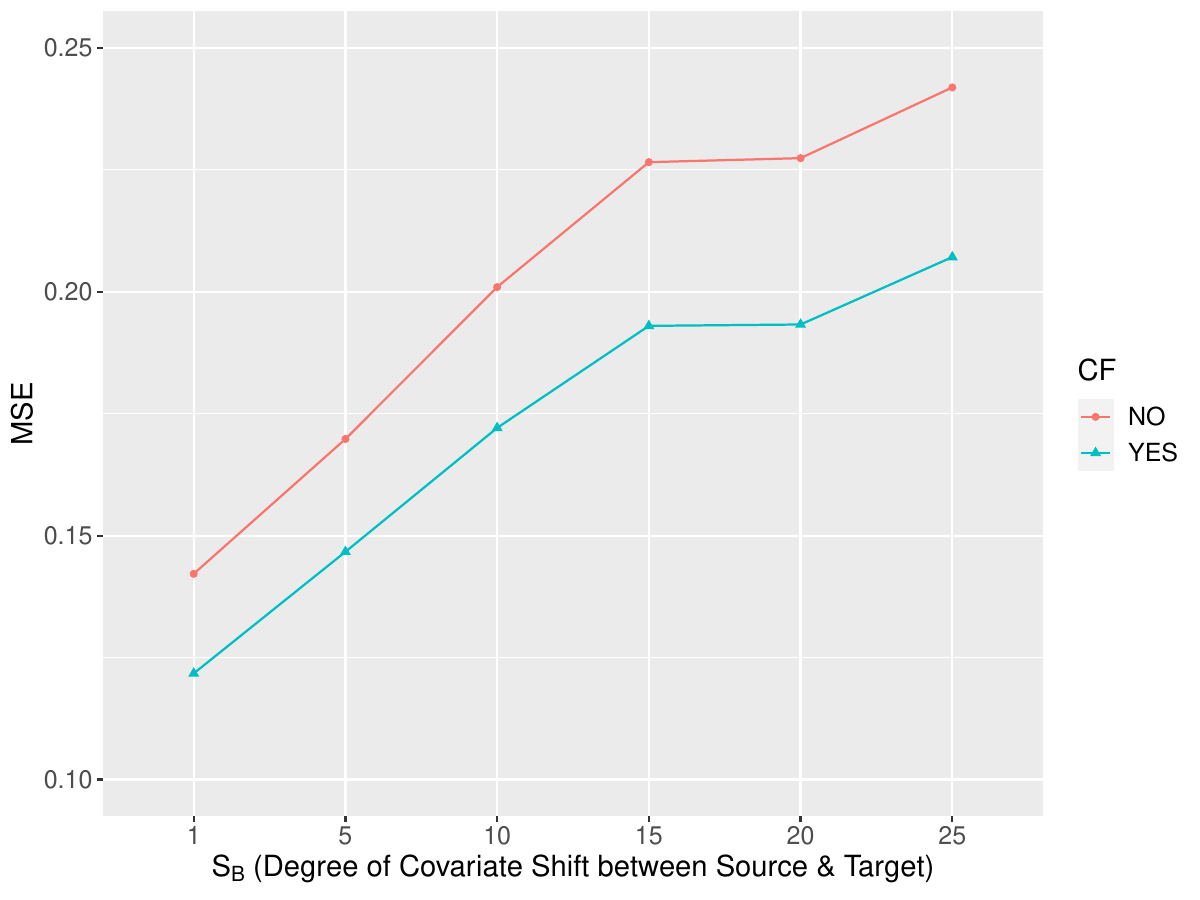}
\caption{Comparison of mean squared error for \texttt{COKE} with and without cross-fitting under the default setting with $q = 1$ across varying $S_B$ (degree of covariate shift between source and target). Cross-fitting reduces estimation error consistently by approximately $13.5\%$--$15\%$.}
\label{fig:CF}
\end{figure}

\subsubsection{Real Example}\label{sec:app:real:results}

In Figure \labelcref{fig:densityratio}, we plot the histograms of $\log_{10}\{\widehat{\omega}(z)\}$ separately on the source and target samples. The source sample has a mean of $-0.889$ and a standard deviation of $0.956$ for $\log_{10}\{\widehat{\omega}(z)\}$, whereas the target sample has a mean of $0.732$ with a standard deviation of $0.748$. The effective sample size of the source sample is $399.01$. Through this plot, one can see strong covariate shift and weak overlap between the source and target. For example, at the mode of $\log_{10}\{\widehat{\omega}(z)\}$ on the target sample with a density larger than $0.6$, the source sample has a density less than $0.02$. Also, on the left tail of $\log_{10}\{\widehat{\omega}(z)\}$ on the source sample, the target data shows nearly zero density.

\begin{figure}[htb!]
\centering
\includegraphics[width=8cm]{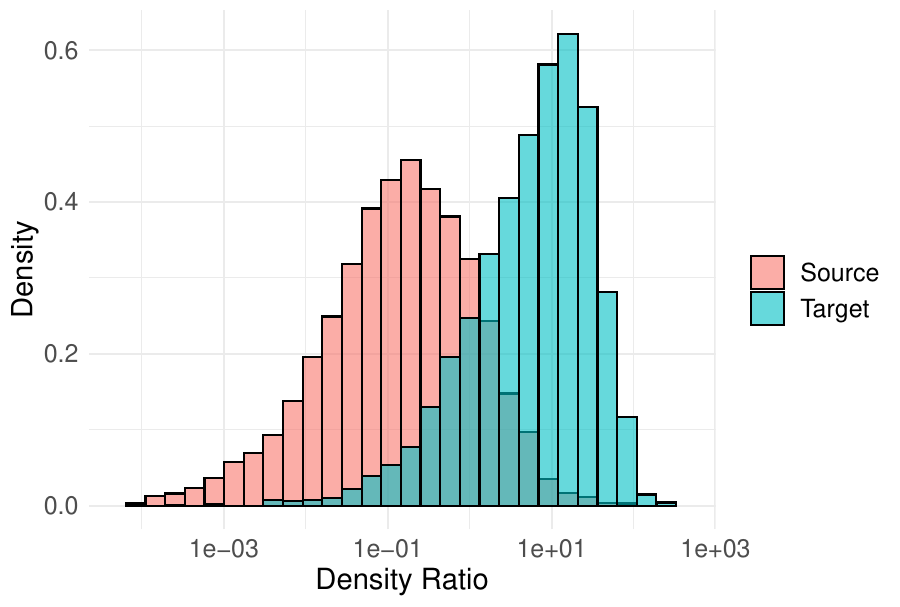}
\caption{Empirical distribution of the logarithms of the estimated density ratio (using the logistic regression) between the source and target in the 401(k) study. 
}
\label{fig:densityratio}
\end{figure}

\begin{table}[htb!]
\centering
\begin{tabular}{|c|ccccc|}
\hline Metrics & \texttt{COKE} & \texttt{SR} & \texttt{DR-CATE} & \texttt{ACW-CATE} & \texttt{R-Learner}\\
\hline Spearman Cor with $\hat{s}_{0i}$ & ${0.54}_{0.012}$ & $0.39_{0.014}$ & $0.30_{0.013}$ & $0.36_{0.014}$ & $0.33_{0.015}$ \\
\hline Pearson Cor with $\hat{s}_{0i}$ & ${0.48}_{0.029}$  & $0.38_{0.032}$ & $0.15_{0.060}$ & $0.30_{0.028}$ & $0.23_{0.033}$ \\ 
\hline
\end{tabular}
\caption{\label{tab:res:401:datasplit} Spearman and  Pearson correlation coefficients (subscribed with their empirical standard errors) between the empirical gold-standard $\hat{s}_{0i}$ and the CATE predictors obtained by data-splitting in the 401(k) study. 
}
\end{table}

\begin{figure}[htb!]
\centering
\includegraphics[width=15cm]{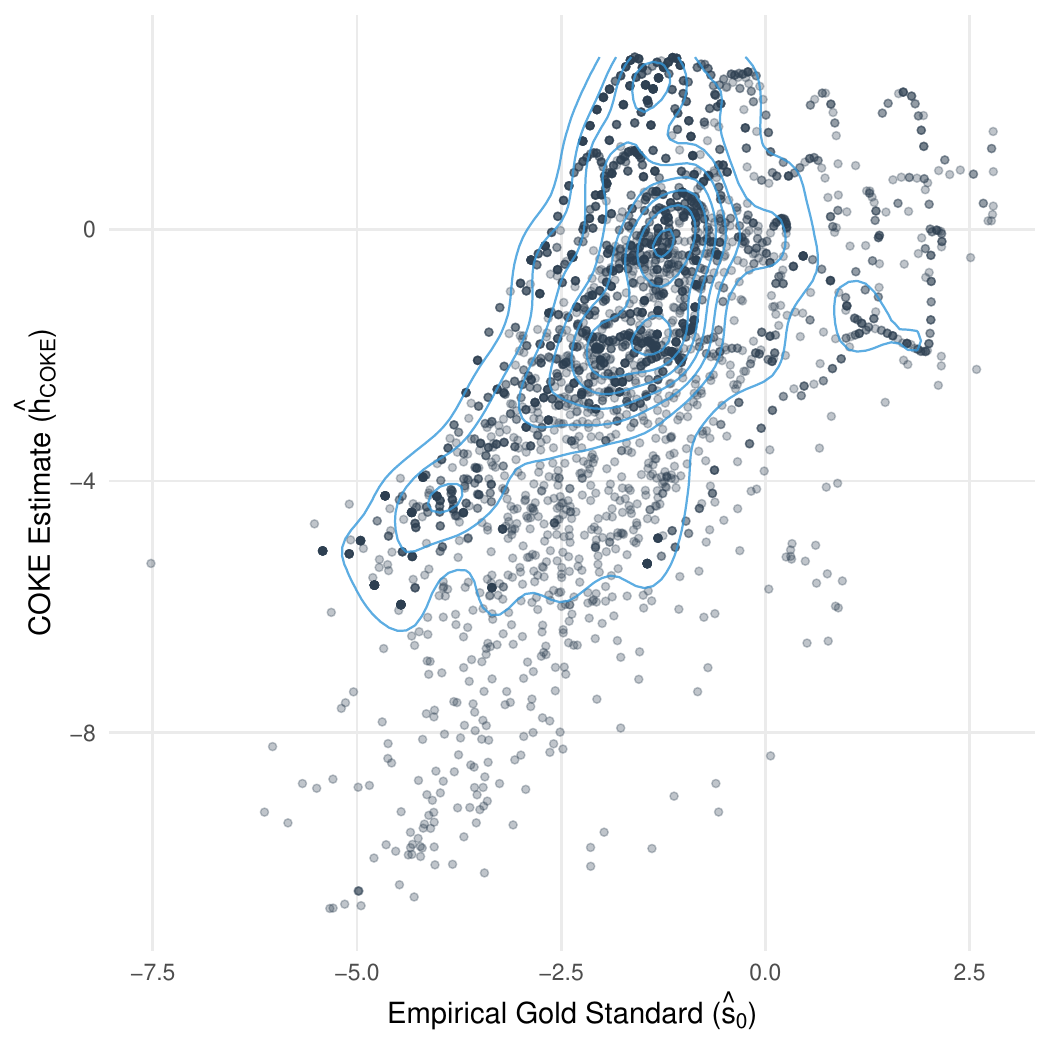}
\caption{{\color{black} Joint distribution of the COKE CATE estimates ($\hat{h}_{\mbox{\texttt{COKE}}}$) and the empirical gold-standard predictors ($\hat{s}_0$) for the 401(k) study. The scatter plot is overlaid with the density contours to illustrate the joint probability mass. The visualization demonstrates the concordance between the estimated and gold-standard treatment effects across the full range of the data.}}
\label{fig:scat401k}
\end{figure}

{\color{black}
Figure \ref{fig:densityrationhanes} plots the empirical distribution of the log-estimated density ratios, $\log_{10}\{\hat{\omega}(z)\}$, separately for the source (NHANES 2001) and target (NHANES 2015) samples. The visible discrepancy between the two histograms indicates a clear covariate shift between the cohorts (especially the ``tails''). This lack of intersection visually underscores the distributional shifts that occurred over the 14-year gap, reinforcing the need for our overlap-adaptive transfer learning approach.}

\begin{figure}[htb!]
\centering
\includegraphics[width=9cm]{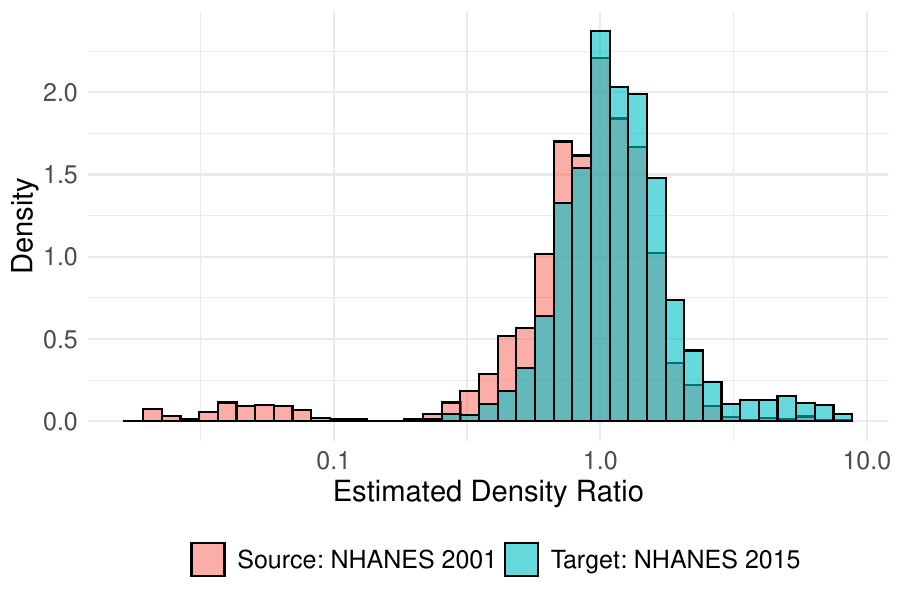}
\caption{Empirical distribution of the logarithms of the estimated density ratio (using the logistic regression) between the source and target in the NHANES study. 
}
\label{fig:densityrationhanes}
\end{figure}

\begin{table}[htb!]
\centering
\centering
\begin{tabular}{|c|ccccc|}
\hline Metrics & \texttt{COKE} & \texttt{SR} & \texttt{DR-CATE} & \texttt{ACW-CATE} & \texttt{R-Learner} \\
\hline Spearman Cor with $\hat{s}_{0i}$ & ${0.51}_{0.012}$ & $0.40_{0.014}$ & $0.33_{0.014}$ & $0.22_{0.015}$ & $0.21_{0.014}$ \\
\hline Pearson Cor with $\hat{s}_{0i}$ & ${0.50}_{0.011}$  & $0.43_{0.014}$ & $0.36_{0.014}$ & $0.22_{0.016}$ & $0.24_{0.013}$  \\ 
\hline
\end{tabular}
\caption{\label{tab:res:nhanes:datasplit} Spearman and  Pearson correlation coefficients (subscribed with their empirical standard errors) between the empirical gold-standard $\hat{s}_{0i}$ and the CATE predictors obtained by data-splitting in the NHANES study. 
}
\end{table}

\begin{figure}[htb!]
\centering
\includegraphics[width=15cm]{simulation_results/CATE_Joint_Distribution_nhanes.pdf}
\caption{{\color{black} Joint distribution of the COKE CATE estimates ($\hat{h}_{\mbox{\texttt{COKE}}}$) and the empirical gold-standard predictors ($\hat{s}_0$) for the NHANES study. The scatter plot is overlaid with the density contours to illustrate the joint probability mass. The visualization demonstrates the concordance between the estimated and gold-standard treatment effects across the full range of the data.}}
\label{fig:scatnhanes}
\end{figure}

\end{document}